    \let\Cref\crtCref
    \let\cref\crtcref
\tikzset{
  cong/.style={draw=none,edge node={node [sloped, allow upside down, auto=false]{$\cong$}}}, 
  iso/.style={draw=none,every to/.append style={edge node={node [sloped, allow upside down, auto=false]{$\cong$}}}}
}
\tikzstyle{block} = [rectangle, text width=12em, text centered, rounded corners=.75mm, minimum height=2em]
\tikzstyle{line} = [draw, -latex']
\tikzset{
commutative diagrams/.cd,
arrow style=tikz,
diagrams={>=stealth},
row sep=large, 
column sep = huge
}
\tikzset{shiftarr/.style={
        rounded corners,%
        to path={--([#1]\tikztostart.center)
                     -- ([#1]\tikztotarget.center) \tikztonodes
                     -- (\tikztotarget)},
}}
\tikzset{cong/.style={draw=none,edge node={node [sloped, allow upside down, auto=false]{$\cong$}}},
         Isom/.style={draw=none,every to/.append style={edge node={node [sloped, allow upside down, auto=false]{$\cong$}}}}}
 \newcommand{\todo}[1]{}
\renewcommand{\labelitemi}{$\vcenter{\hbox{\rule{1.2ex}{1pt}}}$}
\gdef\scalefactor{#1}\begin{center}\proofSkipAmount \leavevmode}%
\scalebox{\scalefactor}{\DisplayProof}\proofSkipAmount \end{center} }
\providecommand{\catname}{\mathbf} 
\providecommand{\clsname}{\mathcal}
\providecommand{\oname}[1]{\operatorname{\mathsf{#1}}}
\def\defcatname#1{\expandafter\def\csname B#1\endcsname{\catname{#1}}}
\def\defcatnames#1{\ifx#1\defcatnames\else\defcatname#1\expandafter\defcatnames\fi}
\def\defclsname#1{\expandafter\def\csname C#1\endcsname{\clsname{#1}}}
\def\defclsnames#1{\ifx#1\defclsnames\else\defclsname#1\expandafter\defclsnames\fi}
\def\defbbname#1{\expandafter\def\csname BB#1\endcsname{\mathbb{#1}}}
\def\defbbnames#1{\ifx#1\defbbnames\else\defbbname#1\expandafter\defbbnames\fi}
\def\Set{\catname{Set}}
\def\dCpo{\catname{dCpo}}
\def\Top{\catname{Top}}
\providecommand{\argument}{\operatorname{-\!-}}
\providecommand{\wave}[1]{\widetilde{#1}}		               %
\DeclareOldFontCommand{\bf}{\normalfont\bfseries}{\mathbf}
\providecommand{\Id}{\operatorname{Id}}
\providecommand{\Hom}{\mathsf{Hom}}
\providecommand{\id}{\mathsf{id}}
\providecommand{\iso}{\mathbin{\cong}}
\providecommand{\bang}{\operatorname!}				             %
\providecommand{\ito}{\hookrightarrow}				             %
\providecommand{\mto}{\mapsto}
\providecommand{\xto}[1]{\,\xrightarrow{#1}\,}
\providecommand{\dar}{\kern-1.2pt\operatorname{\downarrow}}	
\providecommand{\uar}{\kern-1.2pt\operatorname{\uparrow}}
\providecommand{\bigor}{\bigvee}
\providecommand{\impl}{\Rightarrow}
\providecommand{\appr}{\sqsubseteq}
\providecommand{\fst}{\oname{fst}}
\providecommand{\snd}{\oname{snd}}
\providecommand{\brks}[1]{\langle #1\rangle}
\providecommand{\inl}{\oname{inl}}
\providecommand{\inr}{\oname{inr}}
\DeclareSymbolFont{Symbols}{OMS}{cmsy}{m}{n}
\DeclareMathSymbol{\iobj}{\mathord}{Symbols}{"3B}
\providecommand{\curry}{\oname{curry}}
\providecommand{\uncurry}{\oname{uncurry}}
\providecommand{\ev}{\oname{ev}}
\providecommand{\comma}{,\operatorname{}\linebreak[1]}		 %
\providecommand{\dash}{\nobreakdash-\hspace{0pt}}		       %
\providecommand{\by}[1]{\text{/\!\!/~#1}}			             %
\providecommand{\pacman}[1]{}					                     %
\newcommand{\undefine}[1]{\let #1\relax}					                       %
\providecommand{\mone}{{\text{\kern.5pt\rmfamily-}\mathsf{\kern-.5pt1}}}
\newlist{citemize}{itemize}{1}
\setlist[citemize]{label=\labelitemi,wide} 
\newlist{cenumerate}{enumerate}{1}
\setlist[cenumerate,1]{label=\arabic*.~,ref={\arabic*},wide} 
\def\mfix#1{\oname{#1}\@ifnextchar\bgroup\@mfix{}}	       %
\def\@mfix#1{#1\@ifnextchar\bgroup\mfix{}}			           %
\providecommand{\case}[3]{\mfix{case}{\mathbin{}#1}{of}{#2}{\kern-1pt;}{\mathbin{}#3}}
\tikzset{%
  show curve controls/.style={
    postaction={
      decoration={
        show path construction,
        curveto code={
          \draw [blue]
            (\tikzinputsegmentfirst) -- (\tikzinputsegmentsupporta)
            (\tikzinputsegmentlast) -- (\tikzinputsegmentsupportb);
          \fill [red, opacity=0.5]
            (\tikzinputsegmentsupporta) circle [radius=.5ex]
            (\tikzinputsegmentsupportb) circle [radius=.5ex];
        }
      },
      decorate
}}}
    \pgfpathrectanglecorners{\southwest}{\northeast}
  \savedmacro\getdboxparameters{%
    \pgfmathsetlengthmacro\pgf@right@top@offset{\pgfkeysvalueof{/tikz/right top offset}}%
    \addtosavedmacro\pgf@right@top@offset%
    \pgfmathsetlengthmacro\pgf@right@mid@offset{\pgfkeysvalueof{/tikz/right mid offset}}%
    \addtosavedmacro\pgf@right@mid@offset%
    \pgfmathsetlengthmacro\pgf@left@bot@offset{\pgfkeysvalueof{/tikz/left bot offset}}%
    \addtosavedmacro\pgf@left@bot@offset%
    \pgfmathsetlengthmacro\pgf@left@mid@offset{\pgfkeysvalueof{/tikz/left mid offset}}%
    \addtosavedmacro\pgf@left@mid@offset%
  }%
    \pgf@sh@lib@gbox@geastanchor{\pgf@right@top@offset}{\pgf@right@mid@offset}
    \pgf@sh@lib@gbox@gwestanchor{\pgf@left@bot@offset}{\pgf@left@mid@offset}
    \def\pgf@left@bot@offset{\pgfkeysvalueof{/tikz/left bot offset}}%
    \def\pgf@right@top@offset{\pgfkeysvalueof{/tikz/right top offset}}%
    \def\pgf@left@mid@offset{\pgfkeysvalueof{/tikz/left mid offset}}%
    \def\pgf@right@mid@offset{\pgfkeysvalueof{/tikz/right mid offset}}%
    \def\pgf@bar@thickness{\pgfkeysvalueof{/tikz/bar thickness}}%
    \pgfpathrectanglecorners{\southwest}{\northeast}%
\pgfutil@empty\pgfsetfillcolor{black}\else
   \pgfpathrectanglecorners{\southwest}{\northeast}%
\def\pgf@sh@lib@gbox@gwestanchor#1#2{%
  \southwest%
  \pgf@ya=\pgf@y
  \northeast
  \pgf@yb=.5\pgf@y
  \advance\pgf@yb by .5\pgf@ya
  \advance\pgf@ya by #1
  \advance\pgf@yb by #2
  \southwest%
  \pgf@y=.5\pgf@ya
  \advance\pgf@y by .5\pgf@yb
}
\def\pgf@sh@lib@gbox@geastanchor#1#2{
  \northeast
  \pgf@ya=\pgf@y
  \southwest
  \pgf@yb=.5\pgf@y
  \advance\pgf@yb by .5\pgf@ya
  \advance\pgf@ya by #1
  \advance\pgf@yb by  #2
  \northeast%
  \pgf@y=.5\pgf@ya
  \advance\pgf@y by .5\pgf@yb
}
  \savedmacro\box@eastports{%
    % Here is a critical hack: without adding 0 at the end
    % the latter code would have to apply to singleton lists, but
    % that would be hard to arrange because library functions behave
    % erroneously on them
    \def\box@eastports{\pgfkeysvalueof{/pgf/east ports},0}
  }
  \savedmacro\box@westports{%
    \def\box@westports{\pgfkeysvalueof{/pgf/west ports},0}
  }
    \pgfpathrectanglecorners{\southwest}{\northeast}%
  \pgfutil@g@addto@macro\pgf@sh@s@box{%
    \pgfmathparse{dim(\box@eastports)}\pgfmathresult
    \let\portnum=\pgfmathresult

    \pgfmathloop%
    % Recall that we've added 0 at the end of the list, which we must disregard
    \ifnum\pgfmathcounter=\portnum\relax%
    \else%
      % Only add if anchor not defined yet defined.
      \pgfutil@ifundefined{pgf@anchor@box@o\pgfmathcounter}{%
        % Create anchor oi for every i
        \expandafter\xdef\csname pgf@anchor@box@o\pgfmathcounter\endcsname{%
          \noexpand\pgf@sh@lib@box@eastanchor{\pgfmathcounter}%
        }%
      }{}%
    \repeatpgfmathloop%

    % Repeat the same for west ports
    \pgfmathparse{dim({\box@westports})}\pgfmathresult
    \let\portnum=\pgfmathresult

    \pgfmathloop%
    \ifnum\pgfmathcounter=\portnum\relax%
    \else%
      % Only add if anchor not defined yet defined.
      \pgfutil@ifundefined{pgf@anchor@box@i\pgfmathcounter}{%
        % Create anchor ij for every j
        \expandafter\xdef\csname pgf@anchor@box@i\pgfmathcounter\endcsname{%
          \noexpand\pgf@sh@lib@box@westanchor{\pgfmathcounter}%
        }%
      }{}%
    \repeatpgfmathloop%
  }
\def\pgf@sh@lib@box@eastanchor#1{%
  % All the saved dimens and macros will be available
  % when this macro is called.
  \pgfmathparse{dim(\box@eastports)}\pgfmathresult
  \let\portnum=\pgfmathresult
  \ifnum#1=\portnum\relax%
    % Should issue an error.
    \pgfpointorigin%
  \else
    \pgfmathparse{{\box@eastports}[#1-1]}%
    \let\z\pgfmathresult
    \southwest
    \pgf@ya=\pgf@y%
    \advance\pgf@ya by -\z\pgf@y
    \northeast
    \advance\pgf@ya by \z\pgf@y
    \pgf@y=\pgf@ya%
  \fi
}
\def\pgf@sh@lib@box@westanchor#1{%
  % All the saved dimens and macros will be available
  % when this macro is called.
  \pgfmathparse{dim(\box@westports)}\pgfmathresult
  \let\portnum=\pgfmathresult
  \ifnum#1>\portnum\relax%
    % Should issue an error.
    \pgfpointorigin%
  \else
    \pgfmathparse{{\box@westports}[#1-1]}%
    \let\z\pgfmathresult
    \northeast
    \pgf@ya=\z\pgf@y%
    \southwest
    \advance\pgf@y by -\z\pgf@y
    \advance\pgf@y by \pgf@ya
  \fi
}
\tikzset{
  left bot offset/.initial=.05cm,
  left mid offset/.initial=.0cm,
  right mid offset/.initial=-.05cm,
  right top offset/.initial=-.0cm,
  bar thickness/.initial=.15cm,
%    /tikz/hide bars/.initial=false
}
\tikzset{
  guarded box/.style={
    draw,
    thick,
    text width=2em,
    minimum height=1.5em,
    align=center,
    font=\itshape,
    gbox,
    left bot offset=.05cm,
    left mid offset=-.05cm,
    right mid offset=-.05cm,
    right top offset=.05cm,
    bar thickness=1cm,
    append after command={
      let \p1=(\tikzlastnode.gwest), 
          \p2=(\tikzlastnode.geast),
          \p3=(\tikzlastnode.east),
          \p4=(\tikzlastnode.north) in
      node[gbar, 
        left, 
        text width=0em,
        inner sep=0pt,
        minimum width=\pgfkeysvalueof{/tikz/bar thickness},
        minimum height=\y4-\y3-\pgfkeysvalueof{/tikz/left bot offset}-\pgfkeysvalueof{/tikz/left mid offset},
        at=(\p1)] {} 
      node[gbar, 
        right,
        text width=0em, 
        inner sep=0pt, 
        minimum width=\pgfkeysvalueof{/tikz/bar thickness},
        minimum height=\y4-\y3+\pgfkeysvalueof{/tikz/right top offset}+\pgfkeysvalueof{/tikz/right mid offset}, 
        at=(\p2)] {} 
    }
  },
  hide bars/.code=\hide@barstrue,
  guarded box/.default=0
}
\newif\ifhide@bars
\newif\ifgbar@left
\tikzset{
  gbar/.style={
    draw,
    gprof,
    minimum size=.35cm,
    left/.code=\gbar@lefttrue,
    right/.code=\gbar@leftfalse,
  },
  gprof/.default=0
}
\tikzset{
  unguarded box/.style={
    draw,
    thick,
    text width=2em,
    minimum height=1.5em,
%        minimum height =.8cm,
%        minimum width  =1.3cm,
    align=center,
%        font=\Large\itshape,
    box,
  },
%    unguarded box/.default=0
}
\tikzset{
  annotation/.style={
    font=\footnotesize,
    draw,
    solid,
    circle,
    very thin,
    inner sep=.3pt
  }
}
\tikzset{
  sigma line/.style={draw, smooth} %, out=0,in=180},
}
\newcommand\currentcoordinate{\the\tikz@lastxsaved,\the\tikz@lastysaved}
\tikzset{
  left pins/.code args={#1}{
       \foreach \i in {#1}
          \draw let \p1 = \i in \i to ([xshift=-\pgfkeysvalueof{/tikz/left pin offset}]\x1,\y1);
  },
  right pins/.code args={#1}{
       \foreach \i in {#1}
          \draw let \p1 = \i in \i to ([xshift=\pgfkeysvalueof{/tikz/right pin offset}]\x1,\y1);
  }
}
\tikzset{
  tension/.initial=1,
  xwobble/.initial=.5,
  sigma line/.style={
    xshift/.initial=20,
    to path={
      .. controls ($ (\tikztostart -| \tikztotarget) !\pgfkeysvalueof{/tikz/tension}*\pgfkeysvalueof{/tikz/xwobble}! (\tikztostart) $)
      and ($ (\tikztotarget -| \tikztostart)  !\pgfkeysvalueof{/tikz/tension} * (1-\pgfkeysvalueof{/tikz/xwobble})! (\tikztotarget)  $) 
      .. (\tikztotarget) \tikztonodes
    }
  },
  ystretch/.initial=20,
  trace line/.style={
    xshift/.initial=20,
    to path={
      let \p1 = (\tikztostart),
          \p2 = (\tikztotarget),
          \n1 = {max(\y1,\y2)},
          \n2 = {\pgfkeysvalueof{/tikz/ystretch}},
          \n3 = {(\n1-\y1)},
          \n4 = {(\n1-\y2)} 
      in
         .. controls ++(5:\n2) and ++(-5:1.5*\n2) 
         .. ($(\x1,\n1) + (-.5*\n2,\n2)$) 
         .. controls ++(175:1*\n2) and ++(5:{.5*(\n2+\y1-\y2)}) 
         .. ($(\x2,\n1) + ({.5*(\n2+\y1-\y2)}, \n2)$) 
         .. controls ++(5:{-1.5*\n2-.7*(\y1-\y2)}) and ++(-185:{\n2+.2*(\y1-\y2)}) 
         .. (\tikztotarget) \tikztonodes
%  let \p1 = (\tikztostart-|\tikztotarget),\p2=(\tikztostart),\p3=(\tikztotarget),\n1 = {\pgfkeysvalueof{/tikz/ystretch}},\n2={\x2-\x1},\n3={atan2(\y2-\y3,\x2-\x3)} in
%                                                          .. controls ++(5:.008*\n2 *\n1)       and ++(-5:.015*\n2 *\n1) ..
%    ([yshift=\n1,xshift=-.1*\n2-.01*\n3 * (\x2-\x3)]\p1-|\p2)  .. controls ++(175:.2*\n2)  and ++(5:.2*\n2) ..
%    ([yshift=\n1,xshift=.1*\n2-.01*\n3 * (\x2-\x3)]\p1)       .. controls ++(5:-.01*\n2 *\n1)   and ++(-185:.008*\n2 *\n1) ..
%    (\p3) \tikztonodes
%let \p1=(\tikztostart),\p2=(\tikztotarget),\n1={(\y2<\y1)?\y1+\pgfkeysvalueof{/tikz/ystretch}:\y2+\pgfkeysvalueof{/tikz/ystretch}} in
%      .. controls ++(5:.7)     and ++(-15:.9) .. (\x1,\n1)
%      .. controls ++(165:1.5)  and ++(20:1.5)   .. (\x2,\n1)
%      .. controls ++(20:-1) and ++(-185:.7) ..
%      (\tikztotarget) \tikztonodes
    }
  },
  sigma line/.default=.5cm
}
\tikzset{
  cprop/.style={scale=0.8, inner ysep=0pt,every node/.style={scale=0.8,line width=.6pt}},
  guarded box/.append style={draw, line width=.6pt},
  unguarded box/.append style={draw, line width=.6pt}
}
\def\defbbname#1{\expandafter\def\csname BB#1\endcsname{{\bm{\mathsf{#1}}}}}
\def\defbbnames#1{\ifx#1\defbbnames\else\defbbname#1\expandafter\defbbnames\fi}
\tikzset{
  commutative diagrams/.cd,
  arrow style=tikz,
  diagrams={>=stealth},
  row sep=large, 
  column sep = huge
}
\theoremstyle{definition}
\newcommand{\klstar}{\star}  			%% Kleisli star
\newcommand{\klcomp}{\mathbin{\diamond}}  	%% Kleisly composition
\newcommand{\pistar}{\sharp} %\dagger}  			%% Iteration
\newcommand{\istar}{\dagger} %\dagger}  			%% Iteration
\newcommand{\iistar}{{\hat\pistar}}  		%% Strong iteration
\newcommand{\bistar}{{\langle\kern-.8pt\pistar}}
\newcommand{\bbistar}{{\langle\kern-.8pt\istar}}
\newcommand{\zero}{\oname{o}}
\newcommand{\suc}{\oname{s}}
\newcommand{\swap}{\oname{swap}}
\renewcommand{\uncurry}{\curry^\mone}
\let\o\relax
\newcommand{\o}{\hspace{.8ex}}
\renewcommand{\c}{\colon}
\renewcommand{\wave}[1]{{\accentset{\scalebox{.9}[1]{\texttildelow}}{#1}}}
\newcommand{\dom}{\oname{dom}}
\newcommand{\rest}{\downharpoonright}
    \renewcommand*{\bm}[1]{#1}%
\newcommand{\nat}{\mathbb{N}}
\newcommand{\enat}{\bar{\mathbb{N}}}
\newcommand{\out}{\oname{out}}
\newcommand{\tuo}{\oname{out}^{\text{\kern.5pt\rmfamily-}\kern-.5pt1}\kern-1pt}
\newcommand{\dist}{\oname{dstr}}
\newcommand{\ldist}{\oname{dstl}}
\newcommand{\assoc}{\oname{assoc}}
\newcommand{\coit}{\oname{coit}}
\newcommand{\init}{\oname{init}}
\newcommand{\primr}{\oname{p-rec}}
\newcommand{\now}{\oname{now}}
\DeclareMathOperator{\lat}{\triangleright} % {\oname{later}}
\DeclareMathOperator{\ear}{\triangleleft}  %{\oname{earlier}}
\newcommand{\IE}{\BBK}
\newcommand{\IA}{K}
\DeclareRobustCommand{\pigpenA}{%
  \begingroup\setlength{\unitlength}{1em}%
  \linethickness{.075em}%
  \begin{picture}(1,.8)
  \roundcap\roundjoin
  \polyline(.2,.2)(.8,.2)(.8,.8)
  \end{picture}%
  \endgroup
}
\newcommand{\pbk}{\arrow[dr, phantom, "\text{\tiny\pigpenA}", pos=0.05]}
\def\KTO{{%
    \setbox0\hbox{$\xrightarrow{\kern9pt}$}%
    \rlap{\hbox to \wd0{$\hss\klcomp\hss$}}\box0
}}
\renewcommand{\ij}{\mathop{\hstretch{.7}{\bigsqcup}}}
\tikzset{shiftarr/.style={
        rounded corners,%
        to path={--([#1]\tikztostart.center)
                     -- ([#1]\tikztotarget.center) \tikztonodes
                     -- (\tikztotarget)},
}}
\newcommand{\bind}[2]{\mfix{do}{\mathbin{}#1}{\kern-1pt;}{\mathbin{}#2}}
\newcommand{\mdo}[2]{\mfix{mdo}{\mathbin{}#1}{\kern-1pt;}{\mathbin{}#2}}
\newcommand{\axname}[1]{{\upshape\bfseries\textsf{#1}}\xspace}
\newcommand{\FIX}{\axname{Fixpoint}}
\newcommand{\NAT}{\axname{Naturality}}
\newcommand{\UNI}{\axname{Uniformity}}
\newcommand{\COM}{\axname{Compositionality}}
\newcommand{\COD}{\axname{Codiagonal}}
\newcommand{\STR}{\axname{Strength}}
\newcommand{\FOL}{\axname{Folding}}
\newcommand{\dotdiv}{\mathbin{\text{\@dotminus}}}
\newcommand{\@dotminus}{%
  \ooalign{\hidewidth\raise1ex\hbox{.}\hidewidth\cr$\m@th-$\cr}%
}
\renewcommand{\paragraph}[1]{\medskip\noindent{\bfseries\sffamily #1.}}
\title{Uniform Elgot Iteration in Foundations}
\author{Sergey Goncharov}{FAU Erlangen-N\"urnberg}{Sergey.Goncharov@fau.de}{https://orcid.org/0000-0001-6924-8766}{
Support by Deutsche Forschungsgemeinschaft (DFG) under project GO~2161/1\dash 2 is gratefully acknowledged
}
\authorrunning{S. Goncharov}
\keywords{Elgot monad, partiality monad, delay monad, restriction category}
\begin{document}\allowdisplaybreaks
\maketitle

\begin{abstract}
Category theory is famous for its innovative way of thinking of concepts by their 
descriptions, in particular by establishing \emph{universal properties}. Concepts
that can be characterized in a universal way receive a certain quality seal, which 
makes them easily transferable across application domains. The notion of partiality
is however notoriously difficult to characterize in this way, although the importance
of it is certain, especially for computer science where entire research areas, such 
as \emph{synthetic} and \emph{axiomatic domain theory} revolve around it. More 
recently, this issue resurfaced in the context of (constructive) 
\emph{intensional type theory}. Here, we provide a generic categorical iteration-based 
notion of partiality, which is arguably the most basic one. We show that the emerging
free structures, which we dub \emph{uniform-iteration algebras} enjoy various 
desirable properties, in particular, yield an \emph{equational lifting monad}. We then
study the impact of classicality assumptions and choice principles on this monad,
in particular, we establish a suitable categorial formulation of the \emph{axiom 
of countable} choice entailing that the monad is an \emph{Elgot monad}.
\end{abstract}

\section{Introduction}
\emph{Natural numbers} form a prototypical domain for programming and reasoning.
Both in \emph{category theory} and in \emph{type theory} they are characterized by a universal 
property, which consists of two parts: a definitional principle -- \emph{(structural)
primitive recursion} and a reasoning principle -- \emph{induction}. Dualization 
yields respectively \emph{co-natural numbers}, \emph{co-recursion} and \emph{co-induction}. 
Amid these two structuralist extremes, here, we analyse the challenging case of non-structural recursion in the form 
of iteration, which arises as follows.
A map 
\begin{align*}
h\c S\to X + S
\end{align*}
presents the simplest possible model of a computation process: with $S$ regarded
as a state space,~$h$ sends any state either to a successor \emph{state} or to
a terminal \emph{value} in $X$. We wish to be able to form an object $\IA X$
of \emph{denotations} potentially reachable via such processes. Besides the
values of~$X$ reachable in a finite number of steps, $\IA X$ must also contain
a designated value for divergence, generated by the right injection $h=\inr$.
We then ask: what would be the generic universal characterization of $\IA X$ and 
what properties it would imply? Somewhat surprisingly, this question has not been addressed yet on a level
of generality, sufficiently close to the settings where the question can be
posed, although many similar closely related questions have been addressed,
mostly couched in type-theoretic terms. 

The question trivializes whenever one of the two following perspectives is adopted.
\begin{itemize}
  \item \emph{intensional perspective:} the domain $\IA X$ keeps track not only 
of results, but also of the number of steps needed to reach them. This 
leads to the identification of $\IA X$ as the final coalgebra $D X = \nu\gamma.\ X+\gamma$,
known as \emph{Capretta's monad} or the \emph{delay monad}~\cite{Capretta05}.
  \item \emph{non-constructive perspective:} assuming non-constructive principles,
  such as the \emph{law of excluded middle}, leads to the identification of $\IA X$ as 
  the \emph{maybe-monad} $X+1$. 
\end{itemize}
Here, we generally keep aloof from these interpretations of $\IA X$ and work both 
extensionally and generically, using the language of the category theory to analyse 
the issue in the abstract, and keeping the potential class of models possibly large.

We introduce $\IA X$ as a certain free structure, equipped with an iteration
operator, which sends any $f\c S\to\IA X+S$ to $f^\pistar\c S\to\IA X$, and satisfies 
the following two basic and uncontroversial principles:
\begin{itemize}
  \item \emph{fixpoint:} $f^\pistar$ is in an obvious sense a fixpoint of $f$;
  \item \emph{uniformity:} the structure of the state space $S$ is ineffective
  (i.e.\ merging or adding new states done coherently does not influence the result).
\end{itemize}
We dub such structures \emph{uniform-iteration algebras}
and show that on a high level of generality (in any extensive category with finite 
limits and a stable natural number object) if~$\IA X$ exists then it satisfies
a number of other properties:~$\IA $ extends to a monad~$\IE$, which is an equational lifting
monad~\cite{BucaloFuhrmannEtAl03}, the Kleisli category of~$\IA $ is enriched over 
partial orders and monotone maps, and the iteration operator is a least fixpoint operator w.r.t.\ this order;
moreover, the iteration operator satisfies an additional principle, previously
dubbed \emph{compositionality}~\cite{AdamekMiliusEtAl06}. 

In some environments, such as \emph{homotopy type theory (HoTT)}, $\IE$ can 
be constructed directly, by using \emph{higher inductive types}. One can then define
a universal map from the delay monad $\BBD$ to $\BBK$ and regard it as a form of
\emph{extensional collapse}. However, proving $\BBK$ to be a quotient of $\BBD$ seems to
be impossible without using (weak) choice principles~\cite{ChapmanUustaluEtAl15,
AltenkirchDanielssonEtAl17,EscardoKnapp17}. We interpret this categorically, first by introducing a categorical~\emph{limited principle of 
omniscience (LPO)} under which $\BBK$ turns out to be isomorphic to the \emph{maybe-monad}
$(\argument+1)$ and also turns out to be an \emph{Elgot monad}. This generalizes
slightly previous results~\cite{GoncharovRauchEtAl15} obtained for \emph{hyper-extensive 
categories}~\cite{AdamekMiliusEtAl08}. Second, we identify other cases of~$\BBK$ 
being a quotient of $\BBD$ and additionally being an (initial) Elgot monad, by
introducing certain coequalizer preservation conditions, abstractly capturing the 
corresponding instances of the axiom of countable~choice.

From the type-theoretic perspective, in our work we revisit the familiar waymarks 
of using/avoiding principles of classical/constructive mathematics in view of the 
tradoffs in expressive power of the corresponding constructions. Our present approach of uniform-iteration algebras as a fundamental primitive  
is entirely new, though. Moreover, we would like to emphasize that our results,
being generic, apply to a wide range of categories, whose objects need not be like sets,
or types in any conventional sense. This has a massive impact on the underlying proof methods. In topos
theory, calculations are facilitated by existence of the subobject classifier~$\Omega$,
which is used as a global parent space for propositions. Predicative theories, such 
as HoTT make do without~$\Omega$, but it is still possible to form predicative
types of propositions per universe, implying that the style of proofs
can to a significant extent be maintained, with~$\Omega$ intuitively regarded 
as ``scattered'' over the cumulative universe hierarchy.  Contrastingly, here we 
do not assume any kind of general reference spaces for propositions, which results 
in completely different proof methods. Nevertheless, we conjecture that our results 
can be implemented in HoTT. This is clear for the universe of sets, which in HoTT
form a \emph{pretopos}~\cite{RijkeSpitters15}, and hence directly satisfy our 
assumptions. For types of higher homotopy levels this should be possible by using 
existing recipes of formalizing \emph{precategories of types}~\cite{UnivalentFoundationsProgram13}.

\paragraph{Previous related work}
We relate to the work on iteration theories, starting from a seminal paper of 
Elgot~\cite{Elgot75}, who identified iteration as a fundamental unifying
notion. Equational properties of Elgot iteration were extensively explored by Bloom 
and \'Esik~\cite{BloomEsik93} with the initial iteration structure playing a prominent
role, however, since the whole setup therein is inherently classical, most of our present
agenda is essentially moot there. The uniformity property occurred under the name 
\emph{functorial dagger implication} in Bloom and \'Esik's monograph, and is an established 
and powerful principle, thus notably recognized in Simpson and 
Plotkin's work~\cite{SimpsonPlotkin00}, in the context of generic recursion (as 
opposed to the present dual case of generic iteration). Ad\'amek et al~\cite{AdamekMiliusEtAl06} 
introduced axioms of~\emph{(guarded) Elgot algebras}, and it follows from their results 
that these axioms are complete w.r.t.\ the algebras of the delay monad. Uniform-iteration
algebras are generally a proper weakening of Elgot algebras, but we show that 
$\IA X$ as a free uniform-iteration algebra over $X$ is in fact also an Elgot algebra.

Another line of research we relate to is concerned with notions of partiality, via
\emph{dominances}, in particular the \emph{Rosolini dominance} in \emph{synthetic domain} 
theory~\cite{Rosolini86}, via \emph{equational lifting monads}~\cite{BucaloFuhrmannEtAl03}, 
and via \emph{restriction categories}~\cite{CockettLack02}.
We remark that these approaches are rather concerned with specifying a notion of 
partiality than with defining it. This distinction is particularly significant
in the context of constructive type theories, such as HoTT, which revitalized the interest to defining a notion of partiality 
both predicatively and constructively and to understanding the impact of (restricted) choice principles.
Chapman et al~\cite{ChapmanUustaluEtAl15} provided a construction of a partiality monad
as a quotient of the delay monad assuming countable choice. Also, Uustalu and 
Veltri~\cite{UustaluVeltri17} explored universal properties of the obtained 
quotient as an initial \emph{$\omega$-complete pointed classifying monad}. 
Altenkirch et al~\cite{AltenkirchDanielssonEtAl17} directly based on $\omega$-complete 
partial orders to obtain a partiality monad in HoTT as a certain \emph{quotient inductive-inductive type}
without using any choice whatsoever, but established an equivalence with the delay 
monad quotient under countable choice. Chapman et al~\cite{ChapmanUustaluEtAl19} subsequently
used more basic \emph{quotient inductive types} for the same purpose.

Recently, Escard\'o and Knapp~\cite{EscardoKnapp17} 
reinforced the issue of discrepancy between the quotient of the delay monad and
partiality monads, by showing that the quotient precisely captures extensions of
Turing computable values, whereas in the absence of any choice, the reasonable 
partiality monads seem to yield properly larger carriers. The latter view
is particularly fine grained, and involves a monad, which is essentially
our monad~$\IE$. According to them, showing the desired connection between $\IE$ 
and the delay monad still amounts to (very weak) choice principles (albeit still 
not natively available in HoTT), while equivalence to more expressive monads would 
again require countable choice. Further relevant details of type-theoretic 
analysis of partiality can be found in recent theses~\cite{Veltri17,Knapp18}.
A comparison of various lifting monads in type theory using a unifying notion 
of~\emph{container} was recently provided by Uustalu and Veltri~\cite{UustaluVeltri17a}.

%Due to page limits, most proofs are collected in the appendix.

\section{Categories and Monads}
We assume familiarity with standard categorical concepts~\cite{Mac-Lane71,Awodey10}. 
In what follows, we generally work in an ambient extensive category $\BC$ with finite products, a stable 
natural number object~$\nat$ and exponentials $X^{\nat}$. By $|\BC|$ we refer to
the objects of $\BC$. We often drop indices of natural transformations to avoid clutter. 
For the same purpose, we juxtaposition of morphisms as composition.
Let us clarify this and fix some conventions.

\paragraph{Extensive categories and pointful reasoning} %\sgnote{lextensive?}
Extensiveness means existence of disjoint finite coproducts and stability 
of them under pullbacks (which must exist).
Every extensive category is \emph{distributive}, that is, every morphism 
$[\id\times\inl,\id\times\inr]\c X\times Y + X\times Z\to X\times (Y+Z)$ is an isomorphism
whose inverse we denote $\dist\c X \times (Y+Z)\to X\times Y + X\times Z$. Let 
$\ldist\c(X+Y)\times Z\to X\times Z + Y\times Z$ be the obvious dual to $\dist$.

% 
%The object $2=1+1$ is an internal Boolean algebra, e.g.\ we have morphisms $\truem,\falsem:1\to 2$, 
%$\andm,\orm\c 2 \times 2\to 2$
%and $\notm\c 2 \to 2$ defined in an obvious way using distributivity. Additionally,
%we define $\ifm\c 2 \times X\times X\to X$ and $\eqm\c\nat \times \nat\to 2$. 
In order to simplify reasoning, we occasionally use a rudimentary pointful notation for stating equalities
in $\BC$, most notably we use the case distinction operator $\oname{case}$, e.g.\ we write 
\begin{displaymath}
  f(x) = \case{g(x)}{\inl y\mto h(y)}{\inr z\mto u(z)}
\end{displaymath}
meaning $f = [h, u]\o g$ where $f\c X\to W$, $g\c X\to Y+Z$, $h\c Y\to W$ and
$u\c Z\to W$. 
%More generally, both~$h$ and $u$ may use the variable $x$, which
%semantically amounts to calling distributivity morphisms.

\paragraph{Natural numbers and primitive recursion}
%
%\begin{definition}[Parameterized Natural Number Object]
A \emph{stable natural number object} (NNO) in a Cartesian category $\BC$, 
is an object $\nat$ equipped with two morphisms $\zero\c 1\to \nat$ (\emph{zero}) and $\suc\c\nat \to \nat$ (\emph{successor}) 
such that for any $X,Y\in|\BC|$ and any $f\c X\to Y$ and $g\c Y \to Y$ there is unique 
$\init[f,g]\c X\times \nat\to Y$ such that
\begin{equation*}
\begin{tikzcd}[column sep = 8ex,row sep = 4ex]
X\rar["\brks{\id,\, \zero\bang}"]\drar["f"'] & X\times \nat\dar[dotted, "{\init[f,g]}"]\rar["\id\times\suc"] & X\times \nat \dar[dotted, "{\init[f,g]}"]\\
&Y\rar["g"] & Y
\end{tikzcd}
\end{equation*}
commutes. This combines two separate properties: there exists an initial $(1+\argument)$-algebra 
$(\nat,[\zero,\suc ]\c 1+\nat\to \nat)$, and $(X\times\nat,[\brks{\id,\zero\bang},\id\times\suc]\c X+X\times\nat\to X\times\nat)$
is an initial $(X+\argument)$-algebra. The latter property follows from the 
former in Cartesian closed categories.%It 
%is well-known that $\mu\gamma.\,X+\gamma$ is a monad. The assumption that it is 
%isomorphic to $X\times \nat$ additionally implies that it is strong with the strength
%being the associativity isomorphism $X\times (Y\times \nat)\cong (X\times Y)\times \nat$.

%We can thus structure the 
%assumptions defining $\nat$ in a different way. Assuming that an initial algebra
%$\mu\gamma.\,(X+\gamma)$ exists for every $X\in|\BC|$, it well known that 
%the functor $NX=\mu\gamma.\,(X+\gamma)$ extends to a monad.
%
%\end{definition}

More generally,
we need the derivable Lawvere's internalization of \emph{primitive recursion}~\cite{Lawvere64}:
Given $f\c X\to Y$ and $g\c Y \times X\times \nat\to Y$ there
is unique $\primr(f,g)\c X \times \nat\to Y$ such that
\begin{align*}
\primr(f,g)(x,\zero) = f(x), && \primr(f,g)(x,\suc n) = g(\primr(f,g)(x,n),x,n).
\end{align*}
We thus say that $\primr(f,g)$ is \emph{defined} by 
(primitive) recursion, whereas \emph{induction} is a \emph{proof principle}, 
stating that $\primr(f,g) = w$ for any $w\c X \times\nat\to Y$ satisfying the 
same equations.

Exponentials $X^\nat$ are adjoint to products $X\times \nat$, meaning that there 
is an isomorphism $\curry\c \BC(X\times \nat, Y)\to \BC(X,Y^\nat)$
%
%
%\begin{equation*}
%\begin{tikzcd}
%\BC(X\times \nat, Y) 
%	\ar[r,shift left=.75ex,"{\curry}"]
%    \ar[r,leftarrow,shift right=.75ex,swap,"{\uncurry}"]
%&[3em]
%\BC(X,Y^\nat)
%\end{tikzcd}
%\end{equation*}
natural in $X$. This induces an evaluation morphism $\ev=\uncurry\id\c X^\nat\times \nat\to X$ 
with the standard properties.

\paragraph{Strong functors and monads}
A functor $T$ is strong if it is equipped with a natural transformation \emph{strength}
$\tau_{X,Y}\c X\times TY\to T(X\times Y)$, satisfying standard coherence conditions
w.r.t.\ the monoidal structure $(1,\times)$ of $\BC$~\cite{Kock72}. This induces
the obvious dual $\hat\tau_{X,Y}\c TX\times Y\to T(X\times Y)$.
%%
%
%where $\assoc$ is the obvious associativity isomorphism. 
A natural transformation 
$\alpha\c F\to G$ between two strong functors is itself \emph{strong} if it preserves 
strength in the obvious sense, i.e.\ $\alpha\o\tau = \tau\o(\id\times\alpha)$. 

A monad $\BBT$ (in the form of a Kleisli triple) consists of an endomap $T\c |\BC|\to |\BC|$,
a family of morphisms $(\eta_X\in\BC(X,TX))_{X\in |\BC|}$ and a lifting operation
$(\argument)^\klstar\c\BC(X,TY)\to\BC(TX,TY)$, satisfying standard laws~\cite{Moggi91a}. It then follows
that $T$ is an endofunctor with $Tf=(\eta\o f)^\klstar$,~$\eta$ extends to a natural
transformation, and the \emph{multiplication} transformation ${\mu\c TT\to T}$ is definable
as $\id^\klstar$.
For every monad $\BBT$, whose underlying functor~$T$ is strong, $\eta$ and $\mu$ are 
strong (with $\id$ being a strength of $\Id$ and $(T\tau)\o\tau$ being a strength 
of~$\mu$).
% where $\id\c X\times Y\to X\times Y$ is the strength of $\Id$ and 
%$(T\tau)\o\tau\c X\times TTY\to TT(X\times Y)$ is the strength of $TT$. 
Such monad~$\BBT$ is then called \emph{strong} if both $\eta$ and $\mu$ are strong. 
A strong monad is \emph{commutative} if $\tau^\klstar\o\hat\tau = \hat\tau^\klstar\o\tau$.

We adopt Moggi's perspective~\cite{Moggi91a} to strong monads as carriers of 
computational effects, and thus say that a morphism $f\c X\to TY$ \emph{computes a 
value in $Y$}. Since, the only effect we deal with here is divergence, $f$ can 
either produce a value or diverge (modulo the inherent linguistic inaccuracy of the 
excluded middle law baked into the natural language).

\paragraph{Functor algebras and monad algebras} For an endofunctor $T$, we distinguish 
$T$-algebras, which are pairs $(A,a\c TA\to A)$, from $\BBT$-algebras, which can only be 
formed for monads~$\BBT$ on~$T$: a $\BBT$-algebra is a $T$-algebra $(A,a)$, which additionally
satisfies $a\o\eta=\id$ and $a\o\mu = a\o Ta$. Both $T$- and $\BBT$-algebras 
form categories under the standard structure preserving morphisms, the latter 
fully embeds into the former.

%\begin{example}
With our assumptions on $\BC$, we mean to cover the following (classes of) categories.
\begin{enumerate}
  \item Zermello-Fraenkel set theory with choice (ZFC) and further variants of 
  set theory: ETCS, ZF, CZF, etc. 
  \item Toposes satisfying countable choice, e.g.\ %, e.g.\ realizability toposes, 
  the \emph{topological topos}~\cite{Johnstone79}.
  \item Toposes not satisfying countable choice, e.g.\ \emph{nominal sets}. 
  \item Pretoposes, e.g.\ \emph{$\Pi$W}-pretoposes, compact Hausdoff spaces.
  \item The category of topological spaces $\Top$, and its subcategories, such 
  as the category of directed complete sets $\dCpo$.\sgnote{locales?}
%  \item \todo{locales?}
\end{enumerate}
%\end{example}

\section{Basic Properties of the Delay Monad}
%
%
%\begin{align*}
%\hat f = \untab\curry\o \bigl((\id +\bang)\o\iter\bigl([\inr,[\inl, f]]\c Y + (X+Y)\to X+Y\bigr)\bigr)\\
%\end{align*} 
%
%
The final coalgebras $DX=\nu\gamma.\, X + \gamma$ jointly yield a monad~$\BBD$,
called %\emph{partiality}, or 
the \emph{delay monad}~\cite{Capretta05}. %(we stick to the latter name):
%
%the unit is $\eta = \tuo\o\inl$ and Kleisli lifting of a morphism $f\c X\to DY$ is 
%characterized as the unique morphism $f^\klstar\c DX\to DY$ satisfying equation
%%
%$\out\o f^\klstar = [\out, \inr\o f^\klstar]\o\out$~\cite{Uustalu03}. 
Capretta~\cite{Capretta05}
showed that $\BBD$ is strong, 
%Strength $\tau\c X \times DY\to D(X\times Y)$
%is the final coalgebra morphsim from $(X\times DY\comma\dist\o(\id\times\out): X\times DY\to X\times Y+ X\times DY)$ to 
%$(D(X\times Y),\out)$. 
which remains valid in our setting.  
%Moreover, by general 
%considerations,~$\BBD$ is a \emph{guarded iterative monad}. 
%
By Lambek's lemma, the final coalgebra structure $\out\c DX\to X+DX$ is an isomorphism.
Its inverse $\tuo = [\now,\oname{later}]\c X+DX\to DX$ is composed of the morphisms, conventionally 
called $\now$ and $\oname{later}$, of which the first one is the monad unit, and the
effect of the second one is intuitively to postpone the argument computation by 
one time unit. In what follows, we will write $\lat$ instead of $\oname{later}$ for the sake of 
succinctness.
%
%Moreover,~$\BBD$ fits into the generic framework of \emph{guarded iterative 
%monads}~\cite{GoncharovSchroderEtAl17}.
%
%
As a final coalgebra, $DX$ comes together with a \emph{coiteration operator}: for any 
$f\c Y\to X + Y$, $\coit f\c Y\to DX$ is the unique morphism, such that 
$\out\o(\coit f)= (\id+\coit f) \o f$. 

We denote $D1$ by $\enat$, and think of it as an object of co-natural or possibly
infinite natural numbers. Note that the initial algebra structure 
$[\brks{\id,\zero\bang},\id\times\suc]\c X+X\times\nat\to X\times\nat$, is an 
isomorphism, and thus yields a $D$-coalgebra structure on $X\times\nat$. This 
induces a unique $D$-coalgebra morphism $\iota_X\c X\times\nat\to DX$. Alternatively,
we can regard $\iota_X$ as defined by primitive recursion:
\begin{align*}
  \iota_X(x,\zero)  =&\; \now(x)&
  \iota_X(x,\suc(n)) =&\; \lat(\iota_X(x, n))&
\end{align*}
Let $\hat\iota\c\nat\to\enat$ be $\iota_1$ modulo the obvious isomorphism.

In our setting, $DX$ need not be postulated, for it is in fact definable as 
a retract of the object $(X+1)^\nat$ of \emph{infinite streams}, which is elaborated 
in detail by Chapman et al~\cite{ChapmanUustaluEtAl15}. This also entails that 
$\iota$ is a componentwise monic. Intuitively, $DX$ consists of precisely those streams, which contain at most one element of 
the form $\inl x$. This intuition becomes precise in (possibly non-classical) 
set theory, where 
\begin{align*}
\now x =&\; (\inl x,\inr\star,\inr\star,\ldots)&
\lat\; (e_1,e_2,\ldots) =&\; (\inr\star,e_1,e_2,\ldots)
\end{align*}
This explains why classically, more precisely, under the \emph{law of excluded}, 
$DX$ is isomorphic to $X\times\nat + 1$. We provide a stronger result to this 
effect further below. Let us record some general facts about $\BBD$ first.

\begin{proposition}\label{prop:D-props}
The monad $\BBD$ admits the following characterization:
%is \emph{strongly guarded}, meaning that for any $X,Y,Z\in |\BC|$,
%the subspaces of $\BC(X,D(Y+Z))$ consisting of \emph{guarded morphisms} satisfy 
%the axioms in~\cite{GoncharovRauchEtAl18} where a morphism $f\c X\to D(Y+Z)$ is guarded 
%if $\out\o f\c X\to (Y+X)+D(Y+X)$ factors through $\inl+\id\c Y+D(Y+X)\to (Y+Z)+D(Y+Z)$.
%Moreover,
%
\begin{enumerate}\setlength\itemsep{-.1em}
  \item unit $\now\c X\to DX$ of\/ $\BBD$ satisfies $\out\o\now = \inl$;
  \item Kleisli lifting of $f\c X\to DY$ is the unique morphism $f^\klstar\c DX\to DY$,
  for which the diagram 
  \begin{equation*}
  \begin{tikzcd}[column sep=normal, row sep=normal]
  DX
    \dar["\out"']
    \rar["f^\klstar"]&[4em] DY\dar["\out"]\\
  X + DX\rar["{[\out f,~\inr f^\klstar]}"] & Y + DY            
  \end{tikzcd}
  \end{equation*}
  commutes;
%  satisfying equation $\out\o f^\klstar = [\out\o f,\inr\o f^\klstar]\o\out$;
  \item strength $\tau\c X\times DY\to D(X\times Y)$ is a unique such morphism that
  the diagram
  \begin{equation*}
  \begin{tikzcd}[column sep=normal, row sep=normal]
  X\times DY
    \dar["\id\times\out"']
    \ar[rr,"\tau"]&[2em]&[2em] 
  D(X\times Y)
  \dar["\out"]\\
  X \times (Y+DY)
    \rar["{\dist}"] & X\times Y + X\times DY\rar["\id+\tau"] & Y + DY            
  \end{tikzcd}
  \end{equation*}
  commutes.
%$\out\o\tau={(\id+\tau)}\o\dist\o(\id\times\out)$.
%  \item for every guarded $f\c X\to D(Y+X)$ there is unique $f^\pistar\c X\to DY$ 
%such that $f^\pistar=[\eta,f^\pistar]^\klstar\o f$.
\end{enumerate}
\end{proposition}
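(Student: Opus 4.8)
The three claims are all "characterize a known morphism as the unique solution of a coalgebra equation" statements. The engine behind each is finality of $DX$ (or $D(X\times Y)$) as a coalgebra, plus the universal property (uniqueness) of the coiteration operator $\coit$. So for each item I would (i) verify that the displayed square holds for the intended morphism, and (ii) argue uniqueness by exhibiting the relevant morphism as a $\coit$ of an appropriate coalgebra, or more directly by appealing to finality of $DY$ resp.\ $D(X\times Y)$.

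\textbf{Item 1.} This is immediate from the definition of $\BBD$: $\now$ is the monad unit, and by construction $\tuo = [\now, \lat]$ is the inverse of $\out$, so $\out \o \now = \out \o \tuo \o \inl = \inl$. Nothing more is needed.

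\textbf{Item 2.} The plan is to use finality of $DY$. Given $f\c X\to DY$, consider the morphism $[\out f, \inr]\c X + DX \to Y + DX$; composing with the isomorphism $\out\c DX \to X+DX$ gives a $(Y+\argument)$-coalgebra structure $\xi\c DX \to Y+DX$, namely $\xi = [\out f,\inr]\o\out$. Then $\coit \xi\c DX\to DY$ is the unique morphism with $\out\o\coit\xi = (\id+\coit\xi)\o\xi = [\out f, \inr\coit\xi]\o\out$, which is exactly the displayed square (with $f^\klstar$ replaced by $\coit\xi$). So \emph{any} morphism making the square commute equals $\coit\xi$, giving uniqueness; it remains to check $f^\klstar$ itself makes the square commute. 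For this I would unfold the Kleisli-triple definition of $(\argument)^\klstar$ for $\BBD$ (equivalently, $f^\klstar = \mu_Y\o Df$ with $\mu = \id^\klstar$) and compute $\out\o f^\klstar$ using naturality of $\out$ and the coalgebra equation defining $\mu$; alternatively, recall that the standard construction of the delay monad \emph{defines} $f^\klstar$ precisely as this coiteration, in which case commutativity is by definition and only the reconciliation with the Kleisli-triple axioms needs a remark. Either way this is a routine diagram chase; I would present the uniqueness half in full and the commutativity half briskly.

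\textbf{Item 3.} Same pattern, now with finality of $D(X\times Y)$. From $\id\times\out\c X\times DY \to X\times(Y+DY)$, postcompose with $\dist$ and then with $[\inl, \inr\inr]\c X\times Y + X\times DY \to (X\times Y) + (X\times DY)$ to obtain a $((X\times Y)+\argument)$-coalgebra $\zeta\c X\times DY\to (X\times Y) + (X\times DY)$. Then $\out\o\coit\zeta = (\id + \coit\zeta)\o\zeta$ unfolds to exactly the displayed square, so any $\tau$ making it commute is forced to be $\coit\zeta$; and one checks the actual strength $\tau$ of $\BBD$ satisfies the square — this is essentially Capretta's defining property of the strength of the delay monad, so I would cite it or verify it by a short chase using naturality of $\out$ and the compatibility of $\dist$ with the coalgebra structures.

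\textbf{Main obstacle.} None of the three is deep; the only mildly fiddly point is bookkeeping in item 2, reconciling whichever presentation of $f^\klstar$ one starts from (Kleisli triple via $\eta$, $(\argument)^\klstar$; versus $\mu\o Df$; versus "defined as the coiteration") with the displayed square, keeping the injections and the isomorphism $\out$ straight. I expect the cleanest writeup to fix $f^\klstar := \coit\bigl([\out f,\inr]\o\out\bigr)$ as the working definition, note this agrees with the monad structure already introduced, and then observe that all three uniqueness claims are then instances of the universal property of $\coit$.
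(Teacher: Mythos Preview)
The paper's own proof is by citation only: (1) and (2) are attributed to Uustalu, (3) to~\cite{GoncharovSchroderEtAl18}. Your direct finality argument is the natural way to unpack those citations, and items~1 and~3 go through as you describe; item~3 is literally a coiteration, since the square reads $\out\o\tau=(\id+\tau)\o\dist\o(\id\times\out)$, so $\tau=\coit(\dist\o(\id\times\out))$, and your extra postcomposition with $[\inl,\inr\inr]$ is superfluous.

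Item~2, however, contains a type error that is more than bookkeeping. You write $[\out f,\inr]\c X+DX\to Y+DX$, but $\out f=\out\o f\c X\to Y+DY$, not $Y+DX$, so this copairing does not typecheck. The point is that the displayed equation $\out\o f^\klstar=[\out f,\inr f^\klstar]\o\out$ is a \emph{corecursion} scheme rather than a plain coiteration: on the $X$-branch one already emits a tail in $DY$, so $f^\klstar$ is not $\coit\xi$ for any $\xi\c DX\to Y+DX$. The standard remedy is to enlarge the state space to $DY+DX$, with coalgebra
\[
\chi=[(\id+\inl)\o\out,\;[(\id+\inl)\o\out f,\,\inr\inr]\o\out]\c DY+DX\to Y+(DY+DX);
\]
then any solution $g$ of the square makes $[\id,g]\c (DY+DX,\chi)\to(DY,\out)$ a coalgebra morphism, and finality forces $[\id,g]=\coit\chi$, giving both existence and uniqueness of $g$. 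This is routine once spotted, but as written your argument for~(2) does not go through.
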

\begin{proof}
(1) and (2) follow from a more general characterization by Uustalu~\cite{Uustalu03};
(3) is established in~\cite{GoncharovSchroderEtAl18}.
% and (4) follows from the fact that $\BBD$ is a completely
%iterative monad~\cite{Milius05}, as for~$\BBD$ 
%guarded iterativity coincides with complete iterativity~\cite{GoncharovSchroderEtAl17}. 
\end{proof}
\begin{proposition}\label{lem:d-comm}
$\BBD$ is commutative.
\end{proposition}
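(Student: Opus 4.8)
The plan is to unfold commutativity of $\BBD$ to the identity $\tau^\klstar\comp\hat\tau = \hat\tau^\klstar\comp\tau$ of morphisms $DX\times DY\to D(X\times Y)$, call the two sides $L$ and $R$, and then invoke finality of $D(X\times Y)$: I would equip $DX\times DY$ with a single $((X\times Y)+\argument)$-coalgebra structure $\beta$ and show that \emph{both} $L$ and $R$ satisfy the equation characterizing the coiteration $\coit\beta$, so that $L = \coit\beta = R$.

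Concretely I would take $\beta\c DX\times DY\to (X\times Y)+(DX\times DY)$,
\begin{displaymath}
  \beta(s,t) = \case{\out s}{\inl x\mto \case{\out t}{\inl y\mto\inl(x,y)}{\inr t'\mto\inr(\now x, t')}}{\inr s'\mto\inr(s', t)},
\end{displaymath}
so that $\beta$ strips one $\lat$ off the left argument, and once the left argument has delivered a value it starts stripping $\lat$s off the right one; this is a genuine morphism of $\BC$, built from $\out$, $\now$ and $\dist$. The two things to verify are then $\out\comp L = (\id+L)\comp\beta$ and $\out\comp R = (\id+R)\comp\beta$. Both are proved by the case distinction built into $\beta$ (first along $\out s$, then along $\out t$), using the descriptions of $\now$, $(\argument)^\klstar$ and $\tau$ from \Cref{prop:D-props} together with a few elementary identities that all follow from \Cref{prop:D-props} and the fact that $\out$ is an isomorphism (Lambek): $\out\comp\now=\inl$, $\out\comp\lat=\inr$, naturality of $\tau$ and $\hat\tau$, the ``slide past $\lat$'' laws $\tau\comp(\id\times\lat)=\lat\comp\tau$ and $\hat\tau\comp(\lat\times\id)=\lat\comp\hat\tau$, and the two Kleisli identities $g^\klstar=[g,\lat\comp g^\klstar]\comp\out$ and $(\lat\comp g)^\klstar=\lat\comp g^\klstar$.

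For $L=\tau^\klstar\comp\hat\tau$ this should be routine, since $\hat\tau$ consumes the left argument and $\tau^\klstar$ then the right one -- exactly the order of priorities that $\beta$ encodes. In the case $\out s=\inr s'$ one peels a $\lat$ using $\hat\tau\comp(\lat\times\id)=\lat\comp\hat\tau$ and $(\lat\comp g)^\klstar=\lat\comp g^\klstar$; in the case $s=\now x$, $\hat\tau$ collapses to $\now(x,\argument)$ and $\tau^\klstar$ reduces to $\tau$ acting on $t$, handled by $\tau\comp(\id\times\lat)=\lat\comp\tau$; and when both project to the left, $L(s,t)=\now(x,y)$.

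The step I expect to be the real obstacle is $\out\comp R=(\id+R)\comp\beta$: by its shape $R=\hat\tau^\klstar\comp\tau$ consumes the \emph{right} argument first, whereas $\beta$ prioritises the left, so I need a small commutation argument showing the two orders agree within $R$. I would use naturality of $\tau_{DX,Y}$ in its first slot to move a $\lat$ on $s$ across $\tau$, turning it into $D(\lat\times\id)\comp\tau$, then $f^\klstar\comp Dg=(f\comp g)^\klstar$ followed by $\hat\tau\comp(\lat\times\id)=\lat\comp\hat\tau$ and $(\lat\comp g)^\klstar=\lat\comp g^\klstar$ to pull the $\lat$ out in front, yielding $R(\lat s',t)=\lat\,R(s',t)$; the remaining subcase $s=\now x$, $\out t=\inr t'$ uses $\tau\comp(\id\times\lat)=\lat\comp\tau$ to get $R(\now x,\lat t')=\lat\,R(\now x,t')$, and the value/value subcase again gives $R(s,t)=\now(x,y)$. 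With both coalgebra-morphism equations in hand, finality of $(D(X\times Y),\out)$ forces $L=\coit\beta=R$, which is the claim.
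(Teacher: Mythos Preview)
Your argument is correct and is a genuinely different (and arguably cleaner) route than the one in the paper.

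The paper does not pick an asymmetric coalgebra $\beta$ prioritising one argument.  Instead it performs the full four-way case split via $(\ldist+\ldist)\comp\dist\comp(\out\times\out)$ and shows that $\hat\tau^\klstar\comp\tau$ (and by symmetry $\tau^\klstar\comp\hat\tau$) is a fixpoint of
\[
  f\;\longmapsto\;\tuo\comp[\id+\hat\tau,\ \inr\comp[\tau,\ \lat\, f]]\comp(\ldist+\ldist)\comp\dist\comp(\out\times\out).
\]
Because of the extra $\lat$ in the diagonal summand this is \emph{not} literally a coalgebra-morphism equation, so finality cannot be invoked directly; the paper then rewrites the equation in the form $f=[\eta,f]^\klstar\comp g$ with $g$ guarded and appeals to uniqueness of solutions of guarded equations.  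Your asymmetric $\beta$ sidesteps this detour: by always peeling a single $\lat$ (left first, then right), the resulting equation is exactly $\out\comp(-)=(\id+(-))\comp\beta$, so plain finality of $D(X\times Y)$ suffices.  The trade-off is that the symmetry of the statement is no longer visible in the coalgebra, and the verification for $R$ is slightly less mechanical (you need the naturality step to swap the order of consumption), whereas in the paper the two sides are handled by a literal symmetry argument at the cost of the guardedness massage at the end.
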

Let us proceed with a characterization of the situations when $DX\iso X\times\nat + 1$.
Recall that a monic $\sigma$ is called \emph{complemented} if there exists 
$\sigma'\c X'\ito Y$, such that $Y$ is a coproduct of~$X$ and $X'$ 
with $\sigma$ and $\sigma'$ as coproduct injections. The \emph{law of excluded middle} 
states that any monic is complemented. We involve a rather more specific property.
\begin{proposition}\label{prop:D-class}
The monic $\hat\iota\c\nat\ito\enat$ is complemented iff $DX\iso X\times\nat + 1$.
\end{proposition}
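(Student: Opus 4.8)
The plan is to route everything through the ``number of steps'' map $D!_X\c DX\to D1=\enat$ and the square with legs $\iota_X\c X\times\nat\to DX$, $D!_X\c DX\to\enat$, $\hat\iota\c\nat\to\enat$, and the projection $\pi\c X\times\nat\to\nat$ (identifying $1\times\nat$ with $\nat$); it commutes by naturality of $\iota$, since $D!_X\comp\iota_X=\iota_1\comp(!_X\times\id)=\hat\iota\comp\pi$. The crucial preliminary is that \emph{this square is a pullback}. I would prove this either via the stream representation $DX\ito(X+1)^\nat$ recalled above --- the pullback of $\hat\iota$ along $D!_X$ is precisely the subobject of streams carrying exactly one $\inl$, which $\iota_X$ (being componentwise monic) identifies with $X\times\nat$ --- or else by an explicit ``run to step $n$'' function defined by primitive recursion plus one appeal to finality of $DX$. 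The ``only if'' direction then needs nothing more: writing $\bot_X=\coit(\inr)\c 1\to DX$ for divergence and $\infty=\bot_1\c 1\to\enat$, the isomorphism $DX\iso X\times\nat+1$ is (via the canonical comparison $[\iota_X,\bot_X]$) the statement that $[\iota_X,\bot_X]$ is iso, and instantiating at $X=1$ gives that $[\hat\iota,\infty]\c\nat+1\to\enat$ is iso, i.e.\ $\hat\iota$ is complemented by $\infty$.

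For the ``if'' direction, assume $[\hat\iota,\sigma']\c\nat+C\to\enat$ is an isomorphism, and first show $C\iso 1$. Decomposing $\out\comp\sigma'\c C\to 1+\enat=1+\nat+C$ along the three coproduct injections (extensivity), the ``$1$''- and ``$\nat$''-summands of the source must be initial: on either one $\sigma'$ would factor through $\hat\iota$ --- through $\now=\hat\iota\comp\zero$ on the first, through $\lat\comp\hat\iota=\hat\iota\comp\suc$ on the second --- contradicting disjointness of the coproduct $\nat+C$ (using strictness of the initial object $0$). Hence $\out\comp\sigma'=\inr\comp\sigma'\comp\rho$ for some $\rho\c C\to C$, so $(C,\inr\comp\rho)$ is a $1{+}(\argument)$-coalgebra and $\sigma'$ is a coalgebra morphism into the final coalgebra $(\enat,\out)$. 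Since $!_C\comp\rho=\,!_C$, the map $\infty\comp!_C$ is likewise a coalgebra morphism $(C,\inr\comp\rho)\to(\enat,\out)$, so by finality $\sigma'=\infty\comp!_C$.

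It remains to see that $\infty$ factors through $\sigma'$, equivalently that the pullback $Q$ of $\hat\iota$ along $\infty$ is initial. For its leg $k\c Q\to\nat$, the coherence $\hat\iota\comp k=\infty\comp!_Q$ together with $\out\comp\hat\iota=(\id+\hat\iota)\comp[\zero,\suc]^{\mone}$, $\out\comp\infty=\inr\comp\infty$, and monicity of $\id_1+\hat\iota$ (a coproduct of monics) forces $[\zero,\suc]^{\mone}\comp k=\inr\comp k$, i.e.\ $k=\suc\comp k$; and the successor has no fixed points (iterating the predecessor map ``$k$ times'' by primitive recursion forces $k=\zero\comp!_Q$, which with $k=\suc\comp k$ and disjointness of $1+\nat$ yields $Q\iso 0$). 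Pulling the coproduct $\nat+C$ back along $\infty$ then gives $1\iso 0+\infty^{*}(\sigma')\iso\infty^{*}(\sigma')$, so $\infty=\sigma'\comp\hat\infty$ with $!_C\comp\hat\infty=\id$; combined with $\sigma'=\infty\comp!_C=\sigma'\comp\hat\infty\comp!_C$ and monicity of $\sigma'$ this gives $\hat\infty\comp!_C=\id$, so $C\iso 1$ and $\enat\iso\nat+1$ with injections $\hat\iota,\infty$. Finally, extensivity applied to $D!_X\c DX\to\enat\iso\nat+1$ gives $DX\iso(D!_X)^{*}(\hat\iota)+(D!_X)^{*}(\infty)$; the pullback square identifies the first summand with $X\times\nat$ via $\iota_X$, and the second summand $P$ is isomorphic to $1$ by the same technique used for $C$: $\bot_X$ satisfies $D!_X\comp\bot_X=\infty$ hence factors through $P$, decomposing $\out\comp\pi_P$ shows it lands in the $DX$-summand and thereby equips $P$ with a coalgebra structure for which both $\pi_P$ and $\bot_X\comp!_P$ are coalgebra morphisms into $(DX,\out)$, so $\pi_P=\bot_X\comp!_P$ by finality and $P\iso 1$, with $\pi_P$ corresponding to $\bot_X$. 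Thus $[\iota_X,\bot_X]\c X\times\nat+1\to DX$ is an isomorphism.

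The main obstacle, I expect, is the preliminary pullback lemma: that is the one place where the concrete structure of the delay monad (streams, or explicit step-counting) has to be used, while everything afterwards is a fairly routine interplay of extensivity/disjointness with finality of $DX$ --- the only genuinely arithmetic ingredient being the well-foundedness fact that $k=\suc\comp k$ forces $Q\iso 0$.
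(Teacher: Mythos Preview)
Your proposal is correct and follows the same strategy as the paper: the pullback square relating $\iota_X$ and $\hat\iota$, extensivity to transport complementedness from $\hat\iota$ to $\iota_X$, and finality of $DX$ to identify the residual summand as $1$. Your detour through first proving $C\iso 1$ is not needed—the coalgebra/finality argument you give for $P\iso 1$ applies verbatim to the complement of $\iota_X$ using disjointness of $\iota_X$ and its complement in $DX$ (rather than of $\hat\iota$ and $\infty$ in $\enat$)—but otherwise you have spelled out precisely the details that the paper's sketch leaves implicit.
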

\begin{proof}[Proof (Sketch).]
The necessity is obvious. Let us proceed with the sufficiency. Using extensiveness 
of $\BC$ one can obtain the following pullback:
\begin{equation*}
\begin{tikzcd}[column sep = 15ex,row sep = 3ex]
X\times\nat\pbk
	\rar["\snd"]
	\dar[hookrightarrow, "\iota"'] &  
\nat 
	\dar[hookrightarrow, "\hat\iota"]\\
DX
	\rar["D\bang"] &  
\enat
\end{tikzcd}
\end{equation*}
By assumption,
$\hat\iota$ is complemented, and since $\BC$ is extensive, so is $\iota$.
We obtain that $DX\iso\nat\times X + R$ for some $R$, and then it follows from finality of 
$DX$ that $R\iso 1$.
\end{proof}
The property of $\hat\iota\c\nat\ito\enat$ to be complemented is a categorical 
formulation of the \emph{limited principle of omniscience (LPO)}, which is rejected 
in constructive mathematics. Informally, LPO states that every infinite bit-stream
either contains $1$ at some position or contains only~$0$ everywhere (the constraint 
that the stream contains \emph{at most one} $1$, does not make a difference). We say that~$\BC$ 
is an LPO category if $\hat\iota\c\nat\ito\enat$ is complemented.
\begin{corollary}\label{cor:D-class}
Suppose that (i) $\BC$ has countable products and (ii) given a family $({\sigma_i\c A_i\to A})_{i\in\omega}$ 
of complemented pairwise disjoint monos, the induced universal morphism 
$\coprod_i A_i\to A$ is complemented. Then $\BC$ satisfies LPO and hence $DX\iso X\times\nat + 1$.
\end{corollary}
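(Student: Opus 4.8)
The plan is to prove that $\BC$ is an LPO category, i.e.\ that the monic $\hat\iota\c\nat\ito\enat$ is complemented; the assertion $DX\iso X\times\nat+1$ then follows at once from \Cref{prop:D-class}. For $n\in\omega$ write $\bar n=\suc^n\o\zero\c 1\to\nat$ and set $p_n=\hat\iota\o\bar n\c 1\to\enat$; since $\hat\iota\o\zero=\now$ and $\hat\iota\o\suc=\lat\o\hat\iota$ by the defining recursion of $\hat\iota$, we get $p_n=\lat^n\o\now$, which is monic. The strategy is to exhibit $(p_n)_{n\in\omega}$ as a family of pairwise disjoint complemented monos $1\ito\enat$ whose induced universal morphism $\coprod_n 1\to\enat$ is, modulo the canonical isomorphism $\coprod_n 1\iso\nat$, exactly $\hat\iota$; feeding this into hypothesis~(ii) then yields that $\hat\iota$ is complemented.

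The crux is a complemented decomposition of $\enat$. By Lambek's lemma $\tuo=[\now,\lat]\c 1+\enat\to\enat$ is an isomorphism; in particular $\lat=\tuo\o\inr$ is monic, and $[\now,\lat]$ exhibits $\now\c 1\ito\enat$ as complemented with complement the subobject $\lat(\enat)\ito\enat$. Writing $B_n\ito\enat$ for the subobject carried by the monic $\lat^n$ (so $B_0=\enat$), transporting the coproduct decomposition $\enat=\now(1)+\lat(\enat)$ along the isomorphism $\enat\iso B_n$ induced by $\lat^n$ yields a coproduct decomposition $B_n=\lat^n\now(1)+\lat^{n+1}(\enat)=p_n(1)+B_{n+1}$, so that $p_n(1)\ito B_n$ is complemented with complement $B_{n+1}$. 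Since $B_0\ge B_1\ge B_2\ge\cdots$ with each $B_{k+1}$ complemented in $B_k$, composition of complemented monos shows every $B_n$ is complemented in $\enat$, hence so is every $p_n(1)$; disjointness is then immediate, as $p_m(1)$ is disjoint from its complement $B_{m+1}$ in $B_m$ while $p_n(1)\le B_n\le B_{m+1}$ for $m<n$.

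It remains to identify the induced morphism. The copower $\coprod_{n\in\omega}1$ exists --- it must, for~(ii) to be applicable to the family $(p_n)_n$ of complemented pairwise disjoint subobjects of $\enat$, and it is in any case ensured by~(i) --- and it is canonically isomorphic to $\nat$ with coproduct injections $\bar n$: this follows by equipping $\coprod_n 1$ with the $(1+\argument)$-algebra structure $[\,c_0,\oname{shift}\,]$, where $c_n$ are the injections and $\oname{shift}\o c_n=c_{n+1}$, and checking that the unique algebra morphism $\nat\to\coprod_n 1$ is inverse to the morphism $j\c\coprod_n 1\to\nat$ with $j\o c_n=\bar n$. Since $\hat\iota\o j\o c_n=\hat\iota\o\bar n=p_n$, the universal morphism $\coprod_n 1\to\enat$ induced by $(p_n)_{n\in\omega}$ equals $\hat\iota\o j$; by~(ii) it is complemented, and as $j$ is an isomorphism so is $\hat\iota$. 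Hence $\BC$ is an LPO category, and $DX\iso X\times\nat+1$ by \Cref{prop:D-class}.

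I expect the main obstacle to be the decomposition $B_n=p_n(1)+B_{n+1}$, i.e.\ the fact that ``equality to the co-natural number $\bar n$'' is a decidable property of $\enat$; the point is that it is obtained purely by iterating the coalgebra isomorphism $\enat\iso 1+\enat$, with no recourse to stream-level manipulations --- whereas the identification $\coprod_n 1\iso\nat$ is a comparatively routine use of the universal property of the natural number object.
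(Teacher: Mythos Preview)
Your proposal is correct and follows essentially the same route as the paper's proof: show $\hat\iota$ is complemented by exhibiting it as the universal morphism from $\coprod_\omega 1\iso\nat$ induced by the family $(p_n)_n$, then invoke~(ii) and \Cref{prop:D-class}. The paper's proof is three sentences and defers to folklore for $\nat\iso\coprod_\omega 1$ while leaving the verification that the $p_n$ are pairwise disjoint complemented monos entirely implicit; your filtration $B_n=p_n(1)+B_{n+1}$ via iterating the Lambek isomorphism $\enat\iso 1+\enat$ is exactly the right way to make that explicit.

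One minor wording issue: you say the existence of $\coprod_n 1$ ``is in any case ensured by~(i)'', but countable products alone do not give countable coproducts --- the folklore is rather that with countable products and a (stable) NNO, $\nat$ itself satisfies the universal property of $\coprod_\omega 1$ (so the coproduct exists \emph{because} $\nat$ does). Your subsequent argument that any countable copower of $1$ is isomorphic to $\nat$ via mutually inverse algebra maps is fine once you have both objects, but strictly speaking the cleaner argument under~(i) is to verify directly that $(\nat,(\bar n)_n)$ is a coproduct cocone. Either way you reach the same conclusion, and your reading that~(ii) already presupposes the relevant coproducts is also reasonable.
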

\begin{proof}
It is folklore that in categories with countable products $\nat$
is isomorphic to the sum of~$\omega$ copies of $1$. Thus $\hat\iota\c\nat\to\enat$ 
is the induced universal map, which is complemented~by~(ii).
\end{proof}
\begin{example}\label{exa:D-neg}
As expected, \cref{prop:D-class} does not apply to models, designed with constructivist 
principles in mind, such as intensional type theories, or realizability toposes,
although, it is technically possible to design a realizability topos, 
satisfying LPO~\cite{Bauer15}, in which thus $DX\iso X\times\nat+1$. Another class of examples to which \cref{prop:D-class}
does not apply stems from topology. In $\Top$, $\enat$ is a subspace of the \emph{Cantor
space} $2^\nat$ whose topology is generated by the base of opens of the form 
$\{sr\mid r\in\{0,1\}^\omega\}$ with $s\in 2^\star$. Then $\enat$ is isomorphic to 
a \emph{one-point compactification} of~$\nat$, i.e.\ it is the set~$\nat\cup\{\infty\}$, 
whose opens are all subsets of $\nat$ and additionally all complements of finite 
subsets of $\nat$ in $\nat\cup\{\infty\}$. Clearly, $\enat\ncong\nat+1$. This kind of arguments 
is inherited by higher order topology-based models, such as Johnstone's \emph{topological
topos}~\cite{Johnstone79}, which is a Grothendieck topos not satisfying LPO.
\end{example}

\begin{example}\label{exa:D-pos}
\cref{prop:D-class} and \cref{cor:D-class} cover quite a few models constructed 
in the scope of classical mathematics. Every set theory satisfying the law of
excluded middle satisfies LPO. Every presheaf topos (w.r.t.\ a classical set theory)
inherits countable coproducts from $\Set$ and those satisfy~(ii) of \cref{cor:D-class}. 
As we indicated in \cref{exa:D-neg}, a Grothendieck topos generally need not 
satisfy LPO, but, e.g.\ \emph{Schanuel topos} (aka the topos of nominal sets) 
does satisfy it, because this topos is Boolean.  
As we indicated in \cref{exa:D-neg}, $\Top$ does not satisfy LPO, but curiously
the full subcategory of directed complete partial orders $\dCpo$ (under Scott topology)
does. Both $\Top$ and $\dCpo$ have countable coproducts, but $\Top$ fails to 
satisfy condition~(ii), of~\cref{cor:D-class}, while $\dCpo$ does satisfy it.
This can be read as a manifestation of (undesirable) effects, which motivated synthetic 
domain theory~\cite{Hyland92}. 
 
Conditions (i) and (ii) in \cref{cor:D-class} are essentially the axioms of
\emph{hyper-extensive categories} by Ad\'amek et al~\cite{AdamekMiliusEtAl08} 
(modulo our background extensiveness assumption). An example of an LPO category
that fails (i) is Lawvere's ETCS. Another example of a Grothendieck topos that 
fails~(ii) can be rendered as a certain category of \emph{J\'onsson-Tarski 
algebras}~\cite{AdamekMiliusEtAl08}.
\end{example}
The above examples indicate that in models developed w.r.t.\ constructive foundations
LPO fails by design, while in models developed w.r.t.\ classical foundations,
depending on the purposes, constructively questioned principles may leak in from the 
metalogic level inside of the category, possibly in a weakened form, resulting 
in an explicit expression for $DX$.

%
%
%\begin{proposition}\label{prop:D-cart}
%$\BBD$ is \emph{Cartesian}, i.e.\ the following naturality squares for unit and multiplication
%are pullbacks:  
%%
%\begin{equation*}
%\begin{tikzcd}[column sep = 9ex,row sep = 5ex]
%X\pbk
%	\rar["f"]
%	\dar["\eta_X"'] &  
%Y 
%	\dar["\eta_Y"]\\
%DX
%	\rar["Df"] &  
%DY
%\end{tikzcd}
%%
%\qquad\qquad
%%
%\begin{tikzcd}[column sep = 9ex,row sep = 5ex]
%D^2X\pbk
%	\rar["D^2f"]
%	\dar["\mu_X"'] &  
%D^2Y 
%	\dar["\mu_Y"]\\
%DX
%	\rar["Df"] &  
%DY
%\end{tikzcd}
%\end{equation*}
%\end{proposition}

%\todo*{How to show that $\tau$ is epic?}

\section{Unguarded Elgot Algebras}
Recall the following notion from~\cite{AdamekMiliusEtAl06} where the term
\emph{complete Elgot algebra over $H$} is used. 
\begin{definition}[Guarded Elgot Algebras]\label{def:ea}
Given an endofunctor $H$, an \emph{($H$-)guarded Elgot algebra} is a tuple $(A,a\c HA\to A, (\argument)^\pistar)$
where the \emph{iteration} $f^\pistar\c X\to A$ for every given $f\c X\to A+HX$,
satisfies the following axioms:
\begin{itemize}
  \item {\upshape(\FIX)} for every $f\c X\to A+HX$, $f^\pistar = [\id, a\o Hf^\pistar]\o f$;
  \item {\upshape(\UNI)} for every $f\c X\to A+HX$ every $g\c Y \to A+HY$ and every $h\c X\to Y$, 
$(\id+Hh)\o f = g\o h$ implies $f^\pistar  = g^\pistar\o h$;
  \item {\upshape(\COM)} for every $h\c Y \to X+HY$ and $f\c X\to A+HX$, $((f^\pistar+\id)\o h)^\pistar 
= \bigl([(\id+H\inl)\o f\comma\inr\o(H\inr)]\o{[\inl\comma h]}\c X+Y\to A+H(X+Y)\bigr)^\pistar\o \inr$.
\end{itemize}  
$H$-guarded Elgot algebras form a category together with \emph{iteration preserving} 
morphisms defined as follows: a morphism $h$ from $(A,a, (\argument)^\pistar)$ to 
$(B,b, (\argument)^\pistar)$ is a morphism $h\c A\to B$ between carriers, such 
that $h\o f^\pistar = ((h+\id)\o f)^\pistar$
for every $f\c X\to A+HX$ 
(this entails $h\o a = b\o (Hh)$~\cite[Lemma 5.2]{AdamekMiliusEtAl06}). 
\end{definition}
The \COM axiom is the most sophisticated one. It intuitively states that 
running~$h$ in a loop over $Y$ as the state space, and subsequently running 
$f$ in a loop over $X$ as the state space, equivalently corresponds to running 
a certain term constructed from $f$ and $g$ in a single loop over the combined state~$X+Y$.

The axioms of guarded Elgot algebras are complete in the following sense.
\begin{theorem}\textup{\cite[Theorem 5.4,Corollary 5.7,Theorem 5.8]{AdamekMiliusEtAl06}}\label{thm:D}
For every $X$, a final coalgebra $\nu\gamma.\,X+H\gamma$ is a free $H$-guarded algebra
over $X$, in particular, existence of final coalgebras is equivalent to existence 
of free $H$-guarded Elgot algebras. 
The categories of $H$-guarded Elgot algebras and 
algebras of the monad $\nu\gamma.\,X+H\gamma$ are isomorphic.
\end{theorem}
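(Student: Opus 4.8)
The plan is to realise the claimed free guarded Elgot algebra as $TX:=\nu\gamma.\,X+H\gamma$, with $\out\c TX\xto{\cong}X+H(TX)$ its (Lambek-invertible) structure map; writing $\out^{-1}=[\eta_X,a_X]$ reads off a unit $\eta_X\c X\to TX$ and an $H$-algebra $a_X\c H(TX)\to TX$. To equip $(TX,a_X)$ with an iteration operator, given $f\c S\to TX+HS$ I form the $(X+H\argument)$-coalgebra
\[
\bar f:=\bigl[(\id+H\inl)\circ\out,\;[(\id+H\inl)\circ\out,\;\inr\circ H\inr]\circ f\bigr]\c TX+S\to X+H(TX+S),
\]
let $u_f\c TX+S\to TX$ be the unique coalgebra morphism into $(TX,\out)$, and set $f^\pistar:=u_f\circ\inr$. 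A diagram chase from $\out\circ u_f=(\id+Hu_f)\circ\bar f$, using that the final coalgebra has only the identity endomorphism, yields $u_f\circ\inl=\id$, hence $u_f=[\id,f^\pistar]$; unwinding this, $f^\pistar$ is the unique $k\c S\to TX$ with $k=[\id,a_X\circ Hk]\circ f$.

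Next I would check the axioms of \cref{def:ea} for $(TX,a_X,(\argument)^\pistar)$. \FIX\ is exactly the fixpoint characterisation just obtained. For \UNI, if $(\id+Hh)\circ f=g\circ h$ then $\id_{TX}+h$ is a coalgebra morphism $(TX+S,\bar f)\to(TX+S',\bar g)$, so $u_g\circ(\id+h)=u_f$ by finality, and restricting along $\inr$ gives $g^\pistar\circ h=f^\pistar$. The \COM\ axiom is the laborious step: both sides have to be expanded, each into a coalgebra morphism out of a threefold coproduct assembled from $TX$, $X$ and $Y$ into $(TX,\out)$, and then equated via the universal property of $TX$; pushing the various coproduct injections through $H$ and keeping the bookkeeping straight is, I expect, the main obstacle of the first half of the proof.

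For freeness, let $(B,b,(\argument)^\pistar)$ be a guarded Elgot algebra and $h_0\c X\to B$. I set $h:=\bigl((h_0+\id_{H(TX)})\circ\out\bigr)^\pistar\c TX\to B$, with the iteration taken in $B$. Applying \FIX\ in $B$ and $\out\circ\eta_X=\inl$ gives $h\circ\eta_X=h_0$. The heart of the matter is that $h$ is iteration preserving, i.e.\ $h\circ f^\pistar=\bigl((h+\id)\circ f\bigr)^\pistar$ for all $f\c S\to TX+HS$ (left iteration in $TX$, right one in $B$): substituting the description of $f^\pistar$ as a solution over $TX$ and of $h$ as a solution over $TX$ of the $\out$-system, one composes the two solutions by \COM\ in $B$ into a single solution over $S+TX$ and identifies it with $\bigl((h+\id)\circ f\bigr)^\pistar$ using \UNI; this also forces $h\circ a_X=b\circ Hh$ by \cite[Lemma~5.2]{AdamekMiliusEtAl06}. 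Uniqueness is short: from the concrete construction one checks $\id_{TX}=\bigl((\eta_X+\id)\circ\out\bigr)^\pistar$ (the mediating coalgebra morphism $TX+TX\to TX$ being the codiagonal), so any iteration-preserving $h'$ with $h'\circ\eta_X=h_0$ satisfies $h'=h'\circ\id_{TX}=\bigl((h'\circ\eta_X+\id)\circ\out\bigr)^\pistar=\bigl((h_0+\id)\circ\out\bigr)^\pistar=h$. Running this in reverse — from a free guarded Elgot algebra over $X$ one shows $[\eta_X,a_X]$ is invertible and its inverse a final $(X+H\argument)$-coalgebra, finality following from iteration-preservation plus the uniqueness clause — gives the stated equivalence between existence of final coalgebras and of free guarded Elgot algebras.

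Finally, freeness yields a free--forgetful adjunction between $H$-guarded Elgot algebras and $\BC$ whose induced monad $\BBT$ has underlying functor $TX=\nu\gamma.\,X+H\gamma$, and the isomorphism of categories is precisely monadicity of this adjunction. I would verify Beck's conditions: the forgetful functor is conservative — a structure-preserving morphism that is a bijection on carriers is an isomorphism of Elgot algebras, since by \UNI\ its inverse is automatically iteration preserving — and it creates coequalizers of $U$-split pairs, the iteration operator on the coequalizer being transported along the splitting and the three axioms checked componentwise; the comparison functor is then an isomorphism. More concretely, one can give the two functors by hand: a $\BBT$-algebra $(A,\xi\c TA\to A)$ induces the Elgot algebra with $H$-structure $\xi\circ a_{TA}\circ H\eta_A$ and iteration $f\mapsto\xi\circ\bigl((\eta_A+\id)\circ f\bigr)^\pistar$, an Elgot algebra $A$ induces the $\BBT$-algebra freely extending $\id_A$, and a routine check shows these are mutually inverse, matching the $\BBT$-algebra laws with \FIX, \UNI\ and \COM.
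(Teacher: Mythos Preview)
The paper does not prove this theorem at all: it is quoted verbatim from \cite[Theorem~5.4, Corollary~5.7, Theorem~5.8]{AdamekMiliusEtAl06} and used as a black box. So there is no ``paper's own proof'' to compare against; what you have written is essentially a sketch of the proof from the cited source, and your outline is faithful to that approach --- defining the iteration on $TX$ via finality of the enlarged coalgebra on $TX+S$, reading off \FIX\ and \UNI\ from the coalgebra-morphism equation, and obtaining the free extension into an arbitrary Elgot algebra as $((h_0+\id)\circ\out)^\pistar$, with iteration preservation via \COM. The uniqueness argument you give (using $\id_{TX}=((\eta_X+\id)\circ\out)^\pistar$) is exactly the standard one.

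Two remarks. First, the reverse implication you allude to (``free guarded Elgot algebra $\Rightarrow$ final coalgebra'') is not quite ``running this in reverse'': one must first manufacture the isomorphism $TX\cong X+H(TX)$ from the free-algebra data alone, which in \cite{AdamekMiliusEtAl06} is done by exhibiting $X+H(TX)$ itself as a guarded Elgot algebra and showing the canonical map is iteration-preserving in both directions; your one-line gloss hides this step. Second, for the isomorphism of categories, the concrete back-and-forth you describe at the end is precisely what the cited reference does (and is in fact easier here than invoking Beck, since the inverse functors are explicit); either route works, but the explicit one avoids verifying creation of $U$-split coequalizers, which is unnecessary overhead.
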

%
%\begin{example}
%Let us illustrate importance of the~\COM axiom by showing that it 
%cannot be lifted. In a theory of classical sets $\BC=\Set$, consider $A=\{\top,\bot\}$, 
%$H=(\argument)^2$, and $a\c A\times A\to A$ being the logical disjunction. 
%For $f\c X\to A+X\times X$, and $\sigma\in\{0,1\}$ say that $b\in A$ is 
%$\sigma$-reachable from $x\in X$ if $\sigma=\eps$ and $f(x)=\inl b$, or 
%$\sigma=\sigma'0$, $f(x) = \inr\brks{x',y}$ and $b$ is $\sigma'$-reachable from $x'$,
%or $\sigma=\sigma'1$, $f(x) = \inr\brks{y,x'}$ and $b$ is $\sigma'$-reachable from $x'$.
%Let $f^\pistar(x) = \top$ if $\top$ is $\sigma$-reachable from $x$ for some $\sigma$, 
%and the number of those $\sigma$ for which $\bot$ is $\sigma$-reachable from $x$
%is finite. It is easy to see that thus defined iteration operator satisfies 
%\FIX and \UNI, but it fails \COM. Indeed, 
%\end{example}
%
By Theorem~\ref{thm:D}, free algebras of the delay monad are thus precisely the free $\Id$-guarded
Elgot algebras. We then introduce \emph{un-guarded Elgot algebras} as a certain subcategory 
of $\Id$-guarded~ones.
\begin{definition}[Unguarded Elgot Algebras]\label{def:uea}
We call\/ $\Id$-guarded Elgot algebras of the form $(A,\id\c A\to A, (\argument)^\pistar)$ 
\emph{unguarded Elgot algebras}, or simply \emph{Elgot algebras} if no confusion arises. 
Given two Elgot algebras $A$ and $B$, we call $f\c X\times A\to B$ right iteration 
preserving~if 
\begin{align*}
f (\id\times h^\pistar) = \bigl(X\times Z\xto{\id\times h} X\times (A+Z)\xto{\dist} X\times A+X\times Z\xto{f+\id} B + X\times Z\bigr)^\pistar
\end{align*}
for any $h\c Z\to A+Z$. This generalizes Elgot algebra morphisms under~$X=1$. 
\end{definition}
We write simply `iteration preserving' instead of `right iteration preserving'
in the sequel if the decomposition of $X\times A$ into the Elgot algebra part $A$ 
and the parameter part $X$ is clear from the context. Parametrization will be 
needed later for characterizing stability of free algebras~(Lemma~\ref{lem:stable}).

The unguarded Elgot algebras thus differ from the $\Id$-guarded ones 
in that the $\Id$-algebra structures $a\c A\to A$ in the former case are forced to be
trivial. This has an impact on forming the corresponding free structures: in the
guarded case, the $\Id$-algebra structures must be maximally unrestricted, which is the 
reason why we obtain a free $\Id$-guarded Elgot algebra~$DX$ with the $\Id$-algebra 
structure playing the role of delays. Intuitively, a free unguarded Elgot algebra must be a quotient of a free 
guarded one under removing delays, which is indeed what happens for LPO categories,
as we show later. Otherwise, the situation is much more subtle, and it is one of 
our goals to demonstrate that free unguarded Elgot algebras are exactly the 
semantic carriers generated by unguarded iteration.

In the unguarded case \COM can be replaced by a simpler looking  new law that we dub~\FOL:
\begin{proposition}\label{prop:fold}
Given $A\in |\BC|$, $(A, (\argument)^\pistar)$ is an Elgot algebra
iff\/ $(\argument)^\pistar$ satisfies 
\begin{itemize}
  \item {\upshape(\FIX)} for every $f\c X\to A+X$, $f^\pistar = [\id, f^\pistar]\o f$;
  \item {\upshape(\UNI)} for every $f\c X\to A+X$ every $g\c Y \to A+Y$ and every $h\c X\to Y$, 
$(\id+h)\o f = g\o h$ implies $f^\pistar  = g^\pistar\o h$;
  \item {\upshape(\FOL)} for every $h\c Y \to X+Y$ and $f\c X\to A+X$, 
  \mbox{$(f^\pistar+ h)^\pistar=[(\id + \inl)\o f\comma\inr\o h]^\pistar$.} 
\end{itemize}
\end{proposition}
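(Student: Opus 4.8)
The plan is to isolate the only real content of the statement: \FIX and \UNI in the proposition are verbatim the $H=\Id$, $a=\id$ instances of the homonymous clauses of \cref{def:ea} (with $H=\Id$ and $a=\id$ one has $[\id,a\o Hf^\pistar]=[\id,f^\pistar]$ and $\id+Hh=\id+h$), so an Elgot algebra is precisely a pair $(A,(\argument)^\pistar)$ satisfying \FIX, \UNI and the equation $((f^\pistar+\id)\o h)^\pistar=\bigl([(\id+\inl)\o f,\inr\o\inr]\o[\inl,h]\bigr)^\pistar\o\inr$ for all $f\c X\to A+X$, $h\c Y\to X+Y$. It therefore remains to show that, in the presence of \FIX and \UNI, this $\Id$-instance of \COM is equivalent to \FOL. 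Fixing $f,h$ I would abbreviate the three relevant iterands $X+Y\to A+(X+Y)$ as
\[
g_1:=[(\id+\inl)\o f,\ \inr\o h],\qquad g_2:=f^\pistar+h=[\inl\o f^\pistar,\ \inr\o h],\qquad g_3:=[(\id+\inl)\o f,\ \inr\o\inr]\o[\inl,h],
\]
so that \FOL reads $g_2^\pistar=g_1^\pistar$ and the $\Id$-instance of \COM reads $((f^\pistar+\id)\o h)^\pistar=g_3^\pistar\o\inr$.

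The heart of the proof is three auxiliary identities, all established from \FIX and \UNI alone. First, $g_1^\pistar\o\inl=f^\pistar=g_2^\pistar\o\inl$: the left equality is \UNI applied along $\inl\c X\to X+Y$, using $g_1\o\inl=(\id+\inl)\o f$, and the right one is a direct \FIX-computation from $g_2\o\inl=\inl\o f^\pistar$. Second, $g_3^\pistar\o\inr=g_1^\pistar\o\inr$: here I would apply \UNI along the witness $[\inl,h]\c X+Y\to X+Y$, verifying the coproduct identity $(\id+[\inl,h])\o g_3=g_1\o[\inl,h]$, which gives $g_3^\pistar=g_1^\pistar\o[\inl,h]$; precomposing with $\inr$ and using $g_1\o\inr=\inr\o h$ together with \FIX yields $g_3^\pistar\o\inr=g_1^\pistar\o h=g_1^\pistar\o\inr$. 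Third, $g_2^\pistar\o\inr=((f^\pistar+\id)\o h)^\pistar$: the key observation is $(\id+h)\o\bigl((f^\pistar+\id)\o h\bigr)=g_2\o h$ (because $(\id+h)\o(f^\pistar+\id)=g_2$), so \UNI along $h\c Y\to X+Y$ gives $((f^\pistar+\id)\o h)^\pistar=g_2^\pistar\o h$, which \FIX identifies with $g_2^\pistar\o\inr$.

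Given these, both implications are immediate. If \FOL holds then $g_2^\pistar=g_1^\pistar$; restricting to $\inr$ and invoking the second and third identities yields $((f^\pistar+\id)\o h)^\pistar=g_2^\pistar\o\inr=g_1^\pistar\o\inr=g_3^\pistar\o\inr$, i.e.\ \COM. Conversely, if \COM holds, the second and third identities rewrite it as $g_2^\pistar\o\inr=g_1^\pistar\o\inr$; together with $g_2^\pistar\o\inl=g_1^\pistar\o\inl$ from the first identity and the couniversal property of $X+Y$, this gives $g_2^\pistar=g_1^\pistar$, i.e.\ \FOL.

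The one genuinely delicate point I expect is the choice of \UNI-witnesses in the second and third identities. The naive attempt — comparing $g_3$ (resp.\ $g_2$) with $(f^\pistar+\id)\o h$ along the bare coprojection $\inr$ — fails, since $g_1$ and $g_3$ carry a single unfolding of $f$ on the $X$-component whereas $g_2$ already carries $f^\pistar$ there, and \UNI, being a bisimulation-style principle, cannot absorb that one administrative step across $\inr$. Routing uniformity through $[\inl,h]$ and through $h$ itself is exactly what makes the extra unfolding of $f$ (resp.\ of $h$) cancel, after which the verifications are routine bookkeeping with coproduct injections. Note that \FOL enters only through its trivial reformulation $g_2^\pistar=g_1^\pistar$, so no direct manipulation of the \FOL equation is ever required.
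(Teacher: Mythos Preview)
Your proposal is correct and essentially identical to the paper's own proof: both reduce the question to the three auxiliary identities $g_1^\pistar\o\inl=f^\pistar=g_2^\pistar\o\inl$, $g_3^\pistar\o\inr=g_1^\pistar\o\inr$, and $g_2^\pistar\o\inr=((f^\pistar+\id)\o h)^\pistar$, established via \UNI along the witnesses $\inl$, $[\inl,h]$, and $h$ respectively, together with one \FIX-unfolding each. The only difference is presentational: the paper first reformulates \FOL as $g_1^\pistar\o\inr=((f^\pistar+\id)\o h)^\pistar$ and then reduces the equivalence with \COM to your identity~2, whereas you keep the three identities separate and combine them at the end --- your $g_1,g_2,g_3$ notation arguably makes the structure clearer.
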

The laws of Elgot algebras are summarised in Fig.~\ref{fig:elg-eq} in the style
of string diagrams, akin to those, which are used for axiomatizing traced symmetric 
monoidal categories~\cite{JoyalStreetEtAl96}. In contrast to the latter, here we 
essentially can only form traces of morphisms of the form $X+Y\to A+Y$ where $A$
is an Elgot algebra. Merging wires is to be interpreted as calling codiagonal 
morphisms $\nabla\c X+X\to X$. 

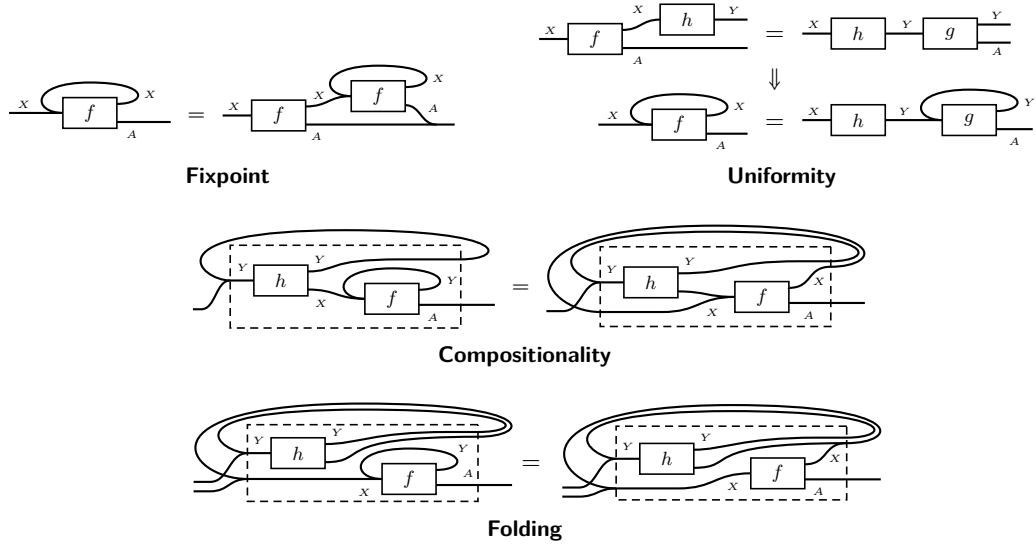
\begin{figure}[t]
\captionsetup[subfigure]{labelformat=empty}
\captionsetup[subfigure]{justification=centering}
\small\centering
\subfloat[][\FIX]{
$
\begin{aligned}
 \begin{tikzpicture} [cprop]

    \node[unguarded box] (g) {$f$};

    \draw [sigma line, thick] ([yshift=-.15cm]g.east)
      to node [yshift=-.2cm,xshift=-.2cm] {\tiny $A$} ++(6.5ex,0);

    \draw [sigma line, thick] (g.west) 
      to node [yshift=.15cm,xshift=-.2cm] {\tiny $X$} ++(-6.5ex,0);

    \begin{pgfinterruptboundingbox}
      \draw [thick] ([yshift=.15cm]g.east) to node [yshift=.15cm,xshift=.55cm] {\tiny $X$} ++(0,0)
        .. controls +(right:.5cm) and +(right:1cm) .. ([yshift=.25cm]g.north)
        .. controls +(left:1cm) and +(left:.5cm) .. (g.west);
    \end{pgfinterruptboundingbox}

    \node [minimum size=0,inner sep=0] at ([yshift=.25cm]g.north) {};

\end{tikzpicture}
\end{aligned}
~=~
\begin{aligned}
 \begin{tikzpicture} [cprop]

    \node[unguarded box] (f) {$f$};
    \node[unguarded box, above right=-.15cm and .6cm of f] (g) {$f$};

    \draw [sigma line, thick] ([yshift=.15cm]f.east)
      to node [yshift=.1cm,xshift=-0.15cm] {\tiny $X$} (g.west);

    \draw [sigma line, thick] ([yshift=-.15cm]f.east)
      to node [yshift=-.2cm,xshift=-1.05cm] {\tiny $A$} ++(18ex,0);

    \draw [sigma line, thick] (f.west) to node [yshift=.15cm,xshift=-.05cm] {\tiny $X$} ++(-3.5ex,0);
    
    \draw [sigma line, thick] ([yshift=-.15cm]g.east)
      to node [yshift=.1cm,xshift=.2cm] {\tiny $A$} ([xshift=2.15cm,yshift=-.15cm]f.east);
    
    \begin{pgfinterruptboundingbox}
      \draw [thick] ([yshift=.15cm]g.east) to node [yshift=.15cm,xshift=.55cm] {\tiny $X$} ++(0,0)
        .. controls +(right:.5cm) and +(right:1cm) .. ([yshift=.25cm]g.north)
        .. controls +(left:1cm) and +(left:.5cm) .. (g.west);
    \end{pgfinterruptboundingbox}

    \node [minimum size=0,inner sep=0] at ([yshift=.25cm]f.north) {};

\end{tikzpicture}
\end{aligned}
$
%\label{fig:m-vanish-i}
}%
\qquad
\subfloat[][\UNI]{
\begin{tabular}{r@{~~}c@{~~}l}
$
\begin{aligned}
 \begin{tikzpicture} [cprop,baseline=0pt]

    \node[unguarded box] (f) {$f$};
    \node[unguarded box, above right=-.15cm and .5cm of f] (g) {$h$};

    \draw [sigma line, thick] ([yshift=-.15cm]f.east)
      to node [yshift=-.2cm,xshift=-.5cm] {\tiny $A$} ([yshift=-.15cm]f.east-|g.east) to ++(.55cm,0);

    \draw [sigma line, thick] ([yshift=.15cm]f.east)
      to node [yshift=.25cm,xshift=-0.05cm] {\tiny $X$} (g.west);

    \draw [sigma line, thick] (f.west) 
      to node [yshift=.15cm,xshift=-.05cm] {\tiny $X$} ++(-3.5ex,0);
    
    \draw [sigma line, thick] (g.east)
      to node [yshift=.15cm,xshift=.05cm] {\tiny $Y$} ([xshift=.55cm]g.east);
    
\end{tikzpicture}
\end{aligned}
$
&=&
$
\begin{aligned}
 \begin{tikzpicture} [cprop,baseline=0pt]

    \node[unguarded box] (f) {$h$};
    \node[unguarded box, right=.5cm of f] (g) {$g$};

    \draw [sigma line, thick] (f.east)
      to node [yshift=.15cm,xshift=0.05cm] {\tiny $Y$} (g.west);

    \draw [sigma line, thick] (f.west) 
      to node [yshift=.15cm,xshift=-.05cm] {\tiny $X$} ++(-3.5ex,0);
    
    \draw [sigma line, thick] ([yshift=.15cm]g.east)
      to node [yshift=.15cm,xshift=.05cm] {\tiny $Y$} ++(.55cm,0);
    
    \draw [sigma line, thick] ([yshift=-.15cm]g.east)
      to node [yshift=-.15cm,xshift=.05cm] {\tiny $A$} ++(.55cm,0);

\end{tikzpicture}
\end{aligned}
$\\
&$\Downarrow$&\\
$
\begin{aligned}
 \begin{tikzpicture} [cprop,baseline=0pt]

    \node[unguarded box] (g) {$f$};

    \draw [sigma line, thick] ([yshift=-.15cm]g.east)
      to node [yshift=-.2cm,xshift=-.1cm] {\tiny $A$} ++(5ex,0);
      
    \draw [sigma line, thick] (g.west) 
      to node [yshift=.15cm,xshift=-.2cm] {\tiny $X$} ++(-6.5ex,0);

    \begin{pgfinterruptboundingbox}
      \draw [thick] ([yshift=.15cm]g.east) to node [yshift=.15cm,xshift=.55cm] {\tiny $X$} ++(0,0)
        .. controls +(right:.5cm) and +(right:1cm) .. ([yshift=.25cm]g.north)
        .. controls +(left:1cm) and +(left:.5cm) .. (g.west);
    \end{pgfinterruptboundingbox}

    \node [minimum size=0,inner sep=0] at ([yshift=.25cm]g.north) {};

\end{tikzpicture}
\end{aligned}
$
&=&
$
\begin{aligned}
 \begin{tikzpicture} [cprop,baseline=0pt]

    \node[unguarded box] (f) {$h$};
    \node[unguarded box, right=.75cm of f] (g) {$g$};

    \draw [sigma line, thick] (f.east)
      to node [yshift=.15cm,xshift=-0.15cm] {\tiny $Y$} (g.west);

    \draw [sigma line, thick] (f.west) to node [yshift=.15cm,xshift=-.05cm] {\tiny $X$} ++(-3.5ex,0);
    
    \draw [sigma line, thick] ([yshift=-.15cm]g.east)
      to node [yshift=-.2cm,xshift=-.0cm] {\tiny $A$} ([xshift=.6cm,yshift=-.15cm]g.east);
    
    \begin{pgfinterruptboundingbox}
      \draw [thick] ([yshift=.15cm]g.east) to node [yshift=.15cm,xshift=.55cm] {\tiny $Y$} ++(0,0)
        .. controls +(right:.5cm) and +(right:1cm) .. ([yshift=.25cm]g.north)
        .. controls +(left:1cm) and +(left:.5cm) .. (g.west);
    \end{pgfinterruptboundingbox}

\end{tikzpicture}
\end{aligned}
$
\end{tabular}
%\label{fig:m-vanish-i}
}%

\vspace{5ex}
\subfloat[][\COM]{
$
\begin{aligned}
 \begin{tikzpicture} [cprop,baseline=0pt]

    \node[unguarded box] (f) {$h$};
    \node[unguarded box, above right=-.65cm and .75cm of f] (g) {$f$};
    
    \node[unguarded box,
      densely dashed, fit={(f) (g)},
      minimum height=2.42em,
      minimum width=3.5em,
      inner xsep=.8cm,
      inner ysep=.35cm,
      xshift=.15cm,
      yshift=.05cm] (z) {};

    \draw [sigma line, thick] ([yshift=-.15cm]f.east)
      to node [yshift=-.15cm,xshift=-.25cm] {\tiny $X$} (g.west);

    \draw [sigma line, thick] (f.west-|z.west) to node [yshift=.45cm,xshift=.45cm] {\tiny $Y$} ++(-3.5ex,-3.5ex) to ++(-1ex,0);
    
    \draw [sigma line, thick] ([yshift=-.1cm]g.east)
      to node [yshift=-.2cm,xshift=-.4cm] {\tiny $A$} ([xshift=1.25cm,yshift=-.1cm]g.east);
          
    \begin{pgfinterruptboundingbox}
      \draw [thick] ([yshift=.15cm]g.east) to node [yshift=.15cm,xshift=.55cm] {\tiny $Y$} ++(0,0)
        .. controls +(right:.5cm) and +(right:1cm) .. ([yshift=.25cm]g.north)
        .. controls +(left:1cm) and +(left:.5cm) .. (g.west);

    \draw [sigma line, thick] ([yshift=.15cm]f.east)
        to node [yshift=.2cm,xshift=.2cm] {\tiny $Y$} ++(0,0) 
        .. controls +(right:.5cm) and +(left:1.25cm) .. ([yshift=.35cm,xshift=-1cm]f.east-|z.east) to ++(1cm,0)
        .. controls +(right:.75cm) and +(right:2.75cm) .. ([yshift=.25cm]z.north)
        .. controls +(left:3.25cm) and +(left:.5cm) .. (f.west-|z.west) to (f.west);
    \end{pgfinterruptboundingbox}

\end{tikzpicture}
\end{aligned}
~=~
\begin{aligned}
 \begin{tikzpicture} [cprop,baseline=0pt]

    \node[unguarded box] (f) {$h$};
    \node[unguarded box, above right=-.6cm and .75cm of f] (g) {$f$};
    
    \node[unguarded box,
      densely dashed, fit={(f) (g)},
      minimum height=2.42em,
      minimum width=3.5em,
      inner xsep=.8cm,
      inner ysep=.35cm,
      xshift=.15cm,
      yshift=.05cm] (z) {};

    \draw [sigma line, thick] ([yshift=-.15cm]f.east)
      to node [yshift=-.3cm,xshift=0.1cm] {\tiny $X$} (g.west);

    \draw [sigma line, thick] (f.west-|z.west) to node [yshift=.45cm,xshift=.5cm] {\tiny $Y$} ++(-4.5ex,-3.5ex) to ++(-2ex,0);
    
    \draw [sigma line, thick] ([yshift=-.1cm]g.east)
      to node [yshift=-.2cm,xshift=-.3cm] {\tiny $A$} ([xshift=1.25cm,yshift=-.1cm]g.east);
          
    \begin{pgfinterruptboundingbox}

      \draw [sigma line, thick] ([yshift=.15cm]f.east)
          to node [yshift=.2cm,xshift=0.2cm] {\tiny $Y$} ++(0,0) 
          .. controls +(right:1cm) and +(left:1.25cm) .. ([yshift=.35cm,xshift=-.5cm]f.east-|z.east) to ++(.5cm,0)
          .. controls +(right:.75cm) and +(right:2.75cm) .. ([yshift=.25cm]z.north)
          .. controls +(left:3.35cm) and +(left:.4cm) .. (f.west-|z.west) to (f.west);

      \draw [sigma line, thick] ([yshift=.15cm]g.east)
          to node [yshift=.1cm,xshift=0.5cm] {\tiny $X$} ++(0,0) 
          .. controls +(right:.4cm) and +(left:.4cm) .. ([yshift=.5cm]g.east-|z.east)
          .. controls +(right:.85cm) and +(right:2.95cm) .. ([yshift=.35cm]z.north)
          .. controls +(left:3.75cm) and +(left:1cm) .. ([yshift=-.25cm]g.west-|z.west) to ++(7.5ex,0) to (g.west);

    \end{pgfinterruptboundingbox}
\end{tikzpicture}
\end{aligned}
$
%\label{fig:m-vanish-i}
}%
\vspace{5ex}
\subfloat[][\FOL]{
$
\begin{aligned}
 \begin{tikzpicture} [cprop,baseline=0pt]

    \node[unguarded box] (f) {$h$};
    \node[unguarded box, above right=-.75cm and .75cm of f] (g) {$f$};
    
    \node[unguarded box,
      densely dashed, fit={(f) (g)},
      minimum height=2.42em,
      minimum width=3.5em,
      inner xsep=.8cm,
      inner ysep=.25cm,
      xshift=.15cm,
      yshift=.05cm] (z) {};

    \draw [sigma line, thick] ([yshift=-.15cm]f.east)
      to node [yshift=-.5cm,xshift=.65cm] {\tiny $X$} ++(0,0) 
          .. controls +(right:.75cm) and +(left:2.5cm) .. ([yshift=.7cm]g.east-|z.east);

    \draw [sigma line, thick] (f.west-|z.west) to node [yshift=.45cm,xshift=.5cm] {\tiny $Y$} ++(-4.5ex,-3.5ex) to ++(-2ex,0);
    
    \draw [sigma line, thick] ([yshift=-.1cm]g.east)
      to node [yshift=.2cm,xshift=-.1cm] {\tiny $A$} ([xshift=1.25cm,yshift=-.1cm]g.east);
          
    \draw [sigma line, thick] ([yshift=0cm]g.west-|z.west) to ++(-4.5ex,-1.5ex) to ++(-2ex,0) ++(-2ex,0);
          
    \begin{pgfinterruptboundingbox}

      \draw [sigma line, thick] ([yshift=.15cm]f.east)
          to node [yshift=.2cm,xshift=0.2cm] {\tiny $Y$} ++(0,0) 
          .. controls +(right:.55cm) and +(left:1.25cm) .. ([yshift=.35cm,xshift=-.5cm]f.east-|z.east) to ++(.5cm,0)
          .. controls +(right:.75cm) and +(right:2.75cm) .. ([yshift=.25cm]z.north)
          .. controls +(left:3.35cm) and +(left:.4cm) .. (f.west-|z.west) to (f.west);

      \draw [sigma line, thick] ([yshift=.7cm]g.east-|z.east)
          .. controls +(right:.85cm) and +(right:2.95cm) .. ([yshift=.35cm]z.north)
          .. controls +(left:3.75cm) and +(left:1cm) .. ([yshift=-.0cm]g.west-|z.west) to ++(7.5ex,0) to (g.west);
          
      \draw [thick] ([yshift=.15cm]g.east) to node [yshift=.35cm,xshift=.45cm] {\tiny $Y$} ++(0,0)
         .. controls +(right:.5cm) and +(right:1cm) .. ([yshift=.25cm]g.north)
         .. controls +(left:1cm) and +(left:.5cm) .. (g.west);

    \end{pgfinterruptboundingbox}
\end{tikzpicture}
\end{aligned}
$
=
$
\begin{aligned}
 \begin{tikzpicture} [cprop,baseline=0pt]

    \node[unguarded box] (f) {$h$};
    \node[unguarded box, above right=-.6cm and .75cm of f] (g) {$f$};
    
    \node[unguarded box,
      densely dashed, fit={(f) (g)},
      minimum height=2.42em,
      minimum width=3.5em,
      inner xsep=.8cm,
      inner ysep=.3cm,
      xshift=.15cm,
      yshift=.05cm] (z) {};

    \draw [sigma line, thick] ([yshift=-.15cm]f.east)
      to node [yshift=-.3cm,xshift=.65cm] {\tiny $X$} ++(0,0) 
          .. controls +(right:.75cm) and +(left:2.5cm) .. ([yshift=.5cm]g.east-|z.east);

    \draw [sigma line, thick] (f.west-|z.west) to node [yshift=.45cm,xshift=.5cm] {\tiny $Y$} ++(-4.5ex,-3.5ex) to ++(-2ex,0);
    
    \draw [sigma line, thick] ([yshift=-.1cm]g.east)
      to node [yshift=-.2cm,xshift=-.4cm] {\tiny $A$} ([xshift=1.25cm,yshift=-.1cm]g.east);
          
    \draw [sigma line, thick] ([yshift=-.25cm]g.west-|z.west) to ++(-4.5ex,-1ex) to ++(-2ex,0) ++(-1.5ex,0);
          
    \begin{pgfinterruptboundingbox}

      \draw [sigma line, thick] ([yshift=.15cm]f.east)
          to node [yshift=.2cm,xshift=0.2cm] {\tiny $Y$} ++(0,0) 
          .. controls +(right:.5cm) and +(left:1.25cm) .. ([yshift=.35cm,xshift=-.5cm]f.east-|z.east) to ++(.5cm,0)
          .. controls +(right:.75cm) and +(right:2.75cm) .. ([yshift=.25cm]z.north)
          .. controls +(left:3.35cm) and +(left:.4cm) .. (f.west-|z.west) to (f.west);

      \draw [sigma line, thick] ([yshift=.15cm]g.east)
          to node [yshift=.1cm,xshift=.5cm] {\tiny $X$} ++(0,0) 
          .. controls +(right:.4cm) and +(left:.4cm) .. ([yshift=.5cm]g.east-|z.east)
          .. controls +(right:.85cm) and +(right:2.95cm) .. ([yshift=.35cm]z.north)
          .. controls +(left:3.75cm) and +(left:1cm) .. ([yshift=-.25cm]g.west-|z.west) to ++(7.5ex,0) to (g.west);

    \end{pgfinterruptboundingbox}
\end{tikzpicture}
\end{aligned}
$
%\label{fig:m-vanish-i}
}%
\caption{Laws of (unguarded) Elgot algebras.}
\label{fig:elg-eq}
% \vspace*{4pt}
\end{figure}
As expected, products and exponents of Elgot algebras can be formed in a canonical way.
\begin{lemma}\label{lem:prod_exp}
Given two Elgot algebras $(A, (\argument)^\pistar)$ and $(B, (\argument)^\pistar)$ 
and an object $X\in |\BC|$,
\begin{enumerate}
  \item $(A\times B, (\argument)^\iistar)$ is an Elgot algebra with
%  \begin{align*}
    $h^\iistar = 
      \brks{((\fst+\id)\o h)^\pistar,
            ((\snd+\id)\o h)^\pistar
      }$ for any $h\c Z\to A\times B+Z$.
%  \end{align*}
%  Moreover, the forgetful functor from Elgot monads over $\BC$ to $\BC$ creates 
%  finite products.
  \item If $A^X$ exists then $(A^X, (\argument)^\iistar)$ is an 
  Elgot algebra with
%  \begin{align*}
    $h^\iistar = \curry ((\ev+\id)\o\ldist\o(h\times\id))^\pistar$ for any $h\c Z\to A^X+Z$.
%  \end{align*}    
\end{enumerate}
\end{lemma}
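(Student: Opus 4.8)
\emph{Proof plan.} In each case I would invoke \cref{prop:fold} and verify only the axioms \FIX, \UNI\ and \FOL\ for the proposed operator. For part~(1), the defining formula is equivalent to the two componentwise identities $\fst\o h^\iistar=((\fst+\id)\o h)^\pistar$ and $\snd\o h^\iistar=((\snd+\id)\o h)^\pistar$, since a morphism into $A\times B$ is determined by its projections. As both sides of each of \FIX, \UNI\ and \FOL\ for $(\argument)^\iistar$ are morphisms into $A\times B$, it then suffices to postcompose with $\fst$ and with $\snd$ and to reduce the resulting equations to the homonymous axioms for $(\argument)^\pistar$ on $A$ and on $B$ respectively; this uses only that composition distributes over coproduct tupling, together with elementary rewrites such as $[\id,p]\o(\fst+\id)=[\fst,p]$, $(\id+h)\o(\fst+\id)=(\fst+\id)\o(\id+h)$, $(\fst+\id)\o(\id+\inl)=(\id+\inl)\o(\fst+\id)$, $(\fst+\id)\o(f^\iistar+k)=(\fst\o f^\iistar)+k$ and their $\snd$-variants. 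This part is pure bookkeeping of coproduct injections, and in passing it exhibits $\fst$ and $\snd$ as iteration preserving in the sense of \cref{def:uea}.

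For part~(2), abbreviate $\bar h:=(\ev+\id)\o\ldist\o(h\times\id)\c Z\times X\to A+Z\times X$ for $h\c Z\to A^X+Z$, so that the definition reads $h^\iistar=\curry(\bar h^\pistar)$, equivalently $\uncurry(h^\iistar)=\ev\o(h^\iistar\times\id)=\bar h^\pistar$. Since $\curry$ is a bijection, each axiom for $(\argument)^\iistar$ on $A^X$ becomes, after uncurrying both sides, an equation between morphisms with codomain $A$, which I would derive from the homonymous axiom for $(\argument)^\pistar$ on $A$. For \FIX, uncurrying $h^\iistar=[\id,h^\iistar]\o h$ and simplifying with the $\ev$/$\curry$ equations and the identity $[p,q]\times\id=[p\times\id,q\times\id]\o\ldist$ yields $\bar h^\pistar=[\ev,\bar h^\pistar]\o\ldist\o(h\times\id)=[\id,\bar h^\pistar]\o\bar h$, which is \FIX\ for $\bar h$. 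For \UNI, from $(\id+h)\o f=g\o h$ naturality of $\ldist$ and functoriality of $\times$ give $(\id+(h\times\id))\o\bar f=\bar g\o(h\times\id)$, hence $\bar f^\pistar=\bar g^\pistar\o(h\times\id)$ by \UNI, which uncurries back to $f^\iistar=g^\iistar\o h$.

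The delicate case is \FOL\ for part~(2). Putting $\bar k:=\ldist\o(k\times\id)\c Y\times X\to Z\times X+Y\times X$, a computation using naturality of $\ldist$ and the identity $[p,q]\times\id=[p\times\id,q\times\id]\o\ldist$ should establish
\begin{align*}
\overline{f^\iistar+k}&=(\id+\ldist^{\mone})\o(\bar f^\pistar+\bar k)\o\ldist, &
\overline{[(\id+\inl)\o f,\inr\o k]}&=(\id+\ldist^{\mone})\o[(\id+\inl)\o\bar f,\inr\o\bar k]\o\ldist,
\end{align*}
where in each case the bar is formed over the state object $Z+Y$. Since conjugation by the isomorphism $\ldist$ is absorbed by $(\argument)^\pistar$ via \UNI\ (for any iso $\psi$ one has $((\id+\psi^{\mone})\o e\o\psi)^\pistar=e^\pistar\o\psi$), applying $(\argument)^\pistar$ and then $\curry$ reduces the required identity $(f^\iistar+k)^\iistar=[(\id+\inl)\o f,\inr\o k]^\iistar$ to precisely $(\bar f^\pistar+\bar k)^\pistar=[(\id+\inl)\o\bar f,\inr\o\bar k]^\pistar$, i.e.\ to \FOL\ for $(\argument)^\pistar$ applied to $\bar f$ and $\bar k$. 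The same computation exhibits $\ev\c A^X\times X\to A$ as right iteration preserving. The main obstacle is exactly this unwinding of $\overline{(\argument)}$ through the reassociations of coproducts and the distributivity isomorphisms: checking that the two displayed identities hold with all the injections matching up is fiddly, even though each individual step is elementary.
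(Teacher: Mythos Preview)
The paper does not actually prove \cref{lem:prod_exp}; it introduces the lemma with the phrase ``As expected, products and exponents of Elgot algebras can be formed in a canonical way'' and leaves the verification to the reader. There is thus nothing to compare against, and your proposal stands on its own.

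Your approach is correct. Reducing via \cref{prop:fold} to \FIX, \UNI\ and \FOL\ is the right move. For part~(1) the componentwise argument is straightforward and works exactly as you say; the only point worth making explicit is that $(\fst+\id)\o(\id+\inl)=(\id+\inl)\o(\fst+\id)$ relies on the two $\inl$'s being into \emph{different} coproducts (one $A\times B+(Z+Y)$, the other $A+(Z+Y)$), which is clear once written down. For part~(2), your two displayed identities are correct: precomposing each side with $\inl\times\id$ and $\inr\times\id$ and using naturality of $\ldist$ together with $\ldist^{\mone}\o\inl=\inl\times\id$, $\ldist^{\mone}\o\inr=\inr\times\id$ verifies them. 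The absorption of the conjugation by $\ldist$ into $(\argument)^\pistar$ via \UNI\ is exactly as you state, and the reduction of \FOL\ on $A^X$ to \FOL\ on $A$ for $\bar f$ and $\bar k$ goes through. The only cosmetic point is that your $\bar k$ is not formed by the same recipe as $\bar f$ (no $\ev$ applies since $k$ lands in $Z+Y$, not $A^X+Y$); you do flag this implicitly by giving $\bar k$ its own definition, but it may be worth a word to avoid confusion with the overloaded bar notation.
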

Every Elgot algebra $(A, (\argument)^\pistar)$ comes together with a divergence 
constant $\bot\c 1\to A = (\inr\c 1\to A+1)^\pistar$. Note that $\bot$ is automatically 
preserved by Elgot algebra morphisms.
%
%We distinguish Elgot 
%$\varowedge$-algebras, which are those, that are equipped with an operator 
%$\varowedge\c A\times A\to A$, such that $\bot\varowedge x = \bot$. Analogously, 
%an Elgot $\varovee$-algebra is equipped with an operator $\varovee\c A\times A\to A$, 
%such that $\bot\varovee x = x$. A morphism $h\c A\to B$ of Elgot $\varowedge$-algebras
%is required to preserve $\varowedge$: $h(a\varowedge b) = h(a)\varowedge h(b)$, and a 
%morphism $h\c A\to B$ of Elgot $\varovee$-algebras is required to preserve $\varovee$: 
%$h(a\varovee b) = h(a)\varovee h(b)$.
%

By omitting the not entirely self-motivating \COM (or \FOL) law, we obtain what we dub 
\emph{uniform-iteration algebras}. As we see later, this law is automatic 
for free uniform-iteration algebras.
\begin{definition}[Uniform-Iteration Algebras]\label{def:ba}
A \emph{uniform-iteration algebra} is a tuple $(A, (\argument)^\pistar)$ as in
\cref{def:uea} but $(\argument)^\pistar$ is only required to satisfy 
\FIX and \UNI. Morphisms of uniform-iteration algebras are defined in the same way.
\end{definition}
\section{The Initial Pre-Elgot Monad}\label{sec:ini-pre}
The goal of this section is to show that free uniform-iteration algebras
coincide with free Elgot algebras (Theorem~\ref{thm:K-pre-Elgot}), and enjoy a number of other
characteristic properties. In particular, we characterize the functor sending 
any~$X$ to a free uniform-iteration algebra on~$X$ as an initial 
pre-Elgot monad. We define pre-Elgot monads as follows.
\begin{definition}[Pre-Elgot Monads]\label{def:pem}
We call a monad $\BBT$ \emph{pre-Elgot} if every $TX$ is equipped with an Elgot 
algebra structure, in such a way that $h^\klstar\o f^\pistar = ((h^\klstar+\id)\o f)^\pistar$
for any $f\c Z\to TX+Z$ and any $h\c X\to TY$. 
A pre-Elgot monad $\BBT$ is \emph{strong pre-Elgot} if $\BBT$ is strong as a monad  
and strength is iteration preserving.
\end{definition}
Pre-Elgot monads are to be compared with Elgot monads, which support a stronger 
type profile for the iteration operator, and satisfy more sophisticated axioms.
\begin{definition}[Elgot Monads~\cite{Elgot75,AdamekMiliusEtAl11}]\label{def:elgot-monad}
  A mo\-nad\/ $\BBT$ is an \emph{Elgot monad} if it is equipped with an iteration
  operator sending each $f\colon X\to T(Y+X)$ to $f^\istar\colon X\to TY$ and
  satisfying:
\begin{itemize}%\setlength\itemsep{-.1em}
  \item {\upshape (\FIX)}  $f^\istar = [\eta,f^\istar]^\klstar\o f$;
  \item {\upshape (\NAT)} $g^{\klstar}\o f^{\istar} = ([(T\inl) \o g\comma\eta\o\inr]^{\klstar}\o f)^{\istar}$ for $f\colon X\to T(Y+X)$, $g \colon Y \to TZ$;
  \item {\upshape (\COD)} $(T[\id,\inr] \o f)^{\istar} = f^{\istar\istar}$ for  $f \colon X \to T((Y + X) + X)$;
  \item {\upshape (\UNI)} $f \o h = T(\id+ h) \o g$ implies
        $f^{\istar} \o h = g^{\istar}$ for $f\colon X \to T(Y + X)$, $g\colon Z \to T(Y + Z)$ and
        $h\colon Z \to X$.
\end{itemize}
If\/ $\BBT$ is additionally strong then $\BBT$ is strong Elgot if moreover: 
\begin{itemize}
  \item {\upshape (\STR)} $\tau\o(\id\times f^\istar) = ((T\dist)\o\tau\o (\id\times f))^\istar$ for any $f\colon X\to T(Y+X)$.
\end{itemize}
\end{definition}
\begin{proposition}
(Strong) Elgot monads are (strong) pre-Elgot under $f^\pistar = ([T\inl,\eta\inr]\o f)^\istar$.
\end{proposition}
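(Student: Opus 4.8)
The plan is to check the conditions packaged into \cref{def:pem} — that every $TX$ carries an Elgot algebra structure via $f^\pistar:=([T\inl,\eta\inr]\o f)^\istar$ and that this family is compatible with Kleisli lifting — and, for the strong variant, the iteration-preservation of $\tau$ from \cref{def:uea}. By \cref{prop:fold} the algebra part reduces to verifying \FIX, \UNI and \FOL for $(\argument)^\pistar$. Two elementary identities, immediate from the monad laws, do most of the work: $[p,q]^\klstar\o[T\inl,\eta\inr]=[p^\klstar,q]$ for $p\c W\to TV$ and $q\c Z\to TV$ (with $\inl\c W\to W+Z$), and $Tr\o s^\klstar=(Tr\o s)^\klstar$. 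I would also write $\bar{f}:=[T\inl,\eta\inr]\o f$ throughout, so that $f^\pistar=\bar{f}^\istar$.

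The ``routine'' axioms are obtained by transporting the matching Elgot-monad axioms across these abbreviations. For \FIX: \FIX of $(\argument)^\istar$ applied to $\bar{f}$, followed by the first identity with $p:=\eta$, $q:=f^\pistar$, gives $f^\pistar=[\eta,f^\pistar]^\klstar\o[T\inl,\eta\inr]\o f=[\id,f^\pistar]\o f$. For \UNI: from $(\id+h)\o f=g\o h$ one checks, using naturality of $\eta$, that $[T\inl,\eta\inr]\o g\o h=T(\id+h)\o[T\inl,\eta\inr]\o f$, and \UNI of $(\argument)^\istar$ then yields $f^\pistar=g^\pistar\o h$. For the Kleisli coherence $h^\klstar\o f^\pistar=((h^\klstar+\id)\o f)^\pistar$ one applies \NAT of $(\argument)^\istar$ to $\bar{f}$ and $h$ and simplifies with both identities. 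In the strong case, \STR of $(\argument)^\istar$ applied to $\bar{f}$ reduces the claim $\tau\o(\id\times h^\pistar)=((\tau+\id)\o\dist\o(\id\times h))^\pistar$ to the equality of the two $(\argument)^\istar$-arguments, which is a short diagram chase using the strength coherence laws, naturality of $\tau$ and $\eta$, and $\dist\o(\id\times\inl)=\inl$, $\dist\o(\id\times\inr)=\inr$.

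The hard part is \FOL. With $f\c X\to TV+X$ and $h\c Y\to X+Y$, unwinding the definitions with the identities above turns \FOL for $(\argument)^\pistar$ into the identity of morphisms $X+Y\to TV$
\[ [\,T\inl\o\bar{f}^\istar,\; \eta\inr\o h\,]^\istar \;=\; [\,T(\id_V+\inl)\o\bar{f},\; \eta\inr\o h\,]^\istar . \]
Using \FIX and \UNI of $(\argument)^\istar$ one first sees that both iterates restrict on the $X$-summand to $\bar{f}^\istar$; the remaining content is exactly that iteration over $X+Y$ cannot tell the ``raw'' map $\bar{f}$ (which loops inside $X$) apart from its resolved form $T\inl\o\bar{f}^\istar$. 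I would derive this from the Bekič/pairing rule for Elgot monads — a standard consequence of \FIX, \NAT, \COD and \UNI — by splitting the $(X+Y)$-iteration into an iteration of $h$ over $Y$ feeding into the $X$-iteration of $\bar{f}$, and recombining. I expect the one genuinely delicate point to be precisely this step: lining up the re-taggings of the coproduct injections so that \COD applies cleanly; everything outside \FOL is monad-law bookkeeping.
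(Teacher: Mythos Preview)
The paper states this proposition without proof, treating it as folklore; there is nothing to compare against. Your verification is correct: \FIX, \UNI and the Kleisli-lifting coherence transport directly from the corresponding Elgot-monad axioms along the embedding $f\mapsto\bar f=[T\inl,\eta\inr]\o f$, and the strength clause reduces, via \STR, to the equality $(T\dist)\o\tau\o(\id\times\bar h)=[T\inl\o\tau,\eta\inr]\o\dist\o(\id\times h)$, which is exactly the short diagram chase you describe. The only substantive step is \FOL, and your reduction to
\[
[\,T\inl\o\bar f^{\,\istar},\;\eta\inr\o h\,]^\istar\;=\;[\,T(\id+\inl)\o\bar f,\;\eta\inr\o h\,]^\istar
\]
is accurate; this is precisely an instance of the Beki\v{c}/pairing identity, a standard consequence of \FIX, \NAT, \COD and \UNI in iteration theories (Bloom--\'Esik), so the ``delicate'' alignment of injections is routine once you invoke that law rather than re-deriving it by hand.
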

It has been argued~\cite{GoncharovRauchEtAl15,GoncharovSchroderEtAl17} 
that strong Elgot monads are minimal semantic structures for interpreting effectful
while-languages. In that sense, we acknowledge an expressivity 
gap between Elgot and pre-Elgot monads, which generally happen to be too weak.
We will consider approaches to close this gap, in particular by drawing on some 
versions of the axiom of countable choice. Even though, in general, the gap presumably 
cannot be closed, we regard the initial pre-Elgot monad to be an important notion,
which arises from first principles and carries a very clear operational intuition. 
The discrepancy between pre-Elgot monads and Elgot monads seems to represent a 
very basic form of discrepancy between operational and denotational semantics.
We thus find it important to conceptually delineate between Elgot monads and 
pre-Elgot monads, no matter how desirable it is to have them to be equivalent.  
\begin{lemma}\label{lem:b-mon}
If for every $X\in |\BC|$ a free uniform-iteration algebra $\IA X$ exists then 
$\IA$ extends to a monad $\IE$ whose algebras are precisely uniform-iteration algebras. 
%Moreover, the induced initial uniform-iteration algebra morphism 
%$\rho_X\c DX\to \IA X$ to an iteration preserving monad morphism $\rho\c\BBD\to\IE$.
\end{lemma}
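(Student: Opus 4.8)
The plan is to treat the two assertions separately: first promote the free-algebra data to a Kleisli triple, then identify the category of Eilenberg--Moore algebras of the resulting monad with the category of uniform-iteration algebras. For the monad, write $\eta_X\colon X\to\IA X$ for the universal arrow witnessing freeness, and for $f\colon X\to\IA Y$ let $f^\klstar\colon\IA X\to\IA Y$ be the unique iteration-preserving morphism with $f^\klstar\o\eta_X=f$ (this makes sense since each $\IA Y$ carries its free uniform-iteration algebra structure). One first observes that identities are iteration-preserving and that iteration-preserving morphisms compose --- both immediate from the defining equation $h\o f^\pistar=((h+\id)\o f)^\pistar$. Hence $\eta^\klstar=\id$, because $\id_{\IA X}$ is an iteration-preserving extension of $\eta_X$ and uniqueness applies; $f^\klstar\o\eta=f$ by definition; and $(g^\klstar\o f)^\klstar=g^\klstar\o f^\klstar$, because the right-hand side is iteration-preserving and restricts along $\eta$ to $g^\klstar\o f$, so uniqueness applies again. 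Thus $\IA$ underlies a monad $\IE$, in which the multiplication $\mu$ and all functorial images $\IA h$ are automatically iteration-preserving, being Kleisli extensions.

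For the algebra characterization I would exhibit mutually inverse functors over $\BC$ between uniform-iteration algebras and $\IE$-algebras. In one direction, a uniform-iteration algebra $(A,(\argument)^\pistar)$ is sent to $(A,\alpha_A)$, where $\alpha_A\colon\IA A\to A$ is the unique iteration-preserving extension of $\id_A$; the $\IE$-algebra laws $\alpha_A\o\eta_A=\id$ and $\alpha_A\o\mu_A=\alpha_A\o\IA\alpha_A$ both hold by uniqueness of iteration-preserving extensions of $\alpha_A$ (using naturality of $\eta$ for the second). In the other direction an $\IE$-algebra $(A,a\colon\IA A\to A)$ is sent to $(A,(\argument)^\pistar)$ with
\[
  f^\pistar \;=\; a\o\bigl((\eta_A+\id)\o f\bigr)^\pistar\colon Z\to A
  \qquad\text{for } f\colon Z\to A+Z,
\]
the inner iteration taking place in the free algebra $\IA A$; \FIX and \UNI then follow by transporting the corresponding properties of $\IA A$ through $a$ (for \FIX one also uses $a\o\eta_A=\id$). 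Checking that morphisms are preserved in both directions is a short calculation using iteration-preservation of $\IA h$ and of the structure maps together with naturality of $\eta$. The round-trip starting from a uniform-iteration algebra is immediate: with $a=\alpha_A$ the expression $\alpha_A\o((\eta_A+\id)\o f)^\pistar$ collapses, since $\alpha_A$ is iteration-preserving, to $((\alpha_A\o\eta_A+\id)\o f)^\pistar=f^\pistar$.

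The one genuinely delicate point is the other round-trip: given an $\IE$-algebra $(A,a)$, one must show that $a$ itself is iteration-preserving for the transported $(\argument)^\pistar$, so that $a$ is the unique iteration-preserving extension of $\id_A$, i.e.\ $a=\alpha_A$. Since \FIX and \UNI do not pin down iterations, a direct fixpoint argument is unavailable; instead I would exploit the reflexive pair $\mu_A,\IA a\colon\IA\IA A\rightrightarrows\IA A$ --- both iteration-preserving by the first paragraph --- which $a$ coequalizes via the $\IE$-algebra law $a\o\mu_A=a\o\IA a$. For $g\colon Z\to\IA A+Z$, lift to $\tilde g=(\eta_{\IA A}+\id)\o g\colon Z\to\IA\IA A+Z$; iteration-preservation of $\mu_A$ and $\IA a$, together with $\mu_A\o\eta_{\IA A}=\id$ and $\IA a\o\eta_{\IA A}=\eta_A\o a$, give $g^\pistar=\mu_A\o\tilde g^\pistar$ and $((\eta_A\o a+\id)\o g)^\pistar=\IA a\o\tilde g^\pistar$; composing with $a$ yields $a\o g^\pistar=a\o((\eta_A\o a+\id)\o g)^\pistar$, which is exactly iteration-preservation of $a\colon\IA A\to A$ relative to the transported structure. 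This closes the argument; the remaining verifications (functoriality and the two round-trips on morphisms) are routine.
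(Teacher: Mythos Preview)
Your proof is correct and complete; the delicate round-trip argument via the $\IE$-algebra law $a\o\mu_A=a\o\IA a$ is exactly the right move, and your verification that this yields iteration-preservation of $a$ for the transported structure is clean.

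However, your approach differs substantially from the paper's. The paper does not construct the correspondence by hand; instead it invokes Beck's monadicity theorem for the free--forgetful adjunction between $\BC$ and the category of uniform-iteration algebras, reducing the claim to creation of $U$-split coequalizers. Given a split coequalizer $UA\rightrightarrows UB\to Z$ in $\BC$ with splittings $s,t$, the paper equips $Z$ with the iteration $(h\colon X\to Z+X)\mapsto e\o((s+\id)\o h)^\pistar$ and checks that $e$ lifts to a coequalizer in the algebra category. Your direct construction is more elementary and self-contained --- it avoids the appeal to Beck and makes the bijection between $\IE$-algebras and uniform-iteration algebras fully explicit, which is informative in its own right. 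The paper's route is shorter once Beck is available, and has the advantage that one never has to isolate the ``delicate'' direction you identified: monadicity guarantees the equivalence of algebra categories wholesale. Both approaches are standard for results of this shape; yours trades abstraction for transparency.
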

As in the case of natural numbers, one cannot make much progress without stability.
\begin{definition}[Stable Free Uniform-Iteration Algebras]
A free uniform-iteration algebra~$\IA Y$ over $Y$ is \emph{stable} if 
for every $X\in |\BC|$, $\fst\c X\times \IA Y\to X$ is a free uniform-iteration 
algebra in the slice category $\BC/X$. 
\end{definition}
\begin{lemma}\label{lem:stable}
For $Y\in |\BC|$, $\IA Y$ is stable iff for every uniform-iteration  
$A$ and every $f\c X\times Y\to A$, there is unique iteration preserving 
$f^\hash\c X\times \IA Y\to A$ such that ${f = f^\hash\o (\id\times\eta)}$.
% 
%\begin{equation*}
%\begin{tikzcd}[column sep = 12ex,row sep = 4ex]
%X\times \IA Y\rar["f^{\hash}"] & A\\
%X\times Y\uar["\id\times\eta"]\urar["f"'] &             
%\end{tikzcd}
%\end{equation*}
\end{lemma}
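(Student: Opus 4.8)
The plan is to translate both sides into the slice category $\BC/X$ and match them through a dictionary between morphisms there and right iteration preserving morphisms of \cref{def:uea}; throughout recall that coproducts in $\BC/X$ are computed as in $\BC$, and that ``iteration preserving'' below abbreviates ``right iteration preserving'' w.r.t.\ the displayed decomposition.

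I would first record two facts valid for an arbitrary uniform-iteration algebra $(A,(\argument)^\pistar)$ and arbitrary $X\in|\BC|$. \textbf{(a)} The object $\fst\c X\times A\to X$ of $\BC/X$ carries a canonical uniform-iteration algebra structure: for $q\c Z\to X$ and an arrow $h\c Z\to X\times A+Z$ of $\BC/X$, i.e.\ with $[\fst,q]\,h=q$, put $h^\pistar:=\brks{q,\,([\snd,\inr]\,h)^\pistar}$. \FIX and \UNI for this operator follow from the corresponding laws for $A$ after decomposing $Z$ along $h$ by extensiveness, the equation $[\fst,q]\,h=q$ being precisely what re-pairs the first component. \textbf{(b)} Under the bijection $\BC/X\bigl((Z,q),(X\times A,\fst)\bigr)\iso\BC(Z,A)$, $k\mapsto\brks{q,k}$, $g\mapsto\snd\,g$, a morphism $g\c X\times\IA Y\to X\times A$ over $X$ is a morphism of uniform-iteration algebras in $\BC/X$ (for the structure of (a)) if and only if $\snd\,g$ is right iteration preserving. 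The forward direction is obtained by instantiating the $\BC/X$-morphism equation for $g$ at maps of the shape $\dist\,(\id\times k)$ and reading off (a); the converse additionally uses \UNI for $A$ to transport a right iteration preserving equation along $\brks{q,\id}\c Z\to X\times Z$, which is legitimate exactly because $h$ lies over $X$.

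With (a), (b) in place the left-to-right direction is quick. Assume $\IA Y$ stable and let $A$ be a uniform-iteration algebra and $f\c X\times Y\to A$. Then $\brks{\fst,f}\c X\times Y\to X\times A$ lies over $X$, so by freeness of $\fst\c X\times\IA Y\to X$ in $\BC/X$ there is a unique iteration preserving $g\c X\times\IA Y\to X\times A$ over $X$ with $g\,(\id\times\eta)=\brks{\fst,f}$; put $f^\hash:=\snd\,g$. Then $f^\hash\,(\id\times\eta)=f$, and $f^\hash$ is right iteration preserving by (b); uniqueness of $f^\hash$ follows from that of $g$, since any competitor $f'$ yields, again by (b), an iteration preserving $\brks{\fst,f'}$ over $X$ with the same restriction along $\id\times\eta$, hence $\brks{\fst,f'}=g$ and $f'=f^\hash$.

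For the converse, fix $X$, equip $\fst\c X\times\IA Y\to X$ with the structure of (a) applied to $\IA Y$, and note $\id\times\eta$ lies over $X$. It remains to check the universal property: every morphism $g_0\c X\times Y\to B$ over $X$ into a uniform-iteration algebra $(B,p)$ of $\BC/X$ must extend uniquely to an iteration preserving $g_0^\hash\c X\times\IA Y\to B$ over $X$ with $g_0^\hash\,(\id\times\eta)=g_0$. For $(B,p)$ of the ``constant'' shape $(X\times A,\fst)$ this is precisely the hypothesized parametrized property, read through (b). The general case is the crux: I would transport along the free uniform-iteration algebra $\IA B$ (available by hypothesis), using the parametrized property to extend $\eta_B\,g_0$ to a right iteration preserving $\phi\c X\times\IA Y\to\IA B$ and then collapsing $(X\times\IA B,\fst)$ onto $(B,p)$ by means of the iteration structure of $(B,p)$ in $\BC/X$, with uniqueness flowing from the uniqueness clause of the parametrized property together with (b). I expect this reconciliation to be the main obstacle: the parametrized property only probes $\IA Y$ against ``constant'' algebras $X\times A\to X$, whereas a uniform-iteration algebra $(B,p)$ over $X$ need not underlie any uniform-iteration algebra of $\BC$ at all — it has a fibrewise but generally no global divergence point — so the passage from the constant-family test to arbitrary $(B,p)$ is where the bookkeeping of the ``over $X$'' side conditions and repeated use of \UNI (as packaged in (b)) does the real work. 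All remaining verifications — \FIX and \UNI in $\BC/X$, preservation of units, the naturality squares — are direct chases with extensiveness and distributivity, which I would not spell out.
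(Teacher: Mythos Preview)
Your forward direction is correct and essentially matches the paper's proof: both equip $(X\times A,\fst)$ with a uniform-iteration algebra structure in $\BC/X$ (your~(a)), set up the dictionary between iteration preserving morphisms in $\BC/X$ and right iteration preserving maps in $\BC$ (your~(b)), and read off $f^\hash$ as the second component of the unique $\BC/X$-extension of $\brks{\fst,f}$. The minor slip $[\snd,\inr]$ in~(a) should be $(\snd+\id)$; with that, both~(a) and the two directions of~(b) go through as you indicate (the converse of~(b) via $\brks{r,\id}$ and \UNI is correct).

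For the converse of the lemma you correctly isolate the real issue: the parametrized hypothesis only tests $X\times\IA Y$ against algebras of the constant shape $(X\times A,\fst)$, whereas freeness in $\BC/X$ must hold against \emph{every} uniform-iteration algebra $(B,p)$ there, and such $(B,p)$ need not arise from any uniform-iteration algebra in $\BC$ --- already in $\Set/2$ one can take fibres of different sizes. The paper's one-line argument (``we can render $u$ as $\brks{\fst,f^\hash}$'') is equally silent on this point: it too only addresses constant targets.

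Your proposed repair via $\IA B$ does not work, however. First, the lemma is stated for a single $Y$ and does not assume $\IA B$ exists. Second, and more seriously, the ``collapsing'' step would require an iteration preserving morphism $X\times\IA B\to B$ in $\BC/X$ into the non-constant algebra $(B,p)$ --- which is precisely the kind of extension whose existence is in question, so the argument is circular. If you want to close the gap, a different idea is needed (or the converse should be weakened to freeness against constant algebras, which is all that~(a)--(b) actually deliver).

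This does not affect the rest of the paper: only the forward implication is ever used (e.g.\ to define strength in \cref{prop:tau-prop}), and that direction is sound.
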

Using \cref{lem:prod_exp}, it is easy to show that in Cartesian closed categories 
every $\IA X$ is stable. 
%
%\begin{proposition}
%If\/ $\BC$ is Cartesian closed then every $\IA X$ is stable.
%\end{proposition}
%%
%\begin{proof}
%By \cref{lem:prod_exp}, given an uniform-iteration algebra $A$, $A^X$ is also 
%an uniform-iteration algebra. Thus, given $f\c X\times Y\to A$, there is unique 
%uniform-iteration algebra morphism $g\c KY\to A^X$ such that
%\begin{equation*}
%\begin{tikzcd}[column sep = 12ex,row sep = 4ex]
%\IA Y\rar["g"] & A^X\\
%Y\uar["\eta"]\urar["f"'] &             
%\end{tikzcd}
%\end{equation*}
%This yields the requisite $f^\hash\c X\times\IA Y\to A$ as an obvious transpose 
%of $g$.
%\end{proof}
%
For the rest of the section, we assume that all free
 uniform-iteration algebras $\IA X$ exist and are stable. 
\begin{proposition}\label{prop:tau-prop}
The monad $\IE$ is 
strong, with the components 
of strength $\tau\c X\times \IA Y\to \IA(X\times Y)$ uniquely identified 
by the conditions:
\begin{flalign*}
&&\tau\o (\id\times\eta) =\;\eta,&&
&&\tau\o(\id\times h^\pistar) =\; ((\tau+\id)\o \dist\o (\id\times h))^\pistar && && (h\c Z\to \IA Y+Z)
\end{flalign*}
\end{proposition}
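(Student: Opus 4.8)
The plan is to obtain $\tau$ directly from the parametrized universal property of free uniform-iteration algebras recorded in \cref{lem:stable}. Since every free uniform-iteration algebra is in particular a uniform-iteration algebra and $\IA Y$ is stable by the standing assumption, \cref{lem:stable} applied to the algebra $\IA(X\times Y)$ and the morphism $\eta_{X\times Y}\c X\times Y\to\IA(X\times Y)$ yields a unique right iteration preserving morphism $X\times\IA Y\to\IA(X\times Y)$ whose composite with $\id\times\eta$ is $\eta$; this will be $\tau_{X,Y}$. The two displayed conditions then hold by construction: the first is the defining equation $\eta=\tau\o(\id\times\eta)$, and the second is exactly what it means, by \cref{def:uea}, for $\tau$ to be right iteration preserving into $\IA(X\times Y)$. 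Uniqueness is equally immediate: any $\sigma$ satisfying both conditions is right iteration preserving and restricts to $\eta_{X\times Y}$ along $\id\times\eta$, hence coincides with $\tau_{X,Y}$ by the uniqueness clause of \cref{lem:stable}.

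It then remains to show that the family $\tau$ is a strength: natural in both variables, coherent with the left unitor and the associator, and compatible with $\eta$ and $\mu$ (compatibility with $\eta$ being precisely the first displayed condition, compatibility with $\mu$ being $\mu\o\IA\tau\o\tau=\tau\o(\id\times\mu)$). For each of these equations I would follow a single recipe. Each is an identity between two morphisms out of some object $W\times\IA Y'$, so by stability of $\IA Y'$ and the uniqueness half of \cref{lem:stable} it suffices to verify that (a) both sides are right iteration preserving into the relevant target algebra with parameter $W$, and (b) both sides become equal after precomposition with $\id_W\times\eta_{Y'}$. Part (b) is in every case a short diagram chase using the characterizing conditions of $\tau$, naturality of $\eta,\lambda,\alpha$, and the monad laws $\mu\o\eta=\id$, $\mu\o\IA\eta=\id$; for example, for naturality in $Y$ along $g\c Y\to Y'$ both $\IA(\id\times g)\o\tau_{X,Y}$ and $\tau_{X,Y'}\o(\id\times\IA g)$ collapse to $\eta_{X\times Y'}\o(\id\times g)$ after precomposition with $\id\times\eta_Y$.

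Part (a) I would reduce to a few closure properties of right iteration preserving maps, each an easy consequence of \FIX, \UNI, distributivity and naturality of $\dist$: postcomposing a right iteration preserving map with a morphism of uniform-iteration algebras yields a right iteration preserving map; precomposing one on the algebra factor with such a morphism does too; reindexing the parameter along an arbitrary morphism (in particular along the unitor and associator isomorphisms, which are the identity on the algebra factor) does too; and, crucially for the associativity coherence, the nested composite $p\o(\id\times p')$ of two right iteration preserving maps $p\c X\times B\to C$ and $p'\c X'\times A\to B$ is right iteration preserving as a map $(X\times X')\times A\to C$. One then observes that the functorial action $\IA g$ and the multiplication $\mu$ take values in morphisms of uniform-iteration algebras --- this is built into \cref{lem:b-mon} --- so these closure properties apply to every composite occurring in the strength axioms, and each axiom reduces to an instance of uniqueness in \cref{lem:stable}.

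I expect the real work to be concentrated in the closure properties of the previous paragraph --- particularly the nested-composite one, which plays the role here that the ``superposing''/``trace of a trace'' identities play for traced monoidal categories --- together with the routine but unavoidable check that $\IA g$, $\mu$ and the structural isomorphisms interact correctly with right iteration preservation. None of this is conceptually difficult, but it is where the genuine computation sits; once it is in place the strength axioms follow purely formally from the universal property of \cref{lem:stable}.
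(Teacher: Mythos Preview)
Your proposal is correct and follows essentially the same approach as the paper: define $\tau_{X,Y}$ as $(\eta_{X\times Y})^\hash$ via \cref{lem:stable}, from which the two displayed conditions and their uniqueness are immediate, and then verify the strength axioms. The paper's own proof simply states the definition and dismisses the verification of the strength axioms as easy, so your elaboration of how to carry out that verification via the uniqueness clause of \cref{lem:stable} is a reasonable and faithful expansion of what the paper leaves implicit.
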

\begin{proof}
In the notation of \cref{lem:stable} we define strength of $\IE$ as 
$(\eta\c X\times Y\to \IA(X\times Y))^\hash$. The axioms of strength are easy to verify.
\end{proof}
%
%\begin{lemma}
%Given $h\c Z\to \IA X+Z$,
%\begin{align*}
%&\bigl(Z\times \IA Y\xto{\tau} \IA(Z\times Y)\xto{\IA(h^\pistar\times\id)} \IA(\IA X\times Y)\bigr)=\\
%  &\bigl(Z\times \IA Y\xto{h\times\id} (\IA X+Z)\times \IA Y\xto{\ldist} 
%\IA X\times \IA Y + Z\times \IA Y\xto{\tau+\id} \IA(\IA X\times Y)+Z\times \IA Y\bigr)^\pistar  
%\end{align*}
%\end{lemma}
%\begin{proof}
%  \todo{prove!}
%\end{proof}
%
As a next step, we show that $\IE$ is an equational lifting monad in the sense 
of Bucalo et al~\cite{BucaloFuhrmannEtAl03}. This means precisely that $\IE$ is 
commutative and satisfies the equational law:
\begin{align}\label{eq:L-eq}
  \tau\o\Delta =  \IA\brks{\eta,\id}.
\end{align} 
This law is rather restrictive, and roughly means that some form 
of non-termination is the only possible effect of the monad. Proving~\eqref{eq:L-eq}
is nontrivial. The key step is the following property, which allows 
for splitting a loop involving a product of algebras into two loops.
\begin{lemma}\label{lem:key}
Given uniform-iteration algebras $A,B$ and $C$, $f\c Z\to {A\times B}+Z$ and ${h\c A\times B\to C}$,
%\begin{align*}
$((h + \id)\o f)^\pistar = ((h+\id)\o\dist\o(\id\times(\snd+\id)\o f))^\pistar\o\brks{((\fst+\id)\o f)^\pistar,\id}$.
%\end{align*}
%
\end{lemma}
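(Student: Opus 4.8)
The plan is to exhibit the claimed identity as a single instance of \UNI. Abbreviate $g = (\fst+\id)\o f\c Z\to A+Z$, $k = (\snd+\id)\o f\c Z\to B+Z$ and $w = \brks{g^\pistar,\id}\c Z\to A\times Z$, so that the right-hand side of the claim is $m^\pistar\o w$, where $m = (h+\id)\o\dist\o(\id\times k)\c A\times Z\to C+A\times Z$, while the left-hand side is $\bigl((h+\id)\o f\bigr)^\pistar$. I would apply \UNI with source coalgebra $(h+\id)\o f\c Z\to C+Z$, target coalgebra $m$ and mediating morphism $w$; the lemma then reduces to checking the uniformity premise $(\id_C+w)\o(h+\id)\o f = m\o w$.

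To verify this premise, first note that \FIX for $g$ (an iteration into the uniform-iteration algebra $A$) gives $g^\pistar = [\id,g^\pistar]\o g = [\fst,g^\pistar]\o f$; that is, the precomputed $A$-value outputs $\fst$ of an immediately produced value and recurses otherwise. Using this, I would rewrite $(\id\times k)\o w = \brks{g^\pistar,k} = \brks{[\fst,g^\pistar]\o f,\ (\snd+\id)\o f} = \brks{[\fst,g^\pistar],\snd+\id}\o f$ and split the middle morphism over the coproduct $A\times B+Z$ as $[\id\times\inl,\ (\id\times\inr)\o w]$. Post-composing with $\dist$ and using the distributive-law identities $\dist\o(\id\times\inl)=\inl$ and $\dist\o(\id\times\inr)=\inr$ yields $\dist\o(\id\times k)\o w = [\inl,\inr\o w]\o f$, hence $m\o w = (h+\id)\o[\inl,\inr\o w]\o f = [\inl\o h,\inr\o w]\o f$. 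On the other side, expanding $h+\id = [\inl\o h,\inr]$ gives $(\id_C+w)\o(h+\id)\o f = [\inl\o h,\inr\o w]\o f$ as well, so the premise holds and \UNI delivers $\bigl((h+\id)\o f\bigr)^\pistar = m^\pistar\o w$, which is the claim.

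I do not anticipate a genuine obstacle: the argument uses only \FIX and \UNI for uniform-iteration algebras together with the standard distributive/product calculus available under the standing assumptions, so $B$ only ever occurs as an object. The one point that needs care is the bookkeeping — getting the orientation of the \UNI premise right (the source loop runs over $Z$, the target loop over $A\times Z$) and keeping the $\dist$, pairing and copairing manipulations straight. If a coordinate-free check of the premise feels slippery, an equivalent pointful verification is available: using extensiveness, decompose $Z$ according to whether $f$ currently produces a value in $A\times B$ or loops; on the first summand both sides collapse via \FIX to the same morphism built from $h$ and the produced value, and on the second both sides restrict to $\inr\o w$ at the successor state.
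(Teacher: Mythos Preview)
Your proposal is correct and follows essentially the same route as the paper: both reduce the identity to a single application of \UNI with mediating morphism $\brks{((\fst+\id)\o f)^\pistar,\id}$, and both verify the premise by unfolding \FIX for $((\fst+\id)\o f)^\pistar$ and doing a case split on~$f$. The paper carries out that verification in pointful style while you do it via copairings and the $\dist$ identities, but the content is identical.
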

\begin{lemma}\label{lem:tau-Delta}
Given $X,Z\in |\BC|$, and $h\c Z\to\IA X+Z$, then
%
%\begin{align*}
$\tau\o\brks{h^\pistar,h^\pistar} = ((\tau\o\Delta+\id)\o h)^\pistar$.
%\end{align*}
\end{lemma}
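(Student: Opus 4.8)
The plan is to obtain the identity as a single instance of \cref{lem:key}, combined with the strength characterisation of \cref{prop:tau-prop}. Throughout, $\tau$ denotes the strength component $\tau\c\IA X\times\IA X\to\IA(\IA X\times X)$ (so the parameter object is $\IA X$ and the value object is $X$), $\Delta\c\IA X\to\IA X\times\IA X$ is the diagonal, and both sides of the asserted equation are morphisms $Z\to\IA(\IA X\times X)$.

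First I would instantiate \cref{lem:key} taking $A=B=\IA X$, $C=\IA(\IA X\times X)$, the coalgebra $f:=(\Delta+\id)\o h\c Z\to\IA X\times\IA X+Z$, and the structure morphism $A\times B\to C$ of that lemma to be the strength $\tau$ itself. The reason for diagonalising $h$ is that $\fst\o\Delta=\snd\o\Delta=\id$, so the two projection pre-compositions occurring in \cref{lem:key} both collapse to $h$, that is $(\fst+\id)\o f=(\snd+\id)\o f=h$, and hence also $((\fst+\id)\o f)^\pistar=h^\pistar$.

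With this instantiation the left-hand side of \cref{lem:key} reads $((\tau+\id)\o f)^\pistar=((\tau+\id)\o(\Delta+\id)\o h)^\pistar=((\tau\o\Delta+\id)\o h)^\pistar$ by functoriality of $+$, which is exactly the right-hand side of the statement; and the right-hand side of \cref{lem:key} reads $((\tau+\id)\o\dist\o(\id\times h))^\pistar\o\brks{h^\pistar,\id}$. It then remains to recognise this last expression as $\tau\o\brks{h^\pistar,h^\pistar}$. For this I would use the elementary identity $\brks{h^\pistar,h^\pistar}=(\id\times h^\pistar)\o\brks{h^\pistar,\id}$ together with the second defining equation for $\tau$ from \cref{prop:tau-prop}, applied with parameter object $\IA X$ and value object $X$, namely $\tau\o(\id\times h^\pistar)=((\tau+\id)\o\dist\o(\id\times h))^\pistar$; pre-composing the latter with $\brks{h^\pistar,\id}$ closes the chain of equalities.

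I expect the only real obstacle to be spotting the correct instantiation of \cref{lem:key} — specifically, the observation that feeding it the diagonalised coalgebra $(\Delta+\id)\o h$ and the strength $\tau$ (as the target morphism) degenerates both of its ``projection'' branches to $h$. Once this is seen, the remaining manipulations are routine: functoriality of $+$ to rewrite $(\tau+\id)\o(\Delta+\id)$ as $(\tau\o\Delta)+\id$, and the strength equation of \cref{prop:tau-prop}. The one bookkeeping point to watch is keeping the two roles of ``$h$'' distinct: the coalgebra $h\c Z\to\IA X+Z$ of the statement versus the generic structure morphism called $h$ in \cref{lem:key}, which here is instantiated by $\tau$.
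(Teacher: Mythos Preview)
Your proposal is correct and follows essentially the same route as the paper: instantiate \cref{lem:key} with $A=B=\IA X$, $C=\IA(\IA X\times X)$, coalgebra $(\Delta+\id)\o h$ and structure morphism $\tau$, then finish via the strength equation of \cref{prop:tau-prop}. The paper's proof is terser about the instantiation, but your explicit observation that diagonalising collapses both projection branches to $h$ is exactly what makes the application of \cref{lem:key} go through.
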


\begin{proof}
It follows from \cref{lem:key} that
%
%\begin{align*}
$((\tau+\id)\o \dist\o (\id\times h))^\pistar\o\brks{h^\pistar,\id} = 
((\tau\o\Delta+\id)\o h)^\pistar$.
%\end{align*}
%
On the other hand, by \cref{prop:tau-prop},
%
%\begin{align*}
$((\tau+\id)\o \dist\o (\id\times h))^\pistar\o\brks{h^\pistar,\id} = \tau\o\brks{h^\pistar,h^\pistar}$.
%\end{align*}
By combining the last two identities, we obtain the goal.
\end{proof}

\begin{theorem}\label{thm:K-lift}
$\IE$ is an equational lifting monad.
\end{theorem}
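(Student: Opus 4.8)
The plan is to unpack ``equational lifting monad'' into its two constituents — commutativity of $\IE$ and the equational law \eqref{eq:L-eq} — and to dispatch each by the universal property of the (stable) free uniform-iteration algebras $\IA X$, reducing an equality of morphisms to an equality of their restrictions along the units and feeding \cref{lem:key,lem:tau-Delta,prop:tau-prop} as the computational input.

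For \eqref{eq:L-eq}, I would show that $\tau\o\Delta$ and $\IA\brks{\eta,\id}$, both morphisms $\IA X\to\IA(\IA X\times X)$, coincide by checking that (i) both are iteration preserving and (ii) both agree after precomposition with $\eta\c X\to\IA X$, whence the universal property of $\IA X$ forces equality. For (ii): $\IA\brks{\eta,\id}\o\eta=\eta\o\brks{\eta,\id}$ by naturality of $\eta$, while $\tau\o\Delta\o\eta=\tau\o\brks{\eta,\eta}=\tau\o(\id\times\eta)\o\brks{\eta,\id}=\eta\o\brks{\eta,\id}$ using $\tau\o(\id\times\eta)=\eta$ from \cref{prop:tau-prop}. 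For (i): $\IA\brks{\eta,\id}=(\eta\o\brks{\eta,\id})^\klstar$ is a Kleisli lifting, hence an $\IE$-algebra morphism, hence a uniform-iteration-algebra morphism (\cref{lem:b-mon}), hence iteration preserving; and iteration-preservation of $\tau\o\Delta$ unfolds, via $\Delta\o h^\pistar=\brks{h^\pistar,h^\pistar}$, to exactly the identity $(\tau\o\Delta)\o h^\pistar=((\tau\o\Delta+\id)\o h)^\pistar$ asserted by \cref{lem:tau-Delta}. The genuine difficulty here has already been absorbed into the proof of \cref{lem:tau-Delta}, which rests on the loop-splitting \cref{lem:key}.

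For commutativity, $\tau^\klstar\o\hat\tau=\hat\tau^\klstar\o\tau$ as morphisms $\IA X\times\IA Y\to\IA(X\times Y)$, I would again argue by uniqueness, now using stability of $\IA X$ with parameter $\IA Y$ and of $\IA Y$ with parameter $X$: a morphism $\IA X\times\IA Y\to\IA(X\times Y)$ that is iteration preserving in \emph{both} factors is uniquely determined by its restriction along $\eta\times\eta$. Both composites restrict along $\eta_X\times\id_{\IA Y}$ to $\tau_{X,Y}$ — for $\tau^\klstar\o\hat\tau$ by the strength coherence $\hat\tau\o(\eta\times\id)=\eta$ and $\tau^\klstar\o\eta=\tau$, for $\hat\tau^\klstar\o\tau$ by naturality of $\tau$ in its first argument and $\eta^\klstar=\id$ — and symmetrically both restrict along $\id_{\IA X}\times\eta_Y$ to $\hat\tau_{X,Y}$. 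It then remains to show both composites are iteration preserving in both arguments; one direction each is cheap, namely $\tau^\klstar\o\hat\tau$ in the $\IA X$-argument (since $\hat\tau_{X,\IA Y}$ is, $\IA X$ being its monad slot, and Kleisli liftings are uniform-iteration-algebra morphisms) and, dually, $\hat\tau^\klstar\o\tau$ in the $\IA Y$-argument by \cref{prop:tau-prop}.

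The main obstacle is the remaining two directions: that $\tau^\klstar\o\hat\tau$ is iteration preserving also in the $\IA Y$-argument (equivalently, by the left/right symmetry, that $\hat\tau^\klstar\o\tau$ is iteration preserving in the $\IA X$-argument). This amounts to showing that the strength $\tau_{\IA W,Y}\c\IA W\times\IA Y\to\IA(\IA W\times Y)$ is iteration preserving not only in its monad slot $\IA Y$ (which \cref{prop:tau-prop} delivers) but also in its \emph{parameter} slot $\IA W$, once that slot carries the free-algebra structure on $\IA W$. I expect this to go through by an argument in the spirit of the proof of \cref{lem:tau-Delta}: unfold the iteration on $\IA W$ via its universal property, use \UNI to move the parameter inside, and apply the loop-splitting \cref{lem:key} with the two factors now in the opposite roles. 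Once this is in place, both sides of the commutativity equation are iteration preserving in both factors with the common $\eta_X\times\eta_Y$-restriction $\eta$, so stability identifies them; combining this with \eqref{eq:L-eq} yields that $\IE$ is an equational lifting monad.
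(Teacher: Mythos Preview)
Your treatment of the lifting law~\eqref{eq:L-eq} is exactly the paper's: both $\tau\o\Delta$ and $\IA\brks{\eta,\id}$ are iteration-preserving morphisms out of the free algebra $\IA X$ that agree along $\eta$, and \cref{lem:tau-Delta} (resting on \cref{lem:key}) supplies the one non-routine check.

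For commutativity the paper takes a different and more compressed route. Instead of invoking stability in each factor and checking iteration preservation of the two composites slot-by-slot, the paper equips $\IA X\times\IA Y$ with the \emph{product} uniform-iteration-algebra structure of \cref{lem:prod_exp}\,(1) and shows in one calculation---pull $\hat\tau^\klstar$ out of the iterate (it is a Kleisli lifting), apply \cref{lem:key} with $h=\tau$, then collapse the inner loop via \cref{prop:tau-prop}---that $\hat\tau^\klstar\o\tau$ preserves the product iteration; the symmetric calculation handles $\tau^\klstar\o\hat\tau$. This absorbs your ``main obstacle'' into a single statement about the combined loop, for which \cref{lem:key} is tailor-made.

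That said, the paper's inference ``both product-iteration-preserving and agreeing on $\eta\times\eta$, hence equal'' is elliptic: $\IA X\times\IA Y$ with the product structure is \emph{not} free on $X\times Y$. Already for $\IE$ the maybe-monad, uniform-iteration-algebra morphisms out of $(X{+}1)\times(Y{+}1)$ are exactly the point-preserving maps, and there are distinct such maps agreeing on $X\times Y$. So some form of your two-step stability argument is needed underneath, and the obstacle you flag---that, say, $\tau^\klstar\o\hat\tau$ be right-iteration-preserving in the $\IA Y$-slot---is the genuine crux that neither presentation fully discharges. Your expectation that \cref{lem:key} is the right tool is correct (the paper's product-iteration calculation is essentially an application of it), but the passage from product-iteration preservation to the slot-wise preservation your stability argument needs, or some alternative bridge between the two stability directions, is additional content you should make explicit rather than ``expect to go through''.
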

\begin{proof}
Let us sketch the proof of~\eqref{eq:L-eq}. 
Since $\IA\brks{\eta,\id}=(\eta\o\brks{\eta,\id})^\klstar$, using the definition 
of Kleisli star for $\IA $, it suffices to show that~$\tau\o\Delta$ is a unique 
iteration preserving morphism for which $\eta\o\brks{\eta,\id} = \tau\o\Delta\o\eta$.
%the diagram 
%\begin{equation*}
%\begin{tikzcd}[column sep=large, row sep=normal]
%\IA X\rar["\tau\o\Delta"] & \IA(\IA X\times X)\\
%X\uar["\eta"]\urar["\eta\o\brks{\eta,\,\id}"'] &             
%\end{tikzcd}
%\end{equation*}
%commutes. 
Indeed, $\tau\o\Delta\o\eta = 
\tau\o(\id\times\eta)\o\brks{\eta,\id} = \eta\o\brks{\eta,\id}$, and
$\tau\o\Delta$ is iteration preserving by \cref{lem:tau-Delta}.
\end{proof}
%
%Given $f\c Z\to\IA X + Z$, let $\steps f\c Z\times\nat\to\IA\nat + Z\times\nat$ be
%as follows: 
%%
%\begin{align*}
%Z\times\nat
%  \xto{f\times\id} (\IA X + Z) \times\nat
%  \xto{\ldist} \IA X\times\nat + Z\times\nat 
%  \xto{\eta\o\snd+\id\times\suc} 
%\IA\nat + Z\times\nat.
%\end{align*}
%%
%Intuitively, $(\steps f)^\pistar\o\brks{\id,\zero\bang}\c Z\to\IA\nat$ 
%returns the eventual number of steps needed to reach the result.
%
%
%\begin{lemma}
%For any $f\c Z\to\IA X + Z$, 
%%
%\begin{align*}
%f^\pistar = (\fiter f)^\klstar\o\tau\o\brks{\id, (\steps f)^\pistar\o\brks{\id,\zero\bang}}.
%\end{align*}
%%
%\end{lemma}
%\begin{proof}
%%
%We have
%\begin{align*}
%f^\pistar = (f^\pistar\o\fst)^\klstar\o\tau\o\brks{\id, (\steps f)^\pistar\o\brks{\id,\zero\bang}}.
%\end{align*}
%
%
%
%Let us rewrite the right-hand side:
%%
%\begin{align*}
%(\fiter f)^\klstar\o\tau\o\brks{\id, (\steps f)^\pistar\o\brks{\id,\zero\bang}}
%=&\;(((\fiter f)^\klstar\o\tau+\id)\o\dist\o(\id\times \steps f))^\pistar\o\brks{\id, \brks{\id,\zero\bang}}
%\end{align*}
%%
%\end{proof}
%
The fact that $\IE$ is an equational lifting monad has a number of implications,
in particular, the Kleisli category of $\IE$ is a \emph{restriction category}~\cite{CockettLack02}.
That is, we can calculate the \emph{domain (of definiteness)}, represented by an idempotent
Kleisli morphism as follows: given $f\c X\to\IA Y$, 
\begin{align*}
  \dom f = (\IA\fst)\o\tau\o\brks{\id,f}\c X\to\IA X,
\end{align*}
We additionally use the notation $f\rest g = \fst^\klstar\o\tau\o\brks{f,g}$, meaning:
restrict $f$ to the domain of $g$. It is easy to see that $\dom f = \eta\rest f$
and $f\rest g = f^\klstar\o(\dom g)$. Let $f\appr g$ abbreviate 
$f = g\rest f$. Under this definition, every $\BC(X, \IA Y)$ is partially ordered,
which is a general fact about restriction categories. In our case, moreover, this 
partial order additionally has a bottom element $\bot = \inr^\pistar$;
$\dom(\eta\o f) = \eta$ for any $f\c X\to\IA Y$, and $\dom f\appr\eta$ for any~$f$.

\begin{proposition}\label{prop:K-enriched}
The Kleisli category of\/ $\IE$ is enriched over pointed partial orders
and strict monotone maps. Moreover, strength preserves $\bot$ and $\appr$ as follows:
\begin{align*}
\tau\o(\id\times\bot) =\bot && f\appr g\text{\qquad implies\qquad}\tau\o(\id\times f) \appr \tau\o(\id\times g)
\end{align*} 
\end{proposition}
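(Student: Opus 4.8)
The plan is to treat the order part and the strength part separately, reducing both to the fact that $\bot=\inr^\pistar$ is a \emph{restriction zero} of the restriction category $\mathbf{Kl}(\IE)$ obtained from \cref{thm:K-lift}. Being a restriction category, $\mathbf{Kl}(\IE)$ has every hom-set $\BC(X,\IA Y)$ partially ordered with Kleisli composition monotone in each argument~\cite{CockettLack02}; the least element is $\bot=\inr^\pistar$ as recorded before the statement, so each hom-set is a pointed partial order. It remains to see that composition is \emph{strict}, i.e.\ that $\bot$ annihilates Kleisli composition on both sides. One side is immediate: $h^\klstar\colon\IA X\to\IA Y$ is a morphism of the free uniform-iteration algebras $\IA X,\IA Y$ (\cref{lem:b-mon}), hence iteration preserving, and any iteration-preserving $\psi$ satisfies $\psi\o\inr^\pistar=((\psi+\id)\o\inr)^\pistar=\inr^\pistar$; so $h\klcomp\bot=h^\klstar\o\bot=\bot$. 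The identity $\tau\o(\id\times\bot)=\bot$ is of the same flavour and falls out of \cref{prop:tau-prop} directly: its second clause with $h:=\inr$ gives $\tau\o(\id\times\inr^\pistar)=((\tau+\id)\o\dist\o(\id\times\inr))^\pistar$, and distributivity reduces $\dist\o(\id\times\inr)$ to $\inr$ and then $(\tau+\id)\o\inr$ to $\inr$, so the right-hand side is $\inr^\pistar=\bot$.

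The opposite annihilation $\bot\klcomp f=\bot$ needs a small detour. First I would prove that $(\inr\o u\colon V\to A+V)^\pistar=\bot$ for every uniform-iteration algebra $A$ and every $u\colon V\to V$: applying \UNI along $\bang\colon V\to 1$ (using $\bang\o u=\bang$) identifies this iteration with $(\inr\colon 1\to A+1)^\pistar\o\bang$, which is $\bot$. Next I would show that $\bot_1:=(\inr\colon 1\to\IA Z+1)^\pistar$ is iteration preserving, i.e.\ $((\bot_1+\id)\o k)^\pistar=\bot$ for every $k\colon W\to 1+W$; since $\IA Z$ is an Elgot algebra (\cref{thm:K-pre-Elgot}) the \COM law applies with $f:=\inr\colon 1\to\IA Z+1$ and $h:=k$, and a short coproduct computation collapses its right-hand side to $(\inr\o[\inl,k])^\pistar\o\inr$, which is $\bot$ by the auxiliary fact. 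Consequently $\bot_1\o\bang_{\IA Y}\colon\IA Y\to\IA Z$ is iteration preserving and, precomposed with $\eta$, equals $\bot_1\o\bang_Y=\bot$; by freeness of $\IA Y$ it therefore coincides with $\bot^\klstar$, whence $\bot\klcomp f=\bot^\klstar\o f=\bot_1\o\bang_X=\bot$. This settles the enrichment claim and the first ``moreover'' identity.

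For the second ``moreover'' identity I would isolate two ingredients. The first is the routine strong-monad identity $\tau\o(\id\times h^\klstar)=(\tau\o(\id\times h))^\klstar\o\tau$, obtained by expanding $h^\klstar=\mu\o\IA h$ and using naturality of $\tau$ together with strength of $\mu$. The second is the compatibility of $\tau$ with domains,
\[
\dom\bigl(\tau\o(\id\times f)\bigr)=\tau\o(\id\times\dom f).
\]
Granting both, and writing $f\appr g$ as $f=g\rest f=g^\klstar\o\dom f$, one pulls $\id\times g^\klstar$ out by the first identity and rewrites by the second to get $\tau\o(\id\times f)=(\tau\o(\id\times g))^\klstar\o\dom(\tau\o(\id\times f))$, which is exactly $\tau\o(\id\times f)\appr\tau\o(\id\times g)$. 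Combined with $\tau\o(\id\times\eta)=\eta$ from \cref{prop:tau-prop}, the two ingredients in fact exhibit $X\times(\argument)$ as a strict restriction endofunctor of $\mathbf{Kl}(\IE)$, which is a clean way to package both ``moreover'' statements at once.

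The main obstacle is the domain-compatibility displayed above. Expanding $\dom$ through $\dom f=\IA\fst\o\tau\o\brks{\id,f}$ turns it into an equality of two composites built from strengths, diagonals and associativity isomorphisms; establishing it requires the associativity coherence of $\tau$, its naturality, and---because of the diagonal entering via $\brks{\id,f}$---the equational lifting law $\tau\o\Delta=\IA\brks{\eta,\id}$ (equivalently, commutativity of $\IE$). I expect this diagram chase, rather than any other step, to carry the real weight of the proof.
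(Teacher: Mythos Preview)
Your treatment of monotonicity of Kleisli composition, right strictness, and strictness of strength follows the paper's line. The two deviations are in left strictness and in the monotonicity of strength.

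For left strictness you appeal to \COM for $\IA Z$ via \cref{thm:K-pre-Elgot}. In the paper's dependency order this is a forward reference: \cref{thm:K-pre-Elgot} rests on \cref{thm:klee} (through \cref{cor:K-pre-Elgot}), and the proof of \cref{thm:klee} already uses parts of \cref{prop:K-enriched}---monotonicity of Kleisli composition and the identity $a\rest\bot=\bot$, the latter relying on right strictness and strictness of strength. Your argument is not literally circular, since \cref{thm:klee} never invokes \emph{left} strictness, but making it work would force you to split \cref{prop:K-enriched} and defer left strictness until after \cref{thm:K-pre-Elgot}. The paper avoids this entirely: it proves the auxiliary inequality
\[
((\eta\o\fst+\id)\o\dist\o(\id\times f))^\pistar\ \appr\ \eta\o\fst
\]
directly from \UNI and the definition of $\dom$ (\cref{lem:strict}), then writes $\bot=\bot^\klstar\o\eta\o\fst$, pulls $\bot^\klstar$ through the iterate by its iteration-preservation (freeness), and uses monotonicity to conclude $((\bot+\id)\o f)^\pistar\appr\bot$. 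That keeps \cref{prop:K-enriched} logically upstream of the Kleene theorem.

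For monotonicity of strength your factorisation through the identity $\dom(\tau\o(\id\times f))=\tau\o(\id\times\dom f)$ is a cleaner packaging of what the paper actually does: the paper unfolds exactly this identity (for $\brks{\id,f}$ in place of $\id\times f$, which is equivalent after a naturality step) and chains it with $\tau\o(\id\times g^\klstar)=(\tau\o(\id\times g))^\klstar\o\tau$. The chase uses only naturality and the associativity coherence of $\tau$; the equational-lifting law per se is not invoked here, so you can lighten your expectation of the ``main obstacle''.
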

\begin{corollary}
$\IA\iobj\iso 1$.
\end{corollary}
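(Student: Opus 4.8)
The plan is to prove the equivalent statement that the initial object $\iobj$ is \emph{terminal} in the Kleisli category $\BC_{\IE}$ of $\IE$. Since by definition $\BC_{\IE}(X,\iobj)=\BC(X,\IA\iobj)$, this says precisely that $\IA\iobj$ is a terminal object of $\BC$, and hence $\IA\iobj\iso 1$.

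The starting point is the observation that $\BC_{\IE}(\iobj,\iobj)=\BC(\iobj,\IA\iobj)$ has exactly one element, because $\iobj$ is initial in $\BC$. Consequently the Kleisli identity $\eta_{\iobj}$ on $\iobj$ and the least element $\bot=\inr^{\pistar}$ of the pointed poset $\BC_{\IE}(\iobj,\iobj)$ (available by \cref{prop:K-enriched}) are one and the same morphism.

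Next I would take an arbitrary $X\in|\BC|$ and an arbitrary Kleisli morphism $f\colon X\to\iobj$, that is, $f\in\BC(X,\IA\iobj)$. Combining the monad law $\eta_{\iobj}^{\klstar}=\id$ with the previous step gives $f=\eta_{\iobj}^{\klstar}\comp f=\bot^{\klstar}\comp f$. But $\bot^{\klstar}\comp f$ is the Kleisli composite of the least morphism of $\BC_{\IE}(\iobj,\iobj)$ with $f$, and by \cref{prop:K-enriched} Kleisli composition is strict monotone, so composing with a least morphism again yields a least morphism: $\bot^{\klstar}\comp f=\bot\in\BC_{\IE}(X,\iobj)$. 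Hence $f=\bot$. Since $X$ and $f$ were arbitrary, $\BC_{\IE}(X,\iobj)$ is a singleton for every $X$, so $\iobj$ is terminal in $\BC_{\IE}$, which is what we wanted.

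The only delicate point is the appeal to strictness of composition in its outer argument; this is exactly the substance of the enrichment over pointed partial orders and strict monotone maps asserted in \cref{prop:K-enriched} (composition being a strict map out of the smash product of pointed posets, hence strict separately in each variable), so no genuine obstacle remains beyond spelling that out. The same idea in fact shows that $\iobj$ is terminal in the Kleisli category of any monad whose Kleisli category carries such an enrichment and for which $\iobj$ is initial in the base.
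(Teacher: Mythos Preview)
Your proof is correct and uses the same ingredients as the paper's: both arguments hinge on the fact that initiality of~$\iobj$ forces $\eta_\iobj=\bot$ in $\BC_{\IE}(\iobj,\iobj)$, together with left strictness $\bot^\klstar\comp f=\bot$ from \cref{prop:K-enriched}. The only difference is packaging: the paper exhibits the explicit isomorphism $\bot\c 1\to\IA\iobj$ with inverse $\bang$, checking $\bot\comp\bang=\id_{\IA\iobj}$ directly (which is exactly your computation specialized to $X=\IA\iobj$, $f=\id$), while you phrase the same computation as showing $\iobj$ is Kleisli-terminal.
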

\begin{proof}
Since $\bang\bot = \id\c 1\to 1$ and $\bot\bang = \id\c\IA\iobj\to\IA\iobj$,
we obtain an isomorphism $\IA\iobj\iso 1$.
\end{proof}
\begin{proposition}\label{prop:copy}
The monad $\IE$ is \emph{copyable} and \emph{weakly discardable}~\cite{GoncharovSchroder13a}, 
i.e.:
%\begin{align*}
$\hat\tau^\klstar\o\tau\o\Delta = \IA\Delta$ and %&& 
$(\IA\fst)\o\hat\tau^\klstar\o\tau\o\brks{f,g}\appr f$
%\end{align*}
%
%\begin{equation*}
%\begin{tikzcd}[column sep=normal, row sep=normal]
%X\rar["\brks{\id,f}"]\dar["f"'] & X\times\IA Y\dar["\tau\o(f\times\id)"]\\
%\IA Y\rar["\IA\Delta"]& \IA(Y\times Y)               
%\end{tikzcd}
%\end{equation*}
for $f\c X\to\IA Y$ and $g\c X\to\IA Z$. 
\end{proposition}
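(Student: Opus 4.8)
The plan is to derive both identities from the equational lifting law~\eqref{eq:L-eq} (available by \cref{thm:K-lift}), from two standard coherence laws for the strengths $\tau$ and $\hat\tau$, and from the restriction-category facts recorded just before \cref{prop:K-enriched}; no new machinery is needed.

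For copyability, I would start from~\eqref{eq:L-eq}, i.e.\ $\tau\o\Delta = \IA\brks{\eta,\id}$, and postcompose with $\hat\tau^\klstar$. Writing $\IA\brks{\eta,\id} = (\eta\o\brks{\eta,\id})^\klstar$ and using the monad law $g^\klstar\o h^\klstar = (g^\klstar\o h)^\klstar$ turns $\hat\tau^\klstar\o\IA\brks{\eta,\id}$ into $(\hat\tau\o\brks{\eta,\id})^\klstar$. Since $\brks{\eta,\id} = (\eta\times\id)\o\Delta$ and $\hat\tau\o(\eta\times\id) = \eta$ (the $\hat\tau$-dual of $\tau\o(\id\times\eta) = \eta$), we get $\hat\tau\o\brks{\eta,\id} = \eta\o\Delta$, and hence $\hat\tau^\klstar\o\tau\o\Delta = (\eta\o\Delta)^\klstar = \IA\Delta$, which is the first claim.

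For weak discardability, the key observation is that the operator $g\mapsto(\IA\fst)\o\hat\tau^\klstar\o\tau\o\brks{f,g}$ coincides with $g\mapsto f\rest g$. Indeed, by the same monad law $(\IA\fst)\o\hat\tau^\klstar = ((\IA\fst)\o\hat\tau)^\klstar$, and the strength coherence $(\IA\fst)\o\hat\tau = \fst$ (the $\hat\tau$-dual of $(\IA\snd)\o\tau = \snd$) reduces this to $\fst^\klstar$; hence $(\IA\fst)\o\hat\tau^\klstar\o\tau\o\brks{f,g} = \fst^\klstar\o\tau\o\brks{f,g} = f\rest g$. It then remains to combine the already recorded facts $f\rest g = f^\klstar\o(\dom g)$ and $\dom g\appr\eta$ with monotonicity of Kleisli composition (\cref{prop:K-enriched}): from $\dom g\appr\eta$ one obtains $f^\klstar\o(\dom g)\appr f^\klstar\o\eta = f$, that is, $f\rest g\appr f$, which is the second claim.

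All steps are routine; the only point that needs attention is selecting the correct $\hat\tau$-variants of the strength laws and keeping the index bookkeeping in $\tau_{\IA Y,Z}$ and $\hat\tau_{Y,Z}$ straight so that the displayed composites typecheck. Since nothing beyond \cref{thm:K-lift} and \cref{prop:K-enriched} is used, the proposition is essentially an easy corollary of the equational lifting property.
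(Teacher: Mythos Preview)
Your proposal is correct and follows essentially the same route as the paper. Both arguments reduce copyability to $\hat\tau\o\brks{\eta,\id}=\eta\o\Delta$ via the equational lifting law, and both reduce weak discardability to $(\IA\fst)\o\hat\tau^\klstar\o\tau\o\brks{f,g}=f^\klstar\o(\dom g)\appr f^\klstar\o\eta=f$; the only cosmetic difference is that you invoke the already recorded identity $f\rest g=f^\klstar\o(\dom g)$ directly, whereas the paper expands this step by a short naturality calculation.
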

\begin{definition}[Bounded Iteration]\label{def:biter}
Let $A$ be a \emph{pointed object}, i.e.\ an object with a canonical map 
$\bot\c 1\to A$. Then we define \emph{bounded iteration} $(\argument)^\bistar\c
\BC(X, A+X)\to\BC(X\times\nat, A)$ by primitive recursion as follows:
\begin{align*}
f^{\bistar}(x,\zero) = \bot && f^{\bistar}(x,\suc n) = \case{f(x)}{\inl a\mto a}{\inr y\mto f^{\bistar}(y,n)}.
\end{align*}
\end{definition}
Intuitively, $f^{\bistar}(x,n)$ behaves as 
$f^{\pistar}(x)$ except that at each iteration the counter $n$ is decreased, and
$\bot$ is returned once $n=\zero$. We next show that $f^{\pistar}(x)$ is in a suitable sense a limit of the $f^{\bistar}(x,n)$
as $n$ tends to infinity. This is, of course, a form of \emph{Kleene fixpoint theorem}.
\begin{theorem}[Kleene Fixpoint Theorem]\label{thm:klee}
Given $f\c X\to\IA Y + X$, and $g\c X\to\IA Y$, 
%\begin{enumerate}
  \item 
(i) $f^{\bistar}\appr f^\pistar\o\fst$, and
%  \item 
(ii) $f^{\bistar}\appr g\o\fst$ implies $f^\pistar\appr g$.
%\end{enumerate}
\end{theorem}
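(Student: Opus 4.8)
The plan is to reduce both parts to ordinary induction over the natural number object, exploiting that $f\appr g$ unfolds to $f = g\rest f$ and that $f^{\bistar}$ is given by primitive recursion in its counter.

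\emph{Part (i).} I would argue by induction on the counter that $f^{\bistar}(x,n)\appr f^\pistar(x)$. The base case $f^{\bistar}(x,\zero)=\bot\appr f^\pistar(x)$ is immediate, $\bot$ being the least element of $\BC(X,\IA Y)$ by \cref{prop:K-enriched}. For the step, \FIX for $(\argument)^\pistar$ gives $f^\pistar(x)=\case{f(x)}{\inl a\mto a}{\inr y\mto f^\pistar(y)}$, whereas $f^{\bistar}(x,\suc n)=\case{f(x)}{\inl a\mto a}{\inr y\mto f^{\bistar}(y,n)}$; since case distinction and Kleisli composition are monotone with respect to $\appr$ (again \cref{prop:K-enriched}), the induction hypothesis $f^{\bistar}(y,n)\appr f^\pistar(y)$ yields $f^{\bistar}(x,\suc n)\appr f^\pistar(x)$. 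To turn this into a genuine induction over $\nat$ --- which in our setting means an equality of two morphisms $X\times\nat\to\IA Y$ --- I would set $w=(f^\pistar\circ\fst)\rest f^{\bistar}$ and check, using the compatibility of $\dom$ and of $(\argument)\rest(\argument)$ with precomposition by $\zero$, by $\suc$ and with case distinction, that $w$ obeys the same primitive-recursive clauses as $f^{\bistar}$; uniqueness of primitive recursion then delivers $w=f^{\bistar}$, i.e.\ $f^{\bistar}\appr f^\pistar\circ\fst$.

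\emph{Part (ii).} This is the substantive direction: it says that $f^\pistar$ is not merely an upper bound of the approximants but the least one. Write $\Phi(u)=[\id,u]\circ f$ for the fixpoint operator; it is monotone by \cref{prop:K-enriched}, and $f^{\bistar}(-,\zero)=\bot$, $f^{\bistar}(-,\suc n)=\Phi(f^{\bistar}(-,n))$, so the approximants are exactly $\Phi^n(\bot)$. From $f^{\bistar}\appr g\circ\fst$ one first observes that $\Phi(g)$, and hence inductively $\Phi^k(g)$ for every $k$, is again an upper bound of the whole chain (since $\bot\appr\Phi^k(g)$ and $\Phi$ is monotone). The route I would then follow is to recognise $f^{\bistar}$ itself as an iteration: by \UNI one has $f^{\bistar}=\tilde f^\pistar$ for the ``count-down'' map $\tilde f\colon X\times\nat\to\IA Y+X\times\nat$ with $\tilde f(x,\zero)=\inr(x,\zero)$ (the layer $X\times\{\zero\}$ being absorbing, so \UNI forces $\tilde f^\pistar=\bot$ there) and $\tilde f(x,\suc n)$ reinterpreting $f(x)$ via $\inr y\mto\inr(y,n)$, and dually $f^\pistar\circ\fst=\hat f^\pistar$ for the ``count-up'' map $\hat f$ with $\hat f(x,n)$ reinterpreting $f(x)$ via $\inr y\mto\inr(y,\suc n)$, again via \UNI with $h=\fst$. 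One then transfers the bound $g$ from $\tilde f^\pistar$ to $\hat f^\pistar$ by a further application of \UNI and \FOL that folds the finite counters into the state space, combined with the least-fixpoint property of $(\argument)^\pistar$ applied to $\Phi$ on the $\Phi$-closed poset of upper bounds of the chain; as in part (i), the conclusion is then extracted as an equality of morphisms proved by induction over $\nat$.

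The hard part is precisely this last transfer in (ii). The tempting classical argument --- ``$f^\pistar(x)$ either converges after finitely many steps, hence agrees with some $f^{\bistar}(x,n)$, or diverges, hence equals $\bot$'' --- must be avoided, as the case split on termination is not available. So (ii) has to be pushed entirely through the iteration axioms, crucially through \UNI over the augmented state space $X\times\nat$, rather than pointwise; arranging the bookkeeping of $\dom$, $\rest$ and the counter so that a single induction over $\nat$ discharges the required identity is where the real difficulty lies.
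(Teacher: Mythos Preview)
Your treatment of (i) is correct and essentially identical to the paper's: one checks that $(f^\pistar\circ\fst)\rest f^{\bistar}$ satisfies the defining primitive recursion of $f^{\bistar}$, hence coincides with it.

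For (ii) there is a genuine gap. You correctly isolate the difficulty, and your preliminary observations are sound: $f^{\bistar}$ is indeed the iterate of a count-down map $\tilde f$, and $f^\pistar\circ\fst$ is the iterate of a count-up map $\hat f$ by \UNI. But the step you call the ``transfer'' --- passing the bound $g$ from $\tilde f^\pistar$ to $\hat f^\pistar$ via \UNI and \FOL --- is not actually carried out, and I do not see how to complete it along those lines: \UNI and \FOL relate iterates of maps connected by a morphism of state spaces, not iterates that are merely comparable under $\appr$. Note also that you cannot invoke any ``least-fixpoint property of $(\argument)^\pistar$'': that is precisely \cref{cor:klee}, which is \emph{derived from} the present theorem, so appealing to it here would be circular.

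The paper's argument supplies the missing idea: a \emph{step counter}. Define $h\colon X\times\nat\to\IA\nat+X\times\nat$ by $h=(\eta\circ\snd+\id)\circ\ldist\circ(f\times\suc)$; intuitively $h^\pistar(x,\zero)$ runs exactly like $f^\pistar(x)$ but, instead of the final value, returns the number of iterations taken. The crux is the factorisation
\[
f^\pistar \;=\; (f^{\bistar})^\klstar\circ\tau\circ\brks{\id,\,h^\pistar\circ\brks{\id,\zero\bang}},
\]
which expresses $f^\pistar$ as: first compute the step count, then run the bounded iteration for that many steps. Once this is established, the conclusion is immediate from monotonicity of Kleisli lifting and strength (\cref{prop:K-enriched}): replacing $f^{\bistar}$ by its upper bound $g\circ\fst$ yields $f^\pistar\appr g\rest(h^\pistar\circ\brks{\id,\zero\bang})\appr g$. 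Proving the factorisation itself requires an auxiliary ``state advance'' map $w\colon X\times\nat\to X$ defined by primitive recursion, two inductive identities relating $w$, $f$ and $f^{\bistar}$, and then two applications of \UNI. Your proposal has no analogue of $h$ or of this factorisation, and that is the essential missing ingredient.
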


\begin{corollary}\label{cor:klee}
Given $f\c X\to\IA Y + X$, $f^\pistar\c X\to\IA Y$ is the least pre-fixpoint of the 
map $[\id,\argument]\o f\c\BC(X,\IA Y)\to\BC(X,\IA Y)$.
\end{corollary}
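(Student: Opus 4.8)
The plan is to verify the two clauses separately: that $f^\pistar$ is a pre-fixpoint of the endomap $\Phi := [\id,\argument]\o f$ on $\BC(X,\IA Y)$, and that it is below every pre-fixpoint of $\Phi$. The first is immediate from \FIX, which yields the even stronger statement $\Phi(f^\pistar) = [\id,f^\pistar]\o f = f^\pistar$; so $f^\pistar$ is actually a genuine fixpoint of $\Phi$, and in particular $\Phi(f^\pistar)\appr f^\pistar$ by reflexivity of $\appr$.

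For the second clause I must show $f^\pistar\appr g$ for every $g\c X\to\IA Y$ with $[\id,g]\o f\appr g$. I would route this through the Kleene Fixpoint Theorem~(\cref{thm:klee}): by part~(ii) it suffices to establish $f^{\bistar}\appr g\o\fst$ as morphisms $X\times\nat\to\IA Y$, and this I would prove by induction on $\nat$, keeping the inductive hypothesis uniform in the ``state'' argument (this is needed in the step, which refers to $f^{\bistar}$ at a \emph{successor} state). For $n = \zero$, \cref{def:biter} gives $f^{\bistar}(\argument,\zero) = \bot\appr g$ because $\bot$ is the least element of $\BC(X,\IA Y)$~(\cref{prop:K-enriched}). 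For the step, \cref{def:biter} gives $f^{\bistar}(x,\suc n) = \case{f(x)}{\inl a\mto a}{\inr y\mto f^{\bistar}(y,n)}$; case-analysing $f(x)$ (licensed by extensiveness) and using $[\id,g]\o f\appr g$ componentwise, the $\inl$ branch gives $a\appr g(x)$, while on the $\inr$ branch the hypothesis gives $g(y)\appr g(x)$, which combined with the inductive hypothesis applied at $y$, namely $f^{\bistar}(y,n)\appr g(y)$, and transitivity of $\appr$ yields $f^{\bistar}(y,n)\appr g(x)$. Hence $f^{\bistar}(\argument,\suc n)\appr g$, closing the induction, and \cref{thm:klee}(ii) then delivers $f^\pistar\appr g$.

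The proof is short and I do not anticipate a genuine obstacle: morally it just says that the finite approximants $f^{\bistar}(\argument,n)$ of $f^\pistar$ lie, one by one, below any pre-fixpoint of $\Phi$---the classical Kleene argument---and then appeals to \cref{thm:klee}. What needs a little care is purely bookkeeping: setting up the $\nat$-induction so that the inductive hypothesis is available at all states (not merely the current one), and the monotonicity properties of $\appr$ used silently above, namely that precomposition preserves $\appr$ and that $\appr$ over a coproduct domain is computed componentwise, both of which follow from the restriction/order enrichment of the Kleisli category of $\IE$ established via \cref{thm:K-lift} and \cref{prop:K-enriched}. As an aside, since $\Phi$ is moreover monotone by the same case-analysis, $f^\pistar$ is in fact the least \emph{fixpoint} of $\Phi$ as well.
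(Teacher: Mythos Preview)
Your proposal is correct and follows essentially the same route as the paper: establish $f^{\bistar}\appr g\o\fst$ by induction on~$\nat$ and then invoke \cref{thm:klee}(ii). The paper phrases the induction as an equality $f^{\bistar} = g\o\fst\rest f^{\bistar}$ computed directly from the restriction-category structure, while you phrase it as an inequality argued via transitivity of $\appr$ and its componentwise behaviour on coproduct domains; these are the same argument in slightly different clothing.
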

Finally, we obtain
\begin{theorem}\label{thm:K-pre-Elgot}
$\IE$ is an initial pre-Elgot monad and an initial strong pre-Elgot monad.
\end{theorem}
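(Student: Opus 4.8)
The plan is to treat the two assertions (initial pre-Elgot monad, initial strong pre-Elgot monad) together, in three stages: upgrade every $\IA X$ from a uniform-iteration algebra to a genuine Elgot algebra; deduce that $\IE$ is a (strong) pre-Elgot monad; and establish the universal property from freeness of the $\IA X$.

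\emph{Stage 1.} First I would verify that each $\IA X$ is an Elgot algebra. By \cref{prop:fold} it suffices to check \FOL for the free uniform-iteration algebra $\IA Z$, i.e.\ $(f^\pistar+h)^\pistar=[(\id+\inl)\o f,\,\inr\o h]^\pistar$ for $f\c X\to\IA Z+X$ and $h\c Y\to X+Y$. Write $L,R\c X+Y\to\IA Z$ for the two sides. By \cref{cor:klee}, $L$ is the least pre-fixpoint of $\Psi(v)=[f^\pistar,\,v\o h]$ and $R$ the least pre-fixpoint of $\Phi(u)=[[\id,\,u\o\inl]\o f,\,u\o h]$ in the pointed partial order $\BC(X+Y,\IA Z)$ of \cref{prop:K-enriched}. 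Unfolding \FIX for $\IA Z$ gives $L\o\inl=f^\pistar$ and $L\o\inr=L\o h$, whence $\Phi(L)=[[\id,f^\pistar]\o f,\,L\o h]=[f^\pistar,\,L\o h]=L$ by \FIX for $f$; therefore $R\appr L$. Dually, $R\o\inr=R\o h$, while $R\o\inl=[\id,\,R\o\inl]\o f$ exhibits $R\o\inl$ as a pre-fixpoint of $u\mapsto[\id,u]\o f$, so $f^\pistar\appr R\o\inl$ by \cref{cor:klee}; by monotonicity of copairing this yields $\Psi(R)\appr R$, hence $L\appr R$, and $L=R$. I expect this stage to be the crux: it is the only genuinely new equation, and the idea that makes it work is rewriting \FOL as two opposing least-pre-fixpoint inclusions, so that only \FIX, \cref{cor:klee} and monotonicity of composition and copairing are needed.

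\emph{Stage 2.} Once every $\IA X$ is an Elgot algebra, pre-Elgotness of $\IE$ is essentially formal: for $h\c X\to\IA Y$ the Kleisli lifting $h^\klstar\c\IA X\to\IA Y$ is the unique $\IE$-algebra morphism between the two free algebras extending $h$, hence a uniform-iteration-algebra morphism by \cref{lem:b-mon}, and the defining equation of such a morphism is exactly the pre-Elgot law $h^\klstar\o f^\pistar=((h^\klstar+\id)\o f)^\pistar$. That the strength is iteration preserving is precisely the second clause of \cref{prop:tau-prop}, so $\IE$ is strong pre-Elgot as well.

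\emph{Stage 3.} For initiality, fix an arbitrary pre-Elgot monad $\BBT$. Every $TX$ is an Elgot algebra, a fortiori a uniform-iteration algebra pointed by $\eta^T_X$, so freeness of $\IA X$ yields a unique iteration-preserving $\alpha_X\c\IA X\to TX$ with $\alpha_X\o\eta=\eta^T$. Naturality of $\alpha$ and the two monad-morphism laws will all follow the same pattern: the morphisms to be identified are iteration-preserving maps out of $\IA X$ — here one uses that $Tg$ and the Kleisli liftings of $\BBT$ are iteration preserving, which is exactly what pre-Elgotness of $\BBT$ provides — that agree after precomposition with $\eta$, and one concludes by uniqueness in the freeness property; the same uniqueness shows $\alpha$ is the only iteration-preserving monad morphism $\IE\to\BBT$. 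For the strong case one must additionally check $\alpha_{X\times Y}\o\tau^{\IA}=\tau^T\o(\id\times\alpha_Y)$; the plan is to use stability (\cref{lem:stable}) to identify both sides with the unique right-iteration-preserving extension of $\eta^T\c X\times Y\to T(X\times Y)$ along $\id\times\eta$, checking right-iteration-preservation of the left-hand composite via \cref{prop:tau-prop} and of the right-hand one via iteration preservation of $\tau^T$ together with naturality of $\dist$. That computation is the most calculation-heavy part, but routine, and uniqueness in the strong case reduces to the non-strong one. Combining Stages~2 and~3 gives the theorem.
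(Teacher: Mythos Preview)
Your proposal is correct and follows the same overall architecture as the paper: first upgrade each $\IA X$ to an Elgot algebra via \FOL using the least-(pre-)fixpoint characterisation from \cref{cor:klee}, then verify the pre-Elgot law for $\IE$, then obtain $\alpha$ from freeness and check naturality, the monad-morphism laws, and compatibility with strength by uniqueness (using stability for the last).

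There are two minor differences in execution worth noting. In Stage~1, for the direction $L\appr R$ the paper does not argue via monotonicity of copairing: it observes directly by \UNI (with $\inl$ as the mediating morphism) that $R\o\inl=f^\pistar$, so that $R$ is an \emph{exact} fixpoint of $\Psi$, and then invokes the least-fixpoint property; your pre-fixpoint route is also valid but tacitly uses monotonicity of $[-,-]$ in each argument, which follows from \cref{lem:dom-sum} but is not stated as such. In Stage~2, the paper derives the pre-Elgot law $h^\klstar\o f^\pistar=((h^\klstar+\id)\o f)^\pistar$ as an instance of \COM (writing $h^\klstar=(\inl\o h^\klstar)^\pistar$), whereas you obtain it from monadicity (\cref{lem:b-mon}) by noting that $h^\klstar$ is an $\IE$-algebra morphism and hence iteration preserving; your route is arguably cleaner and avoids unpacking \COM.
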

%

%\begin{proposition}
%Suppose that all free Elgot algebras $\IA X$ exist and are stable. Then $\IA X$ 
%extends to a free Elgot $\varowedge$-algebra on $X$ with 
%$f \varowedge g = g\rest f$.
%\end{proposition}
%\begin{proof}
%By definition, $\bot\varowedge g = g\rest\bot=\bot$. We are left to show that 
%for any $\varowedge$-algebra $(A, (\argument)^{\hat\iistar})$ and any $f\c X\to A$
%the induced universal morphism $h\c\IA X\to A$ preserves $\varowedge$.
%\end{proof}

\section{Quotienting the Delay Monad}\label{sec:quo}
By \cref{thm:D}, $\Id$-guarded Elgot algebras are precisely the $\BBD$-algebras. We 
proceed to characterize uniform-iteration and Elgot algebras as certain $D$-algebras,
which we dub \emph{search-algebras}. Intuitively, modulo identification of $DA$ 
with a set of streams from $(A+1)^\nat$, a search-algebra structure $a\c DA\to A$
is guaranteed to find the first element in the stream of the form $\inl a$ if it 
exists. We expect that this notion can formulated more generally, but we do not pursue it here.
\begin{definition}[Search-Algebra]
We call a $D$-algebra $(A, a\c DA\to A)$ a \emph{search-algebra} if it satisfies 
the conditions: %$a = [\id,a]\o\out$, or, equivalently, the pair of conditions 
%
%\begin{align*}
$a\o\now=\id$,
%&& 
$a\o\lat=a$.
%\end{align*}
%
Search-algebras form a full subcategory of the category of all $D$-algebras.
\end{definition}
Uniform-iteration algebras capture the structure of search-algebras 
independently of the assumption that $D$ exists. This and further connections 
between categories of $D$-algebras illustrated in Fig.~\ref{fig:d-alg} (arrows indicate full embeddings of categories)
are formalized as follows.

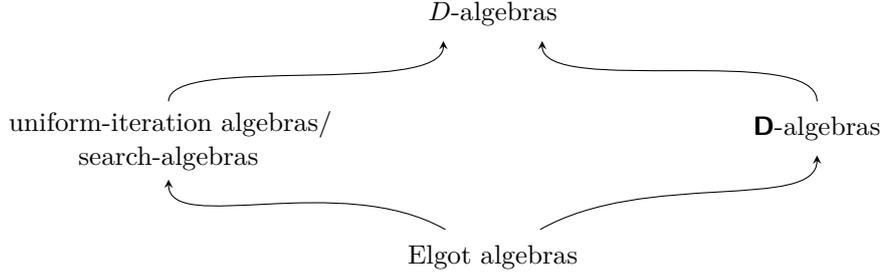
\begin{figure}[t]
\begin{center}%
  \begin{tikzpicture}[node distance = 2cm, auto]
    % nodes
    \node [block] (d-alg) {$D$-algebras};
    \node [block, below left  =.8cm and -.2cm of d-alg,  node distance=-5cm]  (uni) {uniform-iteration algebras/\\search-algebras};
    \node [block, below right =.8cm and -.2cm of d-alg, node distance=.5cm]  (malg) {$\BBD$-algebras};
    \node [block, below =2.5cm of d-alg.south] (elgot) {Elgot algebras};

    \draw [line,stealth-] ([xshift= -.65cm]d-alg.south) 
      .. controls +(down:.75cm) and +(up:1.15cm) .. (uni);
    \path [line,stealth-] ([xshift=  .65cm]d-alg.south) 
      .. controls +(down:.75cm) and +(up:1.15cm) .. (malg);
    \draw [line,stealth-] (uni.south) 
      .. controls +(down:.75cm) and +(up:1.15cm) .. ([xshift= -2cm]elgot);
    \path [line,stealth-] (malg.south) 
      .. controls +(down:.75cm) and +(up:1.15cm) .. ([xshift= 2cm]elgot);
  \end{tikzpicture}
\end{center}%
\caption{Connections between classes of $D$-algebras.}
\label{fig:d-alg}
\end{figure}

%
%
%\begin{lemma}\label{lem:lat-ear}
%The coequalizer~\eqref{eq:D-quote} exists iff a coequalizer of $D\fst\c 
%D(X\times\nat)\to DX$ and $\iota^\klstar\c D(X\times\nat)\to DX$
%exists and both are isomorphic.
%\end{lemma}
%\begin{proof}
%\todo{This seems to require that $\tau$ is epi.}
%\end{proof}

%
%Observe the following.
%%
%\begin{lemma}
%The canonical initial uniform-iteration algebra morphism 
%$\rho_X\c DX\to \IA X$ is an iteration preserving strong monad morphism $\rho\c\BBD\to\IE$.
%\end{lemma}
%\begin{proof}
%Since $\IA X$ is defined as a free object, $\rho$ preserves monad unit by definition.
%Given $f\c X\to \IA Y$, note that $\rho$, $f^\klstar$ and $(\rho\o f)^\klstar$ are 
%by definition iteration preserving, and hence so are both $\rho\o f^\klstar$ 
%and $(\rho\o f)^\klstar\o\rho$. Since $\rho\o f^\klstar\o\eta = \rho\o f 
%= (\rho\o f)^\klstar\o\eta=(\rho\o f)^\klstar\o\rho\o\eta$, this implies $\rho f^\klstar=
%(\rho f)^\klstar\o\rho$, by the universal property of $\IA X$, hence $\rho$ preserves 
%Kleisli star. 
%
%\todo{preservation of strength.}
%\end{proof}
%
\begin{proposition}\label{prop:uniform-iteration-eq}
\begin{enumerate}
  \item
The categories of uniform-iteration algebras and search-algebras are isomorphic
under:
\begin{align*}
(A,(\argument)^\pistar)&\;\mto (A\comma\out^\pistar\c DA\to A),\\
(A,a\c DA\to A)&\;\mto (A\comma a\o \coit(\argument)\c\BC(X,A+X)\to\BC(X,A)).
\end{align*}
\item Elgot algebras are precisely those $D$-algebras, which are
search-algebras and\/ $\BBD$-algebras.
\end{enumerate}
\end{proposition}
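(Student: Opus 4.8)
The plan is to verify that the two displayed assignments are well-defined and mutually inverse, and to note that they respect morphisms, giving an isomorphism of categories over $\BC$. Given a uniform-iteration algebra $(A,(\argument)^\pistar)$, I would first check that $\out^\pistar\c DA\to A$ is a search-algebra structure: unfolding $\out^\pistar=[\id,\out^\pistar]\o\out$ by \FIX and using $\out\o\now=\inl$ and $\out\o\lat=\inr$ (both immediate since $\tuo=[\now,\lat]$ inverts $\out$) gives $\out^\pistar\o\now=\id$ and $\out^\pistar\o\lat=\out^\pistar$. Conversely, given a search-algebra $(A,a)$, the operator $f\mapsto a\o\coit f$ satisfies \FIX because the coiteration equation rearranges to $\coit f=[\now,\lat\o\coit f]\o f$, whence $a\o\coit f=[\id,a\o\coit f]\o f$ using $a\o\now=\id$ and $a\o\lat=a$; and it satisfies \UNI because $\coit$ is itself uniform (if $(\id+h)\o f=g\o h$ then $\coit g\o h$ is a coalgebra morphism out of $(X,f)$, hence equals $\coit f$ by finality). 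The assignments are mutually inverse on objects: $\out^\pistar\o\coit f=f^\pistar$ is an instance of \UNI (since $\out\o\coit f=(\id+\coit f)\o f$), and $a\o\coit(\out)=a$ because $\coit(\out)=\id_{DA}$ by finality. For the correspondence on morphisms I would use the identity $Dh\o\coit f=\coit((h+\id)\o f)$ (finality once more), which turns the $D$-algebra law $h\o a=b\o Dh$ into the iteration-preserving law $h\o f^\pistar=((h+\id)\o f)^\pistar$ for every $f$, and conversely (specialize to $f=\out$ for the reverse direction). Identities and composition are inherited from $\BC$ on both sides, so this is a genuine isomorphism of categories.

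\textbf{Part (2).} By \cref{thm:D}, $\BBD$-algebras are exactly the $\Id$-guarded Elgot algebras $(A,a\c A\to A,(\argument)^\pistar)$, and by definition the (unguarded) Elgot algebras are precisely those with $a=\id$. Under this correspondence the $\BBD$-algebra structure is $\beta=\out^\pistar$, and applying the guarded \FIX axiom to $\out\c DA\to A+DA$ (together with $\out\o\now=\inl$, $\out\o\lat=\inr$) yields $\beta\o\now=\id$ and $\beta\o\lat=a\o\beta$, while the recovered iteration on $f\c X\to A+X$ is $f\mapsto\beta\o\coit f$ -- the very operator produced in Part (1). Since $\beta\o\now=\id$ exhibits $\beta$ as a split epimorphism, the relation $\beta\o\lat=a\o\beta$ makes $a=\id$ equivalent to $\beta\o\lat=\beta$, i.e.\ to $(A,\beta)$ being a search-algebra. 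Transporting this through the isomorphism of Part (1): viewed as $D$-algebras, the Elgot algebras are exactly those search-algebras that are simultaneously $\BBD$-algebras, which is the claim; and the Elgot/uniform-iteration operator recovered on such an algebra agrees with the one from Part (1), both being $f\mapsto\beta\o\coit f$.

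\textbf{Main obstacle.} Modulo \cref{thm:D}, everything above is routine diagram chasing with the finality of $DX$; the only genuinely external input is the Ad\'amek--Milius--Velebil correspondence together with its explicit shape ($\beta=\out^\pistar$, iteration $=\beta\o\coit(\argument)$, guarded structure $a=\beta\o\lat\o\now$). The real difficulty is thus hidden inside \cref{thm:D}: if one instead wanted a self-contained proof of Part (2), the crux would be to show directly that, for a uniform-iteration algebra, the axiom \FOL (equivalently \COM) is equivalent to the Eilenberg--Moore associativity law $\out^\pistar\o\mu=\out^\pistar\o D(\out^\pistar)$ for the induced $D$-algebra $\out^\pistar$, which amounts to re-proving the relevant half of \cref{thm:D} in the present unguarded setting.
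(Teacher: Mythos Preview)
Your proof is correct and follows essentially the same approach as the paper. Part~(1) is virtually identical in structure: both you and the paper verify the search-algebra axioms for $\out^\pistar$ via \FIX, verify \FIX and \UNI for $a\o\coit(\argument)$ via the search-algebra laws and finality, establish mutual inversion via $\coit\out=\id$ and an instance of \UNI, and handle morphisms through the identity $Dh\o\coit f=\coit((h+\id)\o f)$. For Part~(2), both arguments invoke the Ad\'amek--Milius--Velebil correspondence (the paper cites it as \cite[Theorem~5.7]{GoncharovMiliusEtAl16}) and reduce to checking that the guarded $\Id$-structure $\hat a$ vanishes precisely when $\beta=\out^\pistar$ satisfies the search-algebra law $\beta\o\lat=\beta$; your use of the split-epi property of $\beta$ to cancel $\beta$ from $\hat a\o\beta=\beta$ is slightly cleaner than the paper's explicit unfolding of $\hat a=\out^\pistar\o\lat\o\now$, but the content is the same. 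Your closing observation that the substantive work is hidden in \cref{thm:D} is accurate and matches how the paper structures the dependency.
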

%
%
%Elgot algebras and uniform-iteration algebras can also be related as follows.
%

%Suppose that each $DX=\nu\gamma.\,X+\gamma$ exists. 

%
\begin{lemma}\label{lem:elgot-iota}
Every Elgot algebra $(DA,a\c DA\to A)$ satisfies $a\o\iota^\klstar = a\o (D\fst)$. 
\end{lemma}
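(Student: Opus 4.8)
The plan is to reduce the claimed identity to the elementary fact that $a\o\iota_A = \fst\c A\times\nat\to A$, using that an Elgot algebra presented as a $D$\dash algebra $(A, a\c DA\to A)$ is simultaneously a search-algebra and a $\BBD$-algebra (\cref{prop:uniform-iteration-eq}, part 2). Both halves of this will be used: the search-algebra laws $a\o\now = \id$ and $a\o\lat = a$ to evaluate $a\o\iota_A$, and the $\BBD$-algebra associativity law $a\o\mu_A = a\o Da$ to unfold $\iota_A^\klstar$.

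First I would rewrite the Kleisli lifting via the standard identity $f^\klstar = \mu\o Df$, so that $\iota_A^\klstar = \mu_A\o D\iota_A$ and hence
\[
a\o\iota_A^\klstar = a\o\mu_A\o D\iota_A = a\o Da\o D\iota_A = a\o D(a\o\iota_A),
\]
where the middle equality is the $\BBD$-algebra law and the last step is functoriality of $D$. It therefore suffices to prove $a\o\iota_A = \fst$, for then $a\o D(a\o\iota_A) = a\o D\fst$ as required.

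Second, I would establish $a\o\iota_A = \fst$ by the uniqueness (induction) part of primitive recursion. From the defining equations $\iota_A(x,\zero) = \now(x)$ and $\iota_A(x,\suc n) = \lat(\iota_A(x,n))$, together with $a\o\now = \id$ and $a\o\lat = a$, the composite $a\o\iota_A$ satisfies $(a\o\iota_A)(x,\zero) = x$ and $(a\o\iota_A)(x,\suc n) = (a\o\iota_A)(x,n)$; the projection $\fst$ satisfies the same two equations, so the two morphisms coincide by the induction principle.

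I do not anticipate a serious obstacle. The one point worth care is that the argument genuinely needs the $\BBD$-algebra half of the Elgot-algebra hypothesis — the bare uniform-iteration (search-algebra) structure does not by itself validate $a\o\mu = a\o Da$ — and that the componentwise monicity of $\iota$ recorded earlier plays no role here. One could alternatively avoid $\mu$ altogether and verify the two sides coalgebraically by comparing their composites with $\out$ via the characterization of $(\argument)^\klstar$ in \cref{prop:D-props}, but the $\BBD$-algebra route above is shorter.
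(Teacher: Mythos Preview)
Your proof is correct and follows essentially the same route as the paper: the paper displays the identity as a commutative diagram whose square is exactly the $\BBD$-algebra law $a\o\mu = a\o Da$ and whose triangle is $a\o\iota = \fst$ (established ``by induction''), which is precisely your two-step decomposition $a\o\iota^\klstar = a\o\mu\o D\iota = a\o D(a\o\iota) = a\o D\fst$. Your remark that the $\BBD$-algebra half of the Elgot hypothesis is genuinely needed is also in line with the paper's reasoning.
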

%
%
%
%Let $\ear = [\now,\id]\o\out\c DX\to DX$.
%%
%\begin{lemma}\label{lem:lat-ear}
%\todo{no longer true.} The coequalizer~\eqref{eq:D-quote} exists iff a coequalizer of $\id$ and $\ear$
%exists and both are isomorphic.
%\end{lemma}
%\begin{proof}
%%
%It suffices to show that any $f\c DX\to Y$ coequalizes $\id,\lat$ iff it 
%coequalizes $\id,\ear$. 
%
%$\Rightarrow$\,: if $f = f\o\lat$ then  $f\o\ear = 
%[f\o\now,f]\o\out=[f\o\now,f\o\lat]\o\out = f$;
%
%$\Leftarrow$\,: if $f = f\o\ear$ then $f\o\lat = 
%f\o [\now,\id]\o\out\o\lat=f\o [\now,\id]\o\inr = f$.
%\end{proof}
%
We proceed to model the construction of quotienting $\BBD$
by weak bisimilarity~$\approx$, previously described in type-theoretic
terms~\cite{ChapmanUustaluEtAl15}. Modulo identification of~$DX$ with the object
of those streams $\sigma\c\nat\to X+1$ for which $\sigma(n)\neq\inr\star$ for at most
one $n$, $\approx$ can be described as follows: $\sigma\approx\sigma'$
if for every $a$, $\sigma(n)=a$ for some $n$~iff $\sigma'(n) = a$ for some $n$.

%This 
%can be equivalently formulated as follows: $\sigma\approx\sigma'$ if for every
%$n$ either $\sigma(k)=\sigma'(k)=\inr\star$ for all $k\leq n$ or there are $k,m$
%and $a$, such that $n=k+m$ and $\sigma(k) = \sigma(m)=\inl a$. We thus consider 
%the kernel pair $\eps_X\c\approx_X\;\ito DX\times DX$ of the morphism 
%$\hat\tau^\klstar\o\tau\c DX\to D(X\times X)$ as a categorical formalization of this 
%idea.

Recall the  embedding $\iota\c X\times\nat\ito DX$, and define the 
quotient of $DX$ by the  coequalizer
\begin{equation}\label{eq:D-quote}
\begin{tikzcd}
D(X\times\nat) 
  \ar[r,shift right=.35ex,"D\fst"']
  \ar[r,shift left=.75ex, "\iota^\klstar"]
&[3em]
DX\rar["\rho_X"] 
&[2em] 
\wave DX
\end{tikzcd}
\end{equation}
which we assume to exist and be preserved by products.
It is then straightforward that~$\wave D$
is a functor and $\rho_X$ is natural in~$X$. It also follows that $X\xto{~\now~} DX\xto{~\rho~} \wave DX$ is strong.
Following tradition, we denote $\wave D1$ as $\Sigma$. 
\begin{lemma}\label{lem:rho-eq-lat}
$\rho\o\lat = \rho$.
\end{lemma}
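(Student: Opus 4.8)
The plan is to realise $\lat$, modulo the coequalizer $\rho$, as a composite that also factors through $D\fst$, so that the coequalizer property of \eqref{eq:D-quote} forces $\rho\circ\lat=\rho$. Concretely, I would introduce the ``tag with the constant $1$'' map $g = D\brks{\id,\suc\zero\bang}\c DX\to D(X\times\nat)$ and check two things: that $g$ is a section of $D\fst$, and that $\iota^\klstar\circ g=\lat$. Granting both, the identity $\rho\circ\iota^\klstar=\rho\circ D\fst$ of the coequalizer in \eqref{eq:D-quote} gives immediately
\[
\rho\circ\lat \;=\; \rho\circ\iota^\klstar\circ g \;=\; \rho\circ D\fst\circ g \;=\; \rho\circ\id_{DX} \;=\; \rho .
\]

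The section property is trivial: $D\fst\circ g = D(\fst\circ\brks{\id,\suc\zero\bang}) = D\id = \id_{DX}$. The identity $\iota^\klstar\circ g=\lat$ is the only part needing a computation. By the monad law $f^\klstar\circ Dh=(f\circ h)^\klstar$ it equals $(\iota\circ\brks{\id,\suc\zero\bang})^\klstar$, and the primitive-recursion clauses for $\iota$ give $\iota\circ\brks{\id,\suc\zero\bang}=\lat\circ\iota\circ\brks{\id,\zero\bang}=\lat\circ\now$. So it remains to show $(\lat\circ\now)^\klstar=\lat$, which I would obtain from the uniqueness clause of \cref{prop:D-props}(2): it suffices that $\lat$ satisfies $\out\circ\lat=[\,\out\circ\lat\circ\now,\ \inr\circ\lat\,]\circ\out$, and since $\out\circ\lat=\inr$ and $\tuo=[\now,\lat]=\out^{-1}$ by Lambek's lemma, the right-hand side is $\inr\circ[\now,\lat]\circ\out=\inr\circ\id_{DX}=\inr=\out\circ\lat$, as required.

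I do not anticipate a genuine obstacle: the content of the statement is essentially that quotienting $\BBD$ by weak bisimilarity must collapse a one-step delay, and the only mildly non-obvious choice is the section $g$ — the ``$+1$''-tagging map, which records in the $\nat$-component exactly the one extra delay that $\iota^\klstar$ then reads back off. The verification $(\lat\circ\now)^\klstar=\lat$ is a one-line diagram chase, and the rest is bookkeeping with the defining equation of $\rho$.
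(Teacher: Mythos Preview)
Your argument is correct. The paper states \cref{lem:rho-eq-lat} without proof, so there is nothing to compare against; your approach---precomposing the coequalizer identity $\rho\circ\iota^\klstar=\rho\circ D\fst$ with the section $g=D\brks{\id,\suc\zero\bang}$ and reducing to $(\lat\circ\now)^\klstar=\lat$ via the uniqueness clause of \cref{prop:D-props}(2)---is exactly the natural argument the paper evidently expects the reader to supply.
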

%\begin{proof}
%We have $\rho = \rho\o D\fst\o D\brks{\id,\suc\o\zero\o\bang} 
%= \rho\o \iota^\klstar\o D\brks{\id,\suc\o\zero\o\bang} = \rho\o (\iota\o\brks{\id,\suc\o\zero\o\bang})^\klstar = \rho\o\lat$.
%\end{proof}
%
Defining $\rho$ as a coequalizer of $\lat$ and $\id$ in the first place does not 
seem to be sufficient, though, in particular, for showing the following property.
We leave open the question of identifying conditions under which it is 
possible.  
\begin{proposition}\label{lem:bis}
The following is a coequalizer:
\begin{equation}\label{eq:D-quote-sym}
\begin{tikzcd}
D(X + (X\times\nat+X\times\nat)) 
  \ar[r,shift right=.35ex,"{[\eta,\,[\eta\o\fst,\,\iota\o(\id\times\suc)]]}^\klstar"']
  \ar[r,shift left=.75ex, "{[\eta,\,[\iota\o(\id\times\suc),\,\eta\o\fst]]}^\klstar"]
&[8em]
DX\rar["\rho_X"] 
&[1.5em] 
\wave DX
\end{tikzcd}
\end{equation}
\end{proposition}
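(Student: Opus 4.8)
Write $W = X + (X\times\nat+X\times\nat)$ and let $e_1,e_2\c W\to DX$ be the two maps whose Kleisli liftings form the parallel pair in~\eqref{eq:D-quote-sym}, i.e.\ $e_1 = [\eta,[\eta\o\fst,\iota\o(\id\times\suc)]]$ and $e_2 = [\eta,[\iota\o(\id\times\suc),\eta\o\fst]]$. The plan is to show that the parallel pairs $(e_1^\klstar,e_2^\klstar)$ and $(\iota^\klstar,D\fst)$ of~\eqref{eq:D-quote} admit exactly the same coequalizing morphisms, that is, for every $h\c DX\to Q$,
\[
  h\o\iota^\klstar = h\o D\fst\qquad\Longleftrightarrow\qquad h\o e_1^\klstar = h\o e_2^\klstar .
\]
Granting this equivalence, the statement is immediate: \eqref{eq:D-quote} is assumed to be a coequalizer with apex $\wave DX$ and comparison map $\rho_X$, so $\rho_X$ satisfies the left-hand equation, hence the right-hand one, which makes~\eqref{eq:D-quote-sym} a fork; and any fork $h$ for~\eqref{eq:D-quote-sym} satisfies the left-hand equation, hence factors through $\rho_X$, uniquely since $\rho_X$, being a coequalizer, is epic.

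For the implication from left to right I would factor each $e_i$ through $\iota$: one has $e_i = \iota\o\phi_i$, where $\phi_1\c W\to X\times\nat$ sends $\inl x\mto\brks{x,\zero}$, $\inr(\inl\brks{x,n})\mto\brks{x,\zero}$ and $\inr(\inr\brks{x,n})\mto\brks{x,\suc n}$, and $\phi_2$ is obtained by interchanging the last two clauses; these identities are checked clause by clause from $\iota(x,\zero)=\now x$, $\iota(x,\suc n)=\lat(\iota(x,n))$ and $\now\o\fst=\iota\o\brks{\fst,\zero\bang}$. Using the Kleisli identity $(\iota\o\phi_i)^\klstar=\iota^\klstar\o D\phi_i$ and the hypothesis $h\o\iota^\klstar=h\o D\fst$, one gets $h\o e_i^\klstar = h\o D\fst\o D\phi_i = h\o D(\fst\o\phi_i)$; and $\fst\o\phi_1=\fst\o\phi_2=[\id,[\fst,\fst]]$, because $\fst\o\brks{\id,\zero\bang}=\id$ and $\fst\o(\id\times\suc)=\fst$, so the two sides coincide.

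For the converse the key point is that a \emph{single} auxiliary map recovers both $\iota^\klstar$ and $D\fst$ from the symmetric pair. Let $v\c X\times\nat\to W$ be the map $\brks{x,\zero}\mto\inl x$, $\brks{x,\suc n}\mto\inr(\inr\brks{x,n})$, which is well defined because $[\zero,\suc]\c 1+\nat\to\nat$ is an isomorphism. A direct computation gives $e_1\o v=\iota$ and $e_2\o v=\eta\o\fst$, whence $\iota^\klstar = e_1^\klstar\o Dv$ and $D\fst = (\eta\o\fst)^\klstar = e_2^\klstar\o Dv$ (using again $(g\o f)^\klstar=g^\klstar\o Df$ for a plain $f$ and $Df=(\eta\o f)^\klstar$). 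Precomposing $h\o e_1^\klstar=h\o e_2^\klstar$ with $Dv$ then yields $h\o\iota^\klstar=h\o D\fst$, as required.

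The only nonroutine ingredient is spotting $v$: it must \emph{simultaneously} witness $e_1\o v=\iota$ and $e_2\o v=\eta\o\fst$, and this is exactly what pins down $\rho_X$ as the coequalizer of~\eqref{eq:D-quote-sym} rather than of some strictly finer quotient — the delicacy flagged in the discussion preceding the statement, and the reason coequalizing $\lat$ with $\id$ alone does not obviously suffice. Everything else (the clause-by-clause checks for $\phi_1,\phi_2,v$ and the bookkeeping of the Kleisli laws) is mechanical, and notably no property of $\wave D$ beyond~\eqref{eq:D-quote} being a coequalizer is needed.
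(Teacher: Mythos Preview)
Your proof is correct and follows the same two-step strategy as the paper: show that a morphism $h\c DX\to Q$ coequalizes the pair in~\eqref{eq:D-quote-sym} iff it coequalizes the defining pair $(\iota^\klstar,D\fst)$ in~\eqref{eq:D-quote}. The forward direction is handled identically in both, by factoring $e_i$ through $\iota$. For the converse, the paper simply precomposes with $D(\inr\o\inr)$, asserting $h\o\iota^\klstar = h\o e_1^\klstar\o D(\inr\o\inr)$; but $e_1\o\inr\o\inr = \iota\o(\id\times\suc)$, so this actually yields only $h\o(\iota\o(\id\times\suc))^\klstar = h\o D\fst$, and since $D(\id\times\suc)$ need not be epic this does not literally give $h\o\iota^\klstar = h\o D\fst$. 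Your map $v$, which splits on whether the counter is $\zero$ or a successor and routes the two cases to $\inl$ and $\inr\o\inr$ respectively, is precisely the missing ingredient: it makes $e_1\o v = \iota$ and $e_2\o v = \eta\o\fst$ hold on the nose, so that precomposing with $Dv$ recovers the original parallel pair exactly. In short, your argument is the paper's argument with the boundary case $n=\zero$ correctly patched in.
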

The last proposition brings the definition of $\rho$ in accordance with the 
intuition; the coproduct $X + (X\times\nat+X\times\nat)$ covers three alternatives
for $\sigma\approx\sigma'$: either $\sigma=\sigma'$, or $\sigma$ terminates earlier
than $\sigma'$ by a specified number, or the other way around. It can be verified 
that the embedding $D(X + (X\times\nat+X\times\nat))\ito DX\times DX$ is an 
internal equivalence relation.

\begin{theorem}\label{thm:Drho}
The following conditions are equivalent:
\begin{enumerate}
%  \item\label{it:q2} every $(\wave DX, \rho_X\o\now\c X\to\wave DX)$ is a free 
%  uniform-iteration algebra on $X$ and $\rho_X\mu_X = ((\rho_X+\id)\o\out)^\pistar$;
%  \item\label{it:q0} the coequalizer~\eqref{eq:D-quote} with $X=1$ is preserved by $D$; 
  \item\label{it:q2} for every $X$, coequalizer~\eqref{eq:D-quote} is preserved by $D$; 
  \item\label{it:q1} every $\wave DX$ extends to a search-algebra, 
%  $(\wave DX\comma\alpha_X\c D\wave DX\to\wave DX)$, 
so that each~$\rho_X$ is a $D$-algebra morphism;  
  \item\label{it:q4} for every $X$, $(\wave DX,\rho\o\now\c X\to\wave DX)$ is 
  a stable free Elgot algebra on $X$, $\rho_X$ is a $D$-algebra morphism and 
  $\rho_X = ((\rho_X\o\now +\id)\o\out)^\pistar$;
	\item\label{it:q3} $\wave D$ extends to a strong monad, so that $\rho$ is a strong monad morphism.
\end{enumerate}
\end{theorem}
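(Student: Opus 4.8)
The plan is to establish the cycle $\ref{it:q2}\Rightarrow\ref{it:q4}\Rightarrow\ref{it:q3}\Rightarrow\ref{it:q1}\Rightarrow\ref{it:q2}$, throughout identifying uniform-iteration algebras with search-algebras and Elgot algebras with those $D$-algebras that are both search-algebras and $\BBD$-algebras (\cref{prop:uniform-iteration-eq}), and using that a stable free uniform-iteration algebra is automatically an Elgot algebra, so that $\IA X$ is the stable free Elgot algebra on $X$ (\cref{thm:K-pre-Elgot}).

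For $\ref{it:q2}\Rightarrow\ref{it:q4}$: assuming $D$ preserves \eqref{eq:D-quote}, the map $D\rho_X$ is an epic coequalizer, and since $\iota^\klstar$ is a $\BBD$-algebra morphism and $\mu$ is natural, $\rho_X\o\mu_X$ coequalizes $D(D\fst)$ and $D(\iota^\klstar)$; hence it factors as $\wave\mu_X\o D\rho_X$ for a unique $\wave\mu_X\c D\wave DX\to\wave DX$. Using that $D\rho_X$ (and $DD\rho_X$) are epic one verifies the $\BBD$-algebra laws for $\wave\mu_X$, and, invoking \cref{lem:rho-eq-lat}, that $\wave\mu_X\o\now=\id$ and $\wave\mu_X\o\lat=\wave\mu_X$; thus $(\wave DX,\wave\mu_X)$ is a search-algebra and a $\BBD$-algebra, i.e.\ an Elgot algebra, and $\rho_X$ is a $\BBD$-algebra morphism. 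For the universal property, given an Elgot algebra $A=(A,a\c DA\to A)$ and $f\c X\to A$, the $\BBD$-algebra morphism $\hat f=a\o Df\c DX\to A$ coequalizes $D\fst$ and $\iota^\klstar$: the key identity $\hat f\o\iota=f\o\fst$ follows by uniqueness of primitive recursion from $\hat f\o\lat=\hat f$ (which holds because $a\o\lat=a$). Therefore $\hat f$ factors through $\rho_X$, and the factor is a $D$-algebra morphism since $\rho_X$ is an epic $\BBD$-algebra morphism with $\hat f\o\mu_X=a\o D\hat f$; so $\wave DX$ is a free Elgot algebra over $X$, whence $\wave DX\iso\IA X$ and stability follows. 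Finally $\rho_X$ and $((\rho_X\o\now+\id)\o\out)^\pistar$ are both the unique $\BBD$-algebra morphism $DX\to\wave DX$ extending $\rho\o\now$ (for the second, its value is $\wave\mu_X\o D(\rho\o\now)$ by \cref{prop:uniform-iteration-eq}, using $\coit((\rho\o\now+\id)\o\out)=D(\rho\o\now)$), so they coincide.

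For $\ref{it:q4}\Rightarrow\ref{it:q3}$: since $\wave DX\iso\IA X$ compatibly with units, the monad $\IE$ and its strength (\cref{lem:b-mon}, \cref{prop:tau-prop}) transport to $\wave D$; and $\rho_X$, being for each $X$ the $\BBD$-algebra morphism out of the free $\BBD$-algebra $DX$ extending $\rho\o\now=\eta^{\wave D}$, is the canonical comparison, so the family $(\rho_X)$ is a monad morphism $\BBD\to\wave D$, strong because $\rho\o\now$ is strong and the two strengths are determined by \cref{prop:D-props} and \cref{prop:tau-prop}. For $\ref{it:q3}\Rightarrow\ref{it:q1}$: via the monad morphism $\rho$, each $\wave DX$ carries the $\BBD$-algebra structure $\wave\mu_X\o\rho_{\wave DX}$, which satisfies $(\argument)\o\now=\id$ (monad law) and $(\argument)\o\lat=(\argument)$ by \cref{lem:rho-eq-lat}; so $\wave DX$ is a search-algebra, and $\rho_X$ is a $\BBD$-algebra (hence $D$-algebra) morphism by the monad-morphism law together with naturality of $\rho$ — which is exactly $\ref{it:q1}$.

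The main obstacle is $\ref{it:q1}\Rightarrow\ref{it:q2}$: from a search-algebra structure $b_X$ on $\wave DX$ with $\rho_X\o\mu_X=b_X\o D\rho_X$ one must deduce that $D$ preserves \eqref{eq:D-quote}. A ``descend and reflect'' strategy is circular, since upgrading $b_X$ to a $\BBD$-algebra structure already requires $DD\rho_X$ epic. Instead I would verify the universal property of $D\rho_X$ as a coequalizer of $D(D\fst)$ and $D(\iota^\klstar)$ directly: the pair $(D\fst,\iota^\klstar)$ is reflexive with common section $D\brks{\id,\zero\bang}$, and, using this section together with $b_X$ and the equivalence-relation presentation of $\rho_X$ from \cref{lem:bis}, one factors an arbitrary cocone $q\c DDX\to Q$ through $D\rho_X$, with $b_X$ supplying exactly the ``delays may be collapsed'' identification needed one level up. This is where the specific structure of the delay monad, rather than generic monad theory, enters, and where the real work lies.
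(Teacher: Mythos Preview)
Your cycle is $\ref{it:q2}\Rightarrow\ref{it:q4}\Rightarrow\ref{it:q3}\Rightarrow\ref{it:q1}\Rightarrow\ref{it:q2}$, whereas the paper runs $\ref{it:q2}\Rightarrow\ref{it:q1}\Rightarrow\ref{it:q4}\Rightarrow\ref{it:q2}$ together with $\ref{it:q4}\Rightarrow\ref{it:q3}\Rightarrow\ref{it:q1}$. The crucial difference is that the paper never attempts $\ref{it:q1}\Rightarrow\ref{it:q2}$ directly; it closes the loop through $\ref{it:q4}$, so that when it comes to showing $D\rho$ is a coequalizer it has the full stable free Elgot algebra structure on $\wave DX$ at hand, not merely a search-algebra structure. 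Your last paragraph correctly identifies $\ref{it:q1}\Rightarrow\ref{it:q2}$ as the obstacle, but what you write there is a plan rather than a proof, and the reflexive-pair/\cref{lem:bis} strategy you outline does not obviously go through: to factor an arbitrary $q\c DDX\to Q$ through $D\rho_X$ you need to \emph{produce} a map $D\wave DX\to Q$, and the search-algebra $b_X$ lives on $\wave DX$, not on $Q$, so it is unclear how it ``supplies the identification one level up''. The paper's actual argument (for $\ref{it:q4}\Rightarrow\ref{it:q2}$) is a concrete construction: it defines explicit coalgebra maps $w\c DX\times\enat\to DX+DX\times\enat$ and $u\c D(X\times\nat)\times\enat\to D(X\times\nat)+D(X\times\nat)\times\enat$ implementing a form of truncated subtraction of delays, shows that $\brks{\mu,D\bang}$ and $(\coit w)$ are mutually inverse, and uses this together with the retraction $\brks{\alpha,D\bang}$ (from \cref{lem:tau-retr}) to manufacture the required factorization. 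This is genuinely delay-monad-specific work, and your sketch does not reach it.

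There are also two smaller gaps in your $\ref{it:q2}\Rightarrow\ref{it:q4}$. First, you invoke that $DD\rho_X$ is epic to verify the $\BBD$-algebra law, but condition~$\ref{it:q2}$ only says $D$ preserves the \emph{specific} coequalizer~\eqref{eq:D-quote}, not that $D$ preserves $D\rho_X$; the paper obtains $DD\rho$ epic by a separate argument using that $\tau$ is a retraction (\cref{lem:tau-retr}) and that $\rho\times\id$ is a product-preserved coequalizer. Second, your stability argument ``whence $\wave DX\iso\IA X$ and stability follows'' is circular: in this section $\IA X$ is not assumed to exist, so you cannot appeal to it; the paper instead proves stable freeness directly by constructing, for any Elgot algebra $A$ and $f\c X\times Y\to A$, the parametrized extension $f^\hash\c X\times\wave DY\to A$ via $a\o(Df)\o\tau$ and the universal property of $\id\times\rho_Y$.

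Your treatment of $\ref{it:q4}\Rightarrow\ref{it:q3}$ and $\ref{it:q3}\Rightarrow\ref{it:q1}$ is essentially the paper's and is fine.
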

If the equivalent conditions of Theorem~\ref{thm:Drho} are satisfied, we obtain 
an explicit construction of the initial pre-Elgot monad $\IE$, which we explored 
previously. Let us consider concrete examples.
\begin{example}[Maybe-Monad]\label{exa:maybe}
Suppose that $\BC$ is an LPO category, and recall that $DX$ is isomorphic to 
$X\times\nat + 1$. It is then easy to check that~\eqref{eq:D-quote} 
exists, it is preserved by products, $\wave DX\iso X+1$ and $\rho = \fst+\id\c X+1\to X\times\nat + 1$. Since
$D$ is the composition of $(\argument\times\nat)$ and $(\argument+1)$, and both these
functors preserve coequalizers (first as a left adjoint, and second by extensiveness 
of~$\BC$),~$D$ preserves~\eqref{eq:D-quote}. We thus obtain that the maybe-monad 
is an initial pre-Elgot monad. This covers instances of LPO categories from~\cref{exa:D-pos}.
Moreover, the initial pre-Elgot monad is in fact an initial Elgot monad in this 
case: the profiles of the iteration operators $(\argument)^\pistar$ and~$(\argument)^\istar$ 
agree up to rearrangement of summands, and the axioms of
\cref{def:elgot-monad} become the axioms of \cref{def:pem}, except for \COD, 
which can be checked directly.
\end{example}
Note that Example~\ref{exa:maybe} entails that the maybe-monad is the initial 
Elgot monad in $\dCpo$. This is a result of our assumption that $\dCpo$ is developed
w.r.t.\ a classical set theory, which entails that $\dCpo$ is an LPO category. 
This would not be the case if we defined $\dCpo$ internally to a non-classical 
environment, which is indeed the core idea of synthetic domain theory.

Another direction for obtaining an Elgot monad 
from~\eqref{eq:D-quote} is by
using a suitable instance of the \emph{axiom of countable choice}. In our setting 
this takes the following form.
\needspace{2\baselineskip}
\begin{theorem}\label{thm:elg}
Suppose that the coequalizers~\eqref{eq:D-quote} are preserved by the 
exponentiation~$(\argument)^\nat$.
\begin{enumerate}
  \item\label{it:elg1} The equivalent conditions of~\cref{thm:Drho} hold, in particular, 
  $\wave\BBD$ is an initial (strong) pre-Elgot monad.
  \item\label{it:elg2} If every~\eqref{eq:D-quote-sym} is an effective quotient, 
  i.e.\ $D(X + (X\times\nat+X\times\nat))$ is a kernel pair of $\rho_X$, then $\wave\BBD$ 
  is a strong Elgot monad with $f^\istar$ being the least fixpoint of\/ 
  $[\eta,\argument]^\klstar\o f\c\BC(X,\wave DY)\to\BC(X,\wave DY)$ for any $f\c X\to \wave D(Y+X)$.
%  The Kleisli category of\/ $\wave\BBD$ is enriched over (internally) 
%  $\omega$-complete partial orders and continuous morphisms, and strength 
%  preserves joins in the relevant argument. 
%  \item\label{it:elg3} $\wave\BBD$ is a strong Elgot monad.
\end{enumerate}  
\end{theorem}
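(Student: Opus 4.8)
The plan is to dispatch the first claim via \cref{thm:Drho} and the second by transporting the Elgot structure of $\BBD$ across $\rho$.

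For Part~1 it suffices to verify the condition of \cref{thm:Drho} that $D$ preserves the coequalizer~\eqref{eq:D-quote}; all the remaining items follow, and in particular each $\wave DX$ becomes a stable free Elgot algebra over $X$ with $\rho$ a strong monad morphism, so that comparing with \cref{thm:K-pre-Elgot} identifies $\wave\BBD$ with the initial (strong) pre-Elgot monad $\IE$. The key observation is that $D$ is, naturally in $X$, a retract of the stream functor $SX=(X+1)^\nat$, say via $\iota\c D\To S$ and $\pi\c S\To D$ with $\pi\o\iota=\id$. A retract of a functor preserving a given colimit preserves that colimit too: given a colimiting cocone that $S$ sends to a colimiting cocone, and an arbitrary competing cocone $(d_i)$ under its $D$-image, the cocone $(d_i\o\pi)$ factors uniquely through the $S$-image, and post-composing the mediating map with $\iota$ yields a mediating map for $(d_i)$, with uniqueness obtained by pre-composing with $\pi$ and using $\pi\o\iota=\id$. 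Now $S$ preserves~\eqref{eq:D-quote}: $(\argument+1)$ does by extensiveness of $\BC$ (as in \cref{exa:maybe}) and $(\argument)^\nat$ does by hypothesis, hence so does their composite $S$, and therefore so does $D$.

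For Part~2, recall that $\BBD$ carries a strong Elgot-monad structure (its algebras being the $\Id$-guarded Elgot algebras of \cref{thm:D}), with $f^\istar$ for $f\c X\to D(Y+X)$ computed by coiteration: $f^\istar=\coit(k)\o f$, where $k\c D(Y+X)\to Y+D(Y+X)$ sends $d$ to $\inl y$ if $\out(d)=\inl(\inl y)$, to $\inr(f\,p)$ if $\out(d)=\inl(\inr p)$, and to $\inr d'$ if $\out(d)=\inr d'$. By Part~1, $\rho\c\BBD\to\wave\BBD$ is a strong monad morphism, and by assumption each $\rho_X$ is an effective quotient (kernel pair the equivalence relation $D(X+(X\times\nat+X\times\nat))\ito DX\times DX$ of \cref{lem:bis}); since~\eqref{eq:D-quote} is preserved by products, $\rho$ is moreover a stable regular epi. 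A morphism $f\c X\to\wave D(Y+X)$ need not lift through $\rho$, so I would instead pull it back along $\rho_{Y+X}$ to a regular epi $p\c P\to X$ together with $\hat f\c P\to D(Y+X)$ satisfying $\rho\o\hat f=f\o p$, and then \emph{define} $f^\istar$ as the unique morphism with $f^\istar\o p=\rho\o\coit(k)\o\hat f$. This is legitimate because, by effectiveness of $\rho$, it is enough to check that $\rho\o\coit(k)\o\hat f$ coequalizes the kernel pair of $p$; since that kernel pair is covered by the one of $\rho_{Y+X}$, i.e.\ by weak bisimilarity on $D(Y+X)$, this reduces (with \cref{lem:rho-eq-lat}) to the fact that weakly bisimilar $\BBD$-coalgebras have weakly bisimilar $\BBD$-iterations.

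It then remains to check the Elgot laws \FIX, \NAT, \COD, \UNI and the strength law \STR, and the least-fixpoint characterization. Each law is an equation between $\wave\BBD$-morphisms; by effectiveness it may be checked after pre-composition with the relevant regular-epi covers, whereupon, using that $\rho$ is a strong monad morphism and $\BBD$ a strong Elgot monad, it reduces to the corresponding $\BBD$-law holding up to weak bisimilarity. For minimality, \FIX already exhibits $f^\istar$ as a fixpoint of $\Phi_f=[\eta,\argument]^\klstar\o f$, a monotone endomap of the pointed partial order $\BC(X,\wave DY)$ (as $\wave\BBD\iso\IE$ is an equational lifting monad, \cref{prop:K-enriched} applies); and $f^\istar$ is the $\appr$-join of its bounded iterations, because in $\BBD$ the element $\coit(k)(\hat f\,p)$ is, by finality of $DY$, the join of its truncations, which descend under $\rho$ to the truncations $\Phi_f^n(\bot)$ of $f$. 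Here the hypothesis that $(\argument)^\nat$ preserves~\eqref{eq:D-quote} is what assembles this $\omega$-indexed family into a single morphism $X\times\nat\to\wave DY$ and lets the join be formed inside $\wave DY$, and \cref{thm:klee}/\cref{cor:klee} transported through $\rho$ identify that join with $f^\istar$. Since any fixpoint $g$ of $\Phi_f$ has $\bot\appr g$ and $\Phi_f$ is monotone, $\Phi_f^n(\bot)\appr g$ for all $n$, whence $f^\istar\appr g$.

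The step I expect to be the main obstacle is precisely Part~2: extracting from the single effective-quotient hypothesis on~\eqref{eq:D-quote-sym} exactly enough to transport each piece of $\BBD$'s Elgot structure along $\rho$ even though $f$ itself does not lift, and combining this with the $(\argument)^\nat$-preservation hypothesis to realize the required $\omega$-joins inside $\wave DY$ so that the least-fixpoint statement is meaningful. Part~1, by contrast, is essentially the single observation that $D$ is a retract of the stream functor.
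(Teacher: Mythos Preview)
Your Part~1 is correct and is exactly the paper's argument: $D$ is a natural retract of $(\argument+1)^\nat$, the latter preserves~\eqref{eq:D-quote} by extensiveness plus the hypothesis, and retracts of colimit-preserving functors preserve the same colimits.

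Your Part~2, however, does not go through as written. The core problem is a type mismatch in the transported iteration. You pull $f\c X\to\wave D(Y+X)$ back along $\rho_{Y+X}$ to obtain $\hat f\c P\to D(Y+X)$ and then propose $\rho\o\coit(k)\o\hat f$. But your $k\c D(Y+X)\to Y+D(Y+X)$ must, in the case $\out(d)=\inl(\inr x)$, re-apply a map of type $X\to D(Y+X)$; you only have $\hat f\c P\to D(Y+X)$, not any map out of $X$ into $D(Y+X)$. Equivalently, to iterate in~$\BBD$ with state space $P$ you would need $\hat f\c P\to D(Y+P)$, which you do not have and cannot obtain without a section of $p$. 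So the would-be definition of $f^\istar$ is not well-formed. A secondary issue: preservation of~\eqref{eq:D-quote} by products only tells you that $\rho\times\id$ is a regular epi; it does \emph{not} make $\rho$ stable under arbitrary pullbacks, so your $p$ need not be a regular epi in the ambient extensive category. Finally, even granting the construction, the reduction ``weakly bisimilar inputs yield weakly bisimilar $\BBD$-iterations'' is itself a non-trivial claim that you do not justify.

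The paper takes a completely different route for Part~2: it never tries to descend $\BBD$'s Elgot operator along $\rho$. Instead it shows that $\Sigma=\wave D1$ is an internal $\omega$-frame (\cref{lem:sigma-lat}, \cref{lem:frame}, the latter using the $(\argument)^\nat$-preservation hypothesis), then uses effectiveness of~\eqref{eq:D-quote-sym} to identify $\wave D\nat$ with $\Sigma^\omega$, hence an internal $\omega$-cpo. A general criterion (\cref{prop:elg-form-join}) then says that any equational lifting monad $\BBT$ with $T\nat$ an internal $\omega$-cpo is strong Elgot, with $f^\istar$ the least fixpoint of $[\eta,\argument]^\klstar\o f$; a Kleene-type lemma for this setting (\cref{lem:eql-elgot}) supplies \COD. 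So the effectiveness assumption is used to build order-theoretic structure on $\wave D\nat$, not to descend maps along $\rho$.
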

The effectiveness assumption in clause~{\bfseries\sffamily\ref{it:elg2}.} is satisfied in any exact 
category (e.g.\ in any pretopos) -- by definition, every internal equivalence
relation there is effective.
\begin{example}
Theorem~\ref{thm:elg} applies to $\Top$, yielding a concrete description for $\IE$.
Recall that in $\Top$, coequalizers are computed as in $\Set$ and are equipped
with the quotient topology. Note that $DX$ is the set $X\times\nat\cup\{\infty\}$ 
whose base opens are $\{(x,n)\mid x\in O\}$ and $\{(x,k)\mid x\in X,k\geq n\}\cup\{\infty\}$ 
with $n\in\nat$ and $O$ ranging over the opens of $X$. The collapse~$\wave DX$
computed with~\eqref{eq:D-quote} is thus the set $X\cup\{\infty\}$, whose opens 
are those of $X$ and additionally the entire space $X\cup\{\infty\}$, in particular, 
$\wave D1$ is the \emph{Sierpi\'nski space}. 

To obtain that~\eqref{eq:D-quote} 
is preserved by $(\argument)^\nat$, it suffices to show that the opens of 
$(X\cup\{\infty\})^\nat$ are precisely those, whose inverse images under 
$\rho^\nat$ are open. This is in fact true for any regular epi in $\Top$. The 
effectiveness condition in~{\bfseries\sffamily\ref{it:elg2}.} is not vacuous for 
$\Top$, which is not an exact category (and not even regular), but it can be checked 
manually.
%
% and we 
%conclude that $\wave\BBD$ is an Elgot monad.
\end{example}
In every pretopos, preservation of~\eqref{eq:D-quote} by~$(\argument)^\nat$
is a proper instance of the \emph{internal axiom of countable choice}, or \emph{internal 
projectivity of\/ $\nat$}, which means preservation of epis by 
$(\argument)^\nat$, roughly because every pretopos is \emph{exact} and our quotienting 
morphism~$\rho$ is associated with an internal equivalence relation by~\cref{lem:bis}.
%
%Indeed, in a topos, every epimorphism is regular. Also, every 
%topos is \emph{exact}, i.e.\ every equivalence relation, in particular, figuring 
%in~\eqref{eq:D-quote-sym}, is a kernel pair. Now, if~$\nat$ is internally projective,
%then $\rho^\nat$ is epi, hence a regular epi, hence a coequalizer of its kernel pair,
%which are preserved by $(\argument)^\nat$ as limits. As a result,~\eqref{eq:D-quote-sym}
%is preserved by~$(\argument)^\nat$, and the fact that the resulting coequalizer 
%is a coequalizer of $(\iota^\klstar)^\nat$ and $(D\fst)^\nat$ is shown analogously 
%to~\cref{lem:bis}. 
\cref{thm:elg} can thus be related to the existing result in 
synthetic domain theory, that Rosolini dominance, i.e.\ our $\Sigma$, is indeed 
a dominance~\cite{Rosolini86}, which applies to Hyland's \emph{effective topos}~\cite{Hyland82}, 
as it satisfies countable choice. Contrastingly, we cannot apply \cref{thm:elg} to nominal 
sets, which falsify countable choice, however, as a 
Boolean topos, nominal sets fall into the scope of~\cref{exa:maybe}.

We currently do not have a concrete example of $\IE$ being definable, but not
being an Elgot monad. Theorem~\ref{thm:elg} and Example~\ref{exa:maybe} suggests
that a non-artificial category to witness this must neither support excluded middle nor the axiom 
of countable choice.

\section{Conclusions and Further Work}\label{sec:conc}
Iteration and iteration theories emerged as unifying concepts for computer 
science semantics and reasoning. By interpreting iteration suitably, 
one obtains a basic extensible equational logic of programs, shown to be sound and complete
across various models~\cite{BloomEsik93}. Elgot monads implement
this inherently algebraic view in the general categorical realm 
of abstract data types and effects. The class of Elgot monads (over a fixed category) 
is stable under various categorical constructions (monad transformers), and thus one can build 
new Elgot monads from old, but the most simple Elgot monad, the initial one, does 
not arise in this way. 

Here, we proposed an approach to defining
an initial iteration structure from first principles, characterized it in 
various ways, analysed conditions, under which it can be concretely described, 
and to yield an Elgot monad. Unsurprisingly, these conditions 
generally cannot be lifted, as the previous research in type theory indicates. 
We consider broadening the scope in which results about notions of partiality apply, 
and unifying both classical and non-classical models, as an important part of our 
contribution. Universal properties play a central role in category theory,  
but many important concepts are not covered by them. One example is
Sierpi\'nski space, which is fundamental in topology, duality theory 
and domain theory. It follows from our results, that it is in fact a free 
uniform-iteration algebra on one generator. We believe that the structure of 
our results can be reused in more sophisticated setting, such as semantics 
of \emph{hybrid systems}, which require a notion of partiality, combined with continuous 
evolution, and rise semantic issues, structurally similar to those, we considered 
here~\cite{DiezelGoncharov20}. Another potential for taking further
the present work is to consider more general shapes of the basic functor (instead of 
the current $(X+\argument)$), prospectively leading to more sophisticated (non-)structural 
recursion scenarios (see e.g.~\cite{AdamekMiliusEtAl20}).

%\clearpage

\bibliography{monads}

%\end{document}

\clearpage
\appendix
\allowdisplaybreaks

\section{Appendix: Omitted Details and Proofs}

\subsection{Axioms of Strength}
\begin{equation*}
\begin{tikzcd}[column sep = 6ex,row sep = 4ex]
X\times TY\dar["\tau"']\rar["\snd"] &  TY\\
T(X\times Y)\urar["T\snd"'] &  
\end{tikzcd}
\hspace{4ex}
\begin{tikzcd}[column sep = 3ex,row sep = 4ex]
(X\times Y)\times TZ\dar["\assoc"']\ar[rr,"\tau"] &[3ex] &  T((X\times Y)\times Z)\dar["T\assoc"]\\
X\times (Y\times TY)\rar["\id\times\tau"] & X\times T(Y\times Z)\rar["\tau"] &  T(X\times (Y\times Z))
\end{tikzcd}
\end{equation*}
%Explicitly, the following two additional laws are required: 
%

%
\begin{equation*}
\begin{tikzcd}[column sep = 4ex,row sep = 4ex]
X\times Y\dar["\id\times\eta"']\rar["\eta"] &  T(X\times Y)\\
X\times TY\urar["\tau"'] &  
\end{tikzcd}
\hspace{6ex}
\begin{tikzcd}[column sep = 4ex,row sep = 4ex]
X\times TTY\dar["\tau"']\ar[rr,"\id\times\mu"] &[1ex] &  X\times TY\dar["\tau"]\\
T(X\times TY)\rar["T\tau"] & TT(X\times Y)\rar["\mu"] &  T(X\times Y)
\end{tikzcd}
\end{equation*}
%
%For commutative monads, we denote the combined symmetric transformation 
%$\psi = \tau^\klstar\o\hat\tau = \hat\tau^\klstar\o\tau\c TX\times TY\to T(X\times Y)$ 
%throughout.

\subsection{Proof of \cref{lem:d-comm}}
We will need the following 
\begin{lemma}\label{lem:later}
Let $f\c X\to DY$. Then 
\begin{enumerate}
  \item $\lat f^\klstar = (\lat f)^\klstar = f^\klstar\o\lat$;
  \item $\tau\o (\id\times\lat) = \lat\o\tau$.
\end{enumerate}
\end{lemma}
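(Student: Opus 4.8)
The plan is to use that, by Lambek's lemma, $\out\c DX\to X+DX$ is an isomorphism, hence a monomorphism, so both equations can be verified after post-composing with $\out$. I will freely use the following consequences of $\tuo=[\now,\lat]$ being the two-sided inverse of $\out$: $\out\comp\now=\inl$, $\out\comp\lat=\inr$, $\tuo\comp\inl=\now$, $\tuo\comp\inr=\lat$, and $\tuo\comp\out=\id$. I will also use the ``unfolding'' identity $f^\klstar=[f,\lat\comp f^\klstar]\comp\out$, obtained by post-composing the defining equation $\out\comp f^\klstar=[\out\comp f,\inr\comp f^\klstar]\comp\out$ of \cref{prop:D-props}(2) with $\tuo$ on the left.

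For item~1, I would first dispatch the outer equality $\lat\comp f^\klstar=f^\klstar\comp\lat$: post-composing with $\out$, the left side becomes $\out\comp\lat\comp f^\klstar=\inr\comp f^\klstar$, and the right side becomes $\out\comp f^\klstar\comp\lat=[\out\comp f,\inr\comp f^\klstar]\comp\out\comp\lat=[\out\comp f,\inr\comp f^\klstar]\comp\inr=\inr\comp f^\klstar$, so the two agree since $\out$ is monic. For $(\lat\comp f)^\klstar=\lat\comp f^\klstar$ I would invoke the \emph{uniqueness} part of \cref{prop:D-props}(2) for the Kleisli morphism $\lat\comp f\c X\to DY$: it is enough to check that $g:=\lat\comp f^\klstar$ satisfies $\out\comp g=[\out\comp(\lat\comp f),\inr\comp g]\comp\out=[\inr\comp f,\inr\comp g]\comp\out$. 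This holds because $\out\comp g=\inr\comp f^\klstar$, while the unfolding identity gives $[\inr\comp f,\inr\comp\lat\comp f^\klstar]\comp\out=\inr\comp[f,\lat\comp f^\klstar]\comp\out=\inr\comp f^\klstar$.

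For item~2, I again post-compose with $\out$ and use the characterization $\out\comp\tau=(\id+\tau)\comp\dist\comp(\id\times\out)$ of \cref{prop:D-props}(3): the left side $\out\comp\lat\comp\tau$ equals $\inr\comp\tau$, and the right side $\out\comp\tau\comp(\id\times\lat)=(\id+\tau)\comp\dist\comp(\id\times\out)\comp(\id\times\lat)=(\id+\tau)\comp\dist\comp(\id\times\inr)$. Since $\dist$ is the inverse of $[\id\times\inl,\id\times\inr]$ one has $\dist\comp(\id\times\inr)=\inr$, so this equals $(\id+\tau)\comp\inr=\inr\comp\tau$; monicity of $\out$ then yields $\tau\comp(\id\times\lat)=\lat\comp\tau$.

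These are all routine calculations and I do not expect a genuine obstacle. The only two points needing a little care are: using the uniqueness clause of \cref{prop:D-props}(2) for the lifting of $\lat\comp f$ (rather than trying to re-derive it from scratch), and the distributivity bookkeeping $\dist\comp(\id\times\inr)=\inr$ (equivalently $\dist\comp(\id\times\inl)=\inl$), which is immediate from $\dist$ being inverse to $[\id\times\inl,\id\times\inr]$. Everything else is forced, since $\out$ is a split monomorphism.
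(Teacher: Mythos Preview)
Your proposal is correct and follows essentially the same approach as the paper: both arguments reduce everything to equalities that can be checked after post-composing with the isomorphism $\out$, invoking the uniqueness clause of \cref{prop:D-props}(2) for the Kleisli lifting and the identity $\dist\comp(\id\times\inr)=\inr$ for strength. The only cosmetic difference is that for $(\lat f)^\klstar=\lat f^\klstar$ the paper first establishes the auxiliary identity $[f,(\lat f)^\klstar]\comp\out=f^\klstar$ via uniqueness of $f^\klstar$ and then reads off the result, whereas you apply uniqueness directly to $(\lat f)^\klstar$ using your unfolding identity $f^\klstar=[f,\lat f^\klstar]\comp\out$; these are two sides of the same coin.
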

\begin{proof}
For the first clause, note that 
\begin{align*}
  \out [f, (\lat f)^\klstar]\o\out =&\; [\out\o f, [\out\o\lat f, \inr (\lat f)^\klstar]\o\out]\o\out\\*
 =&\; [\out\o f, \inr [f, (\lat f)^\klstar]\o\out]\o\out,
\end{align*}
which implies $[f, (\lat f)^\klstar]\o\out = f^\klstar$, for the resulting equation 
is uniquely satisfied by~$f^\klstar$. Now, $\out (\lat f)^\klstar = 
[\out \lat f,\inr (\lat f)^\klstar]\o\out = \inr [f, (\lat f)^\klstar]\o\out = \inr\o f^\klstar=\out(\lat f^\klstar)$,
which implies $\lat f^\klstar = (\lat f)^\klstar$, since $\out$ is an isomorphism.

Analogously, $\out\o f^\klstar\o\lat = [\out\o f, \inr\o f^\klstar]\o\out\o\lat = [\out\o f, \inr\o f^\klstar]\o\out\o\inr = \inr\o f^\klstar = \out\o\lat f^\klstar$,
and therefore $f^\klstar\o\lat = \lat f^\klstar$.

Let us proceed with the second clause. By definition, 
$\out\o\tau=(\id+\tau)\o\dist\o(\id\times\out)$, which implies 
$\tau\o(\id\times\tuo)=\tuo\o(\id+\tau)\o\dist$. Hence,
\begin{align*}
 \tau\o(\id\times\lat)
=&\;\tau\o(\id\times\tuo)\o(\id\times\inr)\\
=&\;\tuo\o(\id+\tau)\o\dist\o(\id\times\inr)\\
=&\;\tuo\o(\id+\tau)\o\inr\\
=&\;\tuo\o\inr\o\tau\\
=&\;\lat\tau,
\end{align*}
and we are done.
\end{proof}
Let us continue with the proof of \cref{lem:d-comm}.
It is easy to see that the dual $\hat\tau\c DX\times Y\to D(X\times Y)$ of $\tau$ is 
a final coalgebra morphism from $(DX\times Y\comma\ldist\o(\out\times\id): DX\times Y\to X\times Y+ DX\times Y)$ to 
$(D(X\times Y),\out)$ where $\ldist$ is the obvious dual of $\dist$. We need to 
to check that $\hat\tau^\klstar\o\tau = \tau^\klstar\o\hat\tau$. Using the elementary
properties of strength, distributivity transformations and \cref{lem:later},
we obtain: 
\begin{flalign*}
&& \out&\o\hat\tau^\klstar\o\tau\\*
&&&  \;=[\out \hat\tau,\inr \hat\tau^\klstar]\o\out\o\tau &\\
&&&\;=[\out \hat\tau,\inr \hat\tau^\klstar]\o (\id+\tau)\o \dist\o(\id\times\out) \\
&&&\;=[(\id+\hat\tau)\o\ldist\o(\out\times\id),\inr \hat\tau^\klstar\o\tau]\o \dist\o(\id\times\out) \\
&&&\;=[(\id+\hat\tau)\o\ldist,\inr \hat\tau^\klstar\o\tau\o (\tuo\times\id)]\o \dist\o(\out\times\out) \\
&&&\;=[(\id+\hat\tau)\o\ldist,\inr \hat\tau^\klstar D(\tuo\times\id) \tau ] \dist\o(\out\times\out) \\
&&&\;=[(\id+\hat\tau)\o\ldist,\inr\o (\hat\tau\o(\tuo\times\id))^\klstar \tau ] \dist\o(\out\times\out) \\
&&&\;=[(\id+\hat\tau)\o\ldist,\inr\o (\tuo\o (\id+\hat\tau)\o\ldist)^\klstar \tau ]\o \dist\o(\out\times\out) \\
&&&\;=[(\id+\hat\tau)\o\ldist,\inr\o (\tuo\o (\id+\hat\tau))^\klstar\o(D\ldist)\o\tau ]\o \dist\o(\out\times\out) \\
&&&\;=[(\id+\hat\tau)\o\ldist,\inr\o (\tuo\o (\id+\hat\tau))^\klstar\o [(D\inl)\tau,(D\inr)\tau] \o\ldist ]\o \dist\o(\out\times\out) \\
&&&\;=[\id+\hat\tau,\inr\o [(\tuo\o\inl)^\klstar \tau,(\tuo\o\inr\o\hat\tau)^\klstar\o\tau]]\o (\ldist+\ldist)\o\dist\o(\out\times\out) \\
&&&\;=[\id+\hat\tau,\inr [\eta^\klstar \tau,(\tuo\o\inr\o\hat\tau)^\klstar\o\tau] ]\o (\ldist+\ldist)\o\dist\o(\out\times\out) \\
&&&\;=[\id+\hat\tau,\inr [\tau, (\lat\hat\tau)^\klstar\o\tau] ]\o (\ldist+\ldist)\o\dist\o(\out\times\out) \\
&&&\;=[\id+\hat\tau, \inr[\tau, \lat\hat\tau^\klstar\o\tau] ]\o (\ldist+\ldist)\o\dist\o(\out\times\out).
\end{flalign*}
%
%By a symmetric argument, we obtain an analogous equation for $\hat\tau^\klstar\o\tau$. 
%In summary,
%%
%\begin{align*}
%\hat\tau^\klstar\o\tau =&\; \tuo[\id+\hat\tau, \inr[\tau, \lat\hat\tau^\klstar\o\tau]]\o w\\
%\tau^\klstar\o\hat\tau =&\; \tuo[\id+\hat\tau, \inr[\tau, \lat\tau^\klstar\o\hat\tau]]\o w
%\end{align*}
%Let $w = (\ldist+\ldist)\o\dist\o(\out\times\out)$. 
%
That is, $\hat\tau^\klstar\o\tau$ is a fixpoint of 
\begin{align}\label{eq:fix-tau-tau}
  f\mto \tuo\o[\id+\hat\tau, \inr\o[\tau,\, \lat f]]\o (\ldist+\ldist)\o\dist\o(\out\times\out).
\end{align}
on $\BC(DX\times DY, D(X\times Y))$.
By a symmetric argument, also $\tau^\klstar\o\hat\tau$ is a fixpoint of the same map.
Let us denote $(\ldist+\ldist)\o\dist\o(\out\times\out)$ by $w$. For every fixpoint $f$ 
of~\eqref{eq:fix-tau-tau}:
\begin{align*}
f =\;& \tuo\o [\id+\hat\tau, \inr[\tau, \lat f] ] \o  w\\
=\;& \tuo\o [[\inl, \out\o f],\inr [\eta,f]^\klstar]\o [\inl+(D\inl)\hat\tau, \inr[(D\inl)\tau, \lat\eta\o\inr] ]\o w\\
=\;& [\eta,f]^\klstar\o\tuo[\inl+(D\inl)\hat\tau, \inr[(D\inl)\tau, \eta\o\inr] ]\o w
\end{align*}
and, clearly, 
$\tuo[\inl+(D\inl)\hat\tau, \inr[(D\inl)\o\tau, \eta\o\inr]]\o w\c DX\times DY\to D(X\times Y + DX\times DY)$ 
is guarded. By \cref{prop:D-props}~(4) this implies $\tau^\klstar\o\hat\tau = \hat\tau^\klstar\o\tau$.
\qed

\subsection{Proof of \cref{prop:fold}}
Of course, the first two axioms are just corresponding instances of their 
guarded versions. We only need to show that \FOL is equivalent to 
\COM in presence of the other two axioms. Let us first equivalently 
reformulate the \FOL axiom. Note that by uniformity,
$[(\id + \inl)\o f\comma\inr\o h]^\pistar\o\inl = f^\pistar$ and also by \FIX,
$(f^\pistar+ h)^\pistar\o\inl = [\id\comma(f^\pistar+ h)^\pistar]\o(f^\pistar+ h)\inl = f^\pistar$.
Analogously, $(f^\pistar+ h)^\pistar\o\inr = [\id\comma(f^\pistar+ h)^\pistar]\o(f^\pistar+ h)\inr = (f^\pistar+ h)^\pistar\o h$,
and subsequently, by uniformity, $(f^\pistar+ h)^\pistar\o\inr = ((f^\pistar+\id)\o h)^\pistar$.
We have thus equivalently reduced \FOL to 
\begin{align*}
[(\id + \inl)\o f,\inr\o h]^\pistar\o\inr = ((f^\pistar+\id)\o h)^\pistar.
\end{align*}
Now, the task of checking that this is equivalent to \COM amounts 
to showing that
\begin{align*}
([(\id+\inl)\o f,\inr\o\inr]\o[\inl, h])^\pistar\o\inr = [(\id + \inl)\o f,\inr\o h]^\pistar\o\inr
\end{align*}
follows from \FIX and \UNI. Indeed, by \UNI,
\begin{align*}
([(\id+\inl)\o f,\inr\o\inr]\o[\inl, h])^\pistar
=\;& ((\id + [\inl, h])[(\id+\inl)\o f,\inr\o\inr])^\pistar\o[\inl, h]\\*
=\;& ([(\id+\inl)\o f,\inr\o h])^\pistar\o[\inl, h],
\intertext{hence}
([(\id+\inl)\o f,\inr\o\inr]\o[\inl, h])^\pistar\o\inr =\;& ([(\id+\inl)\o f,\inr\o h])^\pistar\o h.
\intertext{On the other hand, using \FIX,}
[(\id + \inl)\o f,\inr\o h]^\pistar\o\inr 
=\;&[\id, [(\id + \inl)\o f,\inr\o h]^\pistar]\inr\o h\\*
=\;&[(\id + \inl)\o f,\inr\o h]^\pistar\o h. 
\end{align*}
In summary, we obtain the desired identity.
\qed

\subsection{Proof of \cref{lem:b-mon}}
Existence of all free unguarded uniform-iteration algebras yields an adjunction between 
$\BC$ and the category of unguarded uniform-iteration algebras over $\BC$. Let us call 
the latter category $\BE$ and the corresponding adjunction $U\vdash F$. The only 
nonobvious condition of Beck's monadicity theorem is existence and preservation of $U$-split 
coequalizers. Consider a split coequalizer 
\begin{equation}\label{eq:becks}
\begin{tikzcd}
UA 
  \ar[r,shift right=.35ex,"U g"']
  \ar[r,shift left=.75ex, "U f"]
&[3em]
UB\rar["e"] 
& 
Z
\end{tikzcd}
\end{equation}
in $\BC$, i.e.\ for suitable $t\c UB\to UA$, $s\c Z\to UB$, $e\o s=\id$, $s\o t=g\o t$,
and $f\o t=\id$. Let~$C$ be uniform-iteration algebra whose carrier is $Z$ and 
whose iteration operator is defined as follows $(h\c X\to Z+X)^\iistar = e\o ((s+\id)\o h)^\pistar$.
Hence $e$ lifts to an uniform-iteration algebra morphism, which is moreover a coequalizer 
of $f$ and $g$ in $\BE$. The image of the resulting coequalizer in~$\BE$ is 
precisely~\eqref{eq:becks}, and thus we have shown that $U$ preserves $U$-split 
coequalizers. Therefore,~$ L$ extends to a monad whose category of algebras 
is precisely $\BE$.
\qed

\subsection{Proof of \cref{lem:stable}}
Assume stability of $\IA Y$.
It is easy to check 
that $\fst\c X\times A\to X$ is a uniform-iteration algebra in $\BC/X$, which implies 
that $\id\times\eta\c X\times Y\to X\times \IA Y$ is the unit morphism for $X\times \IA Y$.
Let us fix some $f\c X\times Y\to A$ and note that $\brks{\fst,f}\c X\times Y\to X\times A$ 
is a morphism in $\BC/X$. Using the universal property of $\fst\c X\times A\to X$ in $\BC/X$ we obtain a unique iteration preserving 
morphism $u\c X\times \IA Y\to X\times A$ in $\BC/X$ for which the diagram 
\begin{equation*}
\begin{tikzcd}[column sep=8em, row sep=normal]
X\times \IA Y\rar["u"]\drar["\fst", pos=.7] & X\times A\dar["\fst"]\\
X\times Y\uar["\id\times\eta"]\urar["\brks{\fst,f}", pos=.2]\rar["\fst"'] & X            
\end{tikzcd}
\end{equation*}
commutes. This entails that $u$ is of the form $\brks{\fst,f^\hash}$ where the 
requisite property of $f^\hash$ follows from the fact that $u$ is iteration 
preserving in $\BC/X$. Conversely, from a unique $f^\hash$, subject to the declared
properties we can render $u$ as $\brks{\fst,f^\hash}$. 
\qed

\subsection{Proof of \cref{lem:key}}
It suffices to show that 
\begin{align*}
(h+\id)\o\dist\o(\id\times (\snd+\id)\o f)\o\brks{((\fst+\id)\o f)^\pistar,\id}
= (\id+\brks{((\fst+\id)\o f)^\pistar,\id})\o(h + \id)\o f
\end{align*}
from which the claim follows by uniformity. After simplifications we obtain
\begin{align*}
(h+\id)\o\dist\o\brks{((\fst+\id)\o f)^\pistar,(\snd+\id)\o f}
= (h+\brks{((\fst+\id)\o f)^\pistar,\id})\o f,
\end{align*}
which is verified directly as follows:
\begin{flalign*}
\quad(h+\id)\o\dist&\o(((\fst+\id)\o f)^\pistar\o(z),(\snd+\id)\o(f(z)))\\*
=\;&\case{f(z)}{\inl (p,q)\mto\inl\o h(((\fst+\id)\o f)^\pistar\o(z),q)}{\\*&\qquad\inr z'\mto\inr (((\fst+\id)\o f)^\pistar\o(z),z')}\\
=\;&\case{f(z)}{\inl (p,q)\mto\inl\o h([\fst, ((\fst+\id)\o f)^\pistar] (f(z)),q)}{&\by{\FIX}\\&\qquad\inr z'\mto\inr ([\fst, ((\fst+\id)\o f)^\pistar] (f(z)),z')}&\by{\FIX}\\
=\;&\case{f(z)}{\inl (p,q)\mto\inl\o h([\fst, ((\fst+\id)\o f)^\pistar] \inl(p,q),q)}{\\*&\qquad\inr z'\mto\inr ([\fst, ((\fst+\id)\o f)^\pistar] \inr z',z')}\\
=\;&\case{f(z)}{\inl (p,q)\mto\inl (p,q)}{\\*&\qquad\inr z'\mto\inr (((\fst+\id)\o f)^\pistar\o(z'),z')}\\
=\;&(h+\brks{((\fst+\id)\o f)^\pistar,\id}) (f(z)).
\end{flalign*}
This completes the proof.
\qed

\subsection{Proof of \cref{thm:K-lift}}
\begin{lemma}\label{lem:k-comm}
$\IE$ is a commutative monad.
\end{lemma}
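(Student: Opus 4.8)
The plan is to derive commutativity of $\IE$, i.e.\ $\hat\tau^\klstar\o\tau=\tau^\klstar\o\hat\tau$ as morphisms $\IA X\times\IA Y\to\IA(X\times Y)$ (where $\tau=\tau_{X,Y}$ is the strength of \cref{prop:tau-prop} and $\hat\tau$ its evident dual), from the universal property of stable free uniform-iteration algebras. Applying \cref{lem:stable} to the uniform-iteration algebra $\IA(X\times Y)$, the parameter $\IA X$, and the morphism $\hat\tau_{X,Y}\c\IA X\times Y\to\IA(X\times Y)$, there is a \emph{unique} morphism $d\c\IA X\times\IA Y\to\IA(X\times Y)$ that is right iteration preserving in the $\IA Y$-argument (with parameter $\IA X$) and satisfies $d\o(\id\times\eta)=\hat\tau_{X,Y}$. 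So it suffices to show that both $\hat\tau^\klstar\o\tau$ and $\tau^\klstar\o\hat\tau$ (i)~restrict along $\id\times\eta_Y$ to $\hat\tau_{X,Y}$, and (ii)~are right iteration preserving in the $\IA Y$-argument.

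Part (i) is routine. For $\hat\tau^\klstar\o\tau$, the identity $\tau_{\IA X,Y}\o(\id\times\eta)=\eta$ from \cref{prop:tau-prop} gives $\hat\tau^\klstar\o\tau\o(\id\times\eta)=\hat\tau^\klstar\o\eta=\hat\tau_{X,Y}$. For $\tau^\klstar\o\hat\tau$, naturality of $\hat\tau$ in its second argument together with $\tau^\klstar\o\IA g=(\tau\o g)^\klstar$ gives $\tau^\klstar\o\hat\tau\o(\id\times\eta)=(\tau\o(\id\times\eta))^\klstar\o\hat\tau_{X,Y}=\eta^\klstar\o\hat\tau_{X,Y}=\hat\tau_{X,Y}$, again by \cref{prop:tau-prop}. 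One half of (ii) is also routine: $\tau_{\IA X,Y}\c\IA X\times\IA Y\to\IA(\IA X\times Y)$ is right iteration preserving in the $\IA Y$-argument --- this is exactly the second identity of \cref{prop:tau-prop} with $\IA X$ in place of $X$ --- while $\hat\tau^\klstar$, being a Kleisli lifting, is a morphism of uniform-iteration algebras (\cref{lem:b-mon}), hence iteration preserving; and postcomposing a right iteration preserving morphism with an iteration preserving one is again right iteration preserving. Thus $\hat\tau^\klstar\o\tau$ satisfies (ii).

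The crux --- and the expected main obstacle --- is to show that $\tau^\klstar\o\hat\tau_{X,\IA Y}$ is right iteration preserving in the $\IA Y$-argument; this cannot be read off from the decomposition, since $\hat\tau_{X,\IA Y}$ alone is \emph{not} iteration preserving in the $\IA Y$-argument (it hands an $\IA Y$-valued argument on without evaluating it, disregarding divergence), and it is the ensuing $\tau^\klstar$, forcing that argument to be run, that repairs this. Concretely, for $h\c Z\to\IA Y+Z$ one uses naturality of $\hat\tau$, $\tau^\klstar\o\IA g=(\tau\o g)^\klstar$, and \cref{prop:tau-prop} to rewrite the left-hand side $\tau^\klstar\o\hat\tau\o(\id\times h^\pistar)$ as $(\phi^\pistar)^\klstar\o\hat\tau_{X,Z}$ with $\phi=(\tau+\id)\o\dist\o(\id\times h)\c X\times Z\to\IA(X\times Y)+X\times Z$; invoking the uniqueness part of \cref{lem:stable} in the $\IA X$-slot, this equals the unique morphism that is iteration preserving in the $\IA X$-argument and extends $\phi^\pistar$ along $\eta_X\times\id_Z$. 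The right-hand side of the desired identity is the iteration $\psi^\pistar$ with $\psi=((\tau^\klstar\o\hat\tau)+\id)\o\dist\o(\id\times h)$, and it is shown to extend $\phi^\pistar$ along $\eta_X\times\id_Z$ by a \UNI argument (the state map $\eta_X\times\id_Z$ is a loop morphism from $\phi$ to $\psi$), and to be iteration preserving in the $\IA X$-argument by a direct \FIX/\UNI calculation whose only non-formal ingredient is the \emph{strictness} of strength, $\tau\o(\id\times\bot)=\bot$ --- itself immediate from \cref{prop:tau-prop} --- which is precisely what makes a non-terminating inner $Z$-loop collapse the whole computation even though $h^\pistar$ already records its outcome as $\bot$. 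By the uniqueness from \cref{lem:stable} the two morphisms agree, establishing (ii) for $\tau^\klstar\o\hat\tau$; with (i) this forces $\hat\tau^\klstar\o\tau=d=\tau^\klstar\o\hat\tau$. Structurally this mirrors the proof of \cref{lem:d-comm}, with the universal property of $\IA$ playing the role taken there by the finality of $D$.
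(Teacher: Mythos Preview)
Your overall strategy is sound and close in spirit to the paper's, but the decomposition differs: the paper works with the \emph{product} uniform-iteration algebra $\IA X\times\IA Y$ (\cref{lem:prod_exp}) and shows that both double strengths are iteration preserving out of it, invoking \cref{lem:key} as the key technical step; you instead appeal directly to stability (\cref{lem:stable}) in the $\IA Y$-slot and reduce to showing both maps are right iteration preserving in that argument. Your part~(i) and the easy half of~(ii) are fine.

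The gap is in the hard half of~(ii). After your reduction, what remains is to show that $\psi^\pistar$ (with $\psi=((\tau^\klstar\hat\tau)+\id)\o\dist\o(\id\times h)$) is right iteration preserving in the $\IA X$-argument. Unwinding this with the easy symmetric fact that $\tau^\klstar\hat\tau$ is right iteration preserving in the $\IA X$-argument, the claim becomes a genuine loop-interchange identity: the iterate of $(\xi^\pistar+\id)\o\dist\o(\id\times h)$ must equal the iterate of $(\psi^\pistar+\id)\o\ldist\o(g\times\id)$, i.e.\ ``run $h$ outer, $g$ inner'' coincides with ``run $g$ outer, $h$ inner''. This is \emph{not} a direct \FIX/\UNI calculation, and strictness of strength is not the operative ingredient---strictness only tells you what happens at $\bot$, not that two differently-nested loops agree. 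The tool that does this job is exactly \cref{lem:key} (the loop-splitting lemma, itself a non-obvious consequence of \FIX and \UNI), which is what the paper invokes at the corresponding point. Concretely, the paper's computation
\[
((\hat\tau^\klstar\tau+\id)\o f)^\pistar
=\hat\tau^\klstar\o((\tau+\id)\o\dist\o(\id\times(\snd+\id)\o f))^\pistar\o\brks{((\fst+\id)\o f)^\pistar,\id}
=\hat\tau^\klstar\o\tau\o\brks{((\fst+\id)\o f)^\pistar,((\snd+\id)\o f)^\pistar}
\]
uses \cref{lem:key} in the middle step, and the symmetric case is handled dually. So your sketch is correct up to the crux, but at the crux you have named the wrong ingredient: replace ``strictness of strength'' by an appeal to \cref{lem:key} (or reprove its content), and the argument goes through.
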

\begin{proof}
Let $\tau\c X\times \IA Y\to \IA(X\times Y)$ be strength of $\IE$ and let
$\hat\tau\c \IA X\times Y\to \IA(X\times Y)$ be its obvious transpose. We need to 
show that $\hat\tau^\klstar\o\tau = \tau^\klstar\o\hat\tau$. Note that 
$\IA X\times\IA Y$ is an uniform-iteration algebra by \cref{lem:prod_exp}~(1). It is easy to 
see that both diagrams 
\begin{equation*}
\begin{tikzcd}%[column sep=normal, row sep=normal]
\IA X\times \IA Y\rar["\tau^\klstar\o\hat\tau"] & \IA(X\times Y)\\
X\times Y\uar["\eta\times\eta"]\urar["\eta"'] &             
\end{tikzcd}\qquad
\begin{tikzcd}%[column sep=normal, row sep=normal]
\IA X\times \IA Y\rar["\hat\tau^\klstar\o\tau"] & \IA(X\times Y)\\
X\times Y\uar["\eta\times\eta"]\urar["\eta"'] &             
\end{tikzcd}
\end{equation*}
commute. It is therefore sufficient for obtaining the desired identity 
$\hat\tau^\klstar\o\tau = \tau^\klstar\o\hat\tau$ to show that both $\hat\tau^\klstar\o\tau$
and $\tau^\klstar\o\hat\tau$ are iteration preserving. We confine to the former case,
from which the second case is obtained by a symmetric argument.
Let $f\c Z\to \IA X\times\IA Y+Z$. We need to show that
\begin{align*}
\hat\tau^\klstar\o\tau\o\brks{((\fst+\id)\o f)^\pistar,
          ((\snd+\id)\o f)^\pistar} = ((\hat\tau^\klstar\o\tau + \id)\o f)^\pistar.
\end{align*}
This in fact essentially follows from \cref{lem:key}:
\begin{align*}
((\hat\tau^\klstar\o\tau + \id)\o f)^\pistar
=\;& \hat\tau^\klstar\o((\tau + \id)\o f)^\pistar\\*
=\;& \hat\tau^\klstar\o((\tau+\id)\o\dist\o(\id\times(\snd+\id)\o f))^\pistar\o\brks{((\fst+\id)\o f)^\pistar,\id}\\
=\;& \hat\tau^\klstar\o\tau\o(\id\times ((\snd+\id)\o f)^\pistar)\o\brks{((\fst+\id)\o f)^\pistar,\id}\\
=\;& \hat\tau^\klstar\o\tau\o\brks{((\fst+\id)\o f)^\pistar, ((\snd+\id)\o f)^\pistar}.
\end{align*}
This completes the proof.
\end{proof}
Let us continue the proof of the theorem. 

We have already shown that $\IE$ is strong and commutative. It remains to 
establish the law~\eqref{eq:L-eq}. 
Since $\IA\brks{\eta,\id}=(\eta\o\brks{\eta,\id})^\klstar$, using the definition 
of Kleisli star for $\IA $, it suffices to show that~$\tau\o\Delta$ is a unique 
iteration preserving morphism for which the diagram 
\begin{equation*}
\begin{tikzcd}[column sep=large, row sep=normal]
\IA X\rar["\tau\o\Delta"] & \IA(\IA X\times X)\\
X\uar["\eta"]\urar["\eta\o\brks{\eta,\,\id}"'] &             
\end{tikzcd}
\end{equation*}
commutes. Indeed, $\tau\o\Delta\o\eta = 
\tau\o(\id\times\eta)\o\brks{\eta,\id} = \eta\o\brks{\eta,\id}$, and
$\tau\o\Delta$ is iteration preserving by \cref{lem:tau-Delta}.
\qed

\subsection{Restriction Categories and Equational Lifting Monads}
Recall the axioms of restriction categories~\cite{CockettLack02} for further 
reference.
\begin{align}
f^\klstar\o(\dom f) &\;= f                                       \tag{\axname{RST{\scriptsize 1}}}\label{eq:rst1}\\*
(\dom f)^\klstar\o(\dom g) &\;= (\dom g)^\klstar\o(\dom f)    \tag{\axname{RST{\scriptsize 2}}}\label{eq:rst2}\\
\dom(g^\klstar\o(\dom f)) &\;= (\dom g)^\klstar\o(\dom f)     \tag{\axname{RST{\scriptsize 3}}}\label{eq:rst3}\\*
(\dom h)^\klstar\o f &\;= f^\klstar\o\dom(h^\klstar\o f)   \tag{\axname{RST{\scriptsize 4}}}\label{eq:rst4}
\end{align}
where $f\c X\to\IA Y$, $g\c X\to\IA Z$ and $h\c Y\to\IA Z$. 

For the rest of the section, let us fix an equational lifting monad $\BBT$, whose 
Kleisli category~$\BC_{\BBT}$ is thus a restriction category. We then collect miscellaneous 
facts about $\BBT$ for further reference.

\begin{lemma}\label{lem:dom-sum}
Given $f\c X\to TZ$, ${g\c Y\to TZ}$,
$\dom [f,g] = [(T\inl)\o\dom f,(T\inr)\o\dom g]$. 
\end{lemma}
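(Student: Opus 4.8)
The plan is to verify the identity componentwise on the coproduct $X+Y$ and reduce everything to naturality of the strength. By the universal property of $X+Y$, it suffices to prove $\dom[f,g]\o\inl = (T\inl)\o\dom f$ together with the symmetric equation $\dom[f,g]\o\inr = (T\inr)\o\dom g$; I would only spell out the first one, the second being obtained by replacing $\inl,\fst,X$ throughout with $\inr,\snd,Y$.

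For the first equation, unfold $\dom[f,g] = (T\fst)\o\tau\o\brks{\id_{X+Y},[f,g]}$ and precompose with $\inl$. I would first rewrite $\brks{\id_{X+Y},[f,g]}\o\inl = \brks{\inl, f} = (\inl\times\id_{TZ})\o\brks{\id_X,f}$, so that $\dom[f,g]\o\inl = (T\fst)\o\tau\o(\inl\times\id_{TZ})\o\brks{\id_X,f}$. Naturality of $\tau$ in its first argument turns $\tau\o(\inl\times\id_{TZ})$ into $T(\inl\times\id_Z)\o\tau$, and then functoriality of $T$ collapses $(T\fst)\o T(\inl\times\id_Z) = T(\fst\o(\inl\times\id_Z)) = T(\inl\o\fst) = (T\inl)\o(T\fst)$. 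Hence $\dom[f,g]\o\inl = (T\inl)\o(T\fst)\o\tau\o\brks{\id_X,f} = (T\inl)\o\dom f$, as required.

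I do not expect any genuine difficulty here: the argument is a direct diagram chase, and it in fact uses only that $T$ is a strong functor (through naturality of $\tau$), not the equational lifting property of $\BBT$. The only thing to watch is the bookkeeping of the strength indices --- $\tau$ is applied at $(X+Y,Z)$ on the left-hand side and transported to $(X,Z)$ (resp.\ $(Y,Z)$) on the right --- together with the trivial identity $\fst\o(\inl\times\id) = \inl\o\fst$ that lets the projection commute past the injection.
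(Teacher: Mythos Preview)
Your proof is correct and follows essentially the same route as the paper: unfold $\dom$, split $\brks{\id_{X+Y},[f,g]}$ into its components $\brks{\inl,f}$ and $\brks{\inr,g}$, then use naturality of $\tau$ in the first argument together with $\fst\o(\inl\times\id)=\inl\o\fst$. The paper writes this as a single chain on the copair rather than precomposing with $\inl$ and $\inr$ separately, but the content is identical.
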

\begin{proof}
By definition,
\begin{align*}
\dom [f,g] 
=&\; (T\fst)\o\tau\o\brks{\id,[f,g]} \\
=&\; (T\fst)\o\tau\o[\brks{\inl,f},\brks{\inr,g}]\\
=&\; [(T\fst)\o\tau\o\brks{\inl,f},(T\fst)\o\tau\o\brks{\inr,g}]\\
=&\; [(T\inl)\o(T\fst)\o\tau\o\brks{\id,f},(T\inr)\o(T\fst)\o\tau\o\brks{\id,g}] \\
=&\; [(T\inl)\o\dom f,(T\inr)\o\dom g]
\end{align*}
and we are done.
\end{proof}

\begin{lemma}\label{lem:dom-eta}
For any $f\c X\to TY$, $(T\eta)\o f = Tf\o(\dom f)$. 
\end{lemma}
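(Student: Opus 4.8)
The plan is to unfold the definition of $\dom f$ and reduce the whole statement to the equational lifting law $\tau\o\Delta = T\brks{\eta,\id}$ established in \cref{thm:K-lift}. Throughout, $\dom$ is taken w.r.t.\ the fixed monad $\BBT$, i.e.\ $\dom f = (T\fst)\o\tau\o\brks{\id,f}$.

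First I would substitute this into the right-hand side, so that $Tf\o\dom f = Tf\o(T\fst)\o\tau\o\brks{\id,f}$. The only manipulation needed here is to rewrite the prefix $Tf\o T\fst$: since $\fst\o(f\times\id_Y) = f\o\fst$ as maps $X\times Y\to TY$, functoriality gives $Tf\o T\fst = T\fst\o T(f\times\id_Y)$, turning the expression into $T\fst\o T(f\times\id_Y)\o\tau\o\brks{\id,f}$.

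Next I would invoke naturality of the strength $\tau$ in its non-monadic argument, applied to $f\c X\to TY$, which yields $T(f\times\id_Y)\o\tau_{X,Y} = \tau_{TY,Y}\o(f\times\id_{TY})$. Combining this with the identity $(f\times\id_{TY})\o\brks{\id_X,f} = \brks{f,f} = \Delta\o f$ (where $\Delta\c TY\to TY\times TY$ is the diagonal) collapses the expression to $T\fst\o\tau\o\Delta\o f$. At this point the equational lifting law $\tau\o\Delta = T\brks{\eta_Y,\id_Y}$ applies directly, giving $T\fst\o T\brks{\eta_Y,\id_Y}\o f = T(\fst\o\brks{\eta_Y,\id_Y})\o f = T\eta\o f$, which is exactly the left-hand side.

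I do not expect a genuine obstacle: the argument is a short diagram chase whose only substantive input is the equational lifting law, everything else being naturality of $\tau$ and the universal property of binary products. The one thing to watch is the bookkeeping of the strength indices and of which projections and diagonals live over which objects (e.g.\ $\fst\c X\times Y\to X$ versus $\fst\c TY\times Y\to TY$), since these shift as the naturality square is applied.
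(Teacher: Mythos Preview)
Your argument is correct. Every step checks out: the rewriting $Tf\o T\fst = T\fst\o T(f\times\id_Y)$, the naturality square for $\tau$, the collapse to $T\fst\o\tau\o\Delta\o f$, and the final application of the equational lifting law all go through exactly as you describe.

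Your route differs from the paper's. The paper instead recognises $Tf\o(\dom f)$ as the right-hand side of the restriction axiom~\ref{eq:rst4} (instantiated with $g=\eta\o f$ and $h=\id_{TY}$), which yields $Tf\o(\dom f) = (\dom\id)\o f$, and then computes $\dom\id = (T\fst)\o\tau\o\Delta = T\eta$ via the equational lifting law. So the paper factors through the restriction-category machinery, whereas you bypass it with a direct strength-naturality calculation. Your approach is more elementary and self-contained, since \ref{eq:rst4} is itself a derived consequence of the equational lifting structure; the paper's version is a bit more conceptual in that it isolates the identity $\dom\id = T\eta$ as the essential content. Both arguments ultimately rest on the same application of $\tau\o\Delta = T\brks{\eta,\id}$.
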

\begin{proof}
Note that $Tf\o(\dom f) = (\eta\o f)^\klstar\o (\id^\klstar\o\eta\o f)$, and hence,
by~\ref{eq:rst4}, $Tf\o(\dom f) = (\dom\id)^\klstar\o\eta\o f = (\dom\id)\o f$.
Since $\dom\id = (T\fst)\o\tau\o\brks{\id,\id} = (T\fst)\o T\brks{\eta,\id} = T\eta$,
and we are done.
\end{proof}
\begin{lemma}
$\hat\tau^\klstar\o\tau\o\brks{T\fst,T\snd} = \id$.
\end{lemma}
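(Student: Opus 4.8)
The plan is to collapse the composite $\hat\tau^\klstar\o\tau\o\brks{T\fst,T\snd}$ from the inside out, in two moves. The first move simplifies the inner part $\tau\o\brks{T\fst,T\snd}$ to the single morphism $T(\eta_X\times\id_Y)\c T(X\times Y)\to T(TX\times Y)$. The key observation is that $\brks{T\fst,T\snd}$ factors as $(T\fst\times\id_{TY})\o(\id\times T\snd)\o\Delta_{T(X\times Y)}$, so both $T\fst\times\id$ and $\id\times T\snd$ can be pushed through $\tau$ using naturality of $\tau$ in its first, respectively second, argument. This exposes $\tau_{T(X\times Y),X\times Y}\o\Delta_{T(X\times Y)}$ in the middle, which the equational lifting law $\tau\o\Delta = T\brks{\eta,\id}$ — invoked at the object $X\times Y$ — rewrites as $T\brks{\eta_{X\times Y},\id_{X\times Y}}$. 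Re-collecting the outer $T(-)$ layers and using naturality of $\eta$ (so $T\fst\o\eta_{X\times Y}=\eta_X\o\fst$), together with the identity $\brks{\eta_X\o\fst,\snd}=\eta_X\times\id_Y$, then leaves exactly $T(\eta_X\times\id_Y)$.

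The second move observes that $\hat\tau^\klstar\o T(\eta_X\times\id_Y)=\id$. Writing $T(\eta_X\times\id_Y)=(\eta\o(\eta_X\times\id_Y))^\klstar$ and using the Kleisli identity $f^\klstar\o g^\klstar=(f^\klstar\o g)^\klstar$ together with $\hat\tau^\klstar\o\eta=\hat\tau$, this reduces to $(\hat\tau_{X,Y}\o(\eta_X\times\id_Y))^\klstar$; and $\hat\tau_{X,Y}\o(\eta_X\times\id_Y)=\eta_{X\times Y}$ is the mirror image of the strength unit axiom $\tau\o(\id\times\eta)=\eta$. Hence the composite equals $\eta^\klstar=\id$, as claimed.

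I do not anticipate a real obstacle: the whole argument is a short diagram chase relying only on naturality of $\tau$ and of $\eta$, the two unit coherences of strength, one monad law, and the single defining equation of an equational lifting monad. The only thing that needs care is the bookkeeping — in particular recognising that the diagonal buried inside $\brks{T\fst,T\snd}$ lives on $T(X\times Y)$, so the equational lifting law must be applied at $X\times Y$ and not at $X$ — and keeping the two naturality squares of $\tau$ oriented correctly. It may be worth remarking that commutativity of $\BBT$ is not used anywhere in this proof.
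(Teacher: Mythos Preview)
Your proof is correct and follows essentially the same route as the paper: factor $\brks{T\fst,T\snd}$ through the diagonal, apply the equational lifting law $\tau\o\Delta = T\brks{\eta,\id}$ at $X\times Y$ to obtain $T\brks{\eta\o\fst,\snd}=T(\eta\times\id)$, and then use $\hat\tau\o(\eta\times\id)=\eta$ to collapse $\hat\tau^\klstar\o T(\eta\times\id)$ to $\eta^\klstar=\id$. Your presentation simply spells out a couple of intermediate steps (the two separate naturality pushes and the Kleisli law) that the paper compresses, and your observation that commutativity of $\BBT$ is not needed is correct.
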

\begin{proof}
Indeed, 
$\hat\tau^\klstar\o\tau\o\brks{T\fst,T\snd} = \hat\tau^\klstar\o T(T\fst\times\snd)\o\tau\o\Delta = 
\hat\tau^\klstar\o T(T\fst\times\snd)\o T\brks{\eta,\id}
=\hat\tau^\klstar\o  T\brks{\eta\o\fst,\snd}
=(\hat\tau\o (\eta\times\id))^\klstar = \eta^\klstar = \id. 
$
\end{proof}

\begin{lemma}\label{lem:dag-dom}
Suppose that $\BBT$ is equipped with 
an operator $(\argument)^\istar\c\BC(X,T(Y+X))\to\BC(X,TY)$ that satisfies 
\FIX, and \UNI. Then for any $f\c X\to T(Y+X)$, $f^\istar = (f^\klstar\o(\dom f^\istar))^\istar$.
\end{lemma}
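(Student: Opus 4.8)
The plan is to reduce the identity, with \FIX and the restriction axioms, to a single absorption lemma and then close the remaining gap order-theoretically. Write $d := \dom f^\istar\c X\to TX$ and $g := f^\klstar\o d$, so the goal becomes $g^\istar = f^\istar$. The absorption lemma is $g = (\dom[\eta,f^\istar])^\klstar\o f$: instantiating \eqref{eq:rst4} with $h := [\eta,f^\istar]$ and the given $f$ gives $f^\klstar\o\dom\bigl([\eta,f^\istar]^\klstar\o f\bigr) = (\dom[\eta,f^\istar])^\klstar\o f$, and by \FIX the left-hand side is $f^\klstar\o\dom f^\istar = g$. By \cref{lem:dom-sum} and $\dom\eta=\eta$ (a direct strength computation, as in the proof of \cref{lem:dom-eta}), $\dom[\eta,f^\istar] = [\eta\o\inl,\,(T\inr)\o d] =: e$ — a restriction idempotent on $Y+X$ acting as the identity on the $Y$-summand. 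Thus $g = e^\klstar\o f$; informally, $g$ is $f$ with its recursive calls guarded by the domain of $f$'s own solution.

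Next I would check that $g^\istar$ solves the same fixpoint equation as $f^\istar$, i.e.\ $g^\istar = [\eta,g^\istar]^\klstar\o f$. From \FIX for $g$, the monad laws and the shape of $e$: $g^\istar = [\eta,g^\istar]^\klstar\o e^\klstar\o f = \bigl([\eta,g^\istar]^\klstar\o e\bigr)^\klstar\o f = \bigl[\eta,\,(g^\istar)^\klstar\o d\bigr]^\klstar\o f$. Writing $g^\istar = n^\klstar\o d$ with $n := [\eta,g^\istar]^\klstar\o f$, \eqref{eq:rst3} gives $\dom g^\istar = (\dom n)^\klstar\o d \appr d$, so restricting $g^\istar$ along the larger idempotent $d$ is vacuous, $(g^\istar)^\klstar\o d = g^\istar$, and hence $g^\istar = [\eta,g^\istar]^\klstar\o f$.

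It remains to identify the two solutions, and I would do this by antisymmetry in the pointed order on $\BC(X,TY)$ induced by the restriction structure. On one side $g = f^\klstar\o d \appr f$ (since $d\appr\eta$ and Kleisli composition is monotone), so monotonicity of $(\argument)^\istar$ yields $g^\istar\appr f^\istar$; on the other side $f^\istar$ is the least fixpoint of $k\mapsto[\eta,k]^\klstar\o f$, of which $g^\istar$ is a fixpoint by the previous step, so $f^\istar\appr g^\istar$. The main obstacle is exactly what licenses this step: $f$ and $g$ genuinely differ on the states from which $f^\istar$ diverges (there $d$, hence $g$, is undefined while $f$ need not be), there is in general no pure morphism splitting $d$, and so no single application of \UNI connects $f^\istar$ with $g^\istar$ directly. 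One must therefore first establish — for an arbitrary equational lifting monad carrying a \FIX/\UNI iteration — that $(\argument)^\istar$ is monotone and computes least fixpoints: a Kleene-type argument through the natural number object and the bounded iterates, parallel to \cref{thm:klee} but redone in this generality rather than inherited from $\IE$.
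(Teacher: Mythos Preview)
Your absorption lemma $g = e^\klstar\o f$ with $e = [\eta\o\inl,(T\inr)\o d]$ is exactly the paper's auxiliary property, and your verification that $g^\istar$ satisfies the $f$-fixpoint equation is correct (the step $(g^\istar)^\klstar\o d = g^\istar$ follows more directly than you indicate: from $g^\istar = n^\klstar\o d$ and $d^\klstar\o d = d$). But the final antisymmetry step is a genuine gap. The lemma assumes only \FIX and \UNI on an equational lifting monad; neither monotonicity of $(\argument)^\istar$ nor a least-fixpoint characterisation is part of the hypotheses, and the Kleene-type argument you propose to supply them (parallel to \cref{lem:eql-elgot}) needs \NAT and \STR. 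Worse, \cref{lem:eql-elgot} \emph{invokes} \cref{lem:dag-dom} in its proof, so the route is circular as stated.

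The paper avoids order entirely by one further application of \UNI that you are overlooking. The trick is to lift $f$ to the Kleisli level: by \UNI with $h=\eta$, one has $f^\istar = (T(\id+\eta)\o f^\klstar)^\istar\o\eta$, where now $T(\id+\eta)\o f^\klstar\c TX\to T(Y+TX)$. The point of this manoeuvre is that $d = \dom f^\istar\c X\to TX$ can now serve as the mediating morphism in a second \UNI step: your identity $g = e^\klstar\o f$, together with \cref{lem:dom-eta}, gives precisely the premise
\[
T(\id+\eta)\o f^\klstar\o d \;=\; T(\id + d)\o\bigl(f^\klstar\o d\bigr),
\]
whence $(T(\id+\eta)\o f^\klstar)^\istar\o d = (f^\klstar\o d)^\istar = g^\istar$. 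Substituting back and using \eqref{eq:rst1}, \cref{lem:dom-eta} and one unfolding of \FIX for $g$ closes the argument. No order, no bounded iterates, and only the two hypotheses actually given.
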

\begin{proof}
Using~\UNI, $f^\istar = (T(\id + \eta)\o f)^\istar\o\eta$.
Next, using~\cref{lem:dom-eta},
\begin{align*}
f^\istar 
=&\; (f^\istar)^\klstar\o(\dom f^\istar)\\* 
=&\; ((T(\id + \eta)\o f^\klstar)^\istar\o\eta)^\klstar\o(\dom f^\istar)\\
=&\; ((T(\id + \eta)\o f^\klstar)^\istar)^\klstar\o(T\eta)\o(\dom f^\istar)\\ 
=&\; ((T(\id + \eta)\o f^\klstar)^\istar)^\klstar\o T(\dom f^\istar)\o\dom(\dom f^\istar)\\ 
=&\; ((T(\id + \eta)\o f^\klstar)^\istar\o (\dom f^\istar))^\klstar\o(\dom f^\istar).
\end{align*}
We then show that
\begin{align}\label{eq:dag-dom}
(T(\id + \eta)\o f^\klstar)^\istar\o (\dom f^\istar) = (f^\klstar\o(\dom f^\istar))^\istar.
\end{align}
This will entail the goal using \FIX as follows:
\begin{align*}
f^\istar 
=&\; ((T(\id + \eta)\o f^\klstar)^\istar\o (\dom f^\istar))^\klstar\o(\dom f^\istar)\\
=&\; ((f^\klstar\o(\dom f^\istar))^\istar)^\klstar\o(\dom f^\istar)\\
=&\; [\eta,(f^\klstar\o(\dom f^\istar))^\istar]^\klstar\o f^\klstar\o(\dom f^\istar)^\klstar\o(\dom f^\istar)\\
=&\; [\eta,(f^\klstar\o(\dom f^\istar))^\istar]^\klstar\o f^\klstar\o(\dom f^\istar)^\klstar\\
=&\; (f^\klstar\o(\dom f^\istar))^\istar.
\end{align*}
We show~\eqref{eq:dag-dom} by~\UNI after establishing another auxiliary property:
\begin{align*}
f^\klstar\o (\dom f^\istar) = [\eta\inl, (T\inr)\o (\dom f^\istar)]^\klstar\o f %^\klstar\o (\dom f^\istar).
\end{align*}
Indeed, by~\FIX and~\ref{eq:rst4},
\begin{flalign*}
&&f^\klstar\o (\dom f^\istar) 
=&\; f^\klstar\o \dom([\eta,f^\istar]^\klstar\o(\dom f))\\
&&=&\; (\dom[\eta,f^\istar])^\klstar\o f\\
&&=&\; (T\fst)\o (\tau\o\brks{\id,[\eta,f^\istar]})^\klstar\o f\\
&&=&\; (T\fst)\o [\tau\o\brks{\inl,\eta},\tau\o\brks{\inr,f^\istar}]^\klstar\o f\\
&&=&\; [(T\fst)\o\eta\o\brks{\inl,\id},(T\fst)\o\tau\o\brks{\inr,f^\istar}]^\klstar\o f\\
&&=&\; [\eta\o\inl,(T\inr)\o(\dom f^\istar)]^\klstar\o f.
\intertext{
Finally,
}
&&T(\id + \eta)\o f^\klstar&\o (\dom f^\istar) \\
&&=&\; T(\id + \eta)\o [\eta\inl, (T\inr)\o (\dom f^\istar)]^\klstar\o f\\
&&=&\; [\eta\inl, (T\inr)\o (T\eta)\o (\dom f^\istar)]^\klstar\o f\\
&&=&\; [\eta\inl, (T\inr)\o T(\dom f^\istar)\o (\dom f^\istar)]^\klstar\o f&&&\by{\cref{lem:dom-eta}}\\
&&=&\; T(\id + \dom f^\istar)\o f^\klstar\o (\dom f^\istar),
\end{flalign*}
which entails~\eqref{eq:dag-dom} by~\UNI.
\end{proof}

\subsection{Proof of \cref{prop:K-enriched}}
\begin{lemma}\label{lem:strict}
For any $f\c Z\to Y + Z$, 
\begin{align*}
\bigl(X\times Z\xto{(\eta\o\fst+\id)\o\dist\o(\id\times f)}\IA X + X\times Z\bigr)^\pistar~~\appr~~\bigl(X\times Z\xto{\eta\o\fst} \IA X\bigr).
\end{align*}
\end{lemma}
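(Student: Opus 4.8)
The plan is to read $g^\pistar$ through the Kleene fixpoint picture of \cref{cor:klee}. Abbreviate by $g\c X\times Z\to\IA X+X\times Z$ the morphism $(\eta\o\fst+\id)\o\dist\o(\id\times f)$ from the statement, and let $\Psi\c\BC(X\times Z,\IA X)\to\BC(X\times Z,\IA X)$ be the operator $\Psi(p)=[\id,p]\o g$. By \cref{cor:klee}, $g^\pistar$ is the \emph{least} pre-fixpoint of $\Psi$. Hence, to conclude $g^\pistar\appr\eta\o\fst$ it is enough to exhibit $\eta\o\fst$ as a pre-fixpoint of $\Psi$, and for that I will just prove the stronger fact that $\eta\o\fst$ is an actual fixpoint, $\Psi(\eta\o\fst)=\eta\o\fst$; leastness then finishes the argument.

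The only computation to carry out is therefore $[\id,\eta\o\fst]\o g=\eta\o\fst$, and I do not foresee a genuine obstacle here: the whole issue is just to keep the various projections straight. Unfolding $g$ rewrites the left-hand side as $[\id,\eta\o\fst]\o(\eta\o\fst+\id)\o\dist\o(\id\times f)=[\eta\o\fst,\eta\o\fst]\o\dist\o(\id\times f)$; pulling $\eta$ out of the copairing gives $\eta\o[\fst,\fst]\o\dist\o(\id\times f)$; and then the two elementary identities $[\fst,\fst]\o\dist=\fst$ (which holds since $\dist$ is the canonical distributivity isomorphism, e.g.\ by precomposing with $\id\times\inl$ and with $\id\times\inr$) and $\fst\o(\id\times f)=\fst$ collapse everything to $\eta\o\fst$. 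Conceptually this just records that $g$ never alters the first ($X$-)component of the state, so the value it would eventually return is always $\eta$ of the initial $X$; all that iterating $g$ can do is fail to terminate, which is exactly what $g^\pistar\appr\eta\o\fst$ asserts.

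If one would rather stay close to the restriction structure and avoid \cref{cor:klee}, an alternative route is the second clause of \cref{thm:klee}: it reduces the goal to $g^{\bistar}\appr\eta\o\fst\o\fst$ for the bounded iterates $(\argument)^{\bistar}$, which goes through by induction along $\nat$ from the defining recursion of $(\argument)^{\bistar}$ together with $\bot\appr\eta\o\fst$ in the base and inductive step. This is sound but more laborious, so the fixpoint argument above is the one I would actually write up.
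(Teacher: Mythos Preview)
Your argument is correct: the fixpoint computation $[\id,\eta\o\fst]\o g=\eta\o\fst$ is sound, and \cref{cor:klee} then delivers $g^\pistar\appr\eta\o\fst$.

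However, the route is genuinely different from the paper's. The paper proves \cref{lem:strict} \emph{directly} from first principles: it unfolds $g^\pistar\appr\eta\o\fst$ to the equation $g^\pistar=(\IA\fst)\o\dom(g^\pistar)$, expands the right-hand side via the defining property of $\tau$ (\cref{prop:tau-prop}), and reduces everything to a single application of \UNI. Only \FIX, \UNI, and the restriction structure coming from \cref{thm:K-lift} are used.

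Your approach instead relies on \cref{cor:klee}, which sits \emph{after} \cref{lem:strict} in the paper's dependency order: \cref{lem:strict} is the key lemma for left strictness of Kleisli composition in \cref{prop:K-enriched}, and \cref{prop:K-enriched} precedes \cref{thm:klee}/\cref{cor:klee}. A careful audit shows there is no actual circularity --- the proofs of \cref{thm:klee} and \cref{cor:klee} use monotonicity, right strictness, and strictness of strength, none of which depend on \cref{lem:strict} --- but you would have to argue this explicitly, and the paper would need reorganisation. The paper's proof is more elementary and self-contained; yours is shorter once the Kleene machinery is in place, but you must discharge the forward-reference obligation. The alternative route via \cref{thm:klee}~(ii) that you sketch carries exactly the same concern.
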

\begin{proof}
The goal is equivalent to 
\begin{align*}
((\eta\o\fst+\id)\o\dist\o(\id\times f))^\pistar
=(\IA\fst)\o\dom(((\eta\o\fst+\id)\o\dist\o(\id\times f))^\pistar).
\end{align*}
By transforming the right-hand expression as follows
\begin{align*}
(\IA\fst)&\o\dom\bigl(((\eta\o\fst+\id)\o\dist\o(\id\times f))^\pistar\o\bigr)\\*
&\;= (\IA\fst)\o (\IA\fst)\o\tau\o\brks{\id,((\eta\o\fst+\id)\o\dist\o(\id\times f))^\pistar}\\
&\;= \IA(\fst\fst)\o\bigl((\tau+\id)\o\dist\o(\id\times(\eta\o\fst+\id)\o\dist\o(\id\times f))\bigr)^\pistar\o\Delta\\
&\;= \bigl((\IA(\fst\fst)\o\tau\o(\id\times\eta\o\fst)+\id)\o\dist\o(\id\times\dist\o(\id\times f))\bigr)^\pistar\o\Delta\\
&\;= \bigl((\IA(\fst\fst)\o\eta(\id\times\fst)+\id)\o\dist\o(\id\times\dist\o(\id\times f))\bigr)^\pistar\o\Delta\\
&\;= \bigl((\eta\o\fst\o\fst+\id)\o\dist\o(\id\times\dist\o(\id\times f))\bigr)^\pistar\o\Delta
\intertext{Next,}
(\id+\fst&\times\snd)\o(\eta\o\fst\fst+\id)\o\dist\o(\id\times\dist\o(\id\times f))\\
&\;= (\eta\o\fst\o(\fst\times\snd)+\fst\times\snd)\o\dist\o(\id\times\dist\o(\id\times f))\\
&\;= (\eta\o\fst+\id)\o\dist\o(\fst\times(\snd+\snd)\o\dist\o(\id\times f))\\
&\;= (\eta\o\fst+\id)\o\dist\o(\fst\times f\snd)\\
&\;= (\eta\o\fst+\id)\o\dist\o(\id\times f)\o(\fst\times\snd),
\intertext{Hence, by uniformity,}
((\eta\o\fst+&\id)\o\dist\o(\id\times f))^\pistar\\
&\;=((\eta\o\fst+\id)\o\dist\o(\id\times f))^\pistar\o(\fst\times\snd)\o\Delta\\
&\;=((\eta\o\fst\fst+\id)\o\dist\o(\id\times\dist\o(\id\times f)))^\pistar\o\Delta\\
&\;=(\IA\fst)\o\dom ((\eta\o\fst+\id)\o\dist\o(\id\times f))^\pistar,
\end{align*}
and we are done.
\end{proof}

We are left to show that the order and the bottom elements are respected by the 
Kleisli composition.

\emph{Right monotonicity of composition:} Let $f\appr g$, i.e.\ $f = g\o(\dom f)$. Then 
$h\o g\o\dom(h\o f) = h\o g\o\dom(h\o g\o(\dom f))= 
h\o g\o\dom(h\o g)\o(\dom f) = h\o g\o(\dom f) = h\o f$, 
hence $h\o f\appr h\o g$. 

\emph{Left monotonicity of composition:} Analogously, if $f = g\o(\dom f)$ then
$g\o u\o\dom(f\o u) = g\o u\o\dom(g\o(\dom f)\o u) = g\o\dom(g\o(\dom f))\o u
=g\o(\dom g)\o(\dom f)\o u = g\o(\dom f)\o u = f\o u$, hence 
$f\o u\appr g\o u$.

\emph{Monotonicity of strength:} Suppose that $f\appr g$, i.e.\ $f=g^\klstar\o(\dom f)$. Then
\begin{align*}
\qquad(\tau\o\brks{\id, g}&)^\klstar\o\dom(\tau\o\brks{\id, f})\\*
=\;&(\tau\o\brks{\id, g})^\klstar\o(\IA\fst)\o\tau\o\brks{\id,\tau\o\brks{\id, f}}\\
=\;&(\tau\o\brks{\id, g})^\klstar\o(\IA\fst\fst)\o\tau\o\brks{\brks{\id,\id}, f}\\
=\;&(\tau\o\brks{\id, g})^\klstar\o(\IA\fst)\o\IA(\fst\times\id)\o\tau\o\brks{\brks{\id,\id}, f}\\
=\;&(\tau\o\brks{\id, g})^\klstar\o(\IA\fst)\o\tau\o\brks{\id, f}\\
=\;&(\tau\o(\id\times g)\fst)^\klstar\o\tau\o\brks{\brks{\id,\id}, f}\\
=\;&(\tau\o(\id\times g\o\fst))^\klstar\o\tau\o\brks{\id,\tau\o\brks{\id, f}}\\
=\;&(\tau\o(\id\times g))^\klstar\o\tau\o\brks{\id,(\IA\fst)\o\tau\o\brks{\id, f}}\\
=\;&(\tau\o(\id\times g))^\klstar\o\tau\o\brks{\id,\dom f}\\
=\;&\tau\o(\id\times g^\klstar)\o\brks{\id,\dom f}\\
=\;&\tau\o\brks{\id, g^\klstar\o(\dom f)}.
\intertext{\hspace{\parindent}\emph{Strictness of strength:}} 
\qquad\tau\o(\id\times\bot) 
=\;& \tau\o(\id\times\inr^\pistar)\\ 
=\;& ((\tau+\id)\o\dist\o(\id\times\inr))^\pistar\\ 
=\;& ((\tau+\id)\o\inr)^\pistar\\ 
=\;& \inr^\pistar\\
=\;& \bot.
\end{align*}

\emph{Right strictness of composition:} The equation $f^\klstar\o\bot=\bot$ follows from the fact that $f^\klstar$ preserves 
iteration. 

\emph{Left strictness of composition:} The equation $\bot^\klstar\o f = \bot$ is much more subtle. First, observe 
that $\bot\o f = \bot$, which easily follows by uniformity. We are left to show that 
${\bot^\klstar = \bot}$. Since $\bot\o\eta=\bot^\klstar\o\eta=\bot$, by definition of the 
lifting $\bot^\klstar$, it suffices to show that $\bot$ is iteration preserving,
i.e.\ for any $f\c X\to\IA Y + X$, $\bot\o f^\pistar = ((\bot + \id)\o f)^\pistar$,
equivalently, that ${((\bot + \id)\o f)^\pistar=\bot}$. Note that 
$((\bot + \id)\o f)^\pistar = ((\bot+\id)\o\dist\o(\id\times f))^\pistar\o\Delta$.
Indeed, 
\begin{align*}
&(\bot+\snd)\o\dist\o(\id\times f) = (\bot + \id)\o f\o\snd,
\intertext{hence, by uniformity,} 
&((\bot + \id)\o f)^\pistar=((\bot + \id)\o f)^\pistar\o\snd\Delta= 
((\bot+\id)\o\dist\o(\id\times f))^\pistar\o\Delta.
\end{align*}
Now, 
\begin{flalign*}
&&((\bot + \id)\o f)^\pistar
=\;& ((\bot+\id)\o\dist\o(\id\times f))^\pistar\o\Delta\\
&&=\;& ((\bot^\klstar\o\eta\o\fst+\id)\o\dist\o(\id\times f))^\pistar\o\Delta\\
&&=\;& \bot^\klstar\o((\eta\o\fst+\id)\o\dist\o(\id\times f))^\pistar\o\Delta\\
&&\appr\;&\bot^\klstar\o\eta\o\fst\o\Delta&\by{\cref{lem:strict}, monotonicity}\\
&&=\;& \bot.
\end{flalign*}
Therefore, indeed, $((\bot + \id)\o f)^\pistar = \bot$.
\qed

\subsection{Proof of \cref{prop:copy}}
For the identity, we have
\begin{align*}
\hat\tau^\klstar\o\tau\o\Delta  
=&\;\hat\tau^\klstar\o\IA\brks{\eta,\id}\\
=&\;(\hat\tau\o\brks{\eta,\id})^\klstar\\
=&\;(\eta\o\Delta)^\klstar\\
=&\;\IA\Delta.
\intertext{For the inequality, note that}
(\IA\fst)\o\hat\tau^\klstar&\o\tau\o\brks{f,g}\\* 
=&\; \fst^\klstar\o\tau\o\brks{f,g}\\ 
=&\; (\fst\o(f\times\id))^\klstar\o\tau\o\brks{\id,g}\\
=&\; f^\klstar\o(\IA\fst)\o\tau\o\brks{\id,g}\\
=&\;f^\klstar\o(\dom g).
\end{align*}
Hence, $(\IA\fst)\o\hat\tau^\klstar\o\tau\o\brks{f,g} = f^\klstar\o(\dom g) 
\appr f^\klstar\o\eta = f$.
\qed

\subsection{Proof of \cref{thm:klee}}
For the first clause we need to show that $f^{\bistar} = f^\pistar\o\fst\rest f^{\bistar}$. Since
\begin{align*}
(f^\pistar\o\fst\rest f^{\bistar})\o(x,\zero)
=&\; f^\pistar x\rest\bot\\
=&\; \bot,\\[1ex]
(f^\pistar\o\fst\rest f^{\bistar})\o(x,\suc n) 
=&\; \case{f(x)}{\inl r\mto r\rest f^{\bistar}(x,\zero) }{\inl y\mto f^\pistar\o(x) \rest f^{\bistar}(x,\suc n) }\\
=&\; \case{f(x)}{\inl r\mto r}{\inl y\mto f^\pistar\o(y) \rest f^{\bistar}(y, n)},
\end{align*}
$f^\pistar\o\fst\rest f^{\bistar}$ satisfies the definition for $f^{\bistar}$, and 
therefore we obtain the identity in question.

We proceed with the second clause. Suppose that $f^{\bistar}\appr g\o\fst$, 
%i.e.\ $f^\bistar\o\fst = g\rest f^\bistar\o\fst$ 
and show $f^\pistar\appr g$. 
%, i.e.\ 
%$f^\pistar = g\rest f^\pistar$. 
%
The idea is to introduce such $h\c X\times\nat\to\IA\nat + X\times\nat$ that 
$h^\pistar\o(x,\zero)$ runs like $f^\pistar\o(x)$, but instead of the final result of 
$f^\pistar\o(x)$ delivers the number of steps needed to reach the result. We then 
show that $f^\pistar = (f^\bistar)^\klstar\o\tau\o\brks{\id,h^\pistar\o\brks{\id,\zero\bang}}$,
which entails the desired property:
\begin{align*}
f^\pistar 
=&\; (f^\bistar)^\klstar\o\tau\o\brks{\id,h^\pistar\o\brks{\id,\zero\bang}}\\
\appr&\; (g\o\fst)^\klstar\o\tau\o\brks{\id,h^\pistar\o\brks{\id,\zero\bang}}\\
=&\; \fst^\klstar\o\tau\o\brks{g,h^\pistar\o\brks{\id,\zero\bang}}\\
=&\; g\rest h^\pistar\o\brks{\id,\zero\bang}\\
\appr&\; g.
\end{align*}
Thus we are left to produce the requisite $h$. Let 
$h = (\eta\o\snd + \id)\o\ldist\o(f\times\suc)$. We then have
\begin{flalign*}
&&(f^\bistar)^\klstar&\o\tau\o\brks{\id,h^\pistar\o\brks{\id,\zero\bang}}\\*
&&=&\;(f^\bistar)^\klstar\o\tau\o(\id\times h^\pistar)\o\brks{\id,\brks{\id,\zero\bang}}\\
&&=&\;(f^\bistar)^\klstar\o((\tau+\id)\o\dist\o(\id\times h))^\pistar\o\brks{\id,\brks{\id,\zero\bang}}&\by{\cref{prop:tau-prop}}\\
&&=&\;(((f^\bistar)^\klstar\o\tau+\id)\o\dist\o(\id\times h))^\pistar\o\brks{\id,\brks{\id,\zero\bang}}
\end{flalign*}
We are left to show that the latter is equal to $f^\pistar$. We strengthen the
goal slightly and show 
\begin{align}\label{eq:klee3}
(((f^\bistar)^\klstar\o\tau+\id)\o\dist\o(\id\times h))^\pistar\o\brks{\fst,\brks{w,\snd}} = f^\pistar\o w 
\end{align}
instead, where $w\c X\times\nat\to X$ is defined by primitive recursion as follows: 
\begin{align*}
w(x,\zero) = x,&& w(x,\suc n) = \case{f(x)}{\inl r\mto x}{\inr y\mto w(y,n)}.
\end{align*}
We will need the following facts:
\begin{align}
w(x,\suc n) =&\; \case{f(w(x,n))}{\inl r\mto w(x,n)}{\inr y\mto y},\label{eq:klee1}\\
f^\bistar(x,\suc n) =&\; \case{f(w(x,n))}{\inl r\mto r}{\inr y\mto y},\label{eq:klee2}
\end{align}
which both follow by induction. The former one follows from
\begin{align*}
\case{f(w(&x,\zero))}{\inl r\mto w(x,\zero)}{\inr y\mto y}
=w(x,\suc\zero),\\[1ex]
\case{f(w(&x,\suc n))}{\inl r\mto w(x,\suc n)}{\inr y\mto y}\\*
=&\;\case{f(x)}{\inl r\mto w(x,\suc n)}{\inr y\mto\\*&\qquad \case{f(w(y,n))}{\inl r\mto w(x,\suc n)}{\inr z\mto z}}\\
=&\;\case{f(x)}{\inl r\mto x}{\inr y\mto\\*&\qquad \case{f(w(y,n))}{\inl r\mto w(y, n)}{\inr z\mto z}}.
\end{align*}
To show~\eqref{eq:klee2}, observe that
\begin{align*}
f(w(x,\suc n)) =\;& \case{f(x)}{\inl r\mto f(x)}{\inr y\mto f(w(y,n))}\\*
 =\;& \case{f(x)}{\inl r\mto\inl r}{\inr y\mto f(w(y,n))}.
\end{align*}
Hence, 
\begin{align*}
\case{f(w(&x,\zero))}{\inl r\mto r}{\inr y\mto \bot}
=f(x,\suc\zero),\\[1ex]
\case{f(w(&x,\suc n))}{\inl r\mto r}{\inr y\mto y}\\
 =\;& \case{f(x)}{\inl r\mto r}{\inr y\mto\\&\qquad\case{f(w(y,n))}{\inl r\mto r}{\inr y\mto \bot}}.
\end{align*}
Now,
\begin{flalign*}
&&(((f^\bistar)^\klstar&\tau+\id)\o\dist\o(\id\times h))\o(x,(w(x,n),n))\\
&&=&\;\case{f(w(x,n))}{\inl r\mto\inl\o f^\bistar\o(x,\suc n)}{\inr y\mto\inr (x, (y, \suc n))}\\
&&=&\;\case{f(w(x,n))}{\inl r\mto\inl\o f^\bistar\o(x,\suc n)}{\inr y\mto\inr (x, (w(x,\suc n), \suc n))}&\by{\eqref{eq:klee1}}\\
&&=&\;\case{f(w(x,n))}{\inl r\mto\inl r}{\inr z\mto\inr (x, (w(x,\suc n), \suc n))}.&\by{\eqref{eq:klee2}}\\
&&=&\;((\snd+\fst)\o\dist\o\brks{\id\times\suc,f\o w})\o(x,n).
\end{flalign*}
Therefore, by~\UNI,
\begin{align}\label{eq:klee4}
(((f^\bistar)^\klstar\o\tau+\id)\o\dist\o(\id\times h))^\pistar\o\brks{\fst,\brks{w,\snd}} =
((\snd+\fst)\o\dist\o\brks{\id\times\suc,f\o w})^\pistar
\end{align}
Next,
\begin{flalign*}
\qquad((\id+w)\o((&\snd+\fst)\o\dist\o\brks{\id\times\suc,f\o w}))\o(x,n)\\
=&\;\case{f(w(x,n))}{\inl r\mto\inl r}{\inr y\mto\inr w(x,\suc n)}\\
=&\;\case{f(w(x,n))}{\inl r\mto\inl r}{\inr y\mto\inr y}&\by{\eqref{eq:klee1}}\\
=&\;f(w(x,n)),
\end{flalign*}
from which, again by~\UNI, we obtain
\begin{align}\label{eq:klee5}
((\snd+\fst)\o\dist\o\brks{\id\times\suc,f\o w})^\pistar = f^\pistar\o w.
\end{align}
By combining~\eqref{eq:klee4} with~\eqref{eq:klee5}, we obtain~\eqref{eq:klee3},
which completes the proof.
\qed
\begin{corollary}\label{cor:K-pre-Elgot}
Each $\IA X$ is a free Elgot algebra on $X$.
\end{corollary}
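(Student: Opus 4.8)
The plan is to prove the statement in two steps: first upgrade $\IA X$, equipped with its ambient iteration operator, from a uniform-iteration algebra to a genuine Elgot algebra; and then observe that its universal property as a \emph{free} uniform-iteration algebra transfers verbatim to freeness among Elgot algebras. The first step is the substance, the second is bookkeeping.

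For the first step, since by \cref{def:ba} the operator $(\argument)^\pistar$ on $\IA X$ already satisfies \FIX and \UNI, by \cref{prop:fold} it suffices to verify \FOL, i.e.\ that
\[
(f^\pistar + h)^\pistar \;=\; k^\pistar, \qquad k := [(\id+\inl)\o f,\ \inr\o h],
\]
for all $f\c S\to\IA X+S$ and $h\c W\to S+W$ (so $k\c S+W\to\IA X+(S+W)$). The idea is to play the order-theoretic content of the Kleene fixpoint theorem off against \FIX and \UNI. By \cref{cor:klee} (with value object $\IA X$ and state space $S+W$), both $k^\pistar$ and $(f^\pistar+h)^\pistar$ are least pre-fixpoints of, respectively, $[\id,\argument]\o k$ and $[\id,\argument]\o(f^\pistar+h)$ on the partially ordered hom-set $\bigl(\BC(S+W,\IA X),\appr\bigr)$ (\cref{prop:K-enriched}), so by antisymmetry of $\appr$ it is enough to show each of the two morphisms is a fixpoint of the other's operator. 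For one direction I would use \UNI, with the coprojection $\inl\c S\to S+W$ as the reindexing map, to obtain $k^\pistar\o\inl = f^\pistar$, and \FIX to obtain $k^\pistar\o\inr = k^\pistar\o h$; together these give $k^\pistar = [f^\pistar,\ k^\pistar\o h] = [\id,k^\pistar]\o(f^\pistar+h)$, so $(f^\pistar+h)^\pistar\appr k^\pistar$. For the converse, writing $q := (f^\pistar+h)^\pistar$, \FIX applied to $q$ yields $q = [f^\pistar,\ q\o h]$, hence $q\o\inl = f^\pistar$ and $q\o\inr = q\o h$; unfolding $k$ on the two coprojections and using \FIX once more on the left component then gives $[\id,q]\o k = [f^\pistar,\ q\o h] = q$, so $k^\pistar\appr q$. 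Antisymmetry closes the gap, which establishes \FOL and hence that $\IA X$ is an Elgot algebra.

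For the second step, recall that Elgot algebras form a full subcategory of uniform-iteration algebras, with exactly the same notion of morphism, namely iteration-preserving maps (\cref{def:uea}, \cref{def:ba}). Thus, given any Elgot algebra $A$ and a map $g\c X\to A$, treating $A$ as a uniform-iteration algebra the universal property of the free uniform-iteration algebra $\IA X$ produces a unique iteration-preserving $\bar g\c\IA X\to A$ with $\bar g\o\eta = g$; this $\bar g$ is precisely an Elgot-algebra morphism, and its uniqueness among Elgot-algebra morphisms is inherited from its uniqueness among iteration-preserving maps. Hence $(\IA X,\eta)$ is a free Elgot algebra on $X$.

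The only genuinely delicate point is the verification of \FOL for $\IA X$. As \cref{prop:fold} makes clear, \FOL cannot be derived from \FIX and \UNI in isolation, so the argument must essentially use the least-fixpoint characterization supplied by \cref{thm:klee} and \cref{cor:klee}; it is that order-theoretic input, rather than any further equational juggling, that bridges the gap between uniform-iteration algebras and Elgot algebras for the free ones.
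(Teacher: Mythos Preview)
Your proposal is correct and follows essentially the same approach as the paper: both reduce to verifying \FOL via \cref{prop:fold}, both show that $k^\pistar$ and $(f^\pistar+h)^\pistar$ are each fixpoints of the other's defining operator (using \UNI for $k^\pistar\o\inl=f^\pistar$ and \FIX for the remaining components), and both conclude by invoking the least-(pre-)fixpoint characterization from the Kleene theorem to get mutual $\appr$ and hence equality. Your additional paragraph making the freeness transfer explicit is a welcome clarification that the paper leaves implicit.
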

\begin{proof}
By \cref{prop:fold}, it remains to show~\FOL. Let 
$h\c Y \to X+Y$ and $f\c X\to \IA Z+X$. It easily follows by uniformity
that $[(\id + \inl)\o f\comma\inr\o h]^\pistar\o\inl = f^\pistar$. Then
\begin{align*}
[(\id + \inl)\o f\comma\inr\o h]^\pistar
=\;& [\id,[(\id + \inl)\o f\comma\inr\o h]^\pistar]\o[(\id + \inl)\o f\comma\inr\o h]\\
=\;& [f^\pistar,[(\id + \inl)\o f\comma\inr\o h]^\pistar\o h]\\
=\;& [\id,[(\id + \inl)\o f\comma\inr\o h]^\pistar]\o(f^\pistar+h),\\[1ex]
(f^\pistar+ h)^\pistar
=\;& [\id,(f^\pistar+ h)^\pistar]\o(f^\pistar+ h)\\
=\;& [[\id, f^\pistar]\o f\comma (f^\pistar+ h)^\pistar\o h]\\
=\;& [[\id,(f^\pistar+ h)^\pistar\o\inl]\o f\comma (f^\pistar+ h)^\pistar\o h]\\
=\;& [\id,(f^\pistar+ h)^\pistar]\o[(\id + \inl)\o f\comma\inr\o h].
\end{align*}
That is, both $[(\id + \inl)\o f\comma\inr\o h]^\pistar$ and 
$(f^\pistar+ h)^\pistar$ mutually satisfy the fixpoint identities of each other.
Hence, by \cref{thm:klee} they are mutually smaller under $\appr$, and hence
equal.
\end{proof}

\subsection{Proof of \cref{cor:klee}}
Suppose that $[\id, g]\o f\appr g$ for some $g\c X\to\IA Y$, i.e.\ $[\id, g]\o f = g^\klstar\o\dom ([\id, g]\o f)$.
This yields
\begin{align*}
\case{f(x)}{&\inl r\mto r}{\inr y\mto g(y)}=
\case{f(x)}{\inl r\mto g(x)\rest r}{\inr y\mto g(x)\rest g(y)} 
\end{align*}
This entails
\begin{align*}
g(x)\rest f^{\bistar}(x,\zero)  =&\; \bot, \\ 
g(x)\rest f^{\bistar}(x,\suc n) =&\; \case{f(x)}{\inl r\mto  r}{\inr y\mto g(x)\rest f^{\bistar}(y,n)}.
\end{align*}
By induction, $f^{\bistar} = g\o\fst\rest f^{\bistar}$, i.e.\ $f^{\bistar}\appr g\o\fst$, 
hence, by \cref{thm:klee}, $f^{\pistar}\appr g$.
\qed

\subsection{Proof of \cref{thm:K-pre-Elgot}}
We build on \cref{cor:K-pre-Elgot}. To show that $\IE$ is pre-Elgot, 
we are left to check the remaining property: $h^\klstar\o f^\pistar = 
((h^\klstar+\id)\o f)^\pistar$ where $f\c Z\to\IA X+Z$ and $h\c X\to\IA Y$.
This is, in fact a consequence of \COM:
\begin{align*}
((h^\klstar+\id)\o f)^\pistar
&\,=(((\inl\o h^\klstar)^\pistar+\id)\o f)^\pistar\\* 
&\,= ([(\id+\inl)\o\inl\o h^\klstar\comma\inr\o\inr]\o{[\inl\comma f]})^\pistar\inr\\
&\,= ((h^\klstar+\inr)\o{[\inl\comma f]})^\pistar\inr\\
&\,= ((h^\klstar+\id)\o f)^\pistar.
\end{align*}
Now, given any pre-Elgot monad $\BBT$, for every $X$ we define $\alpha_X\c\IA X\to TX$
as the unique Elgot algebra morphism such that $\alpha_X\o\eta_X=\eta_X$ by
\cref{cor:K-pre-Elgot}. Naturality of $\alpha_X$ in $X$ follows from 
the diagram:
\begin{equation*}
\begin{tikzcd}[column sep = 12ex,row sep = 1ex]
\IA X
  \rar["\IA f"]
  \ar[rd,"\alpha"'] & \IA Y\rar["\alpha"] & TY\\[1ex]
 & 
TX
  \ar[ur, "Tf"']\\
X
  \ar[uu,"\eta"] 
  \ar[ur,"\eta"']            
\end{tikzcd}
\end{equation*}
where we make use of the fact that both $\IA  f = (\eta\o f)^\klstar$ and 
$T f = (\eta\o f)^\klstar$ are iteration preserving since both $\IA$ and $T$
are pre-Elgot, hence both $\alpha\o (\IA f)$ and $(Tf)\o\alpha$ are 
iteration preserving. Analogously, since for every $f\c X\to\IA Y$, $f^\klstar$
is iteration preserving and $\alpha\o f^\klstar\o\eta=\alpha\o f = (\alpha\o f)^\klstar\o\alpha\o\eta$,
$\alpha\o f^\klstar = (\alpha\o f)^\klstar\o\alpha$, $\alpha$ respects 
Kleisli lifting.

Finally, to show that $\IE$ is initial strong pre-Elgot, we are left to show that 
for strong pre-Elgot $\BBT$, the induced $\alpha\c\IE\to\BBT$ respects strength,
i.e.\ $\alpha\o\tau = \tau\o(\id\times\alpha)$. Since $\alpha\o\tau\o(\id\times\eta) = 
\alpha\o\eta=\eta=\tau\o(\id\times\eta)=\tau\o(\id\times\alpha)\o(\id\times\eta)$,
we are done by stability of the $\IA X$.
\qed

\subsection{Proof of \cref{prop:uniform-iteration-eq}}
Let us show the first clause. First, let us check that given $a\c DA\to A$, the induced operator $(\argument)^\pistar$ indeed satisfies the axioms of uniform-iteration algebras.
\begin{itemize}
  \item \FIX Let every $f\c X\to A+X$. Then
  \begin{align*}
   f^\pistar 
   =&\; a\o(\coit f)\\
   =&\; a\o[\now,\lat]\o\out\o(\coit f)\\
   =&\; [a\o\now,a\o\lat]\o\out\o(\coit f)\\
   =&\; [\id,a]\o(\id+\coit f)\o f\\
   =&\; [\id, a\o(\coit f)]\o f\\
   =&\; [\id, f^\pistar]\o f,
  \end{align*}
  \item \UNI Let $f\c X\to A+X$, $g\c Y \to A+Y$ and $h\c X\to Y$, and 
  assume that $(\id+h)\o f = g\o h$. We have to show that $f^\pistar = g^\pistar\o h$.
  Indeed,
  \begin{align*}
  f^\pistar
  =&\; a\o(\coit f)\\ 
  =&\; a\o(\coit g)\o h\\
  =&\; g^\pistar\o h.
  \end{align*}
\end{itemize}  
Conversely, given a uniform-iteration algebra structure $(A,(\argument)^\pistar)$,
observe that $\out^\pistar\o\now = [\id,\out^\pistar]\o\out\o\now=
[\id,\out^\pistar]\o\inl=\id$ and $\out^\pistar\o\lat = 
[\id,\out^\pistar]\o\out\o\lat = [\id,\out^\pistar]\o\inr = \out^\pistar$.

Let us show that the given passages are mutually inverse. On the one hand,
${a\o(\coit\out) = a}$, and on the other hand, $\out^\pistar\o(\coit f)= f^\pistar$
by uniformity, since $\out\o(\coit f) = (\id+\coit f)\o f$. 

Finally, given two uniform-iteration algebras $(A,(\argument)^\pistar)$ and 
$(B,(\argument)^{\pistar'})$ and a morphism $h\c A\to B$, show that iteration 
preservation by $h$ is equivalent to being a $D$-algebra morphism. If $h$ is 
iteration preserving then $\out\o (Dh) = (h+Dh)\o\out$, which by uniformity 
entails $\out^\pistar\o (Dh) = ((h+\id)\o\out)^\pistar$. Therefore,
\begin{align*}
h\o\out^\pistar = ((h+\id)\o\out)^{\pistar'} = \out^\pistar\o (Dh), 
\end{align*}
i.e.\ $h$ is a $D$-coalgebra morphism. Conversely, if $h$ is a $D$-algebra morphism,
then for any $f\c X\to A+X$,
\begin{align*}
h\o f^\pistar 
=&\; h\o\out^\pistar\o(\coit f)\\*
=&\; ((h+\id)\o\out)^{\pistar'}\o(\coit f)\\
=&\; ((h+\id)\o f)^{\pistar'},
\end{align*}
where the last step is by uniformity, for 
$(h+\id)\o\out\o(\coit f) = (h+\id)\o (\id+\coit f)\o f = (\id+\coit f)\o (h+\id)\o f$.

Let us proceed with the second clause. %
In the guarded case, the requisite equivalence between Elgot algebras and 
$\BBD$-algebras is shown previously~\cite[Theorem 5.7]{GoncharovMiliusEtAl16}. We are left 
to check that the ``unguardedness condition'' $a = [\id,a]\o\out$ of 
search-algebras $(A, a\c DA\to A)$ corresponds to the requirement that $\hat a=\id$ on 
the respective $\Id$-guarded Elgot algebras $(A,\hat a\c A\to A,(\argument)^\pistar)$.
The involved connection between $a$ and $\hat a$ is precisely: $a=\out^\pistar$.
Now, if $\hat a = \id$ then $\out^\pistar=[\id,\out^\pistar]\o\out$ and if 
$\out^\pistar = [\id,\out^\pistar]\o\out$ then $\hat a = 
[\id,[\hat a,\hat a\o\hat a\o\out^\pistar]\o\out]\o\out\o\lat\o\now =
[\id,\hat a\o\out^\pistar]\o\out\o\lat\o\now=\out^\pistar\o\lat\o\now=
[\id,[\id,\out^\pistar]\o\out]\o\out\o\lat\o\now=\id$.
%
%Recall that a pair $(A,a\c DA\to A)$ is a $\BBD$-algebra iff it satisfies 
%$a\o\now = \id$ and $a^\klstar = a\o (Da)$. 
%%
%We are left to show that for 
%$a=\out^\pistar$ these conditions are equivalent to \FOL. The first condition is 
%automatic: $\out^\pistar\o\now = [\id,\out^\pistar]\o\out\o\now = [\id,\out^\pistar]\o\inl = \id$, 
%and we thus drop it from now on.
%%
%
%$\Rightarrow$: Suppose that $\out^\pistar\o\mu = \out^\pistar\o(D\out^\pistar)$.
%Let $h\c Y \to X+Y$, $f\c X\to A+X$, and show that $[(\id + \inl)\o f\comma\inr\o h]^\pistar=(f^\pistar+ h)^\pistar$.
%%
%\begin{align*}
%[(\id + \inl)\o f\comma\inr\o h]^\pistar
%=&\,([\id+\inl,\inr]\o (f+h))^\pistar\\
%=&\,\out^\pistar\o\mu\o\coit (\coit f +h)\\
%=&\,\out^\pistar\o(D\out^\pistar)\o\coit (\coit f +h)\\
%=&\,\out^\pistar\o\coit (\out^\pistar\o(\coit f) +h)\\
%=&\,(f^\pistar+ h)^\pistar
%\end{align*}
%
%
%$\Leftarrow$: Conversely:
%\begin{flalign*}
%\qquad\out^\pistar\o(D\out^\pistar)
%=&\,((\out^\pistar+\id)\o\out)^\pistar&\by{uniformity}\\
%=&\,(\out^\pistar+\out)^\pistar\o\out&\by{uniformity}\\
%=&\,[(\id + \inl)\o\out\comma\inr\o\out]^\pistar\o\out&\by{assumption}\\
%=&\,((\id+\out)\o[(\id + \tuo\o\inl)\o\out\comma\inr])^\pistar\o\out\\
%=&\, ([(\id+\now)\o\out,\inr]\o\out)^\pistar\\
%=&\,\out^\pistar\o\mu&\by{uniformity}
%\end{flalign*}
%%
%
\qed

\subsection{\texorpdfstring{$\protect\wave D$ is a strong functor}{Quotient of $D$ is a strong functor}} %{}
$\wave D$ is a strong functor and $\rho$ is a strong natural transformation.
\begin{proof}
Recall that $D$ is strong with a strength $\tau\c X\times DY\to D(X\times Y)$.
By the axioms of strength, $\tau\o(\id\times\iota^\klstar) = 
(\tau\o(\id\times\iota))^\klstar\o\tau$, and it is easy to obtain by coinduction 
that $\tau\o (\id\times\iota) = \iota\o\assoc^\mone$. In summary, 
$\tau\o(\id\times\iota^\klstar) = (\iota\o\assoc^\mone)^\klstar\o\tau = 
\iota^\klstar\o (D\assoc^\mone)\o\tau$.
Using the fact that~\eqref{eq:D-quote} is preserved by products, we introduce strength for~$\wave D$
as the universal map in
\begin{equation*}
\begin{tikzcd}
X\times D(Y\times\nat) 
  \ar[r,shift right=.35ex,"\id\times D\fst"']
  \ar[r,shift left=.75ex,"\id\times\iota^\klstar"]
  \dar["(D\assoc^\mone)\o\tau_{X,Y\times\nat}"']
&[3em]
X\times DY
  \rar["\id\times\rho_Y"] 
  \dar["\tau_{X,Y}"']
&[1em]
X\times \wave DY
  \dar[dotted]
\\[1ex]
D((X\times Y)\times\nat) 
  \ar[r,shift right=.35ex,"D\fst"']
  \ar[r,shift left=.75ex,"\iota^\klstar"]
&
D(X\times Y)\rar["\rho_{X\times Y}"] & \wave D(X\times Y)
\end{tikzcd}
\end{equation*}
The axioms of strength then follow automatically, as well as the fact that 
$\rho$ is a strong natural transformation.
\end{proof}

\begin{lemma}\label{lem:D-proj}
The morphism $\brks{D\fst,D\snd}\c D(X\times Y)\to DX\times DY$ is a section.
\end{lemma}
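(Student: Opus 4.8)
The plan is to produce an explicit retraction $r\c DX\times DY\to D(X\times Y)$ with $r\o\brks{D\fst,D\snd}=\id$. It is tempting to try $\hat\tau^\klstar\o\tau$, but this does not work: that morphism exhausts all the delays of its second argument and only afterwards all the delays of its first argument, so on a pair it effectively \emph{adds} the two delay-counts, whereas the image of $\brks{D\fst,D\snd}$ consists exactly of the pairs whose two components carry the \emph{same} number of delays. The correct retraction is instead the ``synchronised product'', which advances both components in lockstep and emits the pair of values only once both components have terminated.

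Concretely, I would equip $DX\times DY$ with the coalgebra structure $\gamma\c DX\times DY\to (X\times Y)+DX\times DY$ which on $(a,b)$ returns $\inl(x,y)$ when $\out\,a=\inl x$ and $\out\,b=\inl y$, returns $\inr(a',b')$ when $\out\,a=\inr a'$ and $\out\,b=\inr b'$, and in the two mixed cases returns $\inr$ of the pair obtained by stepping the component that is still running and leaving the terminated component unchanged; point-free, $\gamma=\bigl[\inl,\,\inr\o[\now\times\id,[\id\times\now,\id]]\bigr]\o c\o(\out\times\out)$, with $c$ the evident distributivity isomorphism $(X+DX)\times(Y+DY)\iso X\times Y+(X\times DY+DX\times Y+DX\times DY)$ built from $\ldist$ and $\dist$. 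Then I set $r=\coit\gamma\c DX\times DY\to D(X\times Y)$, so that $\out\o r=(\id+r)\o\gamma$ by the defining property of $\coit$.

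To finish, I would prove $r\o\brks{D\fst,D\snd}=\id_{D(X\times Y)}$ by finality: since $\coit\out=\id$, any endo-coalgebra-morphism of $(D(X\times Y),\out)$ equals $\id$, so it suffices to verify $\out\o r\o\brks{D\fst,D\snd}=(\id+r\o\brks{D\fst,D\snd})\o\out$. Rewriting the left-hand side by $\out\o r=(\id+r)\o\gamma$ and by the naturality equations $\out\o D\fst=(\fst+D\fst)\o\out$ and $\out\o D\snd=(\snd+D\snd)\o\out$, it reduces to a two-case inspection: for $e$ with $\out\,e=\inl(x,y)$ one has $D\fst\,e=\now x$ and $D\snd\,e=\now y$, so $\gamma\o\brks{D\fst,D\snd}(e)=\inl(x,y)$; for $e$ with $\out\,e=\inr d$ one gets $\gamma\o\brks{D\fst,D\snd}(e)=\inr\brks{D\fst,D\snd}(d)$ — this last equality is exactly where the ``both-delay'' branch of $\gamma$, which hands back the two tails unmodified, is used — and in both cases this matches $(\id+r\o\brks{D\fst,D\snd})\o\out$ evaluated at $e$. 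I expect the only genuine obstacle to be the conceptual step of the first paragraph, namely recognising that $\hat\tau^\klstar\o\tau$ over-counts delays and must be replaced by the lockstep coalgebra $\gamma$; once $\gamma$ is in place the rest is the routine finality computation.
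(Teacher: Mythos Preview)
Your proof is correct and in fact more elementary than the paper's. The paper does \emph{not} abandon $\hat\tau^\klstar\o\tau$: it defines the retraction as the composite $w^\pistar\o\hat\tau^\klstar\o\tau$, where $w=[\now\inl,\lat\now\out]\o\out\c D(X\times Y)\to D\bigl(X\times Y + D(X\times Y)\bigr)$ and $w^\pistar$ is its (guarded) Elgot iterate in~$\BBD$. The paper first shows that $u=\hat\tau^\klstar\o\tau\o\brks{D\fst,D\snd}$ satisfies $\out\o u=(\id+\lat u)\o\out$ --- your intuition that $\hat\tau^\klstar\o\tau$ doubles the delays is exactly this equation --- and then uses uniformity of the iterator to compute $w^\pistar\o u=\id$. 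So the paper's retraction first over-counts and then compresses via an iteration, relying on the full commutativity calculation from the proof of Proposition~\ref{lem:d-comm} and on the guarded fixpoint machinery for~$\BBD$.

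Your lockstep coalgebra $\gamma$ sidesteps all of that: you use only $\coit$ and finality, and the verification is a two-line case split because, as you observe, the mixed branches of $\gamma$ are never exercised on the image of $\brks{D\fst,D\snd}$. The trade-off is that the paper's retraction is expressed in terms of the already-available monad structure (strength and Kleisli lifting) plus an iteration, whereas yours introduces a bespoke coalgebra; but for the purpose of this lemma your route is shorter and uses strictly less machinery.
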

\begin{proof}
We define the requisite retraction $DX\times DY\to D(X\times Y)$ as the composition 
$w^\pistar\o\hat\tau^\klstar\o\tau\c DX\times DY\to D(X\times Y)$ where
\begin{align*}
w= \bigl(D(X\times Y) \xto{[\now\inl,\,\lat\now\out]\o\out} D(X\times Y + D(X\times Y)) \bigr).
\end{align*}
It is easy to check that $\out w$ factors through $(\inl+\id)$, hence the application
of the iteration operator is legit. We next show that $u=\hat\tau^\klstar\o\tau\o\brks{D\fst,D\snd}$
satisfies the following equation
\begin{displaymath}
  \out\o u = (\id+\lat u)\o\out. 
\end{displaymath}
From~\eqref{lem:d-comm}, recall that 
\begin{displaymath}
    \out\o \hat\tau^\klstar\o\tau = [\id+\hat\tau, \inr[\tau, \lat \hat\tau^\klstar\o\tau] ]\o (\ldist+\ldist)\o\dist\o(\out\times\out).
\end{displaymath}
Also, using the fact that $\dist$ and $\ldist$ are isomorphisms, it is easy to
show that $(\ldist+\ldist)\o\dist\o\Delta = \inl\Delta + \inr\o\Delta$. Therefore
\begin{flalign*}
&& \out u
&  \;= [\id+\hat\tau, \inr[\tau, \lat \hat\tau^\klstar\o\tau]]\\
&&    &\qquad\quad (\ldist+\ldist)\o\dist\o\brks{\out (D\fst), \out (D\snd)} &\\
&&&\;= [\id+\hat\tau, \inr[\tau, \lat \hat\tau^\klstar\o\tau]]\\
&&    &\qquad\quad (\ldist+\ldist)\o\dist\o\brks{\fst+D\fst, \snd +D\snd}\out \\
&&&\;= [\id+\hat\tau, \inr[\tau, \lat \hat\tau^\klstar\o\tau]]\\
&&    &\qquad\quad \bigl((\fst\times\snd+D\fst\times\snd) + (\fst\times D\snd+D\fst\times D\snd)\bigr)\\
&&    &\qquad\quad (\ldist+\ldist)\o\dist\o \Delta\out \\
&&&\;= [\id+\hat\tau, \inr[\tau, \lat \hat\tau^\klstar\o\tau]]\\
&&    &\qquad\quad \bigl((\fst\times\snd+D\fst\times\snd) + (\fst\times D\snd+D\fst\times D\snd)\bigr)\\
&&    &\qquad\quad (\inl+\inr)\out \\
&&&\;= [\id+\hat\tau, \inr[\tau, \lat \hat\tau^\klstar\o\tau]]\\
&&    &\qquad\quad \bigl(\inl\Delta\o (\fst\times\snd) + \inr\o\Delta\o(D\fst\times D\snd)\bigr)\out \\
&&&\;= [\inl, \inr\o\lat \hat\tau^\klstar\o\tau\o\brks{D\fst, D\snd}]\o\out \\
&&&\;= (\id+ \lat u)\out.
\intertext{
Now
}
&& w u
&  \;= [\now\inl,\,\lat\now\out]\o\out u  &\\
&&&\;= [\now\inl,\,\lat\now\out] (\id+\lat u)\o\out  \\
&&&\;= [\now\inl,\,\lat\now\out\o\lat u] \out  \\
&&&\;= [\now\inl,\,\lat\now\inr u] \out  \\
&&&\;= D(\id+ u)\o [\now\inl,\,\lat\now\inr] \out.
\end{flalign*}
By uniformity, $w^\pistar u = ([\now\inl,\,\lat\now\inr]\o\out)^\pistar$. But $\id$ 
satisfies the fixpoint equation for~$w^\pistar u$: 
\begin{align*}
[\now,\id]^\klstar\o [\now\inl,\,\lat\now\inr]\o\out = %&\; 
[\now, \lat [\now,\id]^\klstar \now\inr]\o\out %\\
= [\now, \lat]\o\out %\\
=\id.
\end{align*}
Therefore $w^\pistar u=\id$.
\end{proof}

\begin{lemma}\label{lem:tau-retr}
Let $(A,a)$ be a search-algebra. Then for any $X$, $\tau_{A,X}\c A\times DX\to D(A\times X)$
is a retraction, whose section is $\brks{a\o(D\fst), D\snd}\c D(A\times X)\to A\times DX$.
\end{lemma}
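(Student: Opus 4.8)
The plan is to prove the single identity $\tau_{A,X}\o\brks{a\o(D\fst),D\snd} = \id_{D(A\times X)}$, which is exactly the assertion that $\tau_{A,X}$ is a retraction with the indicated section. I would do this by finality of $D(A\times X)$: the identity is a morphism of the final $\out$-coalgebra to itself, and it is the unique such, so it suffices to show that $\phi := \tau_{A,X}\o\brks{a\o(D\fst),D\snd}$ is an $\out$-coalgebra endomorphism, i.e.\ $\out\o\phi = (\id+\phi)\o\out$.

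To compute $\out\o\phi$ I would first unfold $\out\o\tau = (\id+\tau)\o\dist\o(\id\times\out)$ from \cref{prop:D-props}(3), and use the routine identities $\out\o Df = (f+Df)\o\out$ (an instance of \cref{prop:D-props}(2), since $Df=(\now\o f)^\klstar$), $Df\o\now=\now\o f$, and $Df\o\lat=\lat\o Df$ (the last by \cref{lem:later}(1)), applied to $f=\fst$ and $f=\snd$. Then I would split on $\out$ of the argument. On the $\now$-component: $a(D\fst(\now(x,y)))=a(\now x)=x$ by the search-algebra law $a\o\now=\id$, and $D\snd(\now(x,y))=\now y$, so $\phi(\now(x,y))=\tau(x,\now y)$, whose $\out$-unfolding is $\inl(x,y)$, matching $(\id+\phi)(\inl(x,y))$. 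On the $\lat$-component: $a(D\fst(\lat w'))=a(\lat(D\fst w'))=a(D\fst w')$ by the law $a\o\lat=a$, and $D\snd(\lat w')=\lat(D\snd w')$, so $\out(\phi(\lat w'))=\inr(\tau(a(D\fst w'),D\snd w'))=\inr(\phi(w'))$, matching $(\id+\phi)(\inr w')$. Combining the two cases gives $\out\o\phi=(\id+\phi)\o\out$, hence $\phi=\id$.

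The only obstacle is the bookkeeping of the nested coproduct cases while pushing $\out$ through $\tau$, $D\fst$ and $D\snd$; conceptually it is straightforward, and the two search-algebra equations are each used exactly once, at the two coalgebra components. A slightly more point-free variant would rewrite $\phi=D(a\times\id)\o\tau_{DA,X}\o\brks{D\fst,D\snd}$ using naturality of strength in its first argument and then argue via \cref{lem:D-proj} and finality, but the coinductive computation above is the shortest route.
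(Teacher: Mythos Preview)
Your proof is correct. You directly show that $\phi = \tau_{A,X}\o\brks{a\o D\fst,D\snd}$ is an $\out$-coalgebra endomorphism of $D(A\times X)$ by case-splitting on $\now$ and $\lat$, using exactly the two search-algebra laws $a\o\now=\id$ and $a\o\lat=a$ together with the standard identities $Df\o\now=\now\o f$, $Df\o\lat=\lat\o Df$, and $\tau\o(\id\times\lat)=\lat\o\tau$ (\cref{lem:later}). Finality then gives $\phi=\id$.

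The paper takes a slightly more roundabout route: it first invokes \cref{lem:D-proj} to the effect that $\brks{D\fst,D\snd}$ is a section, hence monic, and then reduces the goal to the two component equations $D\fst = (Da)\o(D\fst)\o\tau_{DA,X}\o\brks{D\fst,D\snd}$ and $D\snd=D\snd$; the nontrivial first equation is then established by a larger coalgebra diagram and finality. This is essentially the ``point-free variant'' you allude to in your last paragraph. Your direct coinductive argument is shorter and, more importantly, self-contained: it avoids the dependency on \cref{lem:D-proj}, whose own proof is somewhat involved (it goes through guarded iteration on $\BBD$). The paper's decomposition, on the other hand, isolates the role of $a$ cleanly in the single equation $D\fst=(Da)\o(D\fst)\o\tau\o\brks{D\fst,D\snd}$, which may be of independent interest.
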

\begin{proof}
Using \cref{lem:D-proj}, we proceed to show that 
\begin{align*}
\brks{D\fst, D\snd} = \brks{D\fst, D\snd}\o\tau_{A,X}\o\brks{a\o(D\fst), D\snd},
\end{align*}
which is equivalent to $D\fst=(Da)\o(D\fst)\o\tau\o\brks{D\fst,D\snd}$
and $D\snd=D\snd$. To obtain the former equation, we show commutativity of the 
diagram

\begin{equation*}
\begin{tikzcd}[column sep=6ex, row sep=normal]
D(A\times X)
  \dar["\brks{D\fst,D\snd}"']
  \ar[rr,"\out"] &[2ex]&
A \times X + D(A\times X)
  \dar["\now\times\id +\brks{\lat\o (D\fst),D\snd}"]\\
DA\times DX 
  \dar["\tau"']
  \rar["\id\times\out"] & DA\times (X + DX)\rar["\ldist"] & 
DA\times X + DA\times DX
  \dar["\id+\tau"] \\
D(DA\times X)  
  \dar["D\fst"']
  \ar[rr,"\out"] && 
DA\times X + D(DA\times X)
  \dar["\fst + D\fst"]\\
DDA 
  \dar["Da"']
  \ar[rr,"\out"]&& 
  \dar["a + Da"] DA+DDA\\
DA 
  \ar[rr,"\out"] && 
A + DA
\end{tikzcd}
\end{equation*}
Except for the top cell, the remaining ones commute by definition. Assume 
commutativity of the top cell for the time being. Then the composition on the 
right vertical edge is easily seen to be equal to 
$\fst + (Da)\o(D\fst)\o\tau\o\brks{D\fst,D\snd}$. The resulting diagram 
then witnesses the fact that $(Da)\o(D\fst)\o\tau\o\brks{D\fst,D\snd}$
is a coalgebra morphism, which must be equal to~$D\fst$ by finality of the 
coalgebra $(DA,\out)$. 

Finally, let us show commutativity of the top cell. Using the fact that $\out$
is an isomorphism, this amounts to showing
\begin{align*}
\ldist\o(\id\times\out)\o\brks{D\fst,D\snd}\o\tuo\o\inl &\,=\inl\o(\now\times\id),\\
\ldist\o(\id\times\out)\o\brks{D\fst,D\snd}\o\tuo\o\inr &\,=\inr\o\brks{\lat\o (D\fst),D\snd}.
\end{align*}
Indeed,
\begin{align*}
\ldist\o(\id&\times\out)\o\brks{D\fst,D\snd}\o\tuo\o\inl\\
 &\;= \ldist\o (\id\times\out)\o\brks{D\fst,D\snd}\o\now\\
 &\;= \ldist\o (\id\times\out)\o\brks{\now\o\fst,\now\o\snd}\\
 &\;= \ldist\o \brks{\now\o\fst,\inl\o\snd}\\
 &\;= \inl\o \brks{\now\o\fst,\snd}\\
 &\;= \inl\o (\now\times\snd),\\[1ex]
\ldist\o(\id&\times\out)\o\brks{D\fst,D\snd}\o\tuo\o\inr\\
 &\;= \ldist\o(\id\times\out)\o\brks{D\fst,D\snd}\o\lat\\
 &\;= \ldist\o(\id\times\out)\o\brks{\lat\o (D\fst), \lat\o (D\snd)}\\
 &\;= \ldist\o\brks{\lat\o (D\fst), \inr\o (D\snd)}\\
 &\;= \inr\o\brks{\lat\o (D\fst),\o D\snd},  
\end{align*}
and we are done.
\end{proof}

\subsection{Proof of \cref{lem:elgot-iota}}
Indeed, consider the commutative diagram
\begin{equation*}
\begin{tikzcd}[column sep = 8ex,row sep = 4ex]
D(A\times\nat)
  \rar["D\iota"]
  \drar["D\fst"'] & 
DDA
  \dar["Da"]
  \rar["\mu"] & 
X\times \nat 
  \dar["a"]\\
&Y
  \rar["a"] & Y
\end{tikzcd}
\end{equation*}
where the square commutes, since $(DA,a)$ is a $\BBD$-algebra (\cref{prop:uniform-iteration-eq}) and the triangle commutes, because $a\o\iota = \fst$, which easily follows 
by induction.
\qed

\subsection{Proof of \cref{lem:bis}}
First, note that 
\begin{flalign*}
&& \rho\o {[\eta,\,[\eta\o\fst,\,\iota\o (\id\times\suc)]]}^\klstar 
&\;=   \rho\o \iota^\klstar\o {[\eta\o\brks{\id,\zero\o\bang},\,[\eta\o\brks{\id,\zero\o\bang}\o\fst,\,\id\times\suc]]}^\klstar&\\
&&&\;= \rho\o D\fst\o {[\eta\o\brks{\id,\zero\o\bang},\,[\eta\o\brks{\id,\zero\o\bang}\o\fst,\,\id\times\suc]]}^\klstar\\
&&&\;= \rho\o{[\eta,\,[\eta\o\fst,\,\eta\o\fst]]}^\klstar.
\end{flalign*}
By combining it with a symmetric argument, we obtain that 
$\rho\o {[\eta,\,[\eta\o\fst,\,\iota\o(\id\times\suc)]]}^\klstar =
\rho\o{[\eta,\,[\eta\o\fst,\,\eta\o\fst]]}^\klstar = 
\rho\o{[\eta,\,[\iota\o(\id\times\suc),\,\eta\o\fst]]}^\klstar$. 

Now, given a morphism $f\c DX\to A$ that equalizes $\rho\o {[\eta,\,[\eta\o\fst,\,\iota\o(\id\times\suc)]]}^\klstar$
and $\rho\o{[\eta,\,[\iota\o(\id\times\suc),\,\eta\o\fst]]}^\klstar$, we obtain that
$f\o\iota^\klstar = f\o {[\eta,\,[\eta\o\fst,\,\iota\o(\id\times\suc)]]}^\klstar\o D\inr\inr = f\o{[\eta,\,[\iota\o(\id\times\suc),\,\eta\o\fst]]}^\klstar\o D\inr\inr = f\o D\fst$,
by assumption, there is a unique factorization of $f$ through $\rho$.
\qed

\subsection{Proof of \cref{thm:Drho}}
We show the following implications: 
{\bfseries\sffamily\ref{it:q2}.$\impl$\ref{it:q1}.$\impl$\ref{it:q4}.$\impl$\ref{it:q2}.}
and 
{\bfseries\sffamily\ref{it:q4}.$\impl$\ref{it:q3}.$\impl$\ref{it:q1}.}

  {\bfseries\sffamily\ref{it:q2}.$\impl$\ref{it:q1}.} 
Using the assumption, we define $\alpha_X\c D\wave DX\to\wave D X$ by the 
  universal property:
\begin{equation*}
\begin{tikzcd}[column sep = 10ex,row sep = 4ex]
DD(X\times\nat) 
  \ar[r,shift right=.35ex,"DD\fst"']
  \ar[r,shift left=.75ex,"D\iota^\klstar"]
&[3ex]
DDX
  \rar["D\rho_{X}"] 
  \ar[d, "\mu_X"']
&
D\wave DX
  \dar[dotted, "\alpha_X"]
\\
&
DX\rar["\rho_X"]
&
\wave DX
\end{tikzcd}
\end{equation*}
where we call on the calculation:
\begin{align*}
\rho\o\mu\o D\iota^\klstar =
\rho\o\mu\o D\mu\o DD\iota =  
\rho\o\mu\o\mu\o DD\iota =  
\rho\o\mu\o D\iota\o\mu =  
\rho\o\iota^\klstar\o\mu =  
\rho\o D\fst\o\mu =  
\rho\o\mu\o DD\fst.  
\end{align*}
By definition, $(\wave DX,\alpha_X)$ is a $D$-algebra and $\rho_X$ is a $D$-algebra 
morphism. Let us show that $(\wave DX,\alpha_X)$ is a search-algebra, i.e.\ that 
$\alpha_X\o\now = \id$ and $\alpha_X\o\lat = \alpha_X$. For the first equation,
note that $\alpha_X\o\now\o\rho = \alpha_X\o D\rho\o\now = \rho\o\mu\o\now=\rho$,
which entails $\alpha_X\o\now = \id$ using the fact that $\rho$ is an epi.
Analogously, $\alpha_X\o\lat\o D\rho = \alpha_X\o D\rho\o\lat = \rho\o\mu\o\lat =
\rho\o\lat\o\mu = \rho\o\mu = \alpha_X\o D\rho$ using Lemmas~\ref{lem:later} 
and~\ref{lem:rho-eq-lat}, and again, we are done by discarding $D\rho$, which 
is epic by assumption.

  {\bfseries\sffamily\ref{it:q1}.$\impl$\ref{it:q4}.} 
Let $(\wave DX,\alpha_X\c D\wave DX\to\wave D X)$ be a search-algebra structure, 
which exists by assumption. To show that $\wave DX$ is an Elgot algebra, by  \cref{prop:uniform-iteration-eq}, it suffices to show that 
  it is a $\BBD$-algebra, i.e.\ $\alpha_X\o\now = \id$, which is by assumption,
  and $\alpha_X\o\mu = \alpha_X\o D\alpha_X$, which we can prove, using the 
  assumption that $\rho$ is a $D$-algebra morphism and assuming for the 
  time being that $DD\rho$ is epic, as follows: 
  $\alpha_X\o\mu\o DD\rho=\alpha_X\o D\rho\o\mu= 
  \rho\o\mu\o\mu = 
  \rho\o\mu\o D\mu =
  \alpha_X\o D\rho\o D\mu =
  \alpha_X\o D\alpha_X\o DD\rho$. The proof that $DD\rho$ is epic is entailed 
  by the following commutative diagram:
\begin{equation*}
\begin{tikzcd}[column sep=16ex, row sep=4ex]
DX\times D\enat
  \rar["\rho\times\id"]
  \dar["\tau"'] & 
\wave DX\times D\enat
  \dar["\tau"]  & \\
D(DX\times\enat)
  \rar["D(\rho\times\id)"]
  \dar["D\tau"'] & 
D(\wave DX\times\enat)
  \dar["D\tau"]  & \\
DD(DX\times 1)
  \rar["DD(\rho\times\id)"] &
DD(\wave DX\times 1)
\end{tikzcd}
\end{equation*}
where, up to the obvious isomorphisms, our morphism of interest is the horizontal bottom one. 
To show that it is epic, it suffices to show  that any path from the left top corner to the right bottom corner is epic, specifically, we consider the composition $D\tau\o\tau\o (\rho\times\id)$. 
This is epic, because $\rho\times\id$ is a coequalizer and the involved $\tau$ are 
retractions by \cref{lem:tau-retr}.

To prove that $\rho = ((\rho\o\now+\id)\o\out)^\pistar$, note that,
by definition, $((\rho\o\now+\id)\o\out)^\pistar=\alpha_X\o\coit ((\rho\o\now+\id)\o\out)$.
It is easy to see by the universal property of $\coit$ that 
$\coit ((\rho\o\now+\id)\o\out) = D(\rho\o\now)$. Hence $((\rho\o\now+\id)\o\out)^\pistar
 = \alpha_X\o D(\rho\o\now)$, which is equal to $\rho\o\mu\o (D\now) =\rho$,
 since, by assumption, $\rho$ is a $D$-algebra morphism.
 
Finally, let us show freeness. Given an Elgot algebra $A$ and $f\c X\times Y\to A$, we provide a 
  unique right iteration preserving $f^{\hash}\c X\times\wave DY\to A$ such that $f=f^\hash\o(\id\times\rho\o\now)$.
  Using \cref{prop:uniform-iteration-eq}, again, we assume a search-algebra $(A,a)$ such that ${a\c DA\to A}$ is a $\BBD$-algebra. We define $f^{\hash}$ by a universal property 
  from the diagram:
\begin{equation*}
\begin{tikzcd}[column sep=6ex, row sep=3ex]
X\times D(Y\times\nat) 
  \ar[r,shift right=.35ex,"\id\times D\fst"']
  \ar[r,shift left=.75ex,"\id\times\iota^\klstar"]
&[3em]
X\times DY
  \rar["\id\times\rho_Y"] 
  \dar["(Df)\o\tau"']
&[1em]
X\times \wave DY
  \dar[dotted, "f^\hash"]
\\[1ex]
&
DA
  \rar["a"] & 
A
\end{tikzcd}
\end{equation*}
which is justified by the following calculation:
\begin{flalign*}
&&a\o (Df)\o\tau\o(\id\times\iota^\klstar)
\,&= a\o\iota^\klstar\o D(f\times\id)\o D\assoc^\mone\o\tau\\
&&\,&=a\o D\fst\o D(f\times\id)\o D\assoc^\mone\o\tau&\by{\cref{lem:elgot-iota}}\\
&&\,&=a\o Df\o D\fst\o D\assoc^\mone\o\tau\\
&&\,&=a\o Df\o D(\id\times\fst)\o\tau\\
&&\,&=a\o (Df)\o D(\id\times \fst)\o\tau\\
&&\,&=a\o (Df)\o\tau\o (\id\times D\fst)
\intertext{
We then immediately have 
}
&&f^\hash\o(\id\times\rho\o\now)
\,&= a\o (Df)\o\tau\o (\id\times\now)\\
&&\,&= a\o (Df)\o\now\\
&&\,&= a\o\now\o f\\
&&\,&= f.
\end{flalign*}
Let us show that $f^\hash$ is right iteration preserving, i.e.\ given 
$g\c Z\to\wave DY+Z$, 
\begin{align*}
f^\hash\o(\id\times \alpha_Y\o(\coit g)) = 
a\o\coit((f^\hash+\id)\o\dist\o(\id\times g)).
\end{align*}
First, we show that 
\begin{align*}
f^\hash\o(\id\times\alpha_Y) = a\o (Df^\hash)\o\tau. 
\end{align*}
To that end we compose both sides with $\id\times D\rho$, and make use the fact that 
it is an epi\sgnote{prove}.
\begin{align*}
f^\hash\o(\id\times\alpha_Y)\o (\id\times D\rho)
\,&= f^\hash\o(\id\times\rho\o\mu)\\
\,&= a\o (Df)\o\tau\o(\id\times\mu)\\
\,&= a\o (Df)\o\mu\o (D\tau)\o\tau\\
\,&= a\o \mu\o(DDf)\o (D\tau)\o\tau\\
\,&= a\o (Da)\o(DDf)\o (D\tau)\o\tau\\
\,&= a\o (Df^\hash)\o D(\id\times\rho)\o\tau\\
\,&= a\o (Df^\hash)\o\tau\o(\id\times D\rho).
\end{align*}
This reduces the goal to
\begin{align*}
(Df^\hash)\o\tau\o(\id\times\coit g) = \coit((f^\hash+\id)\o\dist\o(\id\times g)),
\end{align*}
and the latter follows from the fact that the left hand side satisfies the characteristic
equation for the right hand side:
\begin{align*}
\out\o(D&f^\hash)\o\tau\o(\id\times\coit g)\\
\,&=(f^\hash+Df^\hash)\o\out\o\tau\o(\id\times\coit g)\\
\,&=(f^\hash+Df^\hash)\o(\id+\tau)\o\dist\o(\id\times\out)\o(\id\times\coit g)\\
\,&=(f^\hash+Df^\hash)\o(\id+\tau)\o\dist\o(\id\times (\id+\coit g)\o g)\\
\,&=(f^\hash+Df^\hash)\o(\id+\tau\o (\id\times\coit g))\o\dist\o(\id\times g)\\
\,&=(\id+(Df^\hash)\o\tau\o (\id\times\coit g))\o (f^\hash+\id)\o\dist\o(\id\times g).
\end{align*}
Finally, let $g\c X\times\wave DY\to A$ be right iteration preserving, such 
that $f=g\o(\id\times\rho\o\now)$ and show that $g=f^\hash$. By definition of 
$f^\hash$, we need to show that $g\o (\id\times\rho) = a\o (Df)\o\tau$. Using 
the equation $\rho = ((\rho\o\now+\id)\o\out)^\pistar$, we proved above,
we derive the goal as follows: 
\begin{align*}
g\o (\id\times\rho)
\,&= g\o (\id\times((\rho\o\now+\id)\o\out)^\pistar)\\
\,&= ((g+\id)\o\dist\o(\id\times(\rho\o\now+\id)\o\out))^\pistar\\
\,&= ((g\o(\id\times\rho\o\now)+\id)\o\dist\o(\id\times\out))^\pistar\\
\,&= ((f+\id)\o\dist\o(\id\times\out))^\pistar\\
\,&= a\o\coit((f+\id)\o\dist\o(\id\times\out))\\
\,&= a\o(Df)\o\coit(\dist\o(\id\times\out))\\
\,&= a\o(Df)\o\tau.
\end{align*}

  {\bfseries\sffamily\ref{it:q4}.$\impl$\ref{it:q2}.} 
  Let $(\wave DX\comma\alpha_X\c D\wave DX\to\wave DX)$ be the relevant Elgot
  algebra structure, which exists by definition.
Let $w\c DX\times\enat\to DX+DX\times\enat$ be as follows:
\begin{align*}
w(p,\now\star)  = \inl p&&
w(p,\lat n) = \inr(\ear p, n).  
\end{align*}
Analogously, let $u\c D(X\times\nat)\times\enat\to D(X\times\nat)+D(X\times\nat)\times\enat$
be as follows:
\begin{align*}
u(p,\now\star)   =&\; \inl p,&
u(\lat p,\lat n) =&\; \inr(p, n),&\\[1ex]
u(\now(x,\zero),\lat n) =&\; \inr(\now(x,\zero),n),&
u(\now(x,\suc(k)),\lat n) =&\; \inr(\now(x,k),n).  
\end{align*}
We thus obtain two morphisms: $\coit w\c DX\times\enat\to DDX$ and 
$\coit u\c D(X\times\nat)\times\enat\to DD(X\times\nat)$. To build intuition, 
let us replace $X$ with $1$. Then $\coit w\c\enat\times\enat\to D\enat$ essentially
computes truncated difference: it subtracts a possibly infinite second number 
$m$ from a possibly infinite first number $n$ and produces a process $D\enat$, 
which runs $m$ time units, and in case of termination returns the truncated 
difference $n\dotdiv m$. Since subtraction is inverse to summation, this explains 
why $\coit w$ is a section, whose retraction is $\brks{\mu,D\bang}\c D\enat\to\enat\times\enat$,
which remains true for arbitrary $X$. The morphism $\coit u\c D\nat\times\enat\to DD\nat$
refines $\coit w$ in the following sense. The first argument can be regarded
as a sum of a possibly infinite~$n$ with a finite $k$ (so that $n+k=n$ if $n$ is 
infinite) and then $\coit u$ again computes the truncated difference $n+k\dotdiv m$
in the form of a formal sum $(n\dotdiv m) + k$ if $n$ is greater than~$m$ and $(n+k)\dotdiv m$
otherwise.

Now, consider the following diagram, which summarizes the argument.
\begin{equation*}
\begin{tikzcd}[column sep=12ex, row sep=normal]
& DDX
  \rar["D\rho_X"]
  \dar["\brks{\mu,D\bang}"'] & 
D\wave DX
  \dar["\brks{\alpha\comma D\bang}"] \\
D(X\times\nat)\times\enat
  \dar["\coit u"'] 
  \ar[r,shift right=.35ex,"D\fst\times\id"']
  \ar[r,shift left=.75ex, "\iota^\klstar\times\id"]
&[3em]
DX\times\enat\rar["\rho_X\times\id"]
\dar["\coit w"'] 
& 
\wave DX\times\enat
  \dar["(D\fst)\o\tau"]
  \ar[ddr,bend left=35, start anchor=east, "c"]\\
DD(X\times\nat) 
  \ar[r,shift right=.35ex,"DD\fst"']
  \ar[r,shift left=.75ex, "D\iota^\klstar"]
&[3em]
DDX
  \rar["D\rho_X"] 
  \ar[drr,bend right=20, "a"]
& 
D\wave DX
  \drar[dotted, "b"]\\[-2ex]
& & &[-8ex] Y
\end{tikzcd}
\end{equation*}
We would like to show that $D\rho$ is a coequalizer of the bottom parallel 
pair of morphisms. To that end, we fix $a\c D\wave DX\to Y$, such that $a\o\iota^\klstar = a\o (D\fst)$
and construct such ${b\c D\wave DX\to Y}$ that $a = b\o (D\rho)$. Assuming,
for the moment, that 
all rectangular cells (with coherently chosen edges of the corresponding parallel 
pairs) commute, we obtain that $a\o (\coit w)$ coequalizes~$\iota^\klstar,D\fst$,
which produces a suitable $c$, by a coequalizer property. Let $b = c\o\brks{\alpha,D\bang}$
and using the fact that the two vertical morphisms from $DDX$ to $DDX$ and from 
$D\wave DX$ to $D\wave DX$ are identities, obtain the desired equation $b\o D\rho = a$.
The fact that the vertical morphism from $D\wave DX$ to $D\wave DX$ is the identity 
is by \cref{lem:tau-retr}.
The fact that the vertical morphism from $DDX$ to $DDX$ is the identity, we
show directly. We will show that
\begin{align*}
\out\o (\coit w)\o\brks{\mu,D\bang} = (\id+ (\coit w)\o\brks{\mu,D\bang})\o\out
\end{align*}
This identifies $(\coit w)\o\brks{\mu,D\bang}$ as a unique final coalgebra morphism,
which thus must be equal to~$\id$. Since
\begin{align*}
\out\o (\coit w)\o\brks{\mu,D\bang} 
= (\id+\coit w)\o w\o\brks{\mu,D\bang},
\end{align*}
we reduce the previous equation to
\begin{align*}
w\o\brks{\mu,D\bang} = (\id+ \brks{\mu,D\bang})\o\out
\end{align*}
By composing both sides with $\now$ and $\lat$ correspondingly,
we reduce to
\begin{align*}
w\o\brks{\id,\now\o\bang} = \inl,
&&
w\o\brks{\mu\o\lat,(D\bang)\o\lat} = \inr\o \brks{\mu,D\bang}.
\end{align*}
The first equation directly follows by definition of $w$. For the second equation,
$w\o\brks{\mu\o\lat,(D\bang)\o\lat}=w\o\brks{\lat\o\mu,\lat\o(D\bang)}
=\inr\o\brks{\mu,D\bang}$, using \cref{lem:later}~(1).

We proceed to show non-trivial commutativity conditions for the square cells 
of our diagram. These are the following:
\begin{align}
\label{eq:Drho1}
(\coit w)\o (\iota^\klstar\times\id) =&\; D\iota^\klstar\o (\coit u)\\
\label{eq:Drho0}
(\coit w)\o (D\fst\times\id) =&\; (DD\fst)\o (\coit u)\\
\label{eq:Drho2}
(D\rho)\o(\coit w) =&\; (D\fst)\o\tau\o (\rho\times\id)
\end{align}
To obtain~\eqref{eq:Drho1}, we show that both sides are equal to 
$\coit ((\iota^\klstar+\id)\o u)$, which in turn amounts to proving that the left and 
the right hand are both universal coalgebra maps, i.e.\
\begin{align*}
\out\o (D\iota^\klstar)\o (\coit u) =&\; (\id+(D\iota^\klstar)\o\coit u)\o(\iota^\klstar+\id)\o u\\
\out\o(\coit w)\o (\iota^\klstar\times\id) =&\;
 (\id+(\coit w)\o (\iota^\klstar\times\id))\o(\iota^\klstar+\id)\o u
\end{align*}
The first one is obvious, and we proceed with the second one. Since
\begin{align*}
\out\o(\coit w)\o (\iota^\klstar\times\id) =&\;
 (\id+\coit w)\o w\o (\iota^\klstar\times\id),
\end{align*}
we are left to show that
\begin{align*}
(\iota^\klstar+\iota^\klstar\times\id)\o u = w\o (\iota^\klstar\times\id).
\end{align*}
By case distinction:
\begin{align*}
(\iota^\klstar+\iota^\klstar\times\id)\o(u(p,\now\star)) 
=&\; (\iota^\klstar+\iota^\klstar\times\id)\o(\inl p)\\
=&\; \inl(\iota^\klstar(p))\\ 
=&\;w (\iota^\klstar(p),\now\star),\\[1ex]
(\iota^\klstar+\iota^\klstar\times\id)\o(u(\lat p,\lat n)) =&\; 
(\iota^\klstar+\iota^\klstar\times\id)\o(\inr(p,n))\\
=&\; \inr(\iota^\klstar(p),n)\\
=&\;\inr(\ear(\iota^\klstar (\lat p)), n)\\
=&\;w\o (\iota^\klstar (\lat p),\lat n),\\[1ex]
(\iota^\klstar+\iota^\klstar\times\id)\o(u(\now(x,\zero),\lat n)) =&\; 
(\iota^\klstar+\iota^\klstar\times\id)\o(\inr(\now(x,\zero),n))\\
=&\;\inr(\iota^\klstar(\now(x,\zero)),n)\\
=&\;\inr(\iota(x,\zero),n)\\
=&\;\inr(\now x,n)\\
=&\;\inr(\ear(\now x),n)\\
=&\;\inr(\ear(\iota^\klstar (\now(x,\zero))), n)\\
=&\;w\o (\iota^\klstar (\now(x,\zero)),\lat n),\\[1ex]
(\iota^\klstar+\iota^\klstar\times\id)\o(u(\now(x,\suc(k)),\lat n)) =&\; 
(\iota^\klstar+\iota^\klstar\times\id)\o(\inr(\now(x,k),n))\\
=&\; \inr(\iota^\klstar(\now(x,k)),n)\\
=&\; \inr(\iota(x,k),n)\\
=&\; \inr(\ear(\iota(x,\suc(k))),n)\\
=&\;\inr(\ear(\iota^\klstar (\now(x,\suc(k)))), n)\\
=&\;w\o (\iota^\klstar (\now(x,\suc(k))),\lat n).
\end{align*}
The proof of~\eqref{eq:Drho0} runs analogously. To obtain~\eqref{eq:Drho2}, again, 
we show that both sides are equal to $\coit ((\rho+\id)\o w)$, by establishing equations
\begin{align*}
\out\o (D\rho)\o (\coit w) =&\; (\id+(D\rho)\o\coit w)\o(\rho+\id)\o w,\\*
\out\o(D\fst)\o\tau\o (\rho\times\id) =&\;
 (\id+(D\fst)\o\tau\o (\rho\times\id))\o(\rho+\id)\o w.
\end{align*}
The first equation is again easy to see. Since 
$\out\o(D\fst)\o\tau\o (\rho\times\id)
=(\fst+(D\fst)\o\tau)\o\dist\o(\rho\times\out)
=(\rho\o\fst+(D\fst)\o\tau\o(\rho\times\id))\o\dist\o(\id\times\out)$,
we reduce~to
\begin{align*}
(\fst+\rho\times\id)\o\dist\o(\id\times\out) = (\id+\rho\times\id)\o w.
\end{align*}
This is however easy to see by definition of $w$.

Finally, we have to show that $b$ is the unique morphism for which $a=b\o (D\rho)$.
We obtain this by proving that $D\rho$ is an epi. To this end, consider the 
following diagram
\begin{equation*}
\begin{tikzcd}[column sep=12ex, row sep=normal]
DX\times\enat
  \rar["\rho\times\id"]
  \dar["\tau"] & 
\wave DX\times\enat\dar["\tau"]  & \\
D(DX\times 1)
  \rar["D(\rho\times\id)"] &
D(\wave DX\times 1)
  \ar[r,shift right=.35ex,"f"']
  \ar[r,shift left=.75ex,"g"] &
Y                
\end{tikzcd}
\end{equation*}
where we assume that $f\o D(\rho\times\id) = g\o D(\rho\times\id)$.
Then, also $f\o\tau\o (\rho\times\id) = g\o\tau\o(\rho\times\id)$.
Since $\rho\times\id$ is an epi, and $\tau$ is a retraction by 
\cref{lem:tau-retr}, their composition is epic, hence $f=g$. We have thus shown
that $D(\rho\times\id)$ is epic, and since $DY\times 1\iso DY$, so 
is~$D\rho$.

  {\bfseries\sffamily\ref{it:q4}.$\impl$\ref{it:q3}.}
The proof that $\wave D$ extends to a monad is analogous to that of 
\cref{lem:b-mon}. Strength is defined and characterized in the same way 
as in \cref{prop:tau-prop}. We are left to show that~$\rho$ is a strong 
monad morphism. By definition, $\rho$ respects monad unit. Let us show that it 
respects multiplication, i.e.\ $\rho\o\mu = \mu\o\rho\o D\rho$. By assumption,
every $\rho_X$ is a $D$-algebra morphism, i.e.\ $\rho_X\o\mu_X = \out^\pistar\o D\rho_X$
(using the definition of the $D$-algebra structure for $\wave DX$ from 
\cref{prop:uniform-iteration-eq}). Thus, we are left to show that 
$\mu\o\rho = \out^\pistar$. By assumption, $\rho = 
((\rho\o\now +\id)\o\out)^\pistar$, and $\mu\c\wave D\wave DX\to\wave DX$ is 
iteration preserving, hence 
\begin{align*}
\mu\o\rho
=\mu\o ((\rho\o\now +\id)\o\out)^\pistar = ((\mu\o\rho\o\now +\id)\o\out)^\pistar = \out^\pistar.
\end{align*}
Finally, let us show that $\rho$ preserves strength, i.e.\ $\rho\o\tau = \tau\o(\id\times\rho)$.
Using assumption $\rho = ((\rho\o\now +\id)\o\out)^\pistar$, we have 
\begin{flalign*}
&&\tau\o(\id\times\rho) 
=&\; \tau\o(\id\times((\rho\o\now +\id)\o\out)^\pistar)&\\
&&=&\; ((\tau+\id)\o \dist\o (\id\times (\rho\o\now +\id)\o\out))^\pistar\\
&&=&\; ((\tau\o(\id\times\rho\o\now)+\id)\o \dist\o (\id\times\out))^\pistar\\
&&=&\; ((\rho\o\now+\id)\o \dist\o (\id\times\out))^\pistar\\
&&=&\; ((\rho\o\now +\id)\o\out)^\pistar\o\tau\\
&&=&\; \rho\o\tau.  
\end{flalign*}
where the second to last step is the characterization of $\tau$ from 
\cref{prop:D-props} and~\UNI.  

  {\bfseries\sffamily\ref{it:q3}.$\impl$\ref{it:q1}.}
Suppose that $\wave D$ extends to a strong monad $\wave\BBD$ and $\rho$ to a strong monad 
morphism. Let us define the search-algebra structure on $\wave DX$ as 
$\alpha_X = \mu_X\o\rho_{\wave DX}\c D\wave DX\to\wave DX$. The axioms of search-algebras follow 
by definition and by \cref{lem:rho-eq-lat}. By assumption, $\rho$ is a monad
morphism, in particular, $\rho\o\mu = \mu\o\rho\o D\rho$, hence 
$\rho_X\o\mu_X = \mu_X\o\rho_{DX}\o D\rho_X = \alpha_X\o D\rho_X$, i.e.\ $\rho_X$
is a $D$-algebra morphism.
\qed

\subsection{Proof of \cref{thm:elg}}
The difficult clause is {\bfseries\sffamily\ref{it:elg2}.} In order to show it,
some preparatory work is needed.
\begin{lemma}\label{lem:sigma-lat}
Suppose that the equivalent conditions of~\cref{thm:Drho} hold. Then $\Sigma=\wave D1$
is an internal distributive lattice with $\bot\c 1\to\wave D1$ as the bottom, 
$\top=\eta\c 1\to\wave D1$ as the top and $\land=\wave D\bang\o\hat\tau^\klstar\o
\tau\c\wave D1\times\wave D1\to\wave D1$.
% and the join $\lor\c\wave D1\times
%\wave D1\to\wave D1$, calculated from the universal property:
\end{lemma}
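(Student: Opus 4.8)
The plan is to verify that $\Sigma = \wave D1$, equipped with the three maps $\bot$, $\top = \eta$ and $\land$, satisfies the axioms of an internal distributive lattice, leaning heavily on the fact (under the equivalent conditions of \cref{thm:Drho}) that $(\wave D1, \rho\o\now)$ is a stable free Elgot algebra on $1$, so that $\Sigma \iso 1 + 1$ fails in general but the Elgot-algebra and strength machinery is fully available; in particular $\wave\BBD$ is a commutative (equational lifting) monad, being a pre-Elgot monad which is moreover an equational lifting monad by the analogue of \cref{thm:K-lift} for $\wave\BBD$ under condition~\ref{it:q3} of \cref{thm:Drho}. The first thing I would do is record that $\land$, as defined, is precisely the binary ``meet'' one obtains in any restriction category / equational lifting monad from the commutativity datum: $\land = \wave D\bang \o \hat\tau^\klstar \o \tau = \wave D\bang \o \tau^\klstar \o \hat\tau$ by commutativity of $\wave\BBD$. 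Since $\wave D\bang$ maps into $\Sigma$, and $1$ is terminal, this exhibits $\land$ as a map $\Sigma\times\Sigma\to\Sigma$ that one should think of as ``domain intersection''.

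Next I would establish the semilattice laws for $\land$. Commutativity $\land = \land\o\swap$ follows from commutativity of $\wave\BBD$ together with the symmetry of the terminal object (both $\tau$ and $\hat\tau$ collapse under $\wave D\bang$ to the same map). Associativity and idempotence: idempotence $\land\o\Delta = \id$ reduces, via $\tau\o\Delta = \wave D\brks{\eta,\id}$ (the equational lifting law~\eqref{eq:L-eq} for $\wave\BBD$) and $\wave D\bang\o\wave D\brks{\eta,\id} = \wave D\bang$, to the identity $\wave D\bang\o\hat\tau^\klstar\o\wave D\brks{\eta,\id} = \id$ on $\wave D1$, which in turn is a Kleisli-star computation: $\hat\tau^\klstar\o\wave D\brks{\eta,\id} = (\hat\tau\o\brks{\eta,\id})^\klstar = (\eta\o\Delta)^\klstar\circ(\text{stuff})$, collapsing under $\wave D\bang$ to $\eta^\klstar = \id$ on $\Sigma$. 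Associativity is the ``copyable/weakly discardable'' style calculation of \cref{prop:copy}: one checks $\land\o(\land\times\id) = \land\o(\id\times\land)$ by transporting both sides to $\wave D\bang$ applied to a triple strength composite and using the coherence axioms of strength (the hexagon and the $\mu$-square reproduced in the appendix) together with commutativity. The unit laws are easiest: $\land\o(\top\times\id) = \land\o(\eta\bang\times\id)$; using $\tau\o(\id\times\eta) = \eta$ is the wrong variable, so instead $\hat\tau\o(\eta\times\id) = \eta$, and then $\hat\tau^\klstar\o\tau\o(\top\o\bang\times\id)$ after $\wave D\bang$ collapses to the identity on $\Sigma$, i.e.\ $\top$ is a unit for $\land$; dually $\bot$ absorbs, since strength is strict ($\tau\o(\id\times\bot)=\bot$ by \cref{prop:K-enriched} applied to $\wave\BBD$, which is legitimate because $\wave\BBD$ is pre-Elgot with least-fixpoint iteration), hence $\land\o(\bot\bang\times\id)=\bot\bang$.

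For the join $\lor$ I would exploit that every $\BC(X,\wave DY)$ is a pointed partial order with bottom $\bot$ (the analogue of \cref{prop:K-enriched} for $\wave\BBD$) and that, in particular, $\BC(1,\Sigma)$ — equivalently the internal poset $\Sigma$ — has binary joins that are \emph{definable}. Concretely, a join of two elements of $\Sigma$ can be produced by an Elgot-algebra iteration: run the first ``process'' and, if it diverges, fall back on the second; formally $\lor$ is the iteration $(([\id,\inl]\o\tuo + \id)\o\dots)^\pistar$-style map into $\wave D1$, or more simply $\lor = (\wave D\bang)\o$ (the least upper bound arising because $\wave\BBD$ is a commutative equational lifting monad, hence its Kleisli hom-posets are actually \emph{lattices} when the codomain is $\Sigma$, since $\Sigma$ is the subobject classifier-like ``Sierpi\'nski'' object classifying domains). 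I would define $\lor$ via the universal property of $\wave D1$ as a free Elgot algebra, pick the obvious ``try-left-then-right'' coalgebra, and check directly that the resulting map is the binary join with respect to $\appr$, using \cref{cor:klee} (least pre-fixpoint) for the universal property. Finally, distributivity $a\land(b\lor c) = (a\land b)\lor(a\land c)$: since $\land$ is ``restrict to the domain of'' and distributes over $\appr$-joins because strength preserves $\appr$ (\cref{prop:K-enriched}) and $\land$ is built from strength, the distributive law follows from monotonicity and the fact that in the two-element-ish poset $\Sigma$ meet is ``and'' and join is ``or''; more robustly, it reduces to the copyability identity $\hat\tau^\klstar\o\tau\o\Delta = \wave D\Delta$ of \cref{prop:copy}. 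The main obstacle will be the join side: unlike $\land$, which comes for free from commutativity, constructing $\lor$ and proving it is an internal join requires genuinely invoking that $\wave DX$ is a free Elgot (not merely uniform-iteration) algebra — i.e.\ one really needs the equivalent conditions of \cref{thm:Drho}, not just the bare quotient — and carefully checking that the ``run-left-then-right'' iteration is independent of the order up to the bisimulation quotient, which is exactly where the effective-quotient hypothesis (or at least \cref{lem:bis}) does its work.
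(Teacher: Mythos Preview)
Your treatment of the meet-semilattice structure is essentially correct and matches the paper: for any equational lifting monad $\BBT$ with a point $\bot\c 1\to T1$, the object $T1$ is a meet-semilattice with $\top=\eta$, $\bot$, and $\land = T\bang\o\hat\tau^\klstar\o\tau$; commutativity, idempotence, associativity and the unit/absorption laws all follow from commutativity of the monad, the equational-lifting law, and strictness of strength, exactly as you sketch.

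The join construction, however, has a genuine gap. Your proposed ``try-left-then-right'' iteration cannot work: at the level of $\Sigma=\wave D1$ the step-by-step structure of computations has already been quotiented away, so there is no way to ``run the first process and, if it diverges, fall back on the second'' --- divergence is not finitely detectable, and an Elgot iteration $f^\pistar$ with $f\c X\to\Sigma+X$ has no access to individual steps of a given $a\c 1\to\Sigma$. A sequential scheme of this shape produces (at best) $a\land b$, not $a\lor b$. Your remark that one must ``check that the iteration is independent of the order up to the bisimulation quotient'' is a symptom of this: a correct join must be manifestly symmetric, and a sequential construction simply is not.

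The paper's approach is different and supplies exactly the missing idea: it works \emph{before} the quotient, at the level of $\enat=D1$, using a parallel-execution combinator $\zeta_{X,Y}\c DX\times DY\to D(X\times DY+DX\times Y)$ that interleaves the two arguments step by step until one terminates. On $\enat$ this computes the minimum, which is the correct lift of the join on~$\Sigma$. One then shows (via the universal property of the coequalizer~\eqref{eq:D-quote}, used twice) that $\rho\o(D\bang)\o\zeta_{1,1}$ descends to a map $\lor\c\Sigma\times\Sigma\to\Sigma$, and transfers the distributive and absorption laws from their $\enat$-level analogues (minimum distributes over addition) using that $\rho\times\rho$ is epic. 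No effectiveness hypothesis is needed here; only the equivalent conditions of \cref{thm:Drho}. Your reduction of distributivity to copyability is also not right: copyability concerns $\land$ alone, whereas distributivity genuinely mixes $\land$ and $\lor$ and is obtained by lifting from~$\enat$.
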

\begin{proof}
$T1$ is in fact a meet-semilattice for any equational lifting monad $\BBT$ with 
a point $\bot\c 1\to T\iobj$, which is easy to check. We proceed to define binary 
joins $\lor\c\Sigma\times\Sigma\to\Sigma$ in two steps. First, we define an 
auxiliary map $j\c\enat\times\Sigma\to\enat$ as a universal arrow from the diagram:
\begin{equation*}
\begin{tikzcd}
\enat\times D\nat 
  \ar[r,shift right=.35ex,"\id\times D\bang"']
  \ar[r,shift left=.75ex,"\id\times\hat\iota^\klstar"]
  \dar["{D[\zero\bang,t]}\o\zeta_{1,\nat}"']
&[8em]
\enat\times\enat
  \rar["\id\times\rho"] 
  \dar["{D\bang }\o\zeta_{1,1}"']
&[1em]
\enat\times\Sigma
  \dar[dotted,"j"]
\\[.8ex]
D\nat 
  \ar[r,shift right=.35ex,"D\bang"']
  \ar[r,shift left=.75ex,"\hat\iota^\klstar"]
&
\enat
  \rar["\rho"] & 
\Sigma
\end{tikzcd}
\end{equation*}
where $\zeta_{X,Y}\c DX\times DY\to D(X\times DY+DX\times Y)$ runs the arguments 
in parallel until one of the computations terminates and $t\c\enat\times\nat\to\nat$
returns the minimum of the first and the second arguments, concretely, by primitive 
recursion: 
\begin{align*}
t(p,\zero) = \zero,\qquad t(p,\suc(n)) =\case{(\out p)}{\inl\star\mto\zero}{\inr q\mto\suc (t(q,n))}.
\end{align*}
Then we define $\lor$ from the diagram
\begin{equation*}
\begin{tikzcd}
D\nat\times\enat 
  \ar[r,shift right=.35ex,"D\bang\times\id"']
  \ar[r,shift left=.75ex,"\hat\iota^\klstar\times\id"]
  \dar["\id\times\rho"']
&[8em]
\enat\times\enat
  \rar["\rho\times\id"]
  \dar["\id\times\rho"'] 
&[1em]
\Sigma\times\enat
  \dar["\id\times\rho"]
\\[.8ex]
D\nat\times\Sigma 
  \ar[r,shift right=.35ex,"D\bang\times\id"']
  \ar[r,shift left=.75ex,"\hat\iota^\klstar\times\id"]
&[8em]
\enat\times\Sigma
  \rar["\rho\times\id"]
  \ar[dr,"j"'] 
&[1em]
\Sigma\times\Sigma
  \dar[dotted,"\lor"]
\\[.8ex]
&
& 
\Sigma
\end{tikzcd}
\end{equation*}
where the fact that $j$ equalizes $D\bang\times\id$ and $\hat\iota^\klstar\times\id$
follows from the next calculation and the fact that $\rho\times\id$ is epic:
\begin{flalign*}
&& j\o(D\bang\times\id)\o(\id\times\rho) &\;= j\o(\id\times\rho)\o(D\bang\times\id)&\hspace{3cm}\\*
&&  &\;= j\o\rho\o(D\bang)\o\zeta_{1,1}\o(D\bang\times\id) \\
&&  &\;= j\o\rho\o(D\bang)\o\zeta_{1,1}\o\swap\o(D\bang\times\id) \\
&&  &\;= j\o\rho\o(D\bang)\o\zeta_{1,1}\o(\id\times D\bang)\o\swap \\
&&  &\;= j\o\rho\o(D\bang)\o\zeta_{1,1}\o(\id\times \hat\iota^\klstar)\o\swap \\
&&  &\;= j\o\rho\o(D\bang)\o\zeta_{1,1}\o\swap\o(\hat\iota^\klstar\times\id) \\
&&  &\;= j\o\rho\o(D\bang)\o\zeta_{1,1}\o(\hat\iota^\klstar\times\id) \\
&&  &\;= j\o(\id\times\rho)\o(\hat\iota^\klstar\times\id) \\
&&  &\;= j\o(\hat\iota^\klstar\times\id)\o(\id\times\rho)
\end{flalign*}
where we used the obvious fact that $\zeta$ is commutative. By definition, 
\begin{align*}
\lor\o(\rho\times\rho) = \rho\o (D\bang)\o\zeta_{1,1}
\end{align*}
which immediately entails that $\lor$ is commutative using the fact that $\rho\times\rho$
is epic. In a similar way, we can transfer the distributivity law 
$a\land (b\lor c) = (a\land b)\lor (a\land c)$. To that end, we use the following 
analogue of that law for $\BBD$:
\begin{equation*}
\begin{tikzcd}[column sep=20ex, row sep=normal]
\enat\times(\enat\times\enat)
    \ar[dd,"\id\times(D\bang)\o\zeta_{1,1}"']
    \rar["\brks{\id\times\fst,~\id\times\snd}"]&
(\enat\times\enat)\times(\enat\times\enat)
  \dar["((D\bang)\o\hat\tau^\klstar\o\tau)\times((D\bang)\o\hat\tau^\klstar\o\tau)"]\\
&\enat\times\enat
  \dar["(D\bang)\o\zeta_{1,1}"]\\
\enat\times\enat
  \rar["(D\bang)\o\hat\tau^\klstar\o\tau"]&
\enat 
\end{tikzcd}
\end{equation*}
This law essentially states that the operation of minimum on $\enat$ distributes 
over the operation of summation. By postcomposing this law with $\rho$ and using 
the property that $\rho\times(\rho\times\rho)$ is epi again, we obtain the 
desired distributivity law for $\Sigma$. In a similar way we obtain the absorption
law $a\lor (a\land b) = a$. 

The laws that we obtained are sufficient to show that $\lor$ is a join. On the one hand,
by the absorption law $a\land (a\lor b) = (a\land a)\lor (a\land b) = a\lor (a\land b) = a$, 
which is equivalent to $a\leq a\lor b$, and analogously, $b\leq a\lor b$. On the 
other hand, if $a\leq c$ and $b\leq c$, then $a\land c = a$, $b\land c= b$ and hence
$(a\lor b)\land c = (a\land b)\lor (a\land c) = a\lor b$, i.e.\ $a\lor b\leq c$. 
\end{proof}

\begin{lemma}\label{lem:frame}
Suppose that the equivalent conditions of~\cref{thm:Drho} hold, and additionally 
the coequalizer~\eqref{eq:D-quote} with $X=1$ is preserved by 
$(\argument)^\nat$. Then~$\Sigma$ is an internal $\omega$-frame, i.e.\ there is 
an $\omega$-join $\bigor\c\Sigma^\nat\to\Sigma$, which satisfying the frame 
distributive law $a\land\bigor_i b_i = \bigor_i (a\land b_i)$.
 
%Moreover, there 
%is a choice function $\xi\c\Sigma^\nat\to\wave D\nat$, such that 
%$\bigor\c\ev^\klstar\tau\o\brks{\id,\xi}\c \Sigma^\nat\to\Sigma$.
\end{lemma}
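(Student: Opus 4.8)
The plan is to realise $\bigor$ as a \emph{fair dovetail search} on delay computations, pushed along the quotient $\rho$. The hypothesis that $(\argument)^\nat$ preserves the coequalizer~\eqref{eq:D-quote} at $X=1$ says exactly that $\rho^\nat\c\enat^\nat\to\Sigma^\nat$ is a coequalizer of the parallel pair $(\hat\iota^\klstar)^\nat,(D\bang)^\nat\c(D\nat)^\nat\to\enat^\nat$; in particular $\rho^\nat$ is epic. So it suffices to construct $q\c\enat^\nat\to\Sigma$ coequalizing this pair and take $\bigor$ to be the induced factor through $\rho^\nat$. I would fix a bijection $\nat\iso\nat\times\nat$ and a bounded termination test $\oname{fin}\c D1\times\nat\to 1+1$, both definable by primitive recursion, with $\oname{fin}(\sigma,m)=\inl\star$ exactly when $\sigma$ terminates within $m$ steps; then define $c\c\enat^\nat\times\nat\to 1+\enat^\nat\times\nat$ sending $(s,t)$, with $(i,m)$ the decoding of $t$, to $\inl\star$ if $\oname{fin}(s_i,m)=\inl\star$ and to $\inr(s,\suc t)$ otherwise. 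Since $\Sigma=\wave D1$ is an Elgot algebra (condition~\ref{it:q4} of \cref{thm:Drho}), I put $g=(\rho\o\now+\id)\o c\c\enat^\nat\times\nat\to\Sigma+\enat^\nat\times\nat$ and $q=g^\pistar\o\brks{\id,\zero\bang}$; intuitively $q(s)=\top$ precisely when some $s_i$ terminates.

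The crux is well-definedness of $\bigor$, i.e.\ that $q$ coequalizes $(\hat\iota^\klstar)^\nat$ and $(D\bang)^\nat$ --- that dovetailing respects componentwise weak bisimilarity. By \UNI applied to the state-reindexings $(\hat\iota^\klstar)^\nat\times\id$ and $(D\bang)^\nat\times\id$, which affect only the object fed to the stopping test and not the transition, one gets $q\o(\hat\iota^\klstar)^\nat=(g')^\pistar\o\brks{\id,\zero\bang}$ and $q\o(D\bang)^\nat=(g'')^\pistar\o\brks{\id,\zero\bang}$ for $g',g''\c(D\nat)^\nat\times\nat\to\Sigma+(D\nat)^\nat\times\nat$ running the same state machine but testing $\hat\iota^\klstar(u_i)$ resp.\ $D\bang(u_i)$ for termination within $m$ steps. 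These two iterations have the same pre-fixpoints: by \cref{cor:klee} each of $(g')^\pistar$ and $(g'')^\pistar$ is the least pre-fixpoint of $[\id,\argument]\o g'$ resp.\ $[\id,\argument]\o g''$, and a decidable case analysis --- using that $\top\o\bang$ is the $\appr$-top of each $\BC(Z,\Sigma)$ (a morphism into $\Sigma=\wave D1$ being determined by its domain), together with the implication ``$\hat\iota^\klstar(u_i)$ halts within $m$ steps $\Rightarrow$ $D\bang(u_i)$ halts within $m$ steps'' and exhaustiveness of the search --- shows each of the two to be a pre-fixpoint of the other's operator, so minimality forces $(g')^\pistar\o\brks{\id,\zero\bang}=(g'')^\pistar\o\brks{\id,\zero\bang}$. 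The same reasoning identifies $q$ with the supremum $\bigor_i(\rho\o\ev_i)$ in $\BC(\enat^\nat,\Sigma)$, so that $\bigor$ is a bona fide $\nat$-indexed join: it dominates each projection $b_i$, and sits below any common upper bound (again by the least-pre-fixpoint property of $g^\pistar$, \cref{cor:klee}).

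Finally, the frame law $a\land\bigor_i b_i=\bigor_i(a\land b_i)$ is obtained exactly as the binary distributive law in \cref{lem:sigma-lat}: one checks the corresponding identity at the level of delay computations over $\enat\times\enat^\nat$ --- where $\bigor$ is the $\enat$-dovetail $d$ with $q=\rho\o d$, $\land$ is the descended ``run in parallel'' operation of \cref{lem:sigma-lat}, and both sides amount to ``terminates iff the first argument terminates and some component does'' --- and transports it along the epimorphism $\rho\times\rho^\nat$, which is epic since $\rho$ and $\rho^\nat$ are and since~\eqref{eq:D-quote} and its $(\argument)^\nat$-image are stable under the relevant products. The main obstacle is the well-definedness step of the second paragraph: the dovetail is an \emph{unbounded} search, so its compatibility with the weak-bisimilarity quotient is not formal; it is precisely here that the countable-choice-flavoured hypothesis on $(\argument)^\nat$ is consumed, and the reduction via \UNI together with the Kleene-fixpoint comparison is the technical heart of the argument.
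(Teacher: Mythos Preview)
Your proposal follows the paper's approach in outline: both construct a dovetail/diagonalisation map at the level of $\enat$ (your $q$ is exactly $\rho\o d$ for $d=\coit c\o\brks{\id,\zero\bang}\c\enat^\nat\to\enat$, as one checks via \cref{prop:uniform-iteration-eq} and the fact that $\rho$ is a $D$-algebra morphism), then factor through the coequaliser $\rho^\nat$, and finally transport the algebraic laws from $\enat$ to $\Sigma$ along the relevant epimorphisms. The paper's proof is only a two-line sketch (``standard diagonalisation argument'', with the rest deferred to the pattern of \cref{lem:sigma-lat}), so at the level of strategy you are aligned.

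Where you differ is in the \emph{mechanism} for the well-definedness step. The paper would presumably verify $\rho\o d\o(\hat\iota^\klstar)^\nat=\rho\o d\o(D\bang)^\nat$ by a direct diagram chase in the style of \cref{lem:sigma-lat}, exploiting an explicit comparison of the two dovetails at the $D\nat$ level. You instead reduce via \UNI to two iterations $(g')^\pistar$, $(g'')^\pistar$ on $(D\nat)^\nat\times\nat$ and compare them using the least-pre-fixpoint characterisation of \cref{cor:klee}. This is a legitimate and arguably more conceptual route. Note, however, that of the two pre-fixpoint inclusions only one is immediate from your stated implication ``$\hat\iota^\klstar(u_i)$ halts in $m$ $\Rightarrow$ $D\bang(u_i)$ halts in $m$'' together with $\top$ being top; the reverse inclusion genuinely needs the exhaustiveness of the pairing (for each $i$ and each starting index $t$ there is $t''\geq t$ decoding to $(i,m'')$ with $m''$ large enough for the $\hat\iota^\klstar$-test to succeed), and making that step categorical requires constructing the witness $t''$ by primitive recursion and then unfolding $(g')^\pistar$ via \FIX along the decidable case split. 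You flag this as the crux, which is correct, but be aware it is where the real work lies.

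One minor point: for the transport of the frame law you invoke that $\rho\times\rho^\nat$ is epic. The background hypothesis gives preservation of~\eqref{eq:D-quote} by products and by $(\argument)^\nat$, but not automatically preservation of the $(\argument)^\nat$-image by products; you should either argue this separately or, more simply, perform the transport in two stages (first along $\id\times\rho^\nat$ using the coequaliser hypothesis, then along $\rho\times\id$ using product-stability).
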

\begin{proof}
The approach is similar to that of Lemma~\ref{lem:sigma-lat}: we construct 
$\bigor\c\Sigma^\nat\to\Sigma$ by the following universal property
\begin{equation*}
\begin{tikzcd}[column sep=6ex, row sep=2ex]
(D\nat)^\nat 
  \ar[r,shift right=.35ex,"(D\bang)^\nat"']
  \ar[r,shift left=.75ex,"(\hat\iota^\klstar)^\nat"]
&[3em]
\enat^\nat
  \rar["\rho^\nat"] 
  \dar
&[1em]
\Sigma^\nat
  \dar[dotted, "\bigor"]
\\[1ex]
&
\enat
  \rar["\rho"] & 
\Sigma
\end{tikzcd}
\end{equation*}
where the morphism $\enat^\nat\to\enat$ converts a countable collections of 
infinite streams into a single stream, which is possible using the standard 
diagonalization argument.
\end{proof}
Recall the bounded iteration operator $(\argument)^\bistar$ from~\cref{def:biter}.
For a monad $\BBT$ with a constant $\bot\c 1\to TX$, we can instantiate this definition
in the Kleisli category $\BC_{\BBT}$ of $\BBT$, which yields $(\argument)^\bbistar\c
\BC(X, T(Y+X))\to\BC(X\times\nat, TY)$ for any $Y$ (because every $Y$ has a point 
$\bot\c 1\to TY$ in $\BC_{\BBT}$).

\begin{lemma}\label{lem:eql-elgot}
Let $\BBT$ be an equational lifting monad, and suppose that it is equipped with 
an operator $(\argument)^\istar\c\BC(X,T(Y+X))\to\BC(X,TY)$ that satisfies 
\FIX, \NAT, \UNI and~\STR. This induces a divergence constant $\bot = (\eta\o\inr)^\istar\c 1\to TX$.

Given $f\c X\to T(Y + X)$, and $g\c X\to TY$, (i) $f^{\bbistar}\appr f^\istar\o\fst$, 
and (ii) $f^{\bbistar}\appr g\o\fst$ implies~$f^\istar\appr g$.
\end{lemma}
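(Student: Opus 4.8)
The statement is the exact analogue, transported to the Kleisli category $\BC_{\BBT}$, of the Kleene Fixpoint Theorem (Theorem~\ref{thm:klee}) that was already proved for the initial pre-Elgot monad $\IE$. So the plan is to mimic that proof verbatim, re-reading every construction as a construction in $\BC_{\BBT}$ rather than in $\BC$ itself. Concretely, since $\BBT$ is an equational lifting monad, $\BC_{\BBT}$ is a restriction category (Cockett--Lack), so the partial order $\appr$, the domain idempotents $\dom$, and the restriction $\rest$ all make sense in $\BC_{\BBT}$ exactly as in Section~\ref{sec:ini-pre}. Moreover the hypotheses say $(\argument)^\istar$ satisfies \FIX, \NAT, \UNI and \STR; the first, third and fourth are precisely the axioms used in the proof of Theorem~\ref{thm:klee} (with $f^\pistar$ there playing the role of $f^\istar$ here after the usual rearrangement of summands, i.e.\ $f^\pistar$ corresponds to $([T\inl,\eta\inr]\o f)^\istar$), and \NAT is the substitute for the ``$h^\klstar$-commutes'' clause of a pre-Elgot monad that is needed when one composes an iterate with a Kleisli map. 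The only genuinely new ingredient is that the iteration in clause (i) and (ii) is taken over $Y+X$ in the Kleisli sense rather than $A+X$ in $\BC$, but $\BC_{\BBT}$ has the relevant coproducts inherited from $\BC$, so this is cosmetic.

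First I would set up bounded iteration $(\argument)^{\bbistar}$ in $\BC_{\BBT}$ via the primitive recursion of Definition~\ref{def:biter}, using the point $\bot=(\eta\o\inr)^\istar\c 1\to TY$; this is legitimate because $\BC_{\BBT}$ has a natural number object (images of $\zero,\suc$ under $\eta$) and finite coproducts, so primitive recursion in the ambient category yields the recursive clauses in the Kleisli category. Then for (i) I would copy the argument of the first clause of Theorem~\ref{thm:klee}: show that $f^{\istar}\o\fst\rest f^{\bbistar}$ satisfies the defining recursive equations of $f^{\bbistar}$, using \FIX (the Kleisli version $f^\istar=[\eta,f^\istar]^\klstar\o f$) to unfold one step, and conclude $f^{\bbistar}=f^{\istar}\o\fst\rest f^{\bbistar}$, i.e.\ $f^{\bbistar}\appr f^\istar\o\fst$. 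For (ii) I would again follow Theorem~\ref{thm:klee} closely: introduce the auxiliary ``step-counting'' map $h$ (the Kleisli analogue of $h=(\eta\o\snd+\id)\o\ldist\o(f\times\suc)$, using strength $\tau$ of $\BBT$), establish the identity $f^\istar=(f^{\bbistar})^\klstar\o\tau\o\brks{\id,h^\istar\o\brks{\id,\zero\bang}}$ via the auxiliary equations \eqref{eq:klee1}, \eqref{eq:klee2}, a doubly-parametrised strengthening like \eqref{eq:klee3}, and repeated appeals to \UNI and \STR (the \STR axiom is what the proof of Theorem~\ref{thm:klee} needs implicitly when it passes strength through iteration, cf.\ the uses of \cref{prop:tau-prop}). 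Once that identity is in hand, the hypothesis $f^{\bbistar}\appr g\o\fst$ together with left monotonicity of Kleisli composition (which holds in any restriction category) gives $f^\istar\appr g$ by exactly the chain of inequalities displayed in the proof of Theorem~\ref{thm:klee}.

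The main obstacle I anticipate is bookkeeping rather than conceptual: verifying that each step of the $\IE$-proof goes through using only the four axioms \FIX, \NAT, \UNI, \STR of $(\argument)^\istar$, whereas the original proof tacitly used that $\IE$ is pre-Elgot (so that, e.g., $h^\klstar$ commutes with iteration) and that every $\IA X$ is a free Elgot algebra. The role of the pre-Elgot ``$h^\klstar$'' clause is played here by \NAT, so the delicate point is to double-check that \NAT suffices in each place where the $\IE$-argument used Kleisli-lifting-preserves-iteration; I expect it does, because \NAT is exactly the statement $g^\klstar\o f^\istar=([(T\inl)\o g,\eta\o\inr]^\klstar\o f)^\istar$, and that is the only form in which that property is invoked. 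The strength-related steps require \STR, which is in the hypotheses, so those are safe; and since $\BBT$ is an equational lifting monad, $\BC_{\BBT}$ is a restriction category and all the $\dom/\rest/\appr$ manipulations (monotonicity of composition, $\dom(\eta\o f)=\eta$, etc.) are available from Proposition~\ref{prop:K-enriched} and the restriction-category axioms recalled in the appendix. Thus the proof is essentially ``run the proof of Theorem~\ref{thm:klee} inside $\BC_{\BBT}$, replacing $(\argument)^\pistar$ by $(\argument)^\istar$ and citing \NAT wherever pre-Elgotness was used,'' and I would write it out at that level of detail.
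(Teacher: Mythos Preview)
Your overall strategy matches the paper's: part~(i) is proved verbatim as in Theorem~\ref{thm:klee}, and for part~(ii) one introduces the step-counting morphism $h$ and aims for the key identity
\[
  f^\istar \;=\; (f^{\bbistar})^\klstar\o\tau\o\brks{\id,\,h^\istar\o\brks{\id,\zero\bang}},
\]
after which the conclusion follows by monotonicity of Kleisli composition exactly as you say.

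Where you diverge from the paper is in how this identity is established. You propose to transport the argument of Theorem~\ref{thm:klee} wholesale, in particular the auxiliary morphism $w$ and the inductive equations~\eqref{eq:klee1}--\eqref{eq:klee5}. The paper does \emph{not} do this. Instead it exploits the restriction-category structure more heavily: it first proves the auxiliary fact $\dom f^\istar = \dom c$ (where $c=h^\istar\o\brks{\id,\zero\bang}$), then invokes Lemma~\ref{lem:dag-dom}, namely $f^\istar = (f^\klstar\o(\dom f^\istar))^\istar$, together with an explicit decrement map $d\c\nat\to T\nat$. A single \UNI-step on a diagram built from $f$, $d$, $c$ and $\dom f^\istar$ then yields the identity, with no inductive $w$ and no analogues of~\eqref{eq:klee1},~\eqref{eq:klee2}.

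This difference is not purely cosmetic. In Theorem~\ref{thm:klee} the map $f\c X\to\IA Y+X$ lives in $\BC$, so the case-splits defining $w$ and the pointwise inductions for~\eqref{eq:klee1},~\eqref{eq:klee2} are honest primitive recursion in~$\BC$. Here $f\c X\to T(Y+X)$ is Kleisli, so your $w$ would have to be a Kleisli morphism $X\times\nat\to TX$, the case analysis on $f(x)$ becomes binding under~$(\argument)^\klstar$, and the two inductions must be redone inside~$\BC_\BBT$. This is doable but is more than the ``bookkeeping'' you anticipate; the paper's route via $\dom$ and Lemma~\ref{lem:dag-dom} sidesteps it entirely and is the cleaner argument in the Kleisli setting. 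Your observation that \NAT replaces the pre-Elgot lifting condition is correct and is indeed where the paper uses \NAT (in the proof of $\dom f^\istar=\dom c$).
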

\begin{proof}
The clauses~(i) and~(ii) are analogous to those of~\cref{thm:klee}.
The first clause is shown in exactly the same way. For the second clause, analogously,
we define $h\c X\times\nat\to T(\nat+X\times\nat)$ with the property that
\begin{align}\label{eq:bbstar}
f^\istar = (f^\bbistar)^\klstar\o\tau\o\brks{\id,h^\istar\o\brks{\id,\zero\o\bang}}
\end{align}
which will entail~(ii) using the same argument as in~\cref{thm:klee}, using~\NAT 
and~\STR. Here, we take 
\begin{align*}
h = T(\snd+\id)\o (T\ldist)\o\hat\tau\o(f\times\suc)
\end{align*}
This function runs just like $f$, but produces the number of computation steps 
as a final result. For the sake of brevity, let us denote $h^\istar\o\brks{\id,\zero\o\bang}\c X\to T\nat$
by $c$. First of all, we show the following auxiliary property:
\begin{align}\label{eq:bbstar2}
\dom f^\istar = \dom c.
\end{align}
It easily follows by~\NAT and~\UNI that $(T\bang)\o h^\istar = (T(\bang+\id)\o f)^\istar\o\fst$.
Then
$\dom (h^\istar\o\brks{\id,\zero\o\bang})
= \dom ((T\bang)\o h^\istar\o\brks{\id,\zero\o\bang}) 
= \dom ((T(\bang + \id)\o f)^\istar\o\fst\o\brks{\id,\zero\o\bang})
= \dom f^\istar$.

The main step of the proof is showing commutativity of the following diagram:
\begin{equation*}
\begin{tikzcd}[column sep=30ex, row sep=normal]
X
  \rar["f^\klstar\o(\dom f^\istar)"]
  \dar["\tau\o\brks{\id,c}"']& 
T(Y+X)
  \dar["T(\id+\tau\o\brks{\id,c})"]\\
X\times T\nat
  \rar["T(\fst+\eta)\o (T\ldist)\o\hat\tau^\klstar\o(\tau\o(f\times d))^\klstar"] & T(Y+X\times T\nat)               
\end{tikzcd}
\end{equation*}
where $d\c\nat\to T\nat$ is as follows:
\begin{align*}
d(\zero)  = \bot, &&
d(\suc n) = \eta(n).
\end{align*}
By \UNI, this entails the equation
\begin{align}\label{eq:bbstar1}
 (f^\klstar\o(\dom f^\istar))^\istar
 = (T(\fst+\eta)\o (T\ldist)\o\hat\tau^\klstar\o(\tau\o(f\times d))^\klstar)^\istar\o\tau\o\brks{\id,c}
\end{align}
Roughly, this says that iterating $f$ is equivalent to iterating $f$ and additionally checking 
that the counter that is initialized by $c$ and decreased at each iteration  
remains non-zero. The term $\dom f^\istar$ is needed to balance the effect of recalculating 
$c$ at each iteration out.
We have
\begin{flalign*}
&&T(\fst+\eta)\o& (T\ldist)\o\hat\tau^\klstar\o(\tau\o(f\times d))^\klstar\o\tau\o\brks{\id,c}\\
&&=&\;T(\fst+\eta)\o (T\ldist)\o\hat\tau^\klstar\o\tau\o(f\times d^\klstar)\o\brks{\id,c}\\
&&=&\;T(\fst+\eta)\o (T\ldist)\o\hat\tau^\klstar\o\tau\o\brks{f,d^\klstar\o c}\\
&&=&\;T(\fst+\eta)\o (T\ldist)\o\hat\tau^\klstar\o\tau\o\brks{f,d^\klstar\o [\eta,h^\istar]^\klstar\o h\o\brks{\id,\zero\o\bang}}\\
&&=&\;T(\fst+\eta)\o (T\ldist)\o\hat\tau^\klstar\o\tau\o\brks{f,d^\klstar\o [\eta\o\suc\o\zero\bang,h^\istar\o\brks{\id,\suc\o\zero\bang}]^\klstar\o f}\\
&&=&\;T(\fst+\eta)\o (T\ldist)\o\hat\tau^\klstar\o\tau\o\brks{f,[\eta\o\zero\bang,c ]^\klstar\o f}\\
&&=&\;T(\fst+\eta)\o (T\ldist)\o\hat\tau^\klstar\o\tau\o (\id\times[\eta\o\zero\bang,c ]^\klstar)\o \brks{f, f}\\
&&=&\;T(\fst+\eta)\o (T\ldist)\o\tau^\klstar\o\hat\tau\o (\id\times[\eta\o\zero\bang,c ]^\klstar)\o \brks{f, f}\\
&&=&\;T(\fst+\eta)\o (T\ldist)\o\tau^\klstar\o T(\id\times[\eta\o\zero\bang,c ]^\klstar)\o\hat\tau\o \brks{f, f}\\
&&=&\;T(\fst+\eta)\o (T\ldist)\o\tau^\klstar\o T(\id\times[\eta\o\zero\bang,c ]^\klstar)\o T\brks{\id,\eta}\o f\\
&&=&\;T(\fst+\eta)\o (T\ldist)\o\tau^\klstar\o T\brks{\id, [\eta\o\zero\bang,c ]}\o f\\
&&=&\;T(\fst+\eta)\o [(T\inl)\o\tau\o\brks{\id,\eta\o\zero\bang},(T\inr)\o\tau\o\brks{\id,c }]^\klstar\o f\\
&&=&\;[\eta\o\inl, (T\inr)\o (T\eta)\o \tau\brks{\id,c}]^\klstar\o f\\
&&=&\;[\eta\o\inl, (T\inr)\o T(\tau\o\brks{\id,c })\o\dom c ]^\klstar\o f\\
&&=&\;T(\id+\tau\o\brks{\id,c })\o [\eta\o\inl, (T\inr)\o\dom c ]^\klstar\o f\\
&&=&\;T(\id+\tau\o\brks{\id,c })\o [\eta\o\inl, (T\inr)\o\dom f^\istar]^\klstar\o f\\
&&=&\;T(\id+\tau\o\brks{\id,c })\o (\dom [\eta, f^\istar])^\klstar\o f\\
&&=&\;T(\id+\tau\o\brks{\id,c })\o f^\klstar\o \dom([\eta, f^\istar]^\klstar\o f)\\
&&=&\;T(\id+\tau\o\brks{\id,c })\o f^\klstar\o (\dom f^\istar).
\intertext{
The proof of~\eqref{eq:bbstar} now runs as follows:
}
&&f^\istar 
=&\; (f^\istar)^\klstar\o (\dom f^\istar)&\by{\ref{eq:rst1}}\\
&&=&\; (f^\istar\o \dom f^\istar)^\klstar\o (\dom f^\istar)&\by{\cref{lem:dag-dom}}\\
&&=&\; (f^\istar\o \dom f^\istar)^\klstar\o (\dom c) &\by{\eqref{eq:bbstar2}}\\
&&=&\; ((T(\fst+\eta)\o(T\ldist)\o\hat\tau^\klstar\o(\tau\o(f\times d))^\klstar)^\istar\o\tau\o\brks{\id,c})^\klstar\o (\dom c)&\by{\eqref{eq:bbstar1}}\\
&&=&\; ((T(\fst+\eta)\o(T\ldist)\o\hat\tau^\klstar\o(\tau\o(f\times d))^\klstar)^\istar)^\klstar\\
&&&\hspace{4.5cm} T(\tau\o\brks{\id,c})\o\dom(\tau\o\brks{\id,c}) \\
&&=&\; ((T(\fst+\eta)\o(T\ldist)\o\hat\tau^\klstar\o(\tau\o(f\times d))^\klstar)^\istar)^\klstar\o (T\eta)\o\tau\o\brks{\id,c} &\by{\cref{lem:dom-eta}}\\
&&=&\; ((T(\fst+\eta)\o(T\ldist)\o\hat\tau^\klstar\o(\tau\o(f\times d))^\klstar)^\istar\o\eta)^\klstar\o\tau\o\brks{\id,c} \\
&&=&\; ((T(\fst+\id)\o(T\ldist)\o\hat\tau^\klstar\o\tau\o(f\times d))^\istar)^\klstar\o\tau\o\brks{\id,c}&\by{\UNI}\\
&&=&\; (f^\bbistar)^\klstar\o\tau\o\brks{\id,c}.
\end{flalign*}
Here, the last step requires checking that $f^\bbistar=(T(\fst+\id)\o(T\ldist)\o\hat\tau^\klstar\o\tau\o(f\times d))^\istar$,
which is easy, since $f^\bbistar$ is defined by primitive recursion.
\end{proof}
Let us call a pair $(M\in |\BC|, M_{\appr}\subseteq M\times M)$ a \emph{(internal) 
regular poset} if $M_{\appr}$ is a regular monic, i.e.\ an equalizer of some pair $l,r\c M\times M\to E$
and the relation $\appr$ on $\Hom(X,M)$, defined by putting $f\appr g$ iff 
$l\o\brks{f,g} = r\o\brks{f,g}$, is a partial order for every $X$. By definition,
$f\appr g$ iff $\brks{f,g}$ factors through $M_{\appr}\subseteq M\times M$. Because 
of the equalizer property, this factorization is unique. Our notion of a regular 
poset is thus an instance of Barr's notion of an (internal) poset~\cite{Barr90} to 
which we added the regularity requirement. We will follow Barr's work further 
on in internalizing further relevant notions of order theory. First, observe the 
following. 
\begin{lemma}
For every $X\in |\BC|$, the equalizer of the pair $\fst, (\snd^\klstar)\o\hat\tau\c TX\times TX\to TX$
defines a regular poset.
\end{lemma}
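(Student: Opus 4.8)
The plan is to unfold what it means for the equalizer of $\fst,(\snd^\klstar)\o\hat\tau\c TX\times TX\to TX$ to define a regular poset, and observe that the relation it induces on $\Hom(Y,TX)$ is exactly the restriction order $\appr$ that is already available on the Kleisli category of the equational lifting monad $\BBT$, for which we have a well-developed calculus (axioms \ref{eq:rst1}--\ref{eq:rst4}, \cref{lem:dom-sum}, \cref{lem:dom-eta}, and the general theory of restriction categories). Concretely, for $f,g\c Y\to TX$ the equalizer condition reads $\fst\o\brks{f,g} = (\snd^\klstar)\o\hat\tau\o\brks{f,g}$, i.e.\ $f = g^\klstar\o\fst^\klstar\o\hat\tau\o\brks{f,g}$; I would first simplify $\fst^\klstar\o\hat\tau\o\brks{f,g}$. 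Using the definition $\dom g = (T\fst)\o\tau\o\brks{\id,g}$ and the transpose relation between $\tau$ and $\hat\tau$, this term is readily identified with $\dom' f$ relative to $g$ — more precisely, a short computation with the strength axioms shows $\fst^\klstar\o\hat\tau\o\brks{f,g} = (T\fst)\o\hat\tau\o\brks{f,g}$ equals the idempotent $\fst^\klstar\o\hat\tau\o\brks{f,\id}^\klstar\o g$, which after using commutativity of $\BBT$ (available since $\BBT$ is an equational lifting monad) coincides with $(\dom g)^{\klstar}$ applied appropriately, so that the equalizer condition becomes precisely $f = g^\klstar\o(\dom\text{-type idempotent of }g)$.

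The cleanest route, which I would actually take, is to avoid re-deriving this from scratch and instead note that $(\snd^\klstar)\o\hat\tau$ is — by the very definition of the restriction structure on $\BC_{\BBT}$ — the Kleisli morphism $g\rest(\argument)$ in one of its two arguments; that is, for $f,g\c Y\to TX$, $(\snd^\klstar)\o\hat\tau\o\brks{f,g} = g^\klstar\o\dom f$ after swapping, and hence $\fst\o\brks{f,g} = (\snd^\klstar)\o\hat\tau\o\brks{f,g}$ rewrites to $f = g^\klstar\o(\dom f)$, which is exactly the defining condition $f\appr g$ from the restriction category (recall $f\appr g$ abbreviates $f = g\rest f$, and $f\rest g = f^\klstar\o(\dom g)$, $g\rest f = g^\klstar\o(\dom f)$ — so $g\rest f \appr$-style: one must be careful which composite is which, but the identification is routine). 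Once the equalizer-induced relation is identified with the restriction order $\appr$ on each $\Hom(Y,TX)$, the partial-order axioms are immediate: reflexivity, transitivity and antisymmetry of $\appr$ in any restriction category are standard and already used freely in the excerpt (e.g.\ in \cref{prop:K-enriched} and around \cref{cor:klee}), so I would cite that. It remains only to check that $TX\times TX \supseteq M_{\appr}$ is genuinely a \emph{regular} monic, but this is automatic: it is presented as the equalizer of the displayed pair $\fst,(\snd^\klstar)\o\hat\tau$, and equalizers are regular monics by definition — this is precisely the extra ``regularity'' condition the paper has built into the notion.

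**Main obstacle.**
The only real work is the bookkeeping in the first paragraph: showing that $\fst\o\brks{f,g} = (\snd^\klstar)\o\hat\tau\o\brks{f,g}$ unfolds exactly to the restriction-order relation $f\appr g$ (and not, say, to $g\appr f$ or to some non-order relation), which requires one careful pass through the strength/costrength axioms, the definition of $\dom$, and the commutativity of the equational lifting monad to move between $\tau$ and $\hat\tau$. I expect this to be a half-page computation of the same flavour as the proof of \cref{lem:dom-sum} or \cref{lem:dom-eta}. Everything after that — the poset axioms and the regularity of the monic — is either quoted from restriction-category theory or true by definition, so there is no conceptual difficulty beyond that identification.
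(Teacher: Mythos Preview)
Your approach is correct and matches the paper's, which dispatches the lemma in one line: ``Obvious by definition of the enrichment of $\BC_\BBT$.'' Your second paragraph is precisely the unpacking of that line---the equalizer condition $f = (\snd^\klstar)\o\hat\tau\o\brks{f,g}$ rewrites via $\hat\tau = T\swap\o\tau\o\swap$ to $f = \fst^\klstar\o\tau\o\brks{g,f} = g\rest f$, i.e.\ $f\appr g$, and then the poset axioms are inherited from the restriction-category enrichment while regularity of the monic is by construction. Your first paragraph meanders a bit (the phrases ``$\dom' f$ relative to $g$'' and ``$(\dom g)^\klstar$ applied appropriately'' are vague placeholders), but you correctly abandon that route for the direct identification, which is all that is needed.
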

\begin{proof}
Obvious by definition of the enrichment of $\BC_\BBT$.
\end{proof}
Given a regular poset $(M\in |\BC|, M_{\appr}\subseteq M\times M)$, we define
the object of monotone sequences~$M^\omega$ from the pullback:
\begin{equation*}
\begin{tikzcd}[column sep=normal, row sep=normal]
M^\omega
  \pbk
  \dar
  \ar[rr,hook] && 
M^\nat
  \dar["\brks{\id,M^{\suc}}"]\\
(M_\appr)^\nat
  \rar[hook] & 
(M\times M)^\nat
  \rar[iso] & M^\nat\times M^\nat             
\end{tikzcd}
\end{equation*}
Now, a regular poset $(M\in |\BC|, M_{\appr}\subseteq M\times M)$ is an internal $\omega$-cpo
if it is equipped with an operation $\ij\c M^\omega\to M$, such that for every 
$f\c X\to M^\omega$, $\ij\o f\c X\to M$ is the least upper bound for 
$\hat f = \uncurry(X\xto{f} M^\omega\ito M^\nat)$ in $\BC(X,M)$, meaning that 
(i) $\hat f\appr(\ij\o f)\o\fst$ and (ii) $\hat f\appr g\o\fst$ implies 
$\ij\o f\appr g$ for any $g\c X\to M$.
\begin{proposition}\label{prop:elg-form-join}
An equational lifting monad $\BBT$ is a strong Elgot monad if\/ $T\nat$ is an internal 
$\omega$-cpo.
\end{proposition}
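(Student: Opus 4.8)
## Proof proposal for Proposition~\ref{prop:elg-form-join}

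The plan is to reduce the statement to Lemma~\ref{lem:eql-elgot}, i.e.\ to show that if $T\nat$ is an internal $\omega$-cpo then the equational lifting monad $\BBT$ carries an iteration operator $(\argument)^\istar$ satisfying \FIX, \NAT, \UNI and \STR, whence $\BBT$ is a strong Elgot monad by definition. The obvious candidate for $f^\istar$, given $f\colon X\to T(Y+X)$, is the least fixpoint of the monotone endomap $\Phi_f = [\eta,\argument]^\klstar\o f\c\BC(X,TY)\to\BC(X,TY)$ on the partially ordered hom-set $\BC(X,TY)$, obtained as the join of the ascending chain $(\Phi_f^n(\bot))_{n\in\nat}$, where $\bot=(\eta\inr)^\istar$ is, circularly, what we are constructing; so we instead define the chain via the bounded iteration operator: first I would observe that the morphisms $f^\bbistar\o\brks{\id,\underline n}\c X\to TY$ (the $n$-fold unfoldings of $f$, truncated to $\bot$) form a monotone sequence, and define $f^\istar$ as their join. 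Concretely, the monotone $\nat$-indexed family $n\mapsto f^\bbistar\o\brks{\id,\suc^n\zero\bang}$ corresponds, via currying and the universal property of $\nat$, to a morphism $X\to (TY)^\nat$ that factors (by monotonicity, which follows from the recursive definition of $(\argument)^\bbistar$ together with $\bot\appr h$ for all $h$) through $(TY)^\omega$; then $\ij$ applied to it produces $f^\istar\c X\to TY$. To reconcile $TY$ with $T\nat$: a point $\bot\c 1\to TY$ exists in every $Y$ (as $(\eta\inr)^\istar$ would be, or more elementarily as $(T\bang)^{-1}$-free — here one uses that the hom-orders come from the restriction structure of the equational lifting monad and $T\bang\c TY\to T\iobj$ is a retraction), so $TY$ is a regular $\omega$-cpo as soon as $T\nat$ is, because $\BC(X,TY)$ embeds as a monotone retract of $\BC(X\times\nat,T\iobj)\cong\BC(X,T\iobj^\nat)$ and $\omega$-completeness transfers along such retracts. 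Actually the cleanest route, which I would take, is: since $T\nat$ is an internal $\omega$-cpo, so is $TY$ for every $Y$, by transport along the section–retraction pair $TY\cong T(Y\times 1)\xrightarrow{} T(Y\times\nat)\xrightarrow{} \dots$ — but more simply, every $\BC(X,TY)$ is a pointed $\omega$-cpo because it is a retract (in the category of posets) of $\BC(X,T\nat)$ when $Y$ has enough points, and in general $\BC(X,TY) \cong \BC(X\times Y, T\iobj)$-style tricks reduce to $\iobj$; and $\Sigma = T\iobj$, being a join of $\hat\iota^\klstar$-quotients, is an $\omega$-frame by Lemma~\ref{lem:frame}. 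I would in fact phrase the completeness of $\BC(X,TY)$ directly from the $\omega$-cpo structure on $T\nat$ by the standard argument that hom-functors $\BC(X,\argument)$ preserve the relevant limits/pullbacks used to define $(\argument)^\omega$ and the universal property of $\ij$.

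Having fixed $f^\istar := \ij\o (\text{the monotone sequence } n\mapsto f^\bbistar\brks{\id,\underline n})$, I would then verify the four axioms. For \FIX, i.e.\ $f^\istar = [\eta,f^\istar]^\klstar\o f$, I would use that $\Phi_f$ is $\omega$-continuous on $\BC(X,TY)$ — this is the point where I expect the real work: $\omega$-continuity of $\Phi_f = [\eta,\argument]^\klstar\o f$ amounts to showing that postcomposition by $f^\klstar$ and the operation $g\mapsto[\eta,g]^\klstar$ preserve joins of monotone sequences, which in turn follows from the $\omega$-frame distributivity of $\Sigma$ (Lemma~\ref{lem:frame}) transported to arbitrary $TY$ via strength; then $f^\istar$, being the join of the chain $\bot\appr\Phi_f(\bot)\appr\Phi_f^2(\bot)\appr\cdots$, is the least fixpoint by the Kleene argument, giving \FIX together with the ``least'' half. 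For \UNI, if $f\o h = T(\id+h)\o g$ then the two truncated chains agree after composing with $h$ by an easy induction on $\nat$ (using that $(\argument)^\bbistar$ is defined by primitive recursion and commutes with such reindexings), and the joins then satisfy $f^\istar\o h = g^\istar$ because $\BC(\argument, TY)$ turns the agreement of the sequences into agreement of their least upper bounds. For \STR, the equation $\tau\o(\id\times f^\istar) = ((T\dist)\o\tau\o(\id\times f))^\istar$ reduces, on the truncated approximants, to the already-established fact that strength is $\appr$-monotone and $\bot$-strict (Proposition~\ref{prop:K-enriched} for the initial monad; for a general equational lifting monad these hold by the restriction-category structure), plus $\omega$-continuity of $g\mapsto\tau\o(\id\times g)$ — the latter is again the $\omega$-frame/continuity input — so that $\tau\o(\id\times\argument)$ commutes with the relevant joins; one checks the approximants match by induction and passes to the colimit. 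The one genuinely delicate axiom is \NAT, $g^\klstar\o f^\istar = ([(T\inl)\o g,\eta\inr]^\klstar\o f)^\istar$; here I would argue that $g^\klstar\o(\argument)$ is $\omega$-continuous (postcomposition by a Kleisli map preserves joins, again by frame distributivity / the fact that $g^\klstar$ is built from strength and multiplication, both of which are continuous) and that it intertwines the two defining chains, i.e.\ $g^\klstar\o (f^\bbistar\brks{\id,\underline n}) = ([(T\inl)g,\eta\inr]^\klstar f)^\bbistar\brks{\id,\underline n}$ up to the obvious identifications, which is a finite induction; taking joins then yields \NAT.

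The main obstacle, as flagged, is establishing $\omega$-continuity of the relevant operations on the hom-orders $\BC(X,TY)$ — precisely, that postcomposition $g^\klstar\o(\argument)$, the ``fold'' operation $h\mapsto[\eta,h]^\klstar$, and strength $\tau\o(\id\times\argument)$ all preserve joins of monotone $\nat$-indexed families. My plan for this is to bootstrap everything from $\Sigma=T\iobj$: Lemma~\ref{lem:frame} gives that $\Sigma$ is an internal $\omega$-frame, hence binary meet distributes over $\omega$-joins; the restriction-category structure of the equational lifting monad (axioms \ref{eq:rst1}--\ref{eq:rst4}, and the fact recorded after Proposition~\ref{prop:K-enriched} that $\dom$ of things are $\appr\eta$) lets one express $g^\klstar\o f\rest(\argument)$ and friends in terms of domains, i.e.\ in terms of $\Sigma$-valued data, so that join-preservation on $\BC(X,TY)$ follows from the $\omega$-frame law on $\Sigma$ pulled back along the strength/domain maps. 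If, as I suspect, the cleanest formulation is via internal posets (Barr's framework, already invoked before Proposition~\ref{prop:elg-form-join}), then the statement to prove is that the internal maps $T\eta$-fold, Kleisli-extension, and strength are internal $\omega$-continuous maps between the internal $\omega$-cpos $TX$, which is a diagram chase using the pullback defining $(\argument)^\omega$ together with the hypothesis on $T\nat$ and Lemma~\ref{lem:frame}; once that is in place, \FIX, \NAT, \UNI, \STR all drop out by the Kleene fixpoint argument applied pointwise, and the conclusion follows from Lemma~\ref{lem:eql-elgot} (whose role here is precisely to package ``$(\argument)^\istar$ satisfies those four axioms'' into ``strong Elgot monad'', together with the bridge $f^\bbistar\appr f^\istar\o\fst$ identifying our join with the least fixpoint).
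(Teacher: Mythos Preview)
Your plan has two real gaps. First, you misread \cref{lem:eql-elgot}: it does \emph{not} say ``\FIX{}+\NAT{}+\UNI{}+\STR{} $\Rightarrow$ strong Elgot monad''. The Elgot monad axioms (\cref{def:elgot-monad}) include \COD, which you never mention. \cref{lem:eql-elgot} \emph{assumes} \FIX, \NAT, \UNI, \STR and proves the Kleene approximation property; it is not a packaging lemma. Second, your scheme requires each $TY$ to be an internal $\omega$-cpo, and the justifications you give for this (``retract of $T\nat$ when $Y$ has enough points'', ``$\BC(X,TY)\cong\BC(X\times Y,T\iobj)$-style tricks'') do not work in the present generality: there is no reason $TY$ should be a monotone retract of $T\nat$, nor is there an isomorphism $\BC(X,TY)\cong\BC(X\times Y,T\iobj)$ for an equational lifting monad.

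The paper avoids exactly this obstacle. It defines $f^\istar$ \emph{only} for $f\colon X\to T(\nat+X)$ as the least fixpoint of $[\eta,\argument]^\klstar\o f$ on $\BC(X,T\nat)$, using the hypothesised $\omega$-cpo structure on $T\nat$ alone. For general $f\colon X\to T(Y+X)$ it does not take a join in $\BC(X,TY)$ at all: instead it \emph{defines} $f^\istar$ by the step-counting formula~\eqref{eq:bbstar} from the proof of \cref{lem:eql-elgot}, i.e.\ $f^\istar=(f^{\bbistar})^\klstar\o\tau\o\brks{\id,h^\istar\o\brks{\id,\zero\bang}}$ with $h\colon X\times\nat\to T(\nat+X\times\nat)$ the step-counter, so that the only iteration actually computed via a join is the $T\nat$-valued $h^\istar$. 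With this definition \FIX, \NAT, \UNI, \STR are checked, and \COD is obtained separately (by the argument of~\cite[Theorem~5.8]{GoncharovSchroderEtAl18}). So the key idea you are missing is precisely this reduction to $T\nat$: rather than proving continuity of Kleisli extension and strength on every $TY$, one records the number of steps in $\nat$, takes the join there, and then replays the computation for the recorded number of steps via $f^{\bbistar}$.
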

\begin{proof}
First, given $f\c X\to T(\nat + X)$, we define $f^\istar\c X\to T\nat$ as the 
least fixpoint of the map $[\eta,\argument]^\klstar\o f\c\BC(X,T\nat)\to\BC(X,T\nat)$.
Then we extend $(\argument)^\istar$ to all $f\c X\to T(Y+X)$ using the 
formula~\eqref{eq:bbstar} from~\cref{lem:eql-elgot}. \FIX, \NAT, \UNI and~\STR
follow from this definition, which enables the characterization of~\cref{lem:eql-elgot},
and then~\COM follows analogously to an existing argument~\cite[Theorem 5.8]{GoncharovSchroderEtAl18}.  
\end{proof}
Finally, we proceed with the proof of~\cref{thm:elg}.

{\bfseries\sffamily\ref{it:elg1}.} 
In extensive categories coequalizers are preserved by $(\argument+1)$, hence, by 
assumption,~\eqref{eq:D-quote} is preserved by $(\argument+1)^\nat$. Recall 
that $DX$ is a retract of $(X+1)^\nat$ naturally in $X$. It is easy to verify 
that coequalizers are preserved by retractions, hence~\eqref{eq:D-quote} is 
preserved by $D$. 

{\bfseries\sffamily\ref{it:elg2}.} 
The claim follows from~\cref{prop:elg-form-join} and~\cref{lem:eql-elgot}. We 
only need to extend $\wave D\nat$ to an $\omega$-cpo. The effectiveness assumption
implies that $\wave D\nat$ is isomorphic to $\Sigma^\omega$. By~\cref{lem:frame},
we have internal joins on $\Sigma$, which extend to $\Sigma^\omega$ pointwise.
%
%\begin{equation*}
%\begin{tikzcd}[column sep=6ex, row sep=2ex]
%\wave D\nat
%  \rar["\curry({[\eta,\bot]^\klstar}\o \oname{eq}_{\nat}^\klstar\o\hat\tau)"]&[15ex]
%\Sigma^\nat 
%  \ar[r,shift right=.35ex,"(D\bang)^\nat"']
%  \ar[r,shift left=.75ex,"(\hat\iota^\klstar)^\nat"]
%&[3em]
%\Sigma^\nat
%\end{tikzcd}
%\end{equation*}
%
%
%
%{\bfseries\sffamily\ref{it:elg3}.} 
%%
%This clause largely amounts to showing that monads whose Kleisli categories are
%enriched $\omega$-cpos and strict continuous morphisms are Elgot. This goes by a 
%routine internalization of the existing argument~\cite[Theorem 5.8]{GoncharovSchroderEtAl18}.
%Since every Elgot monad $\BBT$ is pre-Elgot, by~\cref{thm:K-pre-Elgot}, we obtain
%a universal map from $\wave\BBD$ to $\BBT$.
%
\qed
\end{document}